\titlespacing*{\chapter}{0pt}{10pt}{30pt}
\titleformat{\chapter}[display]{\normalfont\huge\bfseries}
{\chaptertitlename\ \thechapter}{10pt}{\huge}
\theoremstyle:=definition,remark,plain, Lemma, Theorem, Corollary\do{
    \expandafter\g@addto@macro\csname th@\theoremstyle\endcsname{
       \addtolength\thm@preskip\parskip
     }
   }
\newtheorem{theorem}{Theorem}[section]
\newtheorem{Definition}[theorem]{Definition}
\newtheorem{Lemma}[theorem]{Lemma}
\newtheorem{Corollary}[theorem]{Corollary}
\newtheorem{remark}[theorem]{Remark}
\newtheorem{question}[theorem]{Question}
\theoremstyle{definition}
\renewcommand{\qedsymbol}{\rule{.07in}{.1in}}
\begin{document}

\pagestyle{plain}

\pagenumbering{roman}

\newgeometry{left=3.2cm, right=3.2cm, top=2.5cm, vmarginratio=1:1, centering}

\begin{titlepage}

\begin{center}
\vspace*{.5cm}
{\Huge {\bfseries The Quicksort algorithm \\ 
and related topics}}~ \\ [2cm]

\begin{figure}[H]
\centering
\includegraphics[height=8cm]
{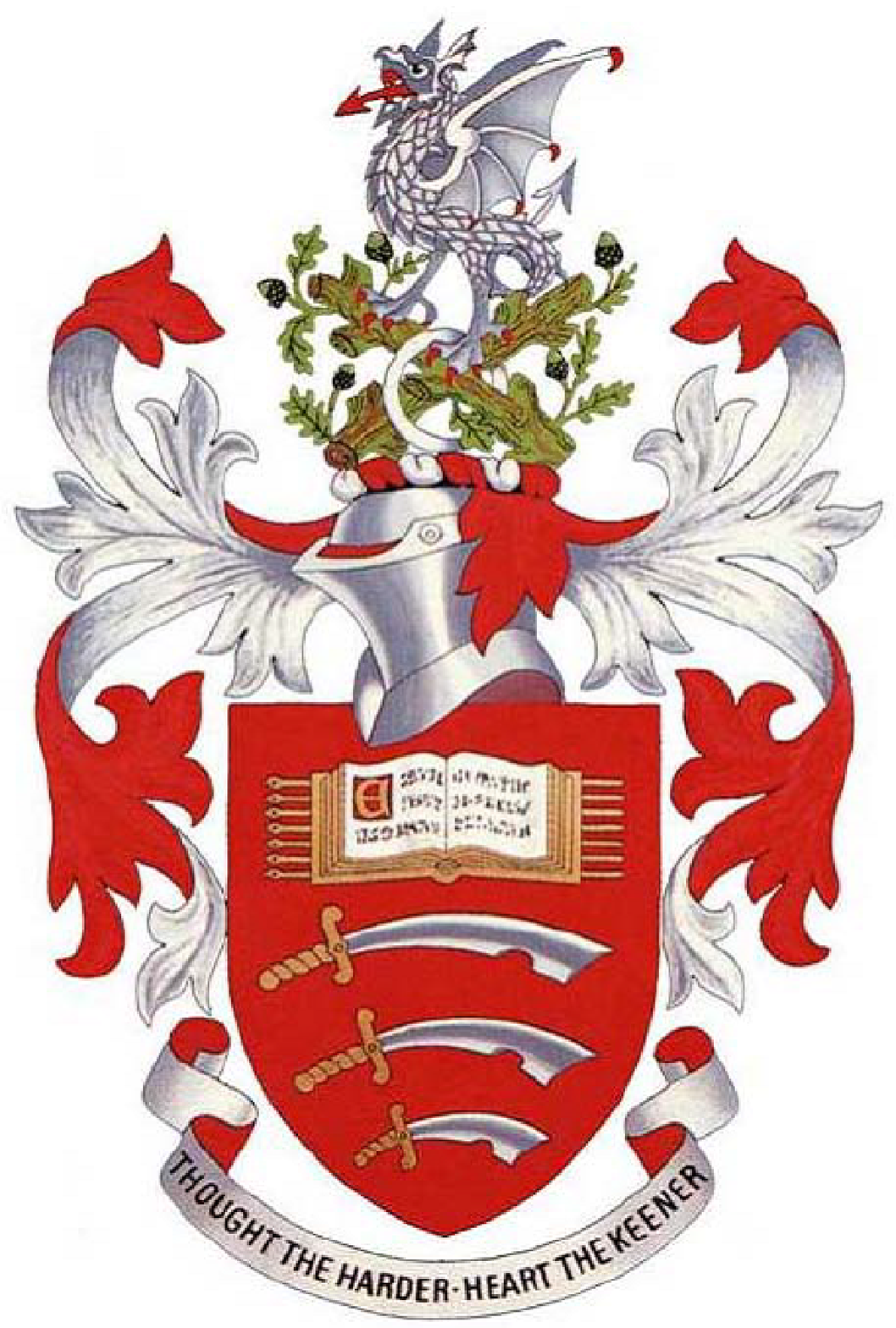} ~ \\ [1.5cm]
\end{figure}

{\large {\bfseries Vasileios Iliopoulos} ~ \\ [1.5cm]
A thesis submitted for the degree of \\
~ Doctor of Philosophy ~ \\ [0.7cm]
Department of Mathematical Sciences  \\
~ University of Essex ~ \\ [0.7cm]
~~~~ June 2013
}

\end{center}

\end{titlepage}

\restoregeometry

\newpage
\setcounter{page}{2}
\thispagestyle{empty}
\mbox{}
\newpage

\setstretch{1.7}

\section*{Abstract}

Sorting algorithms have attracted a great deal of attention and 
study, as they have numerous applications to Mathematics,
Computer Science and related fields. In this thesis, we first deal with the
mathematical analysis of the Quicksort algorithm and its variants.
Specifically, we study the
time complexity of the algorithm and we provide a
complete demonstration of the variance of the number of comparisons required,
a known result but one whose detailed proof is not easy to read out of the
literature. We also examine variants of Quicksort, where multiple pivots are chosen
for the partitioning of the array.

The rest of this work is dedicated to the analysis of finding the true order
by further pairwise comparisons when a partial order compatible with the true order
is given in advance. We discuss a number of cases where the partially 
ordered sets arise at random. To this end, we employ results 
from Graph and Information Theory. 
Finally, we obtain an alternative bound on the number of linear 
extensions when the partially ordered set arises from a random graph, and discuss
the possible application of Shellsort in merging chains.

\newpage
\thispagestyle{empty}
\mbox{}
\newpage

\section*{Acknowledgements}

I would like to thank Dr. David B. Penman for his meticulous advise and guidance, 
as I pursued this work. Our fruitful conversations helped me to clearly 
express my ideas in this thesis. Thanks are also due to Dr. Gerald Williams for his
comments and suggestions during board meetings.

I dedicate this work to my family. Without their support, 
I wouldn't be able to carry out and accomplish my research.

\newpage

\tableofcontents

\newpage

\pagestyle{main}

\pagenumbering{arabic}

\chapter{Preface} 

The first Chapter serves as an introduction to this work, where in a simple manner, 
useful notions of algorithmic analysis in 
general are presented, along with definitions that will be 
used throughout the thesis. We present an introduction to Quicksort and this Chapter
ends with an outline and a summary of the main contributions of this thesis.

\section{Preliminaries}

An algorithm is a valuable tool for solving computational problems. It is a
well defined procedure that takes some value or a set of values as input and
is guaranteed to produce an answer in finite time. See e.g. \cite{intro}.
However the time taken may be impractically long.

As a useful introduction, we present a simple and intuitive algorithm known
as Euclid's algorithm which is still used today to determine the
greatest common divisor of two integers. Its definition follows \cite{knuth 2}:
\begin{Definition}
Given two integers $A \geq C$ greater than unity, we want to find their greatest
common divisor, g.c.d.(A, C).
\begin{enumerate}
\item If C divides A, then the algorithm terminates with C as the greatest common divisor. 
\item If $A\bmod C$ is equal to unity, the numbers are either prime or relatively prime and the
algorithm terminates. Otherwise set $A\leftarrow C$, $C\leftarrow A\bmod C$ 
and return to step $1$.
\end{enumerate}
\end{Definition}
By $A\bmod C$, we denote the remainder of the division of $A$ by $C$, namely
\begin{equation*}
A\bmod C:=A-\left (C\cdot \left \lfloor {\frac{A}{C}} \right \rfloor \right),
\end{equation*}
which is between $0$ and $C-1$. Here the floor function $\lfloor x\rfloor$ of
a real number $x$ is the largest integer
less than or equal to $x$. We observe that this algorithm operates
recursively by successive divisions, until obtaining remainder equal to $0$
or to $1$. Since the remainder strictly reduces at each stage, the process
is finite and eventually terminates.

Two main issues in relation to an algorithm are its running time or time
complexity, which is the amount of time necessary to solve the
problem, and its space complexity, which is the amount of memory needed for
the execution of the algorithm in a computer. In case of Euclid's algorithm,
$3$ locations of memory are required for storing the integer numbers $A$,
$C$ and $A\bmod C$. In this thesis, we will mainly be concerned with time
complexity.

The aim is usually to relate the time complexity to some measure of the
size of the instance of the problem we are considering. Very often we are particularly
concerned with what happens when the size $n$ is large -- tending to infinity.
The notations $O$, $\Omega$, $\Theta$
and $o$, $\omega$ are useful in this context.

\section{Definitions}

In this section, we present primary definitions to the analysis presented in the thesis. 
These definitions come from \cite{intro}, \cite{con}.
\begin{Definition}
Let $f(n)$ and $g(n)$ be two functions with $n \in \mathbf N$. 
We say that $f(n) = O\bigl(g(n)\bigr)$ 
if and only if there exists a constant $c>0$ and $n_{0} \in \mathbf N$, such that
$\vert f(n)\vert \leq  c \cdot \left \vert g(n)\right \vert$, $\forall n \ge n_{0}$.
\end{Definition}
\begin{Definition}
Also, we say that $f(n)= \Omega\bigl(g(n)\bigr)$ if and only if there exists a constant $c > 0$ 
and $n_{0} \in \mathbf N$, such that
$\vert f(n)\vert \geq  c \cdot \vert g(n)\vert$, $\forall n \ge n_{0}$. 
\end{Definition}
\begin{Definition}
Furthermore, we say that $f(n) =\Theta \bigl(g(n) \bigr)$ if and only if there exist 
positive constants $c_{1}$, $c_{2}$ and $n_{0} \in \mathbf N$, such that 
$c_{1} \cdot \vert g(n)\vert \leq \vert f(n) \vert \leq c_{2} 
\cdot \vert g(n)\vert$, $\forall n \ge n_{0}$.
Equivalently, we can state that if $f(n) =  O \bigl(g(n) \bigr)$ and $f(n) =  
\Omega \bigl(g(n) \bigr)$, then
$f(n) =  \Theta \bigl(g(n) \bigr)$.
\end{Definition}

We will also use the notations  $o$, $\omega$. They provide the
same kind of limiting bounds with the respective upper case notations. The
difference is that for two functions $f(n)$ and $g(n)$, the upper case
notation holds when it does exist some positive constant $c$. Whereas, the respective lower 
case notation is true for every positive constant $c$ \cite{intro}.
In other words, $o$ is stronger statement than $O$, since $f(n) = o\bigl (g(n) \bigr)$
implies that $f$ is dominated by $g$. Equivalently, this can be stated as
\begin{eqnarray*}
f(n) = o \bigl (g(n) \bigr) \Longleftrightarrow \lim_{n \to \infty} \frac{f(n)}{g(n)} = 0,
\end{eqnarray*}
provided that $g(n)$ is non-zero. 

The relation $f(n) = \omega \bigl (g(n) \bigr)$ implies that $f$ dominates $g$, i.e.
\begin{eqnarray*}
f(n) = \omega \bigl (g(n) \bigr) \Longleftrightarrow \lim_{n \to \infty} \frac{f(n)}{g(n)} = \infty.
\end{eqnarray*}
The relation $f(n) \sim g(n)$ denotes the fact that $f(n)$ and $g(n)$ 
are asymptotically equivalent, i.e.
\begin{equation*}
\lim_{n \to  \infty} \dfrac{f(n)}{g(n)} =1.
\end{equation*}
Further, best, worst and average case performance denote the resource usage,
e.g. amount of memory in computer or running time, of a given algorithm
at least, at most and on average, respectively.
In other words, these terms describe the behaviour of an algorithm under
optimal circumstances (e.g. best--case scenario), worst circumstances and on
average \cite{intro}. In this work, the study will be concentrated on the average
and worst case performance of Quicksort and its variants.

In our analysis, we will frequently come across with harmonic numbers, whose
definition we now present. 
\begin{Definition}
The sum
\begin{eqnarray*} 
H_{n}^{(k)}:=\sum_{i=1}^{n}\frac{1}{i^{k}}=
\frac{1}{1^{k}}+\frac{1}{2^{k}}+\ldots+\frac{1}{n^{k}}
\end{eqnarray*}
is defined to be the generalised $n_{th}$
harmonic number of order $k$. When $k=1$, the sum denotes the $n_{th}$
harmonic number, which we simply write $H_{n}$. We define also $H_{0}:=0$.
\end{Definition}

There are numerous interesting properties of harmonic numbers, which are
not yet fully investigated and understood. Harmonic series have links with
Stirling numbers \cite{con} and arise frequently in the analysis of
algorithms. For $n$ large, it is well-known that \cite{knuth 1}, 
\begin{eqnarray*}
H_{n}= \log_{e}(n)+\gamma+\frac{1}{2n}-\frac{1}{12n^{2}
}+\frac{1}{120n^{4}}+O \left(\frac{1}{n^{6}} \right),
\end{eqnarray*}
where $\gamma=0.57721\ldots$ is the Euler--Mascheroni constant. We will
use this often, especially in the form $H_{n} = \log_{e}(n)+ \gamma + o(1)$.

Note that throughout this thesis, we shall adopt the convention of writing 
explicitly the base of logarithms.
For example, the natural logarithm of $n$ is denoted by $\log_{e}(n)$, instead of $\ln(n)$.
Also, the end of a proof will be denoted by the symbol ~\qedsymbol ~.

\section{Introduction to Quicksort}

Sorting an array of items is clearly a fundamental
problem, directly linked to efficient searching with numerous
applications. The problem is that given an array of keys, we want
to rearrange these in non-decreasing order. Note that the order 
may be numerical, alphabetical or any other transitive relation defined on the
keys \cite{knuth 3}. In this work, the analysis deals with numerical order,
where the keys are decimal numbers and
we particularly focus on Quicksort algorithm and variants of it. 
Quicksort was invented by C. A. R. Hoare \cite{hoar,hoare}.
Here is the detailed definition.
\begin{Definition} {\ \\} 
The steps taken by the Quicksort algorithm are:
\begin{enumerate}
\item Choose an element from the array, called pivot. 

\item Rearrange the array by comparing every element to the pivot,
so all elements smaller than or equal to the pivot come before the pivot and all 
elements greater than or equal to the pivot come after the pivot.

\item Recursively apply steps $1$ and $2$ to the subarray of the elements 
smaller than or equal to the pivot
and to the subarray of the elements greater than or equal to the pivot.
\end{enumerate}
\end{Definition}
Note that the original problem is divided into smaller ones, with
(initially) two subarrays, the keys smaller than the pivot, and those bigger
than it. Then recursively these are divided into smaller subarrays by further
pivoting, until we get trivially sorted subarrays, which 
contain one or no elements. Given an array of $n$ distinct keys 
$A=\{a_{1}, a_{2}, \ldots, a_{n}\}$
that we want to quick sort, with all the $n!$ permutations
equally likely, the aim is to finding the unique permutation
out of all the $n!$ possible, such that the keys are in increasing order. 
The essence of Quicksort is the 
partition operation, where by a series of pairwise
comparisons, the pivot is brought to its final place, with smaller
elements on its left and greater elements to the right. Elements
equal to pivot can be on either or both sides.

As we shall see, there are numerous partitioning schemes, and while the
details of them are not central to this thesis, we should describe the
basic ideas. A straightforward
and natural way (see e.g. \cite{knuth 3}) uses two pointers -- a left
pointer, initially at the left end of the array and a right pointer,
initially at the right end of the array. We pick the leftmost element of the
array as pivot and the right pointer scans from the right end of the
array for a key less than the pivot. If it finds such a key,
the pivot is swapped with that key. Then, the left pointer is increased
by one and starts its scan,
searching for a key greater than the pivot: if 
such a key is found, again the pivot is exchanged with it. When the
pointers are crossed, the pivot by repeated exchanges will ``float'' to its
final position and the keys which are on its left are smaller and keys on
its right are greater. The data movement of this scheme is quite large,
since the pivot is swapped with the other elements.

A different partitioning scheme, described in \cite{hoare} is the following.
Two pointers $i$ (the left pointer, initially $1$) and $j$ (the right pointer,
initially $n$) are set and a key is arbitrarily chosen as pivot. The left
pointer goes to the right until a key is found which is greater than the
pivot. If one is found, its scan is stopped and the right pointer
scans to the left until a key less than the pivot is found. If such a key
is found, the right pointer stops and those two keys are
exchanged. After the exchange, both pointers are stepped down one
position and the lower one starts its scan. When pointers are
crossed, i.e. when $i\geq j$, the final exchange places the pivot in
its final position, completing the partitioning.
The number of comparisons required to partition an array of $n$ keys is
at least $n-1$ and the expected number of exchanges is $\frac{n}{6}+ \frac{5}{6n}$.

A third partitioning routine, called Lomuto's partition, is mentioned in
\cite{bentley} -- this involves exactly $n-1$ comparisons, which
is clearly best possible, but the downside is the increased number of exchanges.
The expected number of key exchanges of this scheme is $\frac{n-1}{2}$, 
\cite{mahmoud}. 

We now consider the
worst case and best case, analysis of Quicksort. 
Suppose we want to sort the following array,
$\{a_{1} < a_{2} < \ldots < a_{n}\}$
and we are very unlucky and our initial choice of pivot is
the largest element $a_{n}$. Then of course we only
divide and conquer in a rather trivial sense: every element is below
the pivot, and it has taken us $n-1$ comparisons with $a_{n}$ to get
here. Suppose we now try again and are unlucky again, choosing
$a_{n-1}$ as pivot this time. Again the algorithm performs $n-2$
comparisons and we are left with everything less than $a_{n-1}$. If
we keep being unlucky in our choices of pivot, and keep choosing the
largest element of what is left, after $i$ recursive calls the running time of the algorithm
will be equal to $(n-1)+(n-2)+\ldots +(n-i)$ comparisons, so the
overall number of comparisons made is 
\begin{equation*}
1+2+\ldots +(n-1)=\frac{n\cdot(n-1)}{2}.
\end{equation*}
Thus Quicksort needs quadratic time to sort already sorted or reverse-sorted
arrays if the choice of pivots is unfortunate. 

If instead we always made good choices,
choosing each pivot to be roughly in the middle of the array we are
considering at present, then in the first round we make $n-1$ comparisons,
then in the two subarrays of size about $n/2$ we make about
$n/2$ comparisons, then in each of the four subarrays of size about $n/4$ we
make $n/4$ comparisons, and so on. So we make about $n$ comparisons in total
in each round. The number of rounds will be roughly $\log_{2}(n)$ as we are
splitting the arrays into roughly equally-sized subarrays at each stage,
and it will take $\log_{2}(n)$ recursions of this to get down to trivially
sorted arrays. 

Thus, in this good case we will need $O\bigl(n\log_{2}(n) \bigr )$
comparisons. This is of course a rather informal argument, but does illustrate
that the time complexity can be much smaller than the quadratic run-time in
the worst case. This is already raising the question of what the typical
time complexity will be: we address this in the next Chapter. 

We briefly discuss the space complexity of the algorithm. There are $n$
memory locations occupied by the keys. Moreover,
the algorithm, due to its recursive nature, needs additional space for the
storage of subarrays. The subarrays' boundaries are saved on to a stack,
which is a data structure providing temporary storage. At the end of the
partition routine, the pivot is placed in its final position between two
subarrays (one of them possibly empty). Recursively, the algorithm is
applied to the smaller subarray and the other one is pushed on to stack.
Since, in best and average
case of Quicksort, we have $O\bigl(\log_{2}(n)\bigr)$ recursive calls,
the required stack space is $O\bigl(\log_{2}(n)\bigr)$ locations in memory.
However, in worst case the stack may require $O(n)$ locations, if the
algorithm is applied to the larger subarray and the smaller one is saved
to the stack \cite{sedgewick}. 

This discussion makes it clear that the pivot selection plays a vital
role in the performance of the algorithm. Many authors have
proposed various techniques to remedy this situation and to avoid worst case
behaviour, see \cite{intro}, \cite{hoare}, \cite{knuth 3}, \cite{Scow},
\cite{sedgewick} and \cite{Singleton}. These include the random
shuffling of the array prior to initialisation of the algorithm, choosing
as pivot the median of the array, or the median of a random sample of
keys.

Scowen in his paper \cite{Scow}, suggested choosing as pivot the middle
element of the array: his variant is dubbed ``Quickersort''. Using this rule
for the choice of partitioning element, the aim is the splitting of the array
into two halves of equal size. Thus, in case where the array is nearly
sorted,
quadratic time is avoided but if the chosen pivot is the minimum or
maximum key, the algorithm's running time attains its worst case and this
variant does not offer any more than choosing the pivot randomly. Singleton
\cite{Singleton} suggested a better estimate of the median, by selecting as
pivot the median of leftmost, rightmost and middle keys of the input array.
Hoare \cite{hoare} suggested the pivot may be chosen as the median of a
random sample from the keys to be sorted, but he didn't analyse this approach.

One point is that Quicksort is not always very fast at sorting small arrays.
Knuth \cite{knuth 3} presented and analysed a partitioning scheme, which takes
$n+1$ instead of $n-1$ comparisons and the sorting of small subarrays (usually
from about 9 to 15 elements) is implemented using insertion sort,
since the recursive structure of Quicksort is better suited to large arrays.
Insertion sort is a simple sorting algorithm, which gradually
`constructs' a sorted array from left to right,
in the following manner. The first two 
elements are compared and exchanged, in case that are not in order. Then, the
third element is compared with the element on its left. If it is greater, it 
is left at its initial location, otherwise is compared with the first element and 
accordingly is inserted to its position in the sorted array of 3 elements. 
This process is iteratively applied to the
remaining elements, until the array is sorted. 
See as well in Cormen {\em et al.} \cite{intro}, 
for a description of the algorithm.

\section{Outline and contributions of thesis}

This thesis consists of seven Chapters and one Appendix. After the first, 
introductory Chapter, the rest of the thesis is organised as follows:

In {\bf Chapter 2}, we consider the first and second moments of the number of
comparisons made when pivots are chosen randomly. The result for the mean
is known and easy: the result for the variance is known, but less easy to
find a full proof of in the literature. We supply one. 
We briefly discuss the skewness of the number of comparisons and we
study the asymptotic behaviour of the algorithm.  

In {\bf Chapter 3}, we analyse the idea of choosing the pivot as a centered
statistic of a random  sample of the keys to be sorted and we obtain the
average number of comparisons required by these
variants, showing that the running time can be greatly improved. Moreover, we
present relevant definitions of entropy. Not much of this is original, but
some details about why various differential equations that arise in the
analysis have the solutions they do (i.e. details about roots of indicial polynomials) 
are not in literature. 

In {\bf Chapter 4}, we analyse extensions of Quicksort, where multiple pivots
are used for the partitioning of the array.
The main contributions in this Chapter are in sections {\bf 4.1} 
and {\bf 4.2}. The results in the former section were
published in the paper \cite{iliop}, where the expected costs related to the time 
complexity and the 
second moment of the number of comparisons are computed.
The latter section contains the analysis of the generalisation of the
algorithm. We study the general recurrence model, giving the expected cost of the variant,
provided that the cost during partitioning is linear, with respect to the number of keys. 
We also present the 
application of Vandermonde matrices for the computation of the constants 
involved to the cost of these variants. 

In {\bf Chapter 5}, various cases of partially ordered sets are discussed and the number of comparisons 
needed for the complete sorting is studied. The `information--theoretic lower bound'
is always $\omega(n)$ in these cases and we show that the time needed for the sorting of partial orders
is $O\bigl(n\log_{2}(n)\bigr)$. The main contribution of this Chapter,
is the derivation of the asymptotic number of comparisons needed, for the
sorting of various partially ordered sets. The basic ideas used here
are due to, amongst others, Cardinal {\em et al.} \cite{Cardinal}, Fredman \cite{fred}, 
Kahn and Kim \cite{kk}, Kislitsyn \cite{kisl}, but the working out of the detailed consequences for these
partial orders seems to be new. 

In {\bf Chapter 6}, we consider random graph orders, where the
`information--theoretic lower bound' is of the same order of magnitude as
the number of keys being sorted. We derive a new bound on the number of
linear extensions using entropy arguments, though it is not at present
competitive with an older bound in the literature \cite{abbj}. 

In {\bf Chapter 7}, we conclude the thesis, presenting future research directions. 
At this final Chapter, we 
derive another bound of the number of comparisons required to sort a random interval order and
we discuss the merging of linearly ordered sets.

In {\bf Appendix A}, we present the \textsc{Maple} calculations, regarding the derivation of
the variance of the number of comparisons of dual pivot Quicksort, analysed
in subsection {\bf 4.1.1}.

\chapter{Random selection of pivot}

In this Chapter, the mathematical analysis of Quicksort is presented, under the
assumption that the pivots are uniformly selected at random. Specifically, 
the major expected costs regarding the time complexity of the algorithm and the second
moment are computed. The derivation of the average costs is 
unified under a general recurrence relation, demonstrating the amenability of the algorithm
to a complete mathematical treatment. We also address the asymptotic analysis of the algorithm 
and we close this Chapter considering the presence of equal keys.

\section{Expected number of comparisons}

This discussion of lucky and unlucky choices of pivot suggests the idea
of selecting the pivot at random, as randomisation often helps to improve
running time in algorithms with bad worst-case, but good average-case
complexity \cite{de}. For example, we could choose the pivots 
randomly for a discrete uniform distribution on the array
we are looking at each stage. Recall that the uniform distribution on a finite set assigns
equal probability to each element of it.
\begin{Definition}
$C_{n}$ is the random variable giving the number of comparisons 
in Quicksort of $n$ distinct elements when
all the $n!$ permutations of the keys are equiprobable. 
\end{Definition}

It is clear that for $n=0$ or $n=1$, $C_{0} = C_{1} = 0$ as there is
nothing to sort. These are the initial or ``seed'' values of the recurrence relation for
the number of comparisons, given in the following Lemma.
\begin{Lemma}
The random number of comparisons $C_{n}$ for the sorting
of an array consisting of $n \geq 2$ keys, is given by
\begin{equation*}
C_{n}=C_{U_{n}-1}+{C^{\star}_{n-U_{n}}} + n-1,
\end{equation*}
where $U_{n}$ follows the uniform distribution over the set $\{1, 2,
\ldots ,n\}$ and ${C^{\star}_{n-U_{n}}}$ is identically distributed
to $C_{U_{n}-1}$ and independent of it conditional on $U_{n}$. 
\end{Lemma}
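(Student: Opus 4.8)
The plan is to set up the recurrence by conditioning on the rank of the pivot within the array being sorted. The key observation is that the partitioning step of Quicksort makes exactly $n-1$ comparisons (each of the other $n-1$ keys is compared once with the pivot) and then leaves two independent subproblems. So the proof naturally splits into three parts: (i) identify the distribution of the pivot's rank; (ii) argue that the two resulting subarrays are themselves "random" in the sense required for $C_{\bullet}$ to apply recursively; (iii) assemble the cost equation.

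First I would argue that if all $n!$ permutations of the keys are equally likely, then the pivot — regardless of which array position is used to select it, e.g. the leftmost — has rank $U_n$ uniformly distributed on $\{1,2,\ldots,n\}$. Indeed, for each $k$, the number of permutations in which the selected position holds the $k$-th smallest key is $(n-1)!$, independent of $k$, so $\Pr(U_n = k) = 1/n$. Conditional on $U_n = k$, the partition produces a left subarray of the $k-1$ keys smaller than the pivot and a right subarray of the $n-k$ keys larger than it, contributing $C_{k-1}$ and a cost on $n-k$ keys respectively, plus the $n-1$ comparisons made during partitioning; summing gives the stated identity with $U_n = k$.

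The main obstacle — and the step deserving genuine care — is (ii): justifying that, conditional on $U_n$, the left subarray is a uniformly random arrangement of its $k-1$ elements, the right subarray is a uniformly random arrangement of its $n-k$ elements, and these two arrangements are \emph{independent} of each other. This is what licenses writing the left cost as $C_{U_n-1}$ (distributed exactly like the original $C$ on $U_n-1$ keys) and the right cost as an independent copy $C^{\star}_{n-U_n}$ with the same conditional distribution. The clean way to see this is a counting/symmetry argument: among the $(n-1)!$ permutations with a fixed pivot rank $k$, the induced relative order of the small keys and the induced relative order of the large keys range over all $(k-1)! \cdot (n-k)!$ combinations equally often, since specifying the pivot's position, the set of positions occupied by small keys, and the two internal orderings determines the permutation bijectively. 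One should also note that this ``randomness preservation'' is exactly the property one needs to then iterate, so it is worth stating explicitly that the subarrays inherit the hypothesis of the Lemma; with a partitioning scheme that did not preserve uniformity this recurrence would fail, so the remark is not merely pedantic. Strictly, one should also check that whichever concrete partitioning routine is adopted does not re-examine the same pair twice and does not disturb the uniformity claim; here we rely on the schemes described earlier making precisely $n-1$ pivot comparisons.

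Finally I would remark that the recurrence is to be read as a distributional identity: the random variables on the right are coupled to $U_n$ as described (conditionally independent given $U_n$, each with the distribution of $C$ on the appropriate smaller index), and the seed values $C_0 = C_1 = 0$ from the preceding discussion make it well-posed. No asymptotics or generating functions are needed for the Lemma itself; it is purely the structural decomposition, and the only real content is the conditional-independence-and-uniformity claim, which the symmetry count settles.
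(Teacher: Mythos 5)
Your proof is correct and takes essentially the same approach as the paper: condition on the pivot's rank, count the $n-1$ comparisons made during partitioning, and recurse on the two subarrays. You go further than the paper's terse proof by explicitly verifying (a) that the pivot's rank is uniform on $\{1,\ldots,n\}$ and (b) the key conditional-independence-and-uniformity claim for the two subarrays via the $(k-1)!\cdot(n-k)!$ counting argument; both points are left implicit in the paper, so your additions supply genuine rigor rather than a different route.
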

\begin{proof}
The choice of $U_{n}$ as pivot, and comparing the other
$n-1$ elements with it, splits the array into two subarrays. There is 
one subarray of all $U_{n}-1$ elements
smaller than the pivot and another one of all $n-U_{n}$
elements greater than the pivot. Obviously
these two subarrays are disjoint. Then recursively two pivots are randomly
selected from the two subarrays, until the array is sorted, and so we get the
equation. 
\end{proof}

This allows us to find that the expected complexity of Quicksort applied to $n$
keys is:
\begin{align*}
\mathbb {E}(C_{n})& = \mathbb {E}(C_{U_{n}-1}+{C^{\star}_{n-U_{n}}}+n-1) \\
& = \mathbb {E}(C_{U_{n}-1})+ \mathbb {E}({C^{\star}_{n-U_{n}}})+n-1.
\end{align*}
Using conditional expectation and noting that $U_{n}=k$ has
probability $1/n$, we get, writing
$a_{k}$ for $\mathbb {E}(C_{k})$,
that
\begin{eqnarray*}
a_{n}=\sum_{k=1}^{n}\frac{1}{n}(a_{k-1}+a_{n-k})+n-1
\Longrightarrow a_{n}=\frac{2}{n}\sum_{k=0}^{n-1}a_{k}+n-1. 
\end{eqnarray*}
We have to solve this recurrence relation, in order to obtain a closed form
for the expected number of comparisons. The following result is well-known (e.g. see
in \cite{intro}, \cite{sedgewick}):
\begin{theorem}
The expected number $a_{n}$ of comparisons for Quicksort with uniform
selection of pivots is $a_{n}=2(n+1)H_{n}-4n$.
\end{theorem}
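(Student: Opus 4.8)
The plan is to solve the recurrence
\begin{equation*}
a_{n}=\frac{2}{n}\sum_{k=0}^{n-1}a_{k}+n-1, \qquad a_{0}=a_{1}=0,
\end{equation*}
constructively rather than by guessing the answer, so that the harmonic number emerges on its own. First I would clear the denominator to get $n a_{n}=2\sum_{k=0}^{n-1}a_{k}+n(n-1)$, then write the same identity with $n$ replaced by $n-1$, namely $(n-1)a_{n-1}=2\sum_{k=0}^{n-2}a_{k}+(n-1)(n-2)$, and subtract. The partial sums cancel except for the single term $2a_{n-1}$, and since $n(n-1)-(n-1)(n-2)=2(n-1)$ this leaves the first--order recurrence
\begin{equation*}
n a_{n}=(n+1)a_{n-1}+2(n-1), \qquad n\geq 2.
\end{equation*}

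The second step is to pick the right summation factor. Dividing by $n(n+1)$ gives $\frac{a_{n}}{n+1}=\frac{a_{n-1}}{n}+\frac{2(n-1)}{n(n+1)}$, so with $b_{n}:=a_{n}/(n+1)$ and $b_{1}=a_{1}/2=0$ the recurrence telescopes to $b_{n}=\sum_{k=2}^{n}\frac{2(k-1)}{k(k+1)}$. A partial--fraction split $\frac{2(k-1)}{k(k+1)}=-\frac{2}{k}+\frac{4}{k+1}$ turns this into harmonic sums; taking care with the index shift $\sum_{k=2}^{n}\frac{1}{k+1}=H_{n+1}-\frac{3}{2}$ gives $b_{n}=4H_{n+1}-2H_{n}-4$, and substituting $H_{n+1}=H_{n}+\frac{1}{n+1}$ simplifies this to $b_{n}=2H_{n}-4+\frac{4}{n+1}$. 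Multiplying back by $n+1$ yields $a_{n}=2(n+1)H_{n}-4(n+1)+4=2(n+1)H_{n}-4n$. I would finish by checking the formula against $a_{0}=a_{1}=0$ and a couple of small values such as $a_{2}=1$ and $a_{3}=\frac{8}{3}$, to confirm that no additive constant has been lost.

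I do not expect a genuine obstacle; the difficulties are all bookkeeping. The one idea that really matters is subtracting consecutive instances of the recurrence to kill the running sum --- without it one is stuck with a full--history recurrence that resists the summation--factor method --- and after that the places most prone to error are the off--by--one in the telescoped harmonic sums and the value of $b_{n}$ at which the telescoping starts. As an independent check one could instead take the closed form as given and verify it by induction on $n$, plugging $a_{k}=2(k+1)H_{k}-4k$ into $\frac{2}{n}\sum_{k=0}^{n-1}a_{k}+n-1$ and using a standard evaluation of $\sum_{k}(k+1)H_{k}$; but I prefer the constructive derivation above, since it explains where the factor $2(n+1)H_{n}$ comes from.
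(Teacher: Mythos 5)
Your proof is correct and follows essentially the same route as the paper's: multiply by $n$, subtract the instance at $n-1$ to eliminate the running sum, divide by $n(n+1)$ to make the recurrence telescope, expand $\frac{2(k-1)}{k(k+1)}$ by partial fractions, and reassemble into $2(n+1)H_n-4n$. The arithmetic and index bookkeeping all check out (the paper writes the partial fraction as $\frac{2}{j+1}-\frac{1}{j}$ rather than $-\frac{2}{k}+\frac{4}{k+1}$ before multiplying by the overall factor of $2$, but the steps are the same).
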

\begin{proof}
We multiply both sides of the formula for $a_{n}$ by $n$, getting 
\begin{eqnarray*}
na_{n}=2\sum_{k=0}^{n-1}a_{k}+n(n-1)
\end{eqnarray*}
and similarly, multiplying by $n-1$, 
\begin{eqnarray*}
(n-1)a_{n-1}=2\sum_{k=0}^{n-2}a_{k}+(n-2)(n-1).
\end{eqnarray*}

Subtracting $(n-1)a_{n-1}$ from $na_{n}$ in order to eliminate the sum -- 
see \cite{knuth 3}, we obtain
\begin{align*}
na_{n}-(n-1)a_{n-1}&=2a_{n-1}+2(n-1) \\
& \Longrightarrow na_{n}=(n+1)a_{n-1}+2(n-1) \\
& \Longrightarrow \frac{a_{n}}{n+1}=\frac{a_{n-1}}{n}+\frac{2(n-1)}{n(n+1)}.
\end{align*}
``Unfolding'' the recurrence we get 
\begin{align*}
\frac{a_{n}}{n+1}&=2\sum_{j=2}^{n}\frac{(j-1)}{j(j+1)}=2\sum_{j=2}^{n}\left 
(\frac{2}{j+1}-\frac{1}{j}\right) \\
& =4H_{n}+\frac{4}{n+1}-4-2H_{n} \\
& =2H_{n}+\frac{4}{n+1}-4.
\end{align*}
Finally, 
\begin{equation*}
a_{n}=2(n+1)H_{n}-4n. \qedhere
\end{equation*}
\end{proof}

We now show a slick way of solving the recurrence about the 
expected number of comparisons using generating
functions. This approach is also noted in various places, e.g. \cite{knuth 1},
\cite{nip}, \cite{sedgewick}. We again start from 
\begin{eqnarray*}
a_{n}=\frac{2}{n}\sum_{j=0}^{n-1}a_{j}+n-1.
\end{eqnarray*}
We multiply through by $n$ to clear fractions, getting
\begin{align*}
& na_{n} =n(n-1)+2\sum_{j=0}^{n-1}a_{j} \\
& \quad {}\Longrightarrow \sum_{n=1}^{\infty}na_{n}x^{n-1}=\sum_{n=1}^{\infty}n(n-1)x^{n-1}
+2\sum_{n=1}^{\infty}\sum_{j=0}^{n-1}a_{j}x^{n-1} \\
& \quad {} \Longrightarrow \sum_{n=1}^{\infty}na_{n}x^{n-1}=x\sum_{n=2}^{\infty}n(n-1)x^{n-2}
+2\sum_{n=1}^{\infty}\sum_{j=0}^{n-1}a_{j}x^{n-1}.
\end{align*}
Letting $f(x)=\sum_{n=0}^{\infty}a_{n}x^{n}$,
\begin{eqnarray*}
f^{\prime}(x)=x\sum_{n=2}^{\infty}n(n-1)x^{n-2}
+2\sum_{j=0}^{\infty}\sum_{n=j+1}^{\infty}a_{j}x^{n-1},
\end{eqnarray*}
changing round the order of summation. To evaluate 
$\displaystyle \sum_{n=2}^{\infty}n(n-1)x^{n-2}$, 
simply note this is the
2nd derivative of $\displaystyle \sum_{n=0}^{\infty}x^{n}$. Since the latter sum is equal to
$1/(1-x)$, for $\vert x\vert <1$, its second derivative is easily checked to
be $2(1-x)^{-3}$. Multiplying the sum by $x$ now gives
\begin{eqnarray*}
f^{\prime}(x)=2x(1-x)^{-3}+
2\sum_{j=0}^{\infty}a_{j}\sum_{n=j+1}^{\infty}x^{n-1}.
\end{eqnarray*}

We evaluate the last double-sum. The inner sum is of course $x^{j}/(1-x)$
being a geometric series. Thus we get
\begin{align*}
f^{\prime}(x)&=2x(1-x)^{-3}+
\frac{2}{1-x}\sum_{j=0}^{\infty}a_{j}x^{j} \\
&=2x(1-x)^{-3}+\frac{2}{1-x}f(x).
\end{align*}
This is now a fairly standard kind of differential equation. Multiplying
both sides by $(1-x)^{2}$, we see
\begin{align*}
(1-x)^{2}f^{\prime}(x)&=2x(1-x)^{-1}+2(1-x)f(x) \\
\Longrightarrow &(1-x)^{2}f^{\prime}(x)-2(1-x)f(x)=\frac{2}{1-x}-2 \\
\Longrightarrow &\left((1-x)^{2}f(x)\right)^{\prime}=\frac{2}{1-x}-2 \\
\Longrightarrow &(1-x)^{2}f(x)=-2\log_{e}(1-x)-2x+c.
\end{align*}
Setting $x=0$, we get $f(0)=0$ on the left-hand side, and on the
right-hand side we get $c$, so $c=0$. Therefore
\begin{eqnarray*}
f(x)=\frac{-2\Bigl(\log_{e}(1-x)+x\Bigr)}{(1-x)^{2}}.
\end{eqnarray*}

Expanding out $\log_{e}(1-x)$ as a series, $-x-x^{2}/2-x^{3}/3-\ldots $, 
and similarly
writing $1/(1-x)^{2}$ as the derivative of $1/(1-x)$, we obtain
\begin{eqnarray*}
f(x)=2\sum_{k=2}^{\infty}\frac{x^{k}}{k}\sum_{j=0}^{\infty}(j+1)x^{j}.
\end{eqnarray*}
Thus, looking at coefficients of $x^{n}$ on both sides, we get on the
left-hand side $a_{n}$. On the right-hand side, we get the coefficient for each
$x^{j}$ in the first series (which is $1/j$) times the term for the
$x^{n-j}$ in the other, namely $(n-j+1)$. So we get
\begin{align*}
a_{n}&= 2\sum_{k=2}^{n}\frac{n-k+1}{k}=2\sum_{k=2}^{n}\frac{n+1}{k}-2\sum_{k=2}^{n}1 \\
&= 2(n+1)\biggl(\sum_{k=1}^{n}\frac{1}{k}-1 \biggr) -2(n-1) \\
&= 2(n+1)H_{n}-4n.
\end{align*}

Some texts give a
different argument for this, as follows. 
\begin{theorem}[Mitzenmacher and Upfal \cite{prob}]
Suppose that a pivot is chosen independently and uniformly at random from an
array of $n$ keys, in which Quicksort is applied. Then, for any
input, the expected number of comparisons made by randomised Quicksort
is $2n \log_{e}(n) + O(n)$.
\end{theorem}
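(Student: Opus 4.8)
The plan is to prove this by a direct probabilistic argument using indicator random variables, which is the hallmark of the Mitzenmacher--Upfal style of analysis and avoids solving any recurrence. First I would fix the sorted order of the keys as $b_1 < b_2 < \cdots < b_n$ (this is well-defined even though the input order is arbitrary, since comparisons only depend on relative order). For each pair $i < j$ I would introduce the indicator $X_{ij}$ which is $1$ if $b_i$ and $b_j$ are ever directly compared during the execution of Quicksort, and $0$ otherwise. Since every comparison in Quicksort is between the current pivot and another element, and once an element has served as a pivot it is removed from all future subarrays, any given pair is compared at most once; hence the total number of comparisons is exactly $C_n = \sum_{i<j} X_{ij}$, and by linearity $\mathbb{E}(C_n) = \sum_{i<j} \Pr(X_{ij}=1)$.

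The key step is computing $\Pr(X_{ij}=1)$. The standard observation is that $b_i$ and $b_j$ are compared if and only if one of them is the first among the set $\{b_i, b_{i+1}, \ldots, b_j\}$ (the keys whose values lie between them, inclusive) to be chosen as a pivot: if some $b_k$ with $i < k < j$ is picked first, then $b_i$ and $b_j$ are split into different subarrays and never meet; if $b_i$ or $b_j$ is picked first, they are compared at that moment. Because pivots are chosen uniformly at random, each of the $j-i+1$ elements of that set is equally likely to be the first one selected from it, so $\Pr(X_{ij}=1) = \frac{2}{j-i+1}$. Substituting and reindexing with $k = j-i+1$ gives
\begin{equation*}
\mathbb{E}(C_n) = \sum_{i=1}^{n-1}\sum_{j=i+1}^{n}\frac{2}{j-i+1} = \sum_{k=2}^{n}\frac{2(n-k+1)}{k}.
\end{equation*}

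Finally I would simplify this sum, which is exactly the computation already carried out in the generating-function proof above: $\sum_{k=2}^{n}\frac{2(n-k+1)}{k} = 2(n+1)H_n - 4n$. Using the asymptotic expansion $H_n = \log_e(n) + \gamma + o(1)$ from the Definition of harmonic numbers, one gets $2(n+1)H_n - 4n = 2n\log_e(n) + O(n)$, which is the claimed bound; note the argument gives the same exact constant as the recurrence-based proofs, so the "$O(n)$" is genuinely sharp in its leading term. The main obstacle — really the only subtle point — is justifying the independence-free claim that each element of $\{b_i,\ldots,b_j\}$ is equally likely to be the first pivot drawn from this set: this needs the observation that conditioning on which subarray currently contains all of $\{b_i,\ldots,b_j\}$, the pivot within that subarray is uniform, and restricting a uniform choice to a fixed subset is still uniform on that subset, so by symmetry each of the $j-i+1$ candidates has probability $\frac{1}{j-i+1}$ of being picked before the others. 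Everything else is the routine harmonic-number manipulation done earlier in the chapter.
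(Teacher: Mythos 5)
Your proof is correct and takes essentially the same approach as the paper's: indicator variables $X_{ij}$ for whether the $i$th and $j$th smallest elements are ever compared, the observation that this happens precisely when one of them is the first of $\{b_i,\ldots,b_j\}$ chosen as pivot (giving probability $\frac{2}{j-i+1}$), and then the same reindexed harmonic sum yielding $2(n+1)H_n - 4n$. The extra remark you add about why the first-pivot-from-the-set is uniform is a sensible clarification but does not change the argument.
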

\begin{proof} 
Let $\{x_{1}, x_{2}, \ldots, x_{n}\}$ be the input
values and the output, after the termination of Quicksort, (i.e.
keys in increasing order) be a permutation of the initial array $\{y_{1}, y_{2},
\ldots, y_{n}\}$. For $i<j$, let $X_{ij}$ be a $\{0,1\}$-valued
random variable, that takes the value $1$, if $y_{i}$ and $y_{j}$ are
compared over the course of algorithm and $0$, otherwise. Then, the
total number of comparisons $X$ satisfies
\begin{align*}
X &= \sum_{i=1}^{n-1}\sum_{j=i+1}^{n}X_{ij} \\
\Longrightarrow \mathbb {E}(X) &= \mathbb {E}\left(\sum_{i=1}^{n-1}\sum_{j=i+1}^{n}X_{ij}\right)
=\sum_{i=1}^{n-1}\sum_{j=i+1}^{n}\mathbb {E}(X_{ij}). 
\end{align*}
Since $X_{ij}\in \{0,1\}$, then $\mathbb{E}(X_{ij})=\mathbb{P}(X_{ij}=1)$. This
event occurs when $y_{i}$ and $y_{j}$ are compared. Clearly, this
happens when $y_{i}$ or $y_{j}$ is the first number chosen as pivot from the
set $Y=\{y_{i}, y_{i+1}, \ldots, y_{j-1}, y_{j}\}$. (Because otherwise
if some element between them is chosen, $y_{k}$ say, $y_{i}$ and $y_{j}$
are compared to $y_{k}$ and $y_{i}$ is put below $y_{k}$ and $y_{j}$ above it
with the result that $y_{i}$ and $y_{j}$ are never compared).

Since the pivot is chosen uniformly at random, the probability the one of these two elements
is the first of the $j-i+1$ elements chosen is equal to
$\displaystyle \frac{1}{j-i+1}+\frac{1}{j-i+1}= \frac{2}{j-i+1}$. Substituting
$k=j-i+1$, we obtain
\begin{eqnarray*}
\mathbb {E}(X)=\mathbb {E} \left(\sum_{i=1}^{n-1}\sum_{j=i+1}^{n}\frac{2}{j-i+1}\right)
=\sum_{i=1}^{n-1}\sum_{k=2}^{n-i+1}\frac{2}{k}
=\sum_{k=2}^{n}\sum_{i=1}^{n+1-k}\frac{2}{k}.
\end{eqnarray*}
The change of sum is justified by writing out the possible rows for fixed
$i$ and columns for a fixed $k$, then changing round from summing
rows first to summing columns first. This is, as $2/k$ does not depend
on $i$, equal to
\begin{eqnarray*}
\sum_{k=2}^{n}(n+1-k)\frac{2}{k}=(n+1)\sum_{k=2}^{n}\frac{2}{k}-2\sum_{k=2}^{n}1.
\end{eqnarray*}
Thus,
\begin{align*}
2(n+1)\sum_{k=2}^{n}\frac{1}{k}-2n+2 & =2(n+1)(H_{n}-1)-2n+2 \\
& =2(n+1)H_{n}-4n,
\end{align*}
proving the claim. 
\end{proof}

Also of some interest is the mean number of partitioning stages $\mathbb{E}(P_{n})$ of the
algorithm applied to $n$ keys as input. For the case where we simply use Quicksort 
for all the sorting,
it is obvious that we will have $P_{0}=0$ and for $n \geq 1$, the number of
partitioning stages $P_{n}$ obeys the following recurrence conditional on
$j$ being the rank of the pivot
\begin{equation*}
P_{n}=1+P_{j-1}+P_{n-j}.
\end{equation*}
Observe that $P_{1}=1$ and $P_{2}=2$.
Taking expectations, and noting that the pivot is uniformly chosen at 
random and that the two sums $\sum_{j}\mathbb{E}(P_{j-1})$ and
$\sum_{j}\mathbb{E}(P_{n-j})$ are equal, we see 
\begin{equation*}
\mathbb {E}(P_{n})=1+\frac{2}{n}\sum_{j=0}^{n-1}\mathbb {E}(P_{j}).
\end{equation*}
Multiplying both sides by $n$ and differencing the recurrence relation, 
as we did for the derivation of the expected number of comparisons, we have
\begin{align*}
n\mathbb {E}(P_{n})-(n-1)\mathbb {E}(P_{n-1})& = 1+2\mathbb {E}(P_{n-1}) \\
\Longrightarrow \dfrac{\mathbb {E}(P_{n})}{n+1}& =\dfrac{1}{n(n+1)}+\dfrac{\mathbb {E}(P_{n-1})}{n} \\
\Longrightarrow \mathbb {E}(P_{n})& = (n+1)\biggl (\sum_{j=2}^{n}
\Bigl(\dfrac{1}{j}-\dfrac{1}{j+1}\Bigr)+\dfrac{1}{2} \biggr).
\end{align*}
Finally, $\mathbb{E}(P_{n})=n$.

\section{Expected number of exchanges}

Here we consider the number of exchanges or swaps performed by the algorithm,
which is mentioned by Hoare \cite{hoare} as a relevant quantity.
We assume that each swap has a fixed cost and as in the previous
section, we assume that the keys are distinct and that all $n!$ permutations
are equally likely to be the input: this in particular implies that the pivot is 
chosen uniformly at random from the array.

We should specify the partitioning procedure. Assume that we have to sort
$n$ distinct keys, where their locations in the array are 
numbered from left to right by
$1, 2, \ldots, n$.
Set two pointers $i \leftarrow 1$ and
$j \leftarrow n-1$ and select the element at location $n$ as a pivot. First, compare
the element at location $1$ with the pivot. If this key is less than the 
pivot, increase $i$ by one until an
element greater than the pivot is found. If an element greater than the
pivot is found, stop and compare the element at location $n-1$ with the
pivot. If this key is greater than the pivot, then
decrease $j$ by one and compare the next element to the pivot. 
If an element less than the pivot is found, then the $j$ pointer stops its
scan and the keys that the two pointers refer are exchanged.

Increase $i$ by one, decrease $j$ by one and 
in the same manner continue the scanning of the array until $i \geq j$.
At the end of the partitioning operation, the pivot is placed in
its final position $k$, where $1 \leq k \leq n$, and Quicksort is recursively
invoked to sort the subarray of $k-1$ keys less than the pivot and the
subarray of $n-k$ keys greater than the pivot \cite{hoar, hoare}.

Note that the probability of a
key being greater than the pivot is
\begin{eqnarray*}
\frac{n-k}{n-1}.
\end{eqnarray*}
The number of keys which are greater than pivot, and were moved during
partition is 
\begin{equation*}
\frac{n-k}{n-1}\cdot (k-1).
\end{equation*}
Therefore, considering also that pivots are uniformly chosen and noting that
we have to count the final swap with the pivot at the end of partition
operation, we obtain
\begin{eqnarray*}
\sum_{k=1}^{n}\frac{(n-k)(k-1)}{n(n-1)}+1=\dfrac{n}{6}+\dfrac{2}{3}.
\end{eqnarray*}

Let $S_{n}$ be the total number of exchanges, when the algorithm is 
applied to an array of $n$ distinct keys.
We have that $S_{0}=S_{1}=0$ and 
for $n \geq 2$, the following recurrence holds 
$$S_{n}=\mbox{\sl ``Number~of~exchanges~during~partition~routine''}+S_{k-1}+S_{n-k}.$$ 
Since the pivot is chosen uniformly at random, the recurrence for the expected number of exchanges is
\begin{align*}
\mathbb {E}(S_{n}) &=\frac{n}{6}+\frac{2}{3}+\frac{1}{n}\sum_{k=1}^{n}\Bigl(\mathbb {E}(S_{k-1})
+\mathbb {E}(S_{n-k})\Bigr) \\
\Longrightarrow \mathbb {E}(S_{n})&=\frac{n}{6}+\frac{2}{3} 
+\frac{2}{n}\sum_{k=0}^{n-1}\mathbb {E}(S_{k}).
\end{align*}

This recurrence relation is similar to the recurrences about the mean number
of comparisons and will be solved by the same way. Subtracting $(n-1)\mathbb {E}(S_{n-1})$ 
from $n\mathbb {E}(S_{n})$, the recurrence becomes
\begin{align*}
n\mathbb {E}(S_{n})-(n-1)\mathbb {E}(S_{n-1})&=\frac{2n+3}{6}+2\mathbb {E}(S_{n-1}) \\
\Longrightarrow \frac{\mathbb {E}(S_{n})}{n+1}&=\frac{2n+3}{6n(n+1)}+\frac{\mathbb {E}(S_{n-1})}{n}.
\end{align*}
Telescoping, the last relation yields
\begin{align*}
\frac{\mathbb {E}(S_{n})}{n+1} &=\sum_{j=3}^{n}\frac{2j+3}{6j(j+1)} + \frac{1}{3}=
\sum_{j=3}^{n}\frac{1}{3(j+1)}+\sum_{j=3}^{n}\frac{1}{2j(j+1)} + \frac{1}{3} \\
&=\frac{1}{3}\sum_{j=3}^{n}\frac{1}{j+1} + \frac{1}{2}\Biggl(\sum_{j=3}^{n}\frac{1}{j}-
\sum_{j=3}^{n}\frac{1}{j+1}\Biggr)+ \frac{1}{3} \\
&=\frac{1}{3} \biggl (H_{n+1}- \frac{11}{6}\biggr )
+\frac{1}{2}\biggl(\frac{1}{3} - \frac{1}{n+1} \biggr )+ \frac{1}{3}.
\end{align*}
Tidying up, the average number of exchanges in course of the algorithm is
\begin{eqnarray*}
\mathbb {E}(S_{n})=\frac{(n+1)H_{n}}{3}-\frac{n}{9}-\frac{5}{18}. 
\end{eqnarray*}
Its asymptotic value is 
\begin{equation*}
\frac{n\log_{e}(n)}{3}.
\end{equation*}
It follows that asymptotically the mean number of exchanges is about 
$1/6$ of the average number of comparisons. 

In a variant of the algorithm analysed in
\cite{knuth 3} and \cite{sedgewick}, which we briefly mentioned in the introduction,
partitioning of $n$ keys takes $n+1$ comparisons and subfiles of $m$ or fewer
elements are sorted using insertion sort. Then the average number of
comparisons, partitioning stages and exchanges respectively, are
\begin{align*}
\mathbb {E}(C_{n})&=(n+1)(2H_{n+1}-2H_{m+2}+1), \\
\mathbb {E}(P_{n})&=2\cdot \frac{n+1}{m+2}-1 \mbox{~~and~} \\
\mathbb {E}(S_{n})&=(n+1)\Bigl(\frac{1}{3}H_{n+1}-\frac{1}{3}H_{m+2}+\frac{1}{6}-
\frac{1}{m+2}\Bigr)+\frac{1}{2}, \mbox{~for~$n>m.$}
\end{align*}
For $m=0$, we obtain the average quantities when there is no switch to insertion sort. 
Note that in this case the expected costs are
\begin{align*}
\mathbb {E}(C_{n})&=2(n+1)H_{n}-2n,  \\
\mathbb {E}(P_{n})&= n \mbox{~~and~} \\
\mathbb {E}(S_{n})&=\frac{2(n+1)H_{n}-5n}{6}.
\end{align*}

\section{Variance}

We now similarly discuss the variance of the number of comparisons in
Quicksort. Although the result has been known for many years -- see
\cite{knuth 3}, exercise 6.2.2-8 for an outline -- there is not
really a full version of all the details written down conveniently
that we are aware of, so we have provided such an account -- this summary
has been put on the {\bf arXiv}, \cite{viliopo}. The sources \cite{sk},
\cite{nip} and \cite{sedgewick} were useful in putting the argument together.
Again, generating functions will be used. The result is: 
\begin{theorem}
The variance of the number of comparisons of Quicksort on $n$ keys,
with a pivot chosen uniformly at random is
\begin{eqnarray*}
\operatorname {Var}(C_{n})=7n^{2}-4(n+1)^{2}H_{n}^{(2)}-2(n+1)H_{n}+13n.
\end {eqnarray*}
\end{theorem}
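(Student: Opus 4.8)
The plan is to set up a recurrence for the second moment $b_n := \mathbb{E}(C_n^2)$ by squaring the distributional recurrence $C_n = C_{U_n-1} + C^{\star}_{n-U_n} + (n-1)$ from the Lemma, and then convert to a generating function as was done for the mean. Squaring gives
\begin{equation*}
C_n^2 = C_{U_n-1}^2 + (C^{\star}_{n-U_n})^2 + (n-1)^2 + 2C_{U_n-1}C^{\star}_{n-U_n} + 2(n-1)C_{U_n-1} + 2(n-1)C^{\star}_{n-U_n}.
\end{equation*}
Taking $\mathbb{E}(\cdot \mid U_n = k)$, the cross term $\mathbb{E}(C_{k-1}C^{\star}_{n-k}) = a_{k-1}a_{n-k}$ by conditional independence, and then averaging over $k \in \{1,\dots,n\}$ produces
\begin{equation*}
b_n = \frac{2}{n}\sum_{k=0}^{n-1} b_k + (n-1)^2 + \frac{4(n-1)}{n}\sum_{k=0}^{n-1} a_k + \frac{2}{n}\sum_{k=1}^{n} a_{k-1}a_{n-k}.
\end{equation*}
Since $a_k = 2(k+1)H_k - 4k$ is known explicitly from Theorem 2.1.1, every term on the right except the $b_k$-sum is a known function of $n$; call the whole known part $t_n$, so $b_n = \frac{2}{n}\sum_{k=0}^{n-1}b_k + t_n$. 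This is structurally identical to the mean recurrence, so the first step after assembling $t_n$ is to evaluate the convolution $\sum_{k} a_{k-1}a_{n-k}$ in closed form — this will involve sums of products of harmonic numbers and is the first place real bookkeeping is needed.

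Next I would introduce $g(x) = \sum_{n\ge 0} b_n x^n$ and mirror the generating-function manipulation used for $f(x)$: multiply the recurrence by $n$, sum against $x^{n-1}$, recognise $\sum b_k$ as producing a factor $g(x)/(1-x)$, and obtain a first-order linear ODE of the form $(1-x)^2 g'(x) - 2(1-x)g(x) = h(x)$, where $h(x)$ is the generating function of the known sequence $n\,t_n$ minus the already-accounted sum contributions. The convolution term is the pleasant part here: $\sum_n \bigl(\sum_k a_{k-1}a_{n-k}\bigr) x^{n} = x\,f(x)^2$, and since $f(x) = -2(\log_e(1-x)+x)/(1-x)^2$ is known, $h(x)$ becomes an explicit combination of powers of $(1-x)^{-1}$, $\log_e(1-x)$, and $\log_e^2(1-x)$. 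Dividing the ODE through by $(1-x)^2$ again makes the left side $\bigl((1-x)^2 g(x)\bigr)'$, so $(1-x)^2 g(x) = \int h(x)(1-x)^{-2}\,dx$, with the integration constant fixed by $g(0) = b_0 = 0$.

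The main obstacle will be the extraction of $[x^n]$ from the resulting closed form for $g(x)$: one has to integrate terms like $\log_e^2(1-x)/(1-x)^k$, and then expand products of $(1-x)^{-j}$ with $\log_e(1-x)$ and $\log_e^2(1-x)$ into power series whose coefficients are $H_n$, $H_n^{(2)}$, and polynomials in $n$. This is where the sources cited (\cite{sk}, \cite{nip}, \cite{sedgewick}) do the heavy lifting, and where sign errors and off-by-one shifts in the harmonic-number indices are easy to make; the standard tools are $-\log_e(1-x)/(1-x) = \sum H_n x^n$, $\log_e^2(1-x)/(1-x) = \sum (H_n^2 - H_n^{(2)}) x^n$ and their derivatives. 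After collecting the coefficient of $x^n$ and simplifying, one gets $b_n = \mathbb{E}(C_n^2)$, and finally $\operatorname{Var}(C_n) = b_n - a_n^2$; subtracting $a_n^2 = \bigl(2(n+1)H_n - 4n\bigr)^2$ — which itself contributes $H_n^2$ and $H_n$ terms — the $H_n^2$ contributions must cancel, leaving the stated $7n^2 - 4(n+1)^2 H_n^{(2)} - 2(n+1)H_n + 13n$. Verifying that cancellation, and checking the formula against small cases such as $n=2$ (where $C_2 \equiv 1$, so the variance must be $0$) and $n=3$, is the natural final sanity check.
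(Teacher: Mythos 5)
Your proposal is correct, and it takes a genuinely different route from the one in the thesis. You work directly with the second moment $b_n=\mathbb{E}(C_n^2)$, obtained by squaring the distributional recurrence $C_n = C_{U_n-1}+C^{\star}_{n-U_n}+(n-1)$, and your recurrence for $b_n$ checks out (the cross term does give $a_{k-1}a_{n-k}$ by conditional independence, and both single-index sums fold into $\frac{4(n-1)}{n}\sum_{k=0}^{n-1}a_k$). You then propose to solve it by transplanting the generating-function-to-ODE method used for the mean, and the observation that the convolution term has generating function $x\,f(x)^2$ with $f(x)=-2\bigl(\log_e(1-x)+x\bigr)/(1-x)^2$ known in closed form is exactly what makes that route viable. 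The thesis instead starts from the probability generating function $f_n(z)=\frac{z^{n-1}}{n}\sum_{j=1}^{n}f_{j-1}(z)f_{n-j}(z)$ (Theorem 2.3.2), differentiates twice, sets $z=1$ to get a recurrence for $B_n=\frac{1}{2}f_n''(1)$ (which carries the same information as your $b_n$ via $b_n=2B_n+a_n$), and then \emph{avoids} the ODE entirely: it solves the recurrence by the double-difference trick (subtracting $nB_n$ from $(n+1)B_{n+1}$, dividing by $(n+1)(n+2)$, and telescoping), supported by a battery of harmonic-sum lemmas such as Lemma~2.3.4 for $\sum j H_{j-1}$, Lemma~2.3.7 for $\sum j^2 H_{j-1}$, and Lemma~2.3.9 for $\sum H_iH_{n+1-i}$. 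The trade-off: your ODE route is conceptually cleaner (one differential equation, one closed form, one coefficient extraction), but it pushes the hard work into integrating expressions like $\log_e^2(1-x)/(1-x)^k$ and then reading off coefficients, which is exactly as error-prone as the harmonic-sum bookkeeping it replaces. The identities you cite, in particular $[x^n]\,\log_e^2(1-x)/(1-x)=H_n^2-H_n^{(2)}$, are correct and are in fact equivalent to the thesis's Corollary~2.3.10, so the two paths do converge on the same combinatorial content. Your final sanity check at $n=2$ is sound: the formula yields $28-45-9+26=0$, as it must since $C_2$ is deterministic.
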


We start with a recurrence for the generating function of $C_{n}$, namely
$\displaystyle f_{n}(z)=\sum_{k=0}^{\infty}\mathbb{P}(C_{n}=k)z^{k}$. We will use this to
reduce the proof of the Theorem to proving a certain recurrence formula
for the expression $\frac{f_{n}^{''}(1)}{2}$.
\begin{theorem}
In Random Quicksort of $n$ keys, the generating functions $f_{i}$ satisfy
$$f_{n}(z)=\frac{z^{n-1}}{n}\sum_{j=1}^{n}f_{j-1}(z)f_{n-j}(z).$$
\end{theorem}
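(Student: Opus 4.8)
The plan is to condition on the rank $j$ of the pivot, exactly as in the earlier Lemma giving the distributional recurrence $C_n = C_{U_n-1} + C^{\star}_{n-U_n} + (n-1)$, and then translate that recurrence for random variables into a recurrence for probability generating functions. First I would recall that for a nonnegative integer-valued random variable $C$, its pgf is $f(z) = \mathbb{E}(z^C) = \sum_{k\ge 0}\mathbb{P}(C=k)z^k$, and that the pgf of a sum of \emph{independent} random variables is the \emph{product} of their pgfs, while the pgf of the deterministic shift by $n-1$ contributes a factor $z^{n-1}$.

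Next I would fix the event $\{U_n = j\}$, which has probability $1/n$ for each $j \in \{1,\dots,n\}$. Conditional on this event, $C_n$ is the sum of three independent pieces: $C_{j-1}$ (comparisons sorting the lower subarray of $j-1$ keys), $C^{\star}_{n-j}$ (comparisons sorting the upper subarray of $n-j$ keys, independent of the first conditional on $U_n$ and identically distributed to $C_{j-1}$ when the sizes match — here it is just an independent copy of $C_{n-j}$), and the deterministic $n-1$ comparisons against the pivot. Hence the conditional pgf is $\mathbb{E}(z^{C_n}\mid U_n = j) = z^{n-1} f_{j-1}(z) f_{n-j}(z)$, using independence to split the expectation of the product into the product of expectations. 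The only subtlety worth stating carefully is that, although the two recursive calls operate on subarrays whose relative orders are \emph{not} independent of each other in general, conditioning on the pivot rank $j$ makes each subarray a uniformly random permutation of its elements and makes the two subarrays conditionally independent — this is precisely the content of the Lemma already proved, so I would simply cite it rather than re-derive it.

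Finally I would average over $j$ by the law of total expectation (equivalently, total probability on the coefficients): $f_n(z) = \sum_{j=1}^{n}\mathbb{P}(U_n=j)\,\mathbb{E}(z^{C_n}\mid U_n=j) = \frac{1}{n}\sum_{j=1}^{n} z^{n-1} f_{j-1}(z) f_{n-j}(z) = \frac{z^{n-1}}{n}\sum_{j=1}^{n} f_{j-1}(z) f_{n-j}(z)$, which is the claimed identity. I would also check the boundary cases: for $n=1$, the sum has the single term $f_0(z)f_0(z)$, and since $C_0 = C_1 = 0$ we have $f_0(z) = f_1(z) = 1$, and the right-hand side gives $z^{0}\cdot\frac{1}{1}\cdot 1 = 1 = f_1(z)$, so the formula holds at the seed values too (and trivially for $n=0$ if one adopts the empty-sum convention, or one simply takes $n\ge 1$).

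I do not expect any real obstacle here: the proof is a routine transfer of the distributional recurrence of the earlier Lemma into generating-function language. The one place to be a little careful is the independence/identical-distribution justification for factoring $\mathbb{E}(z^{C_{j-1}+C^{\star}_{n-j}}\mid U_n=j)$ as $f_{j-1}(z)f_{n-j}(z)$, and the cleanest way to handle it is to lean entirely on the conditional-independence statement already established, so that the generating-function step is purely formal.
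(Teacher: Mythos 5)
Your proposal is correct and takes essentially the same approach as the paper: condition on the pivot rank, invoke the conditional independence of the two recursive subcalls, and average over the uniformly random pivot. The only difference is presentational — you work directly with $\mathbb{E}(z^{C_n}\mid U_n=j)$ and invoke multiplicativity of probability generating functions for independent sums, whereas the paper carries out the same computation at the coefficient level as an explicit convolution of $\mathbb{P}(C_{m-1}=j)$ with $\mathbb{P}(C_{n-m}=k-(n-1)-j)$ before re-summing to recover the generating functions.
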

\begin{proof} 
We have, using the equation
$$C_{n}=C_{U_{n}-1}+{C^{*}_{n-U_{n}}} +n-1$$
that
\begin{align*}
\mathbb{P}(C_{n}=k)&=\sum_{m=1}^{n}\mathbb{P}(C_{n}=k\vert U_{n}=m)\frac{1}{n} \\
&=\sum_{m=1}^{n}\sum_{j=1}^{k-(n-1)}\mathbb{P}(C_{m-1}=j)\mathbb{P}(C_{n-m}=k-(n-1)-j)\frac{1}{n},
\end{align*}
noting that $C_{m-1}$ and $C_{n-m}$ are conditionally independent subject to the pivot.
Thus
\begin{align*} 
\mathbb{P}(C_{n}=k)z^{k}=\frac{1}{n}\sum_{m=1}^{n}\sum_{j=1}^{k-(n-1)} 
\mathbb{P}(C_{m-1}=j)z^{j} 
\mathbb{P}(C_{n-m}=k-(n-1)-j)z^{k-(n-1)-j}z^{n-1}.
\end{align*}
Multiplying by $z^{k}$ and summing over $k$, so as to get the generating
function $f_{n}$ of $C_{n}$ on the left, we obtain
\begin{align*}
f_{n}(z)
&=\frac{1}{n}\sum_{k=1}^{n-1+j}\sum_{m=1}^{n}\sum_{j=1}^{k-(n-1)}\mathbb{P}(C_{m-1}=j)z^{j}
\mathbb{P}(C_{n-m}=k-(n-1)-j)z^{k-(n-1)-j}z^{n-1} \\
&=\frac{z^{n-1}}{n}\sum_{m=1}^{n}\sum_{j=1}^{k-(n-1)}\mathbb{P}(C_{m-1}=j)z^{j}
\sum_{k=1}^{n-1+j}\mathbb{P}(C_{n-m}=k-(n-1)-j)z^{k-(n-1)-j} \\
&=\frac{z^{n-1}}{n}\sum_{m=1}^{n}f_{m-1}(z)f_{n-m}(z), \tag{2.1} 
\end{align*} 
as required. 
\end{proof}

This of course will give a recursion for the variance, using the
well-known formula for variance in terms of the generating function $f_{X}(z)$:
\begin{eqnarray*}
\operatorname {Var}(X)=f_{X}^{\prime \prime}(1)+f_{X}^{\prime}(1)-\bigl(f_{X}^{\prime}(1) \bigr)^{2}.
\end{eqnarray*}
We use this formula together with Eq. (2.1). 
The first order derivative of $f_{n}(z)$ is
\begin{align*}
f_{n}^{\prime}(z)& = 
\frac{(n-1)z^{n-2}}{n}\sum_{j=1}^{n}f_{j-1}(z)f_{n-j}(z)
+\frac{z^{n-1}}{n}\sum_{j=1}^{n}f^{\prime}_{j-1}(z)f_{n-j}(z) \\
&\quad {}+\frac{z^{n-1}}{n}\sum_{j=1}^{n}f_{j-1}(z)f^{\prime}_{n-j}(z).
\end{align*}
From standard properties of generating functions, it holds that 
\begin{equation*}
f^{\prime}_{n}(1) = \mathbb {E}(C_{n}).
\end{equation*}
Differentiating again we obtain
\begin{align*}
f''_{n}(z)&=\frac{(n-1)(n-2)z^{n-3}}{n}\sum_{j=1}^{n}f_{j-1}(z)f_{n-j}(z)
+\frac{(n-1)z^{n-2}}{n}\sum_{j=1}^{n}f^{\prime}_{j-1}(z)f_{n-j}(z) \\
&\phantom{=}\,+\frac{(n-1)z^{n-2}}{n}\sum_{j=1}^{n}f_{j-1}(z)f^{\prime}_{n-j}(z)
+\frac{(n-1)z^{n-2}}{n}\sum_{j=1}^{n}f^{\prime}_{j-1}(z)f_{n-j}(z) \\
&\phantom{=}\,+\frac{z^{n-1}}{n}\sum_{j=1}^{n}f''_{j-1}(z)f_{n-j}(z)
+\frac{z^{n-1}}{n}\sum_{j=1}^{n}f^{\prime}_{j-1}(z)f^{\prime}_{n-j}(z) \\
&\phantom{=}\,+\frac{(n-1)z^{n-2}}{n}\sum_{j=1}^{n}f_{j-1}(z)f^{\prime}_{n-j}(z)
+\frac{z^{n-1}}{n}\sum_{j=1}^{n}f^{\prime}_{j-1}f^{\prime}_{n-j}(z) \\
&\phantom{=}\,+\frac{z^{n-1}}{n}\sum_{j=1}^{n}f_{j-1}(z)f''_{n-j}(z). 
\end{align*}

Setting $z=1$ \cite{nip},
\begin{align*}
f''_{n}(1)&=(n-1)(n-2)+\frac{2}{n}(n-1)\sum_{j=1}^{n}M_{j-1}
+\frac{2}{n}(n-1)\sum_{j=1}^{n}M_{n-j} \\
&\phantom{=}\,+\frac{1}{n}\sum_{j=1}^{n} \bigl(f''_{j-1}(1)+f''_{n-j}(1) \bigr) 
+\frac{2}{n}\sum_{j=1}^{n}M_{j-1}M_{n-j},
\end{align*}
where $M_{j-1}, M_{n-j}$ are $f^{\prime}_{j-1}(1)$, $f^{\prime}_{n-j}(1)$,
i.e. the mean number of comparisons to sort an array of $(j-1)$ and $(n-j)$ elements
respectively. Setting $B_{n}=\frac{f_{n}^{''}(1)}{2}$, we obtain
\begin{align*}
B_{n}= \dbinom{n-1}{2} +\dfrac{2(n-1)}{n}\sum_{j=1}^{n}M_{j-1}+\dfrac{2}{n}\sum_{j=1}^{n}B_{j-1}
+\dfrac{1}{n}\sum_{j=1}^{n}M_{j-1}M_{n-j},
\end{align*}
using the symmetry of the sums. What this argument has shown for us is the following 
-- compare \cite{sk}
where it is also shown that this recurrence has to be solved, 
though no details
of how to solve it are given.

\begin{Lemma}
In order to prove Theorem $2.3.1$, it is sufficient to show 
that this recurrence is given by
\begin{eqnarray*}
B_{n}=2(n+1)^{2}H_{n}^{2}-(8n+2)(n+1)H_{n}
+\frac{n(23n+17)}{2}-2(n+1)^{2}H_{n}^{(2)}.
\end{eqnarray*}
\end{Lemma}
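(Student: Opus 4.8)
The plan is to derive Theorem $2.3.1$ from the stated closed form for $B_n$ purely by algebraic substitution into the standard identity relating variance to the probability generating function, so this lemma requires no further combinatorial input. Recall that $f_n(z)=\sum_{k\geq 0}\mathbb{P}(C_n=k)z^k$ satisfies $f_n(1)=1$, $f_n^{\prime}(1)=\mathbb{E}(C_n)$, and $f_n^{\prime\prime}(1)=\mathbb{E}\bigl(C_n(C_n-1)\bigr)$, whence
\begin{equation*}
\operatorname{Var}(C_n)=f_n^{\prime\prime}(1)+f_n^{\prime}(1)-\bigl(f_n^{\prime}(1)\bigr)^2=2B_n+a_n-a_n^2,
\end{equation*}
where $B_n=\frac{f_n^{\prime\prime}(1)}{2}$ by definition and $a_n=f_n^{\prime}(1)=\mathbb{E}(C_n)$. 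By Theorem $2.1.1$ we already know $a_n=2(n+1)H_n-4n$. Hence, once the displayed formula for $B_n$ is established, $\operatorname{Var}(C_n)$ is completely determined, and proving Theorem $2.3.1$ reduces to verifying that this substitution yields exactly $7n^2-4(n+1)^2H_n^{(2)}-2(n+1)H_n+13n$. This is precisely the content of the lemma: the entire variance computation has been boiled down to the single recurrence for $B_n$, and the stated closed form is what one must check for that recurrence.

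The steps I would carry out are: (i) write $2B_n=4(n+1)^2H_n^2-(16n+4)(n+1)H_n+n(23n+17)-4(n+1)^2H_n^{(2)}$; (ii) expand $a_n^2=4(n+1)^2H_n^2-16n(n+1)H_n+16n^2$ and keep $a_n=2(n+1)H_n-4n$; (iii) form $2B_n+a_n-a_n^2$ and collect like terms with respect to the "basis" $\{H_n^2,\ H_n^{(2)},\ H_n,\ \text{polynomial in }n\}$. The coefficient of $H_n^2$ is $4(n+1)^2-4(n+1)^2=0$; the coefficient of $H_n^{(2)}$ is $-4(n+1)^2$; the coefficient of $H_n$ is $(n+1)\bigl(-(16n+4)+2+16n\bigr)=-2(n+1)$; and the remaining polynomial part is $n(23n+17)-16n^2-4n=7n^2+13n$. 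Assembling these four pieces gives exactly the claimed expression for $\operatorname{Var}(C_n)$.

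The argument has essentially no obstacle beyond bookkeeping; the one point worth emphasising is the cancellation of the $H_n^2$ terms in step (iii), which is what makes the final variance formula "clean" — a quadratic in $n$ plus corrections involving only $H_n$ and $H_n^{(2)}$, with no squared-harmonic term surviving. This cancellation is forced because the leading harmonic behaviour $4(n+1)^2H_n^2$ of $\bigl(f_n^{\prime}(1)\bigr)^2$ must match that of $2B_n$, which simultaneously serves as a sanity check on the correctness of the $B_n$ formula before one embarks on solving its recurrence. No appeal to generating functions beyond the already-derived Eq. $(2.1)$ is needed here; only Theorem $2.1.1$ and elementary algebra.
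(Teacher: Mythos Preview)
Your proof is correct and follows exactly the same approach as the paper, which simply says ``Combining the equations $B_{n}=\frac{f_{n}^{''}(1)}{2}$ and $\operatorname{Var}(X)=f_{X}^{\prime\prime}(1)+f_{X}^{\prime}(1)-\bigl(f_{X}^{\prime}(1)\bigr)^{2}$, the result follows.'' You have merely written out the algebra that the paper leaves implicit, and your bookkeeping in steps (i)--(iii) is accurate; the only minor slip is the cross-reference for the formula $a_n=2(n+1)H_n-4n$, which in the paper's numbering is Theorem~2.1.3 rather than 2.1.1.
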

\begin{proof} 
Combining the equations $B_{n} = \frac{f_{n}^{''}(1)}{2}$ and 
$\operatorname {Var} (X)=f_{X}^{\prime \prime}(1)+f_{X}^{\prime}(1)-\bigl(f_{X}^{\prime}(1) \bigr)^{2}$,
the result follows. 
\end{proof}

It will be convenient first to develop some theory on
various sums involving harmonic numbers. Often we used \textsc{Maple} 
to verify these
relations initially, but we provide complete proofs here.
As a brief reminder, $M_{j}$ denotes the expected number of 
comparisons needed to sort an array of $j$ keys. 
Recall that
\begin{equation*}
M_{j}=2(j+1)H_{j}-4j.
\end{equation*}
Thus,
\begin{align*}
\sum_{j=1}^{n}M_{j-1}&=\sum_{j=1}^{n}\bigl(2jH_{j-1}-4(j-1)\bigr)=
2\sum_{j=1}^{n}jH_{j-1}-4\sum_{j=1}^{n}(j-1). \tag{2.2}
\end{align*}
For the computation of the first sum of Eq. (2.2), a Lemma follows
\begin{Lemma}
For $n \in \mathbf N$
\begin{eqnarray*}
\sum_{j=1}^{n}jH_{j-1}=\frac{n(n+1)H_{n+1}}{2}-\frac{n(n+5)}{4}.
\end{eqnarray*}
\end{Lemma}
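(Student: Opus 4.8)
The plan is to evaluate the sum by interchanging the order of summation and then resumming. Writing $H_{j-1} = \sum_{i=1}^{j-1} \frac{1}{i}$ (with the convention $H_0 = 0$), I would swap the two sums to obtain
\begin{equation*}
\sum_{j=1}^{n} j H_{j-1} = \sum_{j=1}^{n} j \sum_{i=1}^{j-1}\frac{1}{i} = \sum_{i=1}^{n-1}\frac{1}{i}\sum_{j=i+1}^{n} j.
\end{equation*}
The inner sum is a sum of consecutive integers, $\sum_{j=i+1}^{n} j = \frac{n(n+1)}{2} - \frac{i(i+1)}{2}$, which is a routine triangular-number computation. Substituting this back splits the problem into two elementary pieces: a multiple of $\sum_{i=1}^{n-1}\frac{1}{i} = H_{n-1}$, and the sum $\sum_{i=1}^{n-1}\frac{i(i+1)}{2i} = \frac{1}{2}\sum_{i=1}^{n-1}(i+1)$, which telescopes to a closed form with no harmonic number at all.

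After carrying out these two sub-sums I would be left with an expression of the form $\frac{n(n+1)}{2} H_{n-1} - (\text{polynomial in } n)$. The final, mostly bookkeeping, step is to convert $H_{n-1}$ into $H_{n+1}$ using the identities $H_{n-1} = H_n - \frac{1}{n}$ and $H_n = H_{n+1} - \frac{1}{n+1}$, so that $H_{n-1} = H_{n+1} - \frac{1}{n} - \frac{1}{n+1}$; multiplying through by $\frac{n(n+1)}{2}$ turns the two correction terms into polynomials, which then get absorbed into the polynomial part. Collecting everything should yield exactly $\frac{n(n+1)H_{n+1}}{2} - \frac{n(n+5)}{4}$.

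An alternative I would keep in reserve is a direct induction on $n$: the base case $n=1$ gives $\sum = 1 \cdot H_0 = 0$ on the left and $\frac{1\cdot 2 \cdot H_2}{2} - \frac{6}{4} = H_2 - \frac{3}{2} = \frac{3}{2} - \frac{3}{2} = 0$ on the right, and the inductive step requires only verifying that adding $(n+1)H_n$ to the claimed formula for $n$ produces the claimed formula for $n+1$, again using $H_{n+1} = H_n + \frac{1}{n+1}$ and $H_{n+2} = H_{n+1} + \frac{1}{n+2}$. I do not expect any genuine obstacle here; the only place to be careful is the shift between $H_{n-1}$, $H_n$, $H_{n+1}$ (keeping the off-by-one terms straight) and the convention $H_0 = 0$ when the inner sum is empty at $j=1$. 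I would present the swap-of-summation argument as the main proof since it is the most transparent, and it is the version that generalises to the companion sums (e.g. $\sum j H_{j-1}^{(2)}$ or $\sum j^2 H_{j-1}$) that the variance computation will also need.
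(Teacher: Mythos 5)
Your proposal is correct and takes essentially the same route as the paper: interchanging the order of summation to obtain the intermediate form $\tfrac{n(n+1)}{2}H_{n-1}-\tfrac{1}{2}\bigl(\tfrac{n(n+1)}{2}-1\bigr)$ is exactly the regrouping the paper performs (written there by expanding the triangular array and collecting a full $H_{n-1}$ from each row, leaving $\sum_{k=1}^{n-1}\tfrac{1+2+\cdots+k}{k}=\sum_{k=1}^{n-1}\tfrac{k+1}{2}$ as the correction). You additionally make explicit the shift from $H_{n-1}$ to $H_{n+1}$, which the paper elides as ``easily seen.''
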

\begin{proof}
The sum can be written as
\begin{align*}
\sum_{j=1}^{n}jH_{j-1} & =2 + 3\biggl(1+\frac{1}{2} \biggr)+ \ldots + n\biggl 
(1 + \frac{1}{2} + \ldots +\frac{1}{n-1 } \biggr ) \\
& =H_{n-1}(1+2+ \ldots + n) - \biggl (1 + \frac{1+2}{2}+ \ldots + \frac{1+2+\ldots + n-1}{n-1} \biggr ) \\
& = \dfrac{n(n+1)}{2}H_{n-1} - \sum_{j=1}^{n-1}\biggl (\dfrac{\sum_{i=1}^{j} i}{j} \biggr ) \\
& = \dfrac{n(n+1)}{2}H_{n-1} - \dfrac{1}{2} \biggl (\dfrac{n(n+1)}{2} -1 \biggr ).
\end{align*}
The last equation can be easily seen to be
equivalent with the statement of the Lemma. 
\end{proof}

Thus we can find out about the sum of the $M_{j}$s, 
that it holds for $n \in\mathbf N$
\begin{Corollary}
\begin{eqnarray*}
\sum_{j=1}^{n}M_{j-1}=n(n+1)H_{n+1}-\frac{5n^{2}+n}{2}.
\end{eqnarray*}
\end{Corollary}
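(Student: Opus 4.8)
The plan is to substitute directly into Eq. (2.2) and invoke the Lemma just proved. Starting from $M_{j-1}=2jH_{j-1}-4(j-1)$, Eq. (2.2) expresses $\sum_{j=1}^{n}M_{j-1}$ as $2\sum_{j=1}^{n}jH_{j-1}-4\sum_{j=1}^{n}(j-1)$, so the whole computation reduces to knowing these two sums in closed form.

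First I would handle the elementary piece: $\sum_{j=1}^{n}(j-1)=\sum_{k=0}^{n-1}k=\frac{n(n-1)}{2}$, which contributes $-2n(n-1)$ after multiplication by $4$. Next I would substitute the Lemma, $\sum_{j=1}^{n}jH_{j-1}=\frac{n(n+1)H_{n+1}}{2}-\frac{n(n+5)}{4}$, which after multiplication by $2$ contributes $n(n+1)H_{n+1}-\frac{n(n+5)}{2}$. Adding the two contributions gives
\begin{equation*}
\sum_{j=1}^{n}M_{j-1}=n(n+1)H_{n+1}-\frac{n(n+5)}{2}-2n(n-1).
\end{equation*}

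Finally I would combine the polynomial terms over a common denominator: $-\frac{n(n+5)}{2}-2n(n-1)=-\frac{n^{2}+5n+4n^{2}-4n}{2}=-\frac{5n^{2}+n}{2}$, which yields exactly the claimed identity. The only ``obstacle'' here is bookkeeping of the polynomial coefficients, and in particular correctly reading off $\sum_{j=1}^{n}(j-1)=\binom{n}{2}$ rather than $\binom{n+1}{2}$; there is no genuine difficulty once the Lemma is in hand. (As a sanity check one can verify $n=1$: the left side is $M_{0}=0$, and the right side is $1\cdot 2\cdot H_{2}-\frac{6}{2}=3-3=0$.)
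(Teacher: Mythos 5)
Your proof is correct and takes exactly the same route as the paper, which simply says the result follows immediately from Eq.~(2.2) and Lemma~2.3.4; you have merely spelled out the arithmetic that the paper leaves to the reader. The bookkeeping, including the check $\sum_{j=1}^{n}(j-1)=\binom{n}{2}$ and the $n=1$ sanity check, is all sound.
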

\begin{proof}
Using Lemma 2.3.4 and Eq. (2.2), the proof is immediate. 
\end{proof} 
Now, we will compute the term $\displaystyle \sum_{j=1}^{n}M_{j-1}M_{n-j}$. 
We shall use three Lemmas for the following proof.
\begin{Lemma}
For $n \in \mathbf N$, it holds that
\begin{align*}
\sum_{j=1}^{n}M_{j-1}M_{n-j}&= 4\sum_{j=1}^{n}jH_{j-1}(n-j+1)H_{n-j}\\
& \quad {}-\frac{8}{3}n(n^{2}-1)H_{n+1}+\frac{44n}{9}(n^{2}-1).
\end{align*}
\end{Lemma}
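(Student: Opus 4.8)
The plan is to reduce everything to sums of the $jH_{j-1}$ type already treated in Lemma 2.3.4. Using $M_j = 2(j+1)H_j - 4j$, write $M_{j-1} = 2jH_{j-1} - 4(j-1)$ and $M_{n-j} = 2(n-j+1)H_{n-j} - 4(n-j)$, and expand the product:
\begin{equation*}
M_{j-1}M_{n-j} = 4j(n-j+1)H_{j-1}H_{n-j} - 8j(n-j)H_{j-1} - 8(j-1)(n-j+1)H_{n-j} + 16(j-1)(n-j).
\end{equation*}
Summing over $j = 1,\ldots,n$, the first term is exactly the double-harmonic sum kept on the right-hand side of the Lemma, so the task is to evaluate the other three sums in closed form.

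For the two ``mixed'' sums I would exploit the reflection $j \mapsto n+1-j$, under which $(j-1)(n-j+1)H_{n-j}$ turns into $j(n-j)H_{j-1}$; hence $\sum_{j=1}^n (j-1)(n-j+1)H_{n-j} = \sum_{j=1}^n j(n-j)H_{j-1}$, and the two terms combine to $-16\sum_{j=1}^n j(n-j)H_{j-1}$. The polynomial sum is elementary: substituting $k = j-1$ gives $\sum_{j=1}^n (j-1)(n-j) = \sum_{k=0}^{n-1} k(n-1-k) = \binom{n}{3}$.

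The real work is a closed form for $T_n := \sum_{j=1}^n j(n-j)H_{j-1}$. Since the $j = n$ term vanishes, $T_n - T_{n-1} = \sum_{j=1}^{n-1} jH_{j-1}$, which Lemma 2.3.4 (applied with $n$ replaced by $n-1$) evaluates explicitly; telescoping from $T_1 = 0$ then reduces $T_n$ to $\sum_{m=2}^n \binom{m}{2} H_m$ plus a standard cubic polynomial sum. The harmonic part is handled by the identity $\sum_{m=2}^n \binom{m}{2} H_m = \binom{n+1}{3}\bigl(H_{n+1} - \tfrac13\bigr)$ --- the one ingredient not already available in the excerpt, provable by Abel summation or a short induction --- which yields $T_n = \frac{n(n^2-1)}{6}H_{n+1} - \frac{n(n^2-1)}{18} - \frac{n(n-1)(n+7)}{12}$.

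It then remains to assemble
\begin{equation*}
\sum_{j=1}^n M_{j-1}M_{n-j} = 4\sum_{j=1}^n jH_{j-1}(n-j+1)H_{n-j} - 16\,T_n + 16\binom{n}{3}.
\end{equation*}
The $H_{n+1}$ contribution of $-16T_n$ is $-\tfrac83 n(n^2-1)H_{n+1}$, exactly as claimed; the main obstacle --- really the only delicate point --- is the bookkeeping showing that the three cubic polynomial pieces (the two coming from $-16T_n$ together with $16\binom{n}{3}$) collapse to $\tfrac{44}{9}n(n^2-1)$, which comes out cleanly after factoring $n(n-1)$ out of the degree-$\le 3$ terms. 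Everything else is routine algebra, and a couple of small values of $n$ can be used as a sanity check.
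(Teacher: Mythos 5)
Your proposal is correct, and it is a genuinely different (and somewhat tidier) route than the one the paper takes. Both proofs start by substituting $M_{j-1}=2jH_{j-1}-4(j-1)$ and $M_{n-j}=2(n-j+1)H_{n-j}-4(n-j)$ and expanding the product, but from there they diverge. The paper distributes into eight separate sums and evaluates the two ``mixed'' harmonic sums individually via Lemma 2.3.7 (for $\sum j^{2}H_{j-1}$) and Lemma 2.3.8 (for $\sum j(n-j+1)H_{n-j}$), the latter itself being reduced back to Lemmas 2.3.4 and 2.3.7 by the substitution $k=n-j+1$. You instead observe that the reflection $j\mapsto n+1-j$ identifies the two mixed sums with a single quantity $T_{n}=\sum_{j=1}^{n}j(n-j)H_{j-1}$, which you then pin down by differencing, $T_{n}-T_{n-1}=\sum_{j=1}^{n-1}jH_{j-1}$, and telescoping through Lemma 2.3.4 together with the standard identity $\sum_{m=2}^{n}\binom{m}{2}H_{m}=\binom{n+1}{3}\bigl(H_{n+1}-\tfrac{1}{3}\bigr)$. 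That identity is not stated in this part of the paper (though the analogous $\sum_{j}\binom{j}{4}H_{j}$ form is invoked without proof in Chapter 4), so you do import one extra ingredient; in exchange you avoid Lemmas 2.3.7 and 2.3.8 entirely and the symmetry step halves the bookkeeping. I checked your intermediate closed form $T_{n}=\tfrac{n(n^{2}-1)}{6}H_{n+1}-\tfrac{n(n^{2}-1)}{18}-\tfrac{n(n-1)(n+7)}{12}$ and the final assembly $4\sum jH_{j-1}(n-j+1)H_{n-j}-16T_{n}+16\binom{n}{3}$; the polynomial pieces do collapse to $\tfrac{44}{9}n(n^{2}-1)$ and the $H_{n+1}$ coefficient is $-\tfrac{8}{3}n(n^{2}-1)$ as claimed. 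Both approaches are of comparable length and difficulty; yours leans on symmetry and telescoping where the paper leans on having pre-computed the two auxiliary sums, so it reads a little more cleanly if the reader is willing to accept the $\sum\binom{m}{2}H_{m}$ identity.
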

\begin{proof} 
To do this, we will again use the formula obtained previously for
$M_{j}$. We have
\begin{align*}
\sum_{j=1}^{n}M_{j-1}M_{n-j}&=\sum_{j=1}^{n}\biggl(\bigl(2jH_{j-1}-4j+4 \bigr)
\bigl(2(n-j+1)H_{n-j}-4n+4j \bigr)\biggr)\\
&=4\sum_{j=1}^{n}jH_{j-1}(n-j+1)H_{n-j}-8n\sum_{j=1}^{n}jH_{j-1}+8\sum_{j=1}^{n}j^{2}H_{j-1} \\
&\quad {} - 8\sum_{j=1}^{n}j(n-j+1)H_{n-j}+16n\sum_{j=1}^{n}j-16\sum_{j=1}^{n}j^{2} \\
&\quad{} +8\sum_{j=1}^{n}(n-j+1)H_{n-j} -16n^{2}+16\sum_{j=1}^{n}j.
\end{align*}

We need to work out the value of $\displaystyle \sum_{j=1}^{n}j^{2}H_{j-1}$:
\begin{Lemma}
For $n \in \mathbf N$ holds
\begin{equation*}
\sum_{j=1}^{n}j^{2}H_{j-1}=\frac{6n(n+1)(2n+1)H_{n+1}-n(n+1)(4n+23)}{36}.
\end{equation*}
\end{Lemma}
\begin{proof}
Using the same reasoning as in Lemma $2.3.4$, 
\begin{align*}
\sum_{j=1}^{n}j^{2}H_{j-1} & = 2^{2} + 3^{2}\biggl(1+\frac{1}{2} \biggr)+ \ldots + 
n^{2}\biggl (1 + \frac{1}{2} + \ldots +\frac{1}{n-1 } \biggr ) \\
& =H_{n-1}(1^{2}+2^{2}+ \ldots + n^{2}) \\
& \quad{} - \biggl (1 + \frac{1^{2}+2^{2}}{2}+ \ldots + 
\frac{1^{2}+2^{2}+\ldots + (n-1)^{2}}{n-1} \biggr ) \\
& = \dfrac{n(n+1)(2n+1)}{6}H_{n-1} - \sum_{j=1}^{n-1}\biggl 
(\dfrac{\sum_{i=1}^{j} i^{2}}{j} \biggr ) \\
& =\dfrac{n(n+1)(2n+1)}{6}H_{n-1} - \dfrac{1}{36} (4n^{3}+3n^{2}-n-6 ),
\end{align*} 
completing the proof.
\end{proof}

We also need to compute $\displaystyle \sum_{j=1}^{n}j(n-j+1)H_{n-j}$. 
A Lemma follows
\begin{Lemma}
For $n \in \mathbf N$
\begin{eqnarray*}
\sum_{j=1}^{n}j(n-j+1)H_{n-j}=\frac{6nH_{n+1}(n^{2}+3n+2)-5n^{3}-27n^{2}-22n}{36}.
\end{eqnarray*}
\end{Lemma}
\begin{proof} 
We can write $j=n+1-(n-j+1)$. Then, substituting $k=n-j+1$
\begin{align*}
\sum_{j=1}^{n}j(n-j+1)H_{n-j}&=\sum_{j=1}^{n}\bigl((n+1)-(n-j+1)\bigr)(n-j+1)H_{n-j} \\
&=(n+1)\sum_{j=1}^{n}(n-j+1)H_{n-j}-\sum_{j=1}^{n}(n-j+1)^{2}H_{n-j} \\
&=(n+1)\sum_{k=1}^{n}kH_{k-1}-\sum_{k=1}^{n}k^{2}H_{k-1}.
\end{align*}
These sums can be computed using Lemmas $2.3.4$ and $2.3.7$. 
\end{proof}

In the same manner we shall compute $\displaystyle \sum_{j=1}^{n}(n-j+1)H_{n-j}.$ 
Changing variables, the expression becomes $\displaystyle \sum_{k=1}^{n}kH_{k-1}$. 
Using the previous results, we have
\begin{align*}
\sum_{j=1}^{n}M_{j-1}M_{n-j}&=4\sum_{j=1}^{n}jH_{j-1}(n-j+1)H_{n-j} 
-8n\biggl(\frac{n(n+1)H_{n+1}}{2}-\frac{n(n+5)}{4} \biggr) \\
&\quad {}+8\biggl(\frac{6n(n+1)(2n+1)H_{n+1}-n(n+1)(4n+23)}{36} \biggr)+16n\sum_{j=1}^{n}j \\
&\quad {}-8\biggl(\frac{6nH_{n+1}(n^{2}+3n+2)-5n^{3}-27n^{2}-22n}{36}\biggr)-16\sum_{j=1}^{n}j^{2} \\
&\quad {} +8\biggl(\frac{n(n+1)H_{n+1}}{2}-\frac{n(n+5)}{4}\biggr)-16n^{2}+16\sum_{j=1}^{n}j \\
&=4\sum_{j=1}^{n}jH_{j-1}(n-j+1)H_{n-j}-4n(n+1)(n-1)H_{n+1} \\
&\quad {} +\frac{4}{3}n(n^{2}-1)H_{n+1} +\frac{176n^{3}}{36}-\frac{176n}{36}, 
\end{align*}
finishing the proof of Lemma $2.3.6$. 
\end{proof}
After some tedious calculations, the recurrence
relation becomes
\begin{align*}
B_{n} &=\frac{4\displaystyle \sum_{j=1}^{n} jH_{j-1}(n-j+1)H_{n-j}}{n}+\dfrac{2}{n}
\sum_{j=1}^{n}B_{j-1}+\dfrac{-9n^{2}+5n+4}{2} \\  
&\quad {} -\dfrac{2}{3}(n^{2}-1)H_{n+1}+\dfrac{44}{9}(n^{2}-1).
\end{align*}

Subtracting $nB_{n}$ from $(n+1)B_{n+1}$,
\begin{align*}
&(n+1)B_{n+1}-nB_{n} \\
&=4\biggl(\sum_{j=1}^{n+1}jH_{j-1}(n-j+2)H_{n+1-j}-\sum_{j=1}^{n}jH_{j-1}(n-j+1)H_{n-j} \biggr) \\
&\phantom{=}\, +2B_{n}+(n+1)\frac{-9(n+1)^{2}+5(n+1)+4}{2}-n\frac{-9n^{2}+5n+4}{2 } \\
&\phantom{=}\, -(n+1)\frac{2}{3}\bigl((n+1)^{2}-1 \bigr)H_{n+2}+\frac{2}{3}n(n^{2}-1)H_{n+1} \\
&\phantom{=}\,+(n+1)\frac{44}{9}\bigl((n+1)^{2}-1 \bigr)-n\frac{44}{9}(n^{2}-1) \\
&=4\biggl(\sum_{j=1}^{n}jH_{j-1}(n-j+2)H_{n+1-j}-\sum_{j=1}^{n}jH_{j-1}(n-j+1)H_{n-j} \biggr) \\
&\phantom{=}\, +2B_{n}-\frac{27n^{2}+17n}{2}-\frac{2}{3}n(n+1)(n+2)H_{n+2} \\
& \qquad +\frac{2}{3}nH_{n+1}(n^{2}-1)+\frac{44n(n+1)}{3}.
\end{align*}
We obtain
\begin{align*}
(n+1)B_{n+1}-nB_{n}& =4\biggl(\sum_{j=1}^{n}jH_{j-1}+\sum_{j=1}^{n}jH_{j-1}H_{n-j+1}\biggr) \\
&\phantom{=}\, +2B_{n}-2n(n+1)H_{n+1} 
+\frac{1}{2}n(n+11).
\end{align*}
We have to work out the following sum
\begin{eqnarray*}
\sum_{j=1}^{n}jH_{j-1}H_{n+1-j}.
\end{eqnarray*}
We note that
\begin{align*}
\sum_{j=1}^{n}jH_{j-1}H_{n+1-j} & = 2H_{1}H_{n-1} + 3H_{2}H_{n-2} + \ldots + 
(n-1)H_{n-2}H_{2} + nH_{n-1}H_{1} \\
& =\frac{n+2}{2}\sum_{j=1}^{n}H_{j}H_{n-j}. \, \, \tag{2.3}
\end{align*} 

Sedgewick \cite{sedgewick}, presents and proves the following result:
\begin{Lemma}
\begin{eqnarray*}
\sum_{i=1}^{n}H_{i}H_{n+1-i}= (n+2)(H^{2}_{n+1}-H^{(2)}_{n+1})-2(n+1)(H_{n+1}-1).
\end{eqnarray*}
\end{Lemma}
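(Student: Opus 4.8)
The plan is to evaluate the convolution sum $\sum_{i=1}^{n}H_{i}H_{n+1-i}$ by converting it into sums that can be handled by elementary summation-by-parts and the known formulas for $\sum_j H_j$, $\sum_j H_j^{(2)}$ and $\sum_j j H_{j-1}$ already assembled above. First I would write $H_i = H_{i-1} + 1/i$ (and similarly for $H_{n+1-i}$) to peel off the convolution structure, or alternatively use the standard generating-function identity: since $\sum_{i\ge 1} H_i x^i = \frac{-\log_e(1-x)}{1-x}$, the Cauchy product gives $\sum_{i=1}^{n} H_i H_{n+1-i}$ as the coefficient of $x^{n+1}$ in $\frac{\log_e^2(1-x)}{(1-x)^2}$. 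One then extracts that coefficient using $\log_e^2(1-x) = \left(\sum_{k\ge 1} x^k/k\right)^2 = \sum_{m\ge 2}\Bigl(\sum_{k=1}^{m-1}\frac{1}{k(m-k)}\Bigr)x^m$ together with $\frac{1}{(1-x)^2} = \sum_{\ell\ge 0}(\ell+1)x^\ell$, and the inner sum $\sum_{k=1}^{m-1}\frac{1}{k(m-k)} = \frac{2H_{m-1}}{m}$ by partial fractions.

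Alternatively — and this is probably cleaner for a self-contained write-up — I would proceed purely by finite-sum manipulation. Using $H_{n+1-i} = H_{n+1} - \sum_{t=n+2-i}^{n+1}\frac{1}{t}$ one reduces $\sum_i H_i H_{n+1-i}$ to $H_{n+1}\sum_{i=1}^n H_i$ minus a double sum that, after swapping the order of summation, collapses to combinations of $\sum_j H_j$, $\sum_j jH_{j-1}$ (Lemma 2.3.4) and $\sum_j \frac{H_j}{j}$; the last of these is handled by the classical identity $\sum_{j=1}^{m}\frac{H_j}{j} = \frac{1}{2}\bigl(H_m^2 + H_m^{(2)}\bigr)$, which itself follows from a one-line summation-by-parts argument. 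Substituting the closed forms, simplifying the $H_{n+1}$, $H_{n+1}^2$ and $H_{n+1}^{(2)}$ coefficients, and using $\sum_{j=1}^{m}H_j = (m+1)H_m - m = (m+1)H_{m+1} - (m+1)$ should yield exactly $(n+2)(H_{n+1}^2 - H_{n+1}^{(2)}) - 2(n+1)(H_{n+1}-1)$.

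The main obstacle is bookkeeping rather than conceptual: keeping the index shifts consistent (the sum runs $i=1$ to $n$ but the harmonic arguments are $H_i$ and $H_{n+1-i}$, so the ``midpoint'' is not symmetric in the naive way), and making sure the auxiliary identity $\sum_{j\le m} H_j/j = \tfrac12(H_m^2+H_m^{(2)})$ is invoked with the correct range. A sanity check at small $n$ (e.g. $n=1$ gives $2H_1^2 = 2$ on the right, matching $H_1 H_1 = 1$ — wait, this already flags that one must be careful with the stated normalization, so I would verify $n=2,3$ explicitly against the claimed formula before trusting the general derivation). Since the excerpt attributes this to Sedgewick \cite{sedgewick}, in the thesis it may well simply be quoted; but the generating-function extraction above furnishes a short honest proof if one is wanted.
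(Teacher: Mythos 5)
Both routes you sketch are sound, but they sit on opposite sides of the paper's own argument. Your second (finite-manipulation) plan is essentially the paper's proof: the paper also expands $H_{n+1-i}$ as an inner sum, swaps the order of summation, invokes $\sum_{j=1}^{m}H_j=(m+1)H_m-m$, and boils everything down to the identity $2\sum_{k=1}^{n}H_k/(k+1)=H_{n+1}^2-H_{n+1}^{(2)}$, which is the same fact as your $\sum_{j\le m}H_j/j=\tfrac12\bigl(H_m^2+H_m^{(2)}\bigr)$ after an index shift. Your first (generating-function) plan is a genuine alternative and, in fact, gives a shorter honest proof once you push it through: since $\log_e^2(1-x)=\sum_{m\ge 2}\tfrac{2H_{m-1}}{m}x^m$, the coefficient of $x^{n+1}$ in $\log_e^2(1-x)/(1-x)^2$ is
\begin{equation*}
\sum_{m=2}^{n+1}\frac{2H_{m-1}}{m}\,(n+2-m)
=2(n+2)\sum_{m=1}^{n+1}\frac{H_{m-1}}{m}-2\sum_{j=1}^{n}H_j,
\end{equation*}
and substituting $\sum_{m=1}^{n+1}H_{m-1}/m=\tfrac12\bigl(H_{n+1}^2-H_{n+1}^{(2)}\bigr)$ and $\sum_{j=1}^{n}H_j=(n+1)(H_{n+1}-1)$ lands directly on the stated right-hand side. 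That GF route trades the paper's two-stage double-sum interchange for a single coefficient extraction, which I would say is the cleaner presentation; the paper's approach has the advantage of staying entirely inside elementary finite sums.

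One small correction: your $n=1$ sanity check worried you unnecessarily. The right-hand side is $(n+2)\bigl(H_{n+1}^2-H_{n+1}^{(2)}\bigr)-2(n+1)(H_{n+1}-1)$, which at $n=1$ is $3\bigl(\tfrac{9}{4}-\tfrac{5}{4}\bigr)-4\cdot\tfrac12=3-2=1$, matching $H_1H_1=1$; the "$2H_1^2$" you read off is not what the formula gives. The identity is stated correctly and both of your proposed routes, carried through, do prove it.
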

\begin{proof}
By the definition of harmonic numbers, we have
\begin{align*}
\sum_{i=1}^{n}H_{i}H_{n+1-i}
=\sum_{i=1}^{n}H_{i}\sum_{j=1}^{n+1-i}\frac{1}{j}
\end{align*}
and the above equation becomes
\begin{align*}
\sum_{j=1}^{n}\frac{1}{j}\sum_{i=1}^{n+1-j}H_{i}
=\sum_{j=1}^{n}\frac{1}{j}\bigl((n+2-j)H_{n+1-j}-(n+1-j) \bigr), \, \, \tag{2.4}
\end{align*}
using the identity \cite{knuth 1},
\begin{equation*}
\sum_{j=1}^{n}H_{j} = (n+1)H_{n} - n.
\end{equation*}

Eq. (2.4) can be written as
\begin{align*}
&(n+2)\sum_{j=1}^{n}\frac{H_{n+1-j}}{j}-\sum_{j=1}^{n}H_{n+1-j}-(n+1)H_{n}+n \\
& =(n+2)\sum_{j=1}^{n}\frac{H_{n+1-j}}{j}-\bigl((n+1)H_{n}-n \bigr)-(n+1)H_{n}+n \\
& =(n+2)\sum_{j=1}^{n}\frac{H_{n+1-j}}{j}-2(n+1)(H_{n+1}-1).
\end{align*}
It can be easily verified that
\begin{align*}
\sum_{j=1}^{n}\frac{H_{n+1-j}}{j} & = \sum_{j=1}^{n}\frac{H_{n-j}}{j} + 
\sum_{j=1}^{n}\dfrac{1}{j(n+1-j)} \\
& = \sum_{j=1}^{n-1}\frac{H_{n-j}}{j} + 2\dfrac{H_{n}}{n+1}. \, \, \tag{2.5}
\end{align*}
Making repeated use of Eq. (2.5), we obtain the identity
\begin{eqnarray*}
\sum_{j=1}^{n}\frac{H_{n+1-j}}{j}=2\sum_{k=1}^{n}\frac{H_{k}}{k+1}.
\end{eqnarray*}
We have then
\begin{align*}
2\sum_{k=1}^{n}\frac{H_{k}}{k+1}& =2\sum_{k=2}^{n+1}\frac{H_{k-1}}{k}
=2\sum_{k=1}^{n+1}\frac{H_{k-1}}{k}=2\sum_{k=1}^{n+1}\frac{H_{k}}{k}
-2\sum_{k=1}^{n+1}\frac{1}{k^{2}} \\
& =2\sum_{k=1}^{n+1}\sum_{j=1}^{k}\frac{1}{jk}-2H^{(2)}_{n+1}
=2\sum_{j=1}^{n+1}\sum_{k=j}^{n+1}\frac{1}{jk}-2H^{(2)}_{n+1} \\
& =2\sum_{k=1}^{n+1}\sum_{j=k}^{n+1}\frac{1}{kj}-2H^{(2)}_{n+1}.
\end{align*}

The order of summation was interchanged. We can sum on all $j$ and for $k=j$, 
we must count this term twice.
We obtain
\begin{align*}
2\sum_{k=1}^{n}\frac{H_{k}}{k+1}=\sum_{k=1}^{n+1}\biggl
(\sum_{j=1}^{n+1}\frac{1}{kj}+\frac{1}{k^{2}} \biggr)-2H^{(2)}_{n+1}
=H^{2}_{n+1}-H^{(2)}_{n+1}. \, \, \tag{2.6}
\end{align*}
Finally
\begin{equation*}
\sum_{i=1}^{n}H_{i}H_{n+1-i}=
(n+2)(H^{2}_{n+1}-H^{(2)}_{n+1})-2(n+1)(H_{n+1}-1). \qedhere
\end{equation*}
\end{proof}

The following Corollary is a direct consequence of Eq. (2.5) and (2.6).
\begin{Corollary} 
For $n \in \mathbf N$, it holds
\begin{align*}
H_{n+1}^{2}-H_{n+1}^{(2)}=2\sum_{j=1}^{n}\frac{H_{j}}{j+1}.
\end{align*}
\end{Corollary}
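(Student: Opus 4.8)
The plan is to recognise that this Corollary is essentially Eq.~(2.6) read from right to left, with the summation index relabelled from $k$ to $j$; since the chain of identities leading to Eq.~(2.6) already rested on Eq.~(2.5), one may in principle simply invoke those two displays. For completeness I would nonetheless give a short self-contained derivation using the same interchange-of-summation idea.

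First I would shift the index on the right-hand side by setting $k=j+1$, invoke the convention $H_{0}=0$ to restore the term $k=1$, and split $H_{k-1}=H_{k}-\tfrac1k$, obtaining
\begin{equation*}
2\sum_{j=1}^{n}\frac{H_{j}}{j+1}=2\sum_{k=1}^{n+1}\frac{H_{k-1}}{k}=2\sum_{k=1}^{n+1}\frac{H_{k}}{k}-2H_{n+1}^{(2)}.
\end{equation*}
Next I would evaluate $\sum_{k=1}^{n+1}H_{k}/k$ by writing $H_{k}=\sum_{i=1}^{k}1/i$, turning it into $\sum_{1\le i\le k\le n+1}1/(ik)$, and symmetrising over the square $\{1,\dots,n+1\}^{2}$: adding the triangular sum to its mirror image double-counts the diagonal, so
\begin{equation*}
2\sum_{1\le i\le k\le n+1}\frac{1}{ik}=H_{n+1}^{2}+H_{n+1}^{(2)},\qquad\text{hence}\qquad \sum_{k=1}^{n+1}\frac{H_{k}}{k}=\frac{1}{2}\bigl(H_{n+1}^{2}+H_{n+1}^{(2)}\bigr).
\end{equation*}
Substituting this back gives $2\sum_{j=1}^{n}H_{j}/(j+1)=H_{n+1}^{2}+H_{n+1}^{(2)}-2H_{n+1}^{(2)}=H_{n+1}^{2}-H_{n+1}^{(2)}$, which is the claim.

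There is no genuine obstacle here: the computation is the same symmetrisation trick that produced Eq.~(2.6), and the only points demanding a little care are the index shift $k=j+1$ together with the convention $H_{0}=0$, and remembering to count the diagonal term twice when passing from the triangular sum to the square. Accordingly I would expect this Corollary to take only a few lines.
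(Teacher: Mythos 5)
Your proof is correct, but it takes a genuinely different route from the paper's. You re-derive the identity by writing $H_{j}=\sum_{i\le j}1/i$, interchanging the order of summation, and symmetrising the triangular sum over the square $\{1,\dots,n+1\}^{2}$ (double-counting the diagonal) --- which is precisely the technique already used inside the proof of Lemma~2.3.9 to establish Eq.~(2.6). The paper's proof of the Corollary instead establishes the one-step recurrence
\begin{equation*}
H^{2}_{n+1}-H^{(2)}_{n+1}=H^{2}_{n}-H^{(2)}_{n}+\frac{2H_{n}}{n+1},
\end{equation*}
which follows by expanding $\bigl(H_{n}+\tfrac{1}{n+1}\bigr)^{2}$ and subtracting $H_{n}^{(2)}+\tfrac{1}{(n+1)^{2}}$, and then iterates (telescopes) from the base case $H_{1}^{2}-H_{1}^{(2)}=0$. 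The paper's telescoping argument is shorter and exploits the fact that $H_{n+1}^{2}-H_{n+1}^{(2)}$ has a particularly clean forward difference; your symmetrisation argument is more self-contained but essentially re-runs the computation already carried out to prove Eq.~(2.6). Both are valid, and your opening observation --- that the Corollary can simply be read off Eq.~(2.6) after relabelling the summation index --- matches the paper's own remark that ``the following Corollary is a direct consequence of Eq.~(2.5) and~(2.6).''
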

\begin{proof}
\begin{align*}
& H^{2}_{n+1}-H^{(2)}_{n+1} =H^{2}_{n}-H^{(2)}_{n}+2\frac{H_{n}}{n+1}  \mbox{~~~and~by~iteration,} \\
& H^{2}_{n+1}-H^{(2)}_{n+1}=2\sum_{j=1}^{n}\frac{H_{j}}{j+1}. \qedhere
\end{align*}
\end{proof}

We will use the above Lemma and Corollary in our analysis. We have that
\begin{eqnarray*}
\sum_{i=1}^{n}H_{i}H_{n+1-i}=\sum_{i=1}^{n}\Biggl(H_{i}\biggl(H_{n-i}+
\frac{1}{n+1-i} \biggr) \Biggr)=\sum_{i=1}^{n}H_{i}H_{n-i}+\sum_{i=1}^{n}\frac{H_{i}}{n+1-i}.
\end{eqnarray*}
The second sum substituting $j=n+1-i$ becomes
\begin{eqnarray*}
\sum_{i=1}^{n}\frac{H_{i}}{n+1-i}=\sum_{j=1}^{n}\frac{H_{n+1-j}}{j}.
\end{eqnarray*}
As we have seen it is equal to
\begin{eqnarray*}
\sum_{j=1}^{n}\frac{H_{n+1-j}}{j}=2\sum_{j=1}^{n}\frac{H_{j}}{j+1}
=H^{2}_{n+1}-H^{(2)}_{n+1}.
\end{eqnarray*}

Hence, by Lemma 2.3.9,
\begin{align*}
\sum_{j=1}^{n}H_{j}H_{n-j}&=(n+2)(H^{2}_{n+1}-H^{(2)}_{n+1})-2(n+1)
(H_{n+1}-1)-(H^{2}_{n+1}-H^{(2)}_{n+1}) \\
&=(n+1)\bigl((H^{2}_{n+1}-H^{(2)}_{n+1})-2(H_{n+1}-1) \bigr).
\end{align*}
Using the above result and Eq. (2.3), we have
\begin{eqnarray*}
\sum_{j=1}^{n}jH_{j-1}H_{n+1-j}=\dbinom{n+2}{2} \biggl
((H^{2}_{n}-H^{(2)}_{n})+\dfrac{2n(1-H_{n})}{n+1} \biggr).
\end{eqnarray*}
Having worked out all the expressions involved in the following relation
\begin{align*}
(n+1)B_{n+1}-nB_{n}& =4\biggl(\sum_{j=1}^{n}jH_{j-1}+\sum_{j=1}^{n}jH_{j-1}H_{n-j+1} \biggr) \\
& \quad +2B_{n}-2n(n+1)H_{n+1}+\frac{1}{2}n(n+11).
\end{align*}
This becomes
\begin{align*}
&(n+1)B_{n+1}-nB_{n} \\
&=4 \Biggl(\frac{n(n+1)H_{n+1}}{2}-\frac{n(n+5)}{4}+\dbinom{n+2}{2}\Bigl(
(H^{2}_{n+1}-H^{(2)}_{n+1})-2(H_{n+1}-1) \Bigr) \Biggr) \\
&\qquad {} +2B_{n}-2n(n+1)H_{n+1}+\frac{1}{2}n(n+11) \\
&=2(n+1)(n+2)(H^{2}_{n+1}-H^{(2)}_{n+1})-4(n+1)(n+2)(H_{n+1}-1) 
-\frac{n(n-1)}{2}+2B_{n}.
\end{align*}

Dividing both sides by $(n+1)(n+2)$ and unwinding the recurrence,
\begin{eqnarray*}
\frac{B_{n}}{n+1}=\frac{B_{0}}{1}+2\sum_{i=1}^{n}(H^{2}_{i}-H^{(2)}_{i})-
4\sum_{i=1}^{n}(H_{i}-1)-\sum_{i=1}^{n}\frac{(i-1)(i-2)}{2i(i+1)}.
\end{eqnarray*}
Hence 
\begin{align*}
\frac{B_{n}}{n+1}&=2(n+1)(H^{2}_{n}-H^{(2)}_{n})+4n-4nH_{n}-4\Bigl((n+1)H_{n}-2n \Bigr) \\
& \quad {} - \sum_{i=1}^{n} \biggl(\frac{i+2}{2i}-\frac{3}{i+1} \biggr) \\
&=2(n+1)(H^{2}_{n}-H^{(2)}_{n})-H_{n}(8n+2)+\frac{23n}{2}-3+\frac{3}{n+1} \\
&=2(n+1)(H^{2}_{n}-H^{(2)}_{n})-H_{n}(8n+2)+\frac{23n^{2}+17n}{2(n+1)}.
\end{align*}

Finally, multiplying by $(n+1)$ we obtain
\begin{align*}
B_{n}=2(n+1)^{2}(H^{2}_{n}-H^{(2)}_{n})-H_{n}(n+1)(8n+2)+\frac{23n^{2}+17n}{2}.
\end{align*}
Consequently, by Lemma 2.3.3 the variance of the number of 
comparisons of randomised Quicksort is
\begin{eqnarray*}
7n^{2}-4(n+1)^{2}H_{n}^{(2)}-2(n+1)H_{n}+13n,
\end{eqnarray*}
completing the proof of Theorem 2.3.1. 

\section{``Divide and Conquer'' recurrences}

We have computed the mean and variance of the number of comparisons made by
Quicksort that mainly contribute to its time complexity. Because of the
simple structure of the algorithm (dividing into smaller subproblems) we
can in fact approach many other related problems in the same spirit. Let
$F(n)$ denote the expected value of some random variable associated with
randomised Quicksort and $T(n)$ be the average value of the ``toll function'',
which is the needed cost 
to divide the problem into two simpler subproblems. 
Then $F(n)$ is equal to the contribution $T(n)$,
plus the measures required 
sort the resulting subarrays of $(i-1)$ and $(n-i)$ elements, where the pivot
$i$ can be any key of the array with equal probability. 

Thus, the recurrence relation is
\begin{align*}
F(n) & = T(n) + \frac{1}{n} \sum_{i=1}^{n} (F(i-1) + F(n-i) ) \\
& = T(n) + \frac{2}{n} \sum_{i=1}^{n} F(i-1).
\end{align*}
This is the general type of recurrences arising in the analysis of Quicksort, 
which can be manipulated using the difference method or by generating functions. 
Since an empty array or an one having a unique key is trivially solved,
the initial values of the recurrence is
\begin{equation*}
F(0) = F(1) = 0.
\end{equation*}
The first method leads to the elimination of the sum, 
by subtracting $(n-1)F(n-1)$ from $nF(n)$ -- see \cite{knuth 3}. 
The recurrence becomes
\begin{align*}
nF(n) - (n-1)F(n-1) =nT(n) - (n-1)T(n-1) + 2F(n-1)
\end{align*}
and dividing by $n(n+1)$ we have
\begin{align*}
\dfrac{F(n)}{n+1} = \dfrac{nT(n) - (n-1)T(n-1)}{n(n+1)} + \dfrac{F(n-1)}{n}.
\end{align*}

This recurrence can be immediately solved by ``unfolding'' its terms. 
The general solution is
\begin{align*}
F(n) & = (n+1) \left (\sum_{j=3}^{n}\dfrac{jT(j) - (j-1)T(j-1)}{j(j+1)} + \dfrac{F(2)}{3} \right) \\
&= (n+1) \left (\sum_{j=3}^{n}\dfrac{jT(j) - (j-1)T(j-1)}{j(j+1)} + \dfrac{T(2)}{3} \right).
\end{align*}
When the sorting of subarrays of $m$ keys or less is done by insertion sort, 
the solution of the recurrence is
\begin{align*}
F(n) & = (n+1) \biggl ( \sum_{j=m+2}^{n}\dfrac{jT(j) - (j-1)T(j-1)}{j(j+1)} + 
\dfrac{F(m+1)}{m+2} \biggr ) \\
& =(n+1) \biggl ( \sum_{j=m+2}^{n}\dfrac{jT(j) - (j-1)T(j-1)}{j(j+1)} + 
\dfrac{T(m+1)}{m+2} \biggr ),
\end{align*}
since $n-1 > m$.

Another classic approach, which is more transparent and elegant, is the application of generating functions. 
The recurrence is transformed to a differential equation, which is then solved. 
The function is written in terms of series and the extracted coefficient is the solution.
Multiplying by $nx^{n}$ and then summing with respect to $n$, in order 
to obtain the generating function $G(x) = \displaystyle \sum_{n=0}^{\infty}F(n)x^{n}$, we have
\begin{eqnarray*}
\sum_{n=0}^{\infty}nF(n)x^{n} = \sum_{n=0}^{\infty}nT(n)x^{n} + 2\sum_{n=0}^{\infty}\sum_{i=1}^{n}F(i-1)x^{n}.
\end{eqnarray*} 
The double sum is equal to
\begin{align*}
\sum_{n=0}^{\infty}\sum_{i=1}^{n}F(i-1)x^{n} & = G(x)\sum_{n=1}^{\infty}x^{n} = \dfrac{xG(x)}{1-x}
\end{align*}
and the differential equation is
\begin{eqnarray*}
xG^{\prime}(x) = \sum_{n=0}^{\infty}nT(n)x^{n}+ \dfrac{2xG(x)}{1-x}.
\end{eqnarray*}
Cancelling out $x$ and multiplying by $(1-x)^{2}$, 
\begin{eqnarray*}
&& G^{\prime}(x)(1-x)^{2} =(1-x)^{2} \sum_{n=1}^{\infty}nT(n)x^{n-1}+ 2(1-x)G(x) \\
&& \left( G(x)(1-x)^{2} \right )^{\prime} = (1-x)^{2} \sum_{n=1}^{\infty}nT(n)x^{n-1} \\
&& \Longrightarrow  G(x)(1-x)^{2} = \int (1-x)^{2} \sum_{n=1}^{\infty}nT(n)x^{n-1}\, \mathrm{d}x + C \\
&& \Longrightarrow G(x) = \dfrac{ \displaystyle \int (1-x)^{2} \displaystyle \sum_{n=1}^{\infty}nT(n)x^{n-1}\,
\mathrm{d}x + C}{(1-x)^{2}},
\end{eqnarray*}
where $C$ is constant, which can be found using the initial condition $G(0)= 0$. 
The solution then is being written as power series and the coefficient sequence found 
is the expected sought cost.

Now, one can obtain any expected cost of the algorithm, just by using these results. 
The ``toll function'' will be different for each case. 
Plugging in the average value, the finding becomes a matter of simple operations. 
This type of analysis unifies the recurrences of Quicksort into a single one and 
provides an intuitive insight of the algorithm.

\section{Higher moments}

We have effectively calculated the first and second moments of $C_{n}$ in
Quicksort. Existing literature does not seem to address much questions about
skewness and kurtosis, which are sometimes held to be interesting features
of a random variable. Here, we present an 
inconclusive discussion about the sign of the skewness.

Using the probability generating function, we can
obtain higher moments of the
algorithm's complexity. A Lemma follows
\begin{Lemma}
Let a random variable $X$ having probability generating function: 
$$f_{X}(z)=\sum_{n=0}^{\infty}\mathbb{P}(X=n)z^{n}.$$
For the $k$-th order derivative it holds that
\begin{align*}
\frac{d^{k}f_{X}(z)}{dz^{k}}\bigg|_{z=1}
= \mathbb{E} \bigl( X \cdot (X-1) \cdot \ldots \cdot(X-k+1) \bigr).
\end{align*}
\end{Lemma}
\begin{proof}
Simply by computing successively the $k$-th order derivative of $f_{X}(z)$, we obtain
\begin{align*}
\frac{d^{k}f_{X}(z)}{dz^{k}}=\sum_{n=0}^{\infty}n \cdot (n-1)\cdot \ldots 
\cdot(n-k+1)\mathbb{P}(X=n)\cdot z^{n-k}
\end{align*}
Setting $z=1$, the proof follows directly. Note that the argument is 
similar to continuous random variables. 
\end{proof}

Using \textsc{Maple}, we obtained a recursive form for the general $k$-th 
order derivative of the generating function. 
\begin{Lemma}
Let $$f_{n}(z)=\frac{z^{n-1}}{n}\sum_{j=1}^{n}f_{j-1}(z)f_{n-j}(z)$$
be the generating function of Quicksort's complexity in sorting $n$ keys. 
The $k \in \mathbf N$ order derivative
is given by
\begin{align*}
\frac{\mathrm{d}^{k}f_{n}(z)}{\mathrm{d}z^{k}}=
\frac{1}{n} \cdot \left (\sum_{i=0}^{k} 
\dbinom{k}{i} \cdot \frac{\Gamma(n)}{\Gamma(n-i)} 
\cdot z^{n-i-1}\frac{\mathrm{d}^{k-i}}{\mathrm{d}z^{k-i}}\biggl 
(\sum_{j=1}^{n}f_{j-1}(z)\cdot f_{n-j}(z)\biggr)\right),
\end{align*}
where the $\Gamma$ function is defined for complex 
variable $z \neq 0, -1, -2, \ldots$ as
\begin{equation*}
\Gamma(z):=\int_{0}^{\infty}p^{z-1}e^{-p}\, \mathrm{d}p
\end{equation*}
and when $z$ is a positive integer, then $\Gamma(z)=(z-1)!$.
\end{Lemma}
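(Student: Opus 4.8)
The plan is to recognise this as nothing more than the general Leibniz rule applied to a product of two factors. Write $f_n(z) = \tfrac{1}{n}\, z^{n-1} g_n(z)$, where $g_n(z) := \sum_{j=1}^{n} f_{j-1}(z) f_{n-j}(z)$ is the ``inner'' sum. Since $\tfrac{1}{n}$ is a constant, it suffices to expand $\tfrac{d^{k}}{dz^{k}}\bigl(z^{n-1} g_n(z)\bigr)$ and then divide by $n$; note that no property of the particular functions $f_j$ is needed, so nothing about the recursion for $f_n$ beyond its stated shape enters.

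First I would invoke (or, to keep the argument self-contained, prove by a one-line induction on $k$) the general Leibniz rule
\[
\frac{d^{k}}{dz^{k}}\bigl(u(z)v(z)\bigr) = \sum_{i=0}^{k} \binom{k}{i} \frac{d^{i}u}{dz^{i}}\,\frac{d^{k-i}v}{dz^{k-i}},
\]
applied with $u(z) = z^{n-1}$ and $v(z) = g_n(z)$. This at once produces a sum over $i$ carrying the binomial coefficient $\binom{k}{i}$ and the factor $\tfrac{d^{k-i}}{dz^{k-i}} g_n(z)$, which is exactly the undifferentiated inner sum appearing on the right-hand side of the claimed identity.

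Next I would evaluate the repeated derivative of the monomial: for $0 \le i \le n-1$,
\[
\frac{d^{i}}{dz^{i}} z^{n-1} = (n-1)(n-2)\cdots(n-i)\, z^{n-1-i} = \frac{(n-1)!}{(n-1-i)!}\, z^{n-i-1} = \frac{\Gamma(n)}{\Gamma(n-i)}\, z^{n-i-1},
\]
using $\Gamma(m)=(m-1)!$ for positive integers $m$. For $i \ge n$ the $i$-th derivative of the degree-$(n-1)$ polynomial $z^{n-1}$ vanishes; this is consistent with the right-hand side because then $n-i$ is a non-positive integer, $\Gamma$ has a pole there, and under the standard convention $1/\Gamma(n-i) = 0$ those terms drop out automatically. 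Substituting this expression for $\tfrac{d^{i}}{dz^{i}} z^{n-1}$ into the Leibniz expansion and reinstating the factor $\tfrac{1}{n}$ yields precisely the stated formula.

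There is essentially no hard step: the only thing needing care is the edge-case bookkeeping for $i \ge n$ (handled by the pole convention for $\Gamma$), and — if one insists on a fully self-contained proof — the short inductive verification of the Leibniz rule, whose inductive step combines the product rule with Pascal's identity $\binom{k}{i} + \binom{k}{i-1} = \binom{k+1}{i}$. The definition and elementary properties of $\Gamma$ quoted in the statement are all that is required, and since no specific feature of the $f_j$ is used, the identity holds for any family of functions obeying the given recursion.
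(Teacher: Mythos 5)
Your proposal is correct and matches the paper's argument in essence: the paper proves the identity by induction on $k$, with the inductive step driven by the product rule and Pascal's identity $\binom{m}{i-1}+\binom{m}{i}=\binom{m+1}{i}$ — which is precisely a re-derivation of the Leibniz rule in this instance — and then explicitly remarks that the lemma ``is an immediate consequence of Leibniz's product rule.'' You simply invoke Leibniz directly (and additionally handle the $i\geq n$ edge case via the $1/\Gamma$ pole convention, a minor point the paper leaves implicit), so the two proofs are the same in substance.
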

\begin{proof}
For $k=0$, the result follows trivially. 
Assume that the statement of the Lemma holds for $k=m$. 
The $(m+1)$-th order derivative is
\begin{align*}
\frac{\mathrm{d}^{m+1}f_{n}(z)}{\mathrm{d}z^{m+1}}& = \frac{1}{n} \sum_{i=0}^{m} 
\dbinom{m}{i} \frac{\Gamma(n)}{\Gamma(n-i)} 
\Biggl ((n-i-1)z^{n-i-2} \frac{\mathrm{d}^{m-i}}{\mathrm{d}z^{m-i}} 
\biggl (\sum_{j=1}^{n}f_{j-1}(z)\cdot f_{n-j}(z)\biggr) \\
& \quad {} + z^{n-i-1}\frac{\mathrm{d}^{m-i+1}}{\mathrm{d}z^{m-i+1}}\biggl 
(\sum_{j=1}^{n}f_{j-1}(z)\cdot f_{n-j}(z)\biggr)\Biggr) \\
& \kern -1em = \frac{1}{n} \sum_{i=1}^{m+1} 
\dbinom{m}{i-1} \frac{\Gamma(n)}{\Gamma(n-i+1)} 
(n-i)z^{n-i-1} \frac{\mathrm{d}^{m-i+1}}{\mathrm{d}z^{m-i+1}} 
\biggl (\sum_{j=1}^{n}f_{j-1}(z)\cdot f_{n-j}(z)\biggr) \\
& \quad {} + \frac{1}{n} \sum_{i=0}^{m} \dbinom{m}{i} \frac{\Gamma(n)}{\Gamma(n-i)} 
z^{n-i-1}\frac{\mathrm{d}^{m-i+1}}{\mathrm{d}z^{m-i+1}}\biggl 
(\sum_{j=1}^{n}f_{j-1}(z)\cdot f_{n-j}(z)\biggr).
\end{align*}

Note that
\begin{equation*}
\frac{\Gamma(n)}{\Gamma(n-i+1)} (n-i)= \frac{\Gamma(n)}{\Gamma(n-i)}.
\end{equation*}
Therefore,
\begin{align*}
\frac{\mathrm{d}^{m+1}f_{n}(z)}{\mathrm{d}z^{m+1}}& = \frac{1}{n} 
\sum_{i=1}^{m} \Biggl(
\dbinom{m}{i-1}+\dbinom{m}{i} \Biggr) \frac{\Gamma(n)}{\Gamma(n-i)} 
z^{n-i-1} \\
& \qquad{} \times \frac{\mathrm{d}^{m-i+1}}{\mathrm{d}z^{m-i+1}}\biggl 
(\sum_{j=1}^{n}f_{j-1}(z)\cdot f_{n-j}(z)\biggr).
\end{align*}
The well-known identity, (e.g. see \cite{con})
\begin{equation*}
\dbinom{m}{i-1}+\dbinom{m}{i}=\dbinom{m+1}{i},
\end{equation*}
completes the proof.
\end{proof}

We should point out that Lemma $2.5.2$ is an immediate consequence of Leibniz's product rule.
Next, we shall ask a Question about the sign of the skewness 
of the time complexity of the algorithm,
as it is moderately difficult to solve the
recurrence involved, in order to compute the third moment. We
already have seen that the possibility of worst-case performance of the
algorithm is rather small and in the majority of cases the running time is
close to the average time complexity which is $O\bigl(n \log_{2}(n)\bigr)$.
Intuitively, this suggests that the complexity is negatively skewed. 
We present the following Question:
\begin{question}
Is the skewness $\mathbb{S}(C_{n})$ of the random number of key 
comparisons of Quicksort for the sorting of $n \geq 3$ keys negative?
\end{question} 

Note that the cases $n=1, 2$ are deterministic, since we always make $0$ and $1$ comparisons
for the sorting. This Question may have an affirmative answer, which can be possibly proven
by an induction argument on the number of keys. However, great deal of
attention must be exercised to the fact that the random number of comparisons 
required to sort the segment of keys 
less than the pivot and the segment of 
keys that are greater than the pivot are conditionally independent,
subject to the choice of pivot.

\section{Asymptotic analysis}

After having examined the number of comparisons of Quicksort, in terms of average and
worst case scenarios, and its variance, it is desirable also to study the
concentration of the random variable $C_{n}$ about its mean. One might
hope, by analogy with other probabilistic situations, that for large values
of $n$ the number of comparisons is highly likely to be very close to the
mean.

The analysis will be confined to the number of comparisons, because
this is the most relevant measure for this thesis. 
Since our results will be asymptotic in nature, we need to have some relevant
ideas about convergence of sequences of random variables. The following definitions
come from \cite{billingsley}, \cite{feller}.
\begin{Definition} {\ \\}
(i) A sequence of random variables $\{X_{1}, X_{2}, \ldots\}$ is said to
converge in distribution (or converge weakly) to a random variable $X$ if and only if
$$\lim_{n \to \infty} F_{n}(x)=F(x)$$
at every point $x$ where $F$ is continuous. Here $F_{n}(x)$ and $F(x)$ are
respectively the cumulative distribution functions of random variables
$X_{n}$ and $X$.
We shall denote this type of convergence by
$\displaystyle X_{n}\stackrel{\mathcal{D}}{\longrightarrow} X$.
\\
(ii) A sequence of random variables $\{X_{1}, X_{2}, \ldots\}$ is said to
converge in probability to a random variable $X$ if and only if $\forall \varepsilon > 0$
holds $$\lim_{n \to \infty}\mathbb{P}(\vert X_{n}-X \vert\geq\varepsilon)=0.$$
We will denote this type of convergence by
$\displaystyle X_{n}\stackrel{\mathcal{P}}{\longrightarrow} X$. 
\\
(iii) A sequence $\{X_{1}, X_{2}, \ldots\}$ of random variables is said to converge in
$L^{p}$-norm to a random variable $X$ if and only if
$$\lim_{n \to \infty} \mathbb{E}(\vert X_{n}-X\vert^{p}) = 0.$$
\end{Definition}
Note that convergence in $L^{p}$-norm, for $p\geq 1$, implies convergence
in probability, and it is also easy to see that convergence in probability
implies convergence in distribution: see e.g. \cite{billingsley}. 
Both converse statements are false in general.

We also present the definition
of martingale, which shall be employed in a later stage of our analysis.
\begin{Definition}
Let $\{Z_{1},Z_{2}, \ldots\}$ be a sequence of random
variables. We say $Z_{n}$ is a martingale 
if and only if \newline
(i) $\mathbb{E}(\vert Z_{n}\vert)<\infty$, $\forall  n \in \mathbf N$.
\\
(ii) $\mathbb{E}(Z_{n}\vert Z_{n-1},Z_{n-2},\ldots ,Z_{1})=Z_{n-1}$.
\end{Definition}

\subsection{Binary trees}

In this subsection, a central notion to the analysis of Quicksort, 
and in general to the analysis of algorithms is discussed. 
We begin with a definition.
\begin{Definition}
A graph is defined as an ordered pair of two sets $(V, E)$. The set $V$
corresponds to the set of vertices or nodes. The set $E$ is the set of edges,
which are pairs of distinct vertices.
\end{Definition}
One kind of graph we concentrate on are trees. A definition
of a tree (which suits us) is as follows \cite{knuth 1}:
\begin{Definition}
Tree is a finite set $\Delta$ of nodes, such that: \\
(i) There is a unique node called the root of the tree. 
\\
(ii) The remaining nodes are partitioned into $k \in \mathbf N$ disjoint sets 
$\Delta_{1}, \Delta_{2}, \ldots, \Delta_{k}$ and each of these sets is a tree. 
Those trees are called the subtrees of the root.
\end{Definition}

A particularly common tree is a binary tree. It is defined as a tree with 
the property that every node
has at most $2$ subtrees, i.e. each node -- excluding the root -- is adjacent to
one parent, so to speak, and up to two offspring, namely left and right child
nodes.
Note here that nodes which do not have any child nodes are called external. 
Otherwise, they are called internal. The size of a binary tree is the number
of its nodes. The depth of a node is simply the number of edges from that
node to the root in the (unique) shortest path between them and
the height of a binary tree is the length from the deepest node to the root. 

An extended binary tree is a binary tree with the property that every internal 
node has exactly two offspring \cite{knuth 1}. Let $D_{j}$ be the
depth of insertion of a key in a random binary tree of size $j-1$. Since the root node
is first inserted to an empty tree, it holds that $D_{1}:=0$. The next inserted key
is compared with the key at the root and if it is smaller, is placed to the left;
otherwise is attached as a right subtree, thus $D_{2}:=1$. 

Thus, we define the internal path length of
an extended binary tree having $n$ internal nodes, to be equal to the sum over all internal
nodes, of their distances to the root. Let us denote this quantity by $X_{n}$.
We then have that
\begin{equation*}
X_{n}=\sum_{j=1}^{n}D_{j}.
\end{equation*}
Similarly, the external path length is defined as the total number of edges in
all the shortest paths from external nodes to the root node.
The next Lemma gives a relationship between those two quantities. 
\begin{Lemma}[Knuth \cite{knuth 1}]
For an extended binary tree with $n$ internal nodes it holds that
\begin{equation*}
Y_{n}=X_{n}+2n,
\end{equation*}
where $Y_{n}$ denotes the external path length of the tree.
\end{Lemma}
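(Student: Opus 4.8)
The plan is to prove the identity $Y_n = X_n + 2n$ relating the external and internal path lengths of an extended binary tree with $n$ internal nodes, by induction on $n$ (equivalently, on the structure of the tree). The base case is $n=0$: here the tree consists of a single external node (the root), so $X_0 = 0$ and $Y_0 = 0$, and the identity reads $0 = 0 + 0$. (If one prefers to start at $n=1$, the tree has one internal node, the root, with two external children at distance $1$; then $X_1 = 0$ and $Y_1 = 2$, consistent with $2 = 0 + 2$.)

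For the inductive step I would exploit the recursive structure of an extended binary tree given in Definition 2.6.6: an extended binary tree with $n \geq 1$ internal nodes has a root whose two subtrees are themselves extended binary trees, say with $n_L$ and $n_R$ internal nodes, where $n_L + n_R = n - 1$ (the root itself accounts for the remaining internal node). First I would record how each path length decomposes over the root. Every internal node other than the root lies in one of the two subtrees, and its distance to the overall root is one more than its distance to the root of its subtree; summing, $X_n = X_{n_L} + X_{n_R} + (n-1)$, since there are $n_L + n_R = n-1$ internal nodes being shifted down by one level. Similarly every external node lies in one of the two subtrees, and an extended binary tree with $m$ internal nodes has exactly $m+1$ external nodes (a standard fact I would note in passing, itself provable by the same induction); hence the number of external nodes shifted down by one is $(n_L + 1) + (n_R + 1) = n + 1$, giving $Y_n = Y_{n_L} + Y_{n_R} + (n+1)$.

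Now I would simply combine these with the inductive hypotheses $Y_{n_L} = X_{n_L} + 2n_L$ and $Y_{n_R} = X_{n_R} + 2n_R$:
\begin{align*}
Y_n &= Y_{n_L} + Y_{n_R} + (n+1) \\
&= \bigl(X_{n_L} + 2n_L\bigr) + \bigl(X_{n_R} + 2n_R\bigr) + (n+1) \\
&= \bigl(X_{n_L} + X_{n_R} + (n-1)\bigr) + 2(n_L + n_R) + 2 \\
&= X_n + 2(n-1) + 2 = X_n + 2n,
\end{align*}
which closes the induction.

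I do not expect any serious obstacle here; the result is elementary. The only point requiring a little care is getting the bookkeeping of the recursive decomposition exactly right — in particular being consistent about whether the root counts as an internal node (it does, once $n \geq 1$), and remembering the auxiliary fact that an extended binary tree with $m$ internal nodes has $m+1$ external nodes, which is what makes the "$+1$" in the $Y_n$ recurrence differ from the "$-1$" in the $X_n$ recurrence and ultimately produces the $+2n$. An alternative, essentially equivalent, route would be to argue directly: each external node is reached from the root by a path that passes through some sequence of internal nodes, so $Y_n = \sum_{\text{external } e} \operatorname{depth}(e)$, and one shows that passing from an extended tree to the tree obtained by replacing one external node at depth $d$ by an internal node with two external children increases $X_n$ by $d$ and $Y_n$ by $d + 2$ (we lose the term $d$ and gain two terms $d+1$), so the difference $Y_n - X_n$ goes up by exactly $2$ at each of the $n$ such expansions starting from the single-node tree; this gives $Y_n - X_n = 2n$ immediately.
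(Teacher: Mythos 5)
Your main argument is correct but organises the induction differently from the paper. You proceed top-down by structural induction on the root: split the tree into the root together with its left and right subtrees on $n_L$ and $n_R$ internal nodes, derive the recurrences $X_n = X_{n_L}+X_{n_R}+(n-1)$ and $Y_n = Y_{n_L}+Y_{n_R}+(n+1)$ (the latter using the auxiliary count that an extended binary tree with $m$ internal nodes has $m+1$ external nodes), and combine these with the inductive hypothesis on each subtree. The paper instead runs a one-step contraction argument: pick an internal node $v$ at depth $h$ both of whose children are external, delete those two children so that $v$ becomes external, and compute directly that this changes $Y$ by $-2(h+1)+h=-(h+2)$ and $X$ by $-h$, so $Y-X$ drops by exactly $2$; iterating down to the trivial tree yields $Y_n-X_n=2n$. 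Note that the "alternative, essentially equivalent, route" you sketch at the end is exactly the paper's proof, read in the growing rather than the shrinking direction. Your top-down decomposition needs the external-node counting lemma and a two-part inductive hypothesis, but in return it produces the recurrences for $X_n$ and $Y_n$ explicitly, which are useful in their own right; the paper's leaf-pair-at-a-time bookkeeping is leaner and needs no counting lemma, at the cost of yielding only the difference $Y_n-X_n$ rather than information about $X_n$ and $Y_n$ separately.
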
 
\begin{proof}
Suppose that we remove the two offspring of an internal node $v$, with its offspring
being external nodes of the tree. We suppose that $v$ is at distance
$h$ from the root. Then the external path length is reduced by $2(h+1)$, as
its two offspring are removed. At the same time, the external path length is increased by $h$,
because the vertex
$v$ has just become an external node. Thus, the net change is
equal to $-2(h+1)+h=-(h+2)$. For internal path length the change is a reduction
by $-h$ as $v$ is no
longer internal. Thus, overall, the change in $Y_{n}-X_{n}$ is equal to
$-(h+2)-(-h)=-2$. The Lemma follows by induction.
\end{proof}

Binary trees play a fundamental and crucial
role in computing and in the analysis of algorithms. They are widely used as
data structures, because fast insertion, deletion and searching for a given
record can be achieved. 
In Quicksort, letting nodes represent keys, the algorithm's operation can be
depicted as a binary tree. The root node stores the initial pivot element. 
Since the algorithm at each recursion splits the initial array into 
two subarrays and so on, we have an ordered binary tree. 
The left child of the root stores the pivot chosen to sort all keys 
less than the value of root and the right child node stores the pivot for 
sorting the elements greater than the root. 

The process continues 
until the algorithm divides the array into trivial subarrays of order 
$0$ or $1$, which do not need any more sorting. 
These elements are stored as external or leaf nodes in this binary structure. 
It easily follows that for any given node storing a key $k$, its
left subtree store keys less than $k$ and similarly its right
subtree contain keys greater than $k$. 

In variants of Quicksort, where many pivots are utilised to partitioning
process, the generalisation 
of binary trees provides a framework for the analysis,
though we do not develop this in detail here. In
the next subsection, the limiting distribution of the number of
comparisons will be analysed, in terms of trees.

\subsection{Limiting behaviour} 

We have previously seen that the
operation of the algorithm can effectively be represented as a binary tree.
Internal nodes store pivots selected at each recursion step of the algorithm:
so the root vertex, for example, stores the first pivot with which
all other elements are compared.
We are interested in the total number of comparisons made. To understand
this, we note that every vertex is compared with the first pivot
which is at level $0$ of the tree. The array is divided into two parts --
those above the pivot, and those below it. (Either of these subarrays
may be empty). If a subarray is not empty, a pivot is found in it and is attached
to the root as a child on the left, for the elements
smaller than the first pivot or on the right for the elements larger than the
pivot. 

The process then continues recursively and each element at level $k$
in the tree is compared with each of the $k$ elements above it. 
Thus the total number of comparisons
made is the sum of the depths of all nodes in the tree. This is
equivalent to the internal path length of the extended binary tree.
Thus,
\begin{equation*}
Y_{n}=C_{n}+2n,
\end{equation*}
where $Y_{n}$ is the external path length of the tree. To see this, we
simply use Lemma 2.6.5. This fact can be also found in \cite{hosam}.

Generally, assume that we have to sort an array of $n$ distinct items with 
pivots uniformly chosen. 
All $n!$ orderings of keys are equally likely. This is equivalent to carrying
out $n$ successive insertions 
\cite{knuth 3}. Initially, the root node is inserted. The second key to be
inserted is compared with the key at the root. If it is less than that key, 
it is attached as its left subtree. Otherwise, it is inserted 
as the right subtree. This process continues recursively 
by a series of comparisons of keys, until all $n$ keys have been inserted. 
Traversing the binary search tree in order, i.e. visiting the 
nodes of the left subtree, the root and the 
nodes of the right subtree, keys are printed 
in ascending order. Thus, Quicksort can be 
explicitly analysed in this way. 

Recall that each internal node 
corresponds to the 
pivot at a given recursion of the Quicksort process (e.g. the depth of a
given node). Thus the first pivot corresponds to the root 
node, its descendants or child nodes are 
pivots chosen from the two subarrays to be sorted, etc. Eventually,
after $n$ insertions, we have built a binary search tree from top to bottom.  
We can use this approach to understand the following Theorem of R\'{e}gnier \cite{reg}. 
\begin{theorem}[R\'{e}gnier \cite{reg}]
Let random variables $Y_{n}$ and $X_{n}$ denote respectively the external
and internal path length of binary search tree, built by $n$ successive
insertions of keys. Then the random variables
\begin{eqnarray*}
Z_{n}=\frac{Y_{n}-2(n+1)(H_{n+1}-1)}{n+1}=\frac{X_{n}-2(n+1)H_{n}+4n}{n+1}
\end{eqnarray*}
form a martingale with null expectations. For their variances it holds
\begin{eqnarray*}
\operatorname {Var}(Z_{n})=7-\frac{2\pi^{2}}{3}- 
\dfrac{2\log_{e}(n)}{n}+O \left(\dfrac{1}{n}\right).
\end{eqnarray*}
\end{theorem}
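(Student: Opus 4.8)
The plan is to work entirely in the binary-search-tree representation set up above, in which the internal path length $X_n$ coincides with the comparison count $C_n$ and, by Lemma 2.6.5, $Y_n = X_n + 2n = C_n + 2n$. The first small step is to check that the two displayed formulas for $Z_n$ really agree: substituting $Y_n = X_n + 2n$ and $H_{n+1} = H_n + \frac{1}{n+1}$ into $\dfrac{Y_n - 2(n+1)(H_{n+1}-1)}{n+1}$ and simplifying produces $\dfrac{X_n - 2(n+1)H_n + 4n}{n+1}$. Since $X_n = C_n$ and the mean $\mathbb{E}(C_n) = 2(n+1)H_n - 4n$ is already known, the numerator of the second form has expectation zero, so $\mathbb{E}(Z_n)=0$ for every $n$; this is the ``null expectations'' assertion.

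For the martingale property I would fix the filtration $\mathcal{F}_n$ generated by the random tree $T_n$ after $n$ insertions (equivalently, by the relative order of the first $n$ inserted keys); a martingale with respect to $(\mathcal{F}_n)$ is in particular a martingale in the sense of Definition 2.6.2 by the tower property, and integrability is immediate since $Y_n$ takes finitely many values. The one genuinely substantive point is the combinatorial claim that, conditional on $\mathcal{F}_n$, the newly inserted $(n+1)$-st key enters each of the $n+1$ external slots of $T_n$ with probability $\frac{1}{n+1}$: in the uniform-permutation model the rank of the new key among all $n+1$ keys is uniform on $\{1,\ldots,n+1\}$ and independent of the relative order of the first $n$ keys, and the $n+1$ possible positions of the new key relative to the existing $n$ keys are in bijection with the $n+1$ external slots of $T_n$. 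Inserting into a slot at depth $d$ turns that external node into an internal one with two new external children at depth $d+1$, so $Y_{n+1}-Y_n = 2(d+1)-d = d+2$. Averaging over the slots and using that the external-slot depths of $T_n$ sum to $Y_n$ gives $\mathbb{E}(Y_{n+1}\mid\mathcal{F}_n) = Y_n + \frac{Y_n}{n+1} + 2 = \frac{n+2}{n+1}Y_n + 2$. Substituting this into $Z_{n+1} = \dfrac{Y_{n+1} - 2(n+2)(H_{n+2}-1)}{n+2}$ and collecting terms, the identity $\mathbb{E}(Z_{n+1}\mid\mathcal{F}_n) = Z_n$ reduces to $H_{n+2} - H_{n+1} = \frac{1}{n+2}$, which holds by definition of harmonic numbers.

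For the variance, note that $Z_n = \frac{Y_n}{n+1} - 2(H_{n+1}-1)$ is an affine function of $Y_n = C_n + 2n$, so $\operatorname{Var}(Z_n) = \dfrac{\operatorname{Var}(Y_n)}{(n+1)^2} = \dfrac{\operatorname{Var}(C_n)}{(n+1)^2}$. Plugging in the closed form of Theorem 2.3.1,
\[
\operatorname{Var}(Z_n) = \frac{7n^2 - 4(n+1)^2 H_n^{(2)} - 2(n+1)H_n + 13n}{(n+1)^2},
\]
and expanding term by term with $H_n = \log_e(n) + \gamma + O(1/n)$ and $H_n^{(2)} = \frac{\pi^2}{6} - \frac{1}{n} + O(1/n^2)$: the first two terms contribute $7 - \frac{2\pi^2}{3}$ up to $O(1/n)$, the third contributes $-\frac{2\log_e(n)}{n} + O(1/n)$, and the last is $O(1/n)$, giving exactly $7 - \frac{2\pi^2}{3} - \frac{2\log_e(n)}{n} + O(1/n)$.

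The part needing the most care is the conditional-uniformity claim in the second paragraph together with pinning down the right filtration; once that is in place the martingale identity collapses to $H_{n+2}-H_{n+1}=\frac{1}{n+2}$, and the variance is just Theorem 2.3.1 plus honest bookkeeping of the $O(1/n)$ error terms. One could instead obtain the variance self-containedly by summing $\mathbb{E}\!\left((Z_{n+1}-Z_n)^2\mid\mathcal{F}_n\right)$ along the martingale and evaluating the resulting harmonic sums, but quoting Theorem 2.3.1 is considerably shorter.
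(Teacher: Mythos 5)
Your proof is correct and follows essentially the same route as the paper: both establish $\mathbb{E}(Y_{n+1}\mid T_n)=\frac{n+2}{n+1}Y_n+2$ from the fact that each new key lands uniformly among the $n+1$ external slots of $T_n$, and both then read off $\operatorname{Var}(Z_n)=\operatorname{Var}(C_n)/(n+1)^2$ from Theorem 2.3.1 before expanding asymptotically. You bypass the paper's telescoping of the $D_i$-recurrence (deriving the uniform-slot claim directly from exchangeability, and stating the filtration explicitly, which also repairs the paper's small slip of conditioning on $D_1,\ldots,D_n$ instead of $D_1,\ldots,D_{n-1}$), and you substitute the direct $H_n$, $H_n^{(2)}$ expansions for the paper's detour through polygamma functions, but these are presentational rather than structural differences.
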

\begin{proof} 
By induction on $n$, the base case being trivial. Suppose we have an
$(n-2)$-vertex tree and consider the insertion of the $(n-1)$-th key in the
random binary tree. Its
depth of insertion is $D_{n-1}$, so that
a formerly external node becomes an internal and we see that its
two (new) descendants are both at depth $D_{n-1}+1$. Recall that $D_{n}$ is the random variable
counting the depth of insertion of a key in a random binary tree of size $n-1$, thus $D_{1}:=0$
and $D_{2}:=1$.
The following equation concerning conditional expectations holds: 
\begin{equation*}
n \mathbb{E}(D_{n}\vert D_{1}, \ldots, D_{n-1})=
(n-1) \mathbb{E}(D_{n-1}\vert D_{1}, \ldots, D_{n-2})-D_{n-1}+2(D_{n-1}+1).
\end{equation*}
This recurrence yields
\begin{equation*}
n \mathbb{E}(D_{n}\vert D_{1}, \ldots, D_{n-1})-(n-1) \mathbb{E}(D_{n-1}\vert D_{1}, \ldots, D_{n-2})
=D_{n-1}+2.
\end{equation*}

Summing and using that the left-hand side is a telescopic sum,
\begin{align*}
n\mathbb{E}(D_{n}\vert D_{1}, \ldots, D_{n-1})=\sum_{i=1}^{n-1}(D_{i}+2)=Y_{n-1}.
\end{align*}
The last equation is justified by Lemma $2.6.5$. Also, we have
\begin{align*}
\mathbb{E}(Y_{n}\vert D_{1}, \ldots, D_{n-1})=\mathbb{E}(Y_{n-1}+2+D_{n}\vert D_{1}, \ldots, D_{n-1})=
\frac{n+1}{n}Y_{n-1}+2.
\end{align*}
Thus, taking expectations and using $\mathbb{E}(\mathbb{E}(U\vert V))=\mathbb{E}(U)$,
\begin{align*}
\mathbb{E}(Y_{n})=\dfrac{n+1}{n}\mathbb{E}(Y_{n-1})+2
\Longleftrightarrow \dfrac{\mathbb{E}(Y_{n})}{n+1}=
\frac{\mathbb{E}(Y_{n-1})}{n}+\frac{2}{n+1}.
\end{align*}
This recurrence has solution
\begin{align*}
\mathbb{E}(Y_{n})=2(n+1)(H_{n+1}-1).
\end{align*}

For $Z_{n}$, we deduce that
\begin{align*}
\mathbb{E}(Z_{n}\vert D_{1}, \ldots, D_{n}) &= \mathbb{E} \left(\frac{Y_{n}- 
\mathbb{E}(Y_{n})}{n+1}\vert D_{1}, \ldots, D_{n} \right) \\
& =\dfrac{2}{n+1}
+\dfrac{Y_{n-1}}{n}-\dfrac{ \mathbb{E}(Y_{n})}{n+1} \\
& =Z_{n-1}.
\end{align*}
Therefore, $Z_{n}$ form a martingale.
Further, note that $Z_{n}$ is a linear transformation of the
internal path length $X_{n}$, so we get that
\begin{align*}
\operatorname {Var}(Z_{n})=\frac{1}{(n+1)^{2}}\cdot \operatorname {Var}(X_{n}).
\end{align*}
Previously, we saw that the number of comparisons is just the internal path 
length of a binary search tree. 
As a reminder the variance of the number of comparisons is equal to 
\begin{eqnarray*}
\operatorname {Var}(C_{n})= 7n^{2}-4(n+1)^{2}H_{n}^{(2)}-2(n+1)H_{n}+13n.
\end{eqnarray*}
Hence,
\begin{equation*}
\operatorname {Var}(Z_{n})=7\left(\frac{n}{n+1}\right)^{2}-4H_{n}^{(2)} -
\frac{2H_{n}}{n+1}+\frac{13n}{(n+1)^{2}}.
\end{equation*}

For the purpose of obtaining the asymptotics of the variance, an important family of functions,
which is called polygamma functions are discussed. The digamma function is 
\begin{equation*}
\psi(z)= \dfrac{\mathrm{d}}{\mathrm{d}z}\log_{e}\Gamma(z)
\end{equation*}
and for complex variable $z \neq 0, -1, -2, \ldots$ can be written as \cite{abr},
\begin{align*}
\psi(z) = -\gamma + \sum_{j=0}^{\infty} \left (\dfrac{1}{j+1}-\dfrac{1}{j+z} \right). \, \, \tag{2.7}
\end{align*}
In general, $\forall k \in \mathbf N$, the set 
\begin{equation*}
\psi^{(k)}(z)=\dfrac{\mathrm{d}^{k+1}}{\mathrm{d}z^{k+1}} \log_{e} \Gamma(z),
\end{equation*}
with $\psi^{(0)}(z)= \psi(z)$, forms the family of polygamma functions. 
Differentiating Eq. (2.7),
\begin{align*}
\psi^{(1)}(z) =  \sum_{j=0}^{\infty}  \dfrac{1}{(j+z)^{2}}. \, \, \tag{2.8}
\end{align*}
By Eq. (2.7), it easily follows that 
\begin{equation*}
H_{n}= \psi(n+1) + \gamma.
\end{equation*}
Further, using the fact that $\zeta(2)=\lim_{n \to \infty} H^{(2)}_{n}=\dfrac{\pi^{2}}{6}$ \cite{abr},
where $\zeta(s)=\sum_{j=1}^{\infty}\dfrac{1}{j^{s}}$ is the Riemann zeta function
for $\mathfrak{Re}(s) > 1$, we obtain
\begin{equation*}
H_{n}^{(2)} +\frac{2H_{n}}{n+1} = \frac{\pi^{2}}{6}-\psi^{(1)}(n+1)+\frac{2(\psi(n+1)+ \gamma)}{n+1}. \, \, \tag{2.9}
\end{equation*}

Eq. (2.9) is asymptotically equivalent to
\begin{equation*}
\frac{\pi^{2}}{6}+\dfrac{2\log_{e}(n)}{n},
\end{equation*}
thus, the asymptotic variance is
\begin{align*}
\operatorname {Var}(Z_{n})& =7-\frac{2\pi^{2}}{3} - \dfrac{2\log_{e}(n)}{n}+ O \left( \dfrac{1}{n} \right) \\
& = 7-\frac{2\pi^{2}}{3}- O \left( \dfrac{\log_{e}(n)}{n} \right). \qedhere
\end{align*}
\end{proof}
\begin{remark}
We note that there is a typo in the expression for the asymptotic variance 
in R\'{e}gnier's paper \cite{reg},
which is given as
\begin{equation*}
7-\frac{2\pi^{2}}{3} + O \left ( \dfrac{1}{n} \right).
\end{equation*}
The correct formula for the asymptotic variance is stated in Fill and Janson \cite{asym}.
\end{remark}

The key point about martingales is that they converge.
\begin{theorem}[Feller \cite{feller}]
Let $Z_{n}$ be a martingale, and suppose further that there is a 
constant $C$ such that $\mathbb{E}(Z_{n}^{2})<C$ for all $n$. 
Then there is a random
variable $Z$ to which $Z_{n}$ converges, with probability $1$.
Further, $\mathbb{E}(Z_{n})=\mathbb{E}(Z)$ for all $n$.
\end{theorem}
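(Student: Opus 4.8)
The plan is to prove the statement in two stages: first establish convergence in $L^{2}$ (which immediately yields convergence of the means), then upgrade to almost sure convergence by means of a martingale maximal inequality. Throughout, let $\mathcal{F}_{n}=\sigma(Z_{1},\ldots ,Z_{n})$ be the natural filtration, so the martingale property reads $\mathbb{E}(Z_{n+1}\mid \mathcal{F}_{n})=Z_{n}$, and write $D_{k}=Z_{k}-Z_{k-1}$ for the martingale increments, with $D_{1}=Z_{1}$.

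First I would record the orthogonality of the increments: for $j<k$ the variable $D_{j}$ is $\mathcal{F}_{k-1}$-measurable and $\mathbb{E}(D_{k}\mid \mathcal{F}_{k-1})=0$, so $\mathbb{E}(D_{j}D_{k})=\mathbb{E}\bigl(D_{j}\,\mathbb{E}(D_{k}\mid \mathcal{F}_{k-1})\bigr)=0$. Hence the Pythagorean identity
\[
\mathbb{E}(Z_{n}^{2})=\sum_{k=1}^{n}\mathbb{E}(D_{k}^{2})
\]
holds, and the hypothesis $\mathbb{E}(Z_{n}^{2})<C$ forces $\sum_{k=1}^{\infty}\mathbb{E}(D_{k}^{2})\le C<\infty$. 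Consequently, for $m<n$ we have $\mathbb{E}\bigl((Z_{n}-Z_{m})^{2}\bigr)=\sum_{k=m+1}^{n}\mathbb{E}(D_{k}^{2})\to 0$ as $m,n\to\infty$, so $(Z_{n})$ is Cauchy in $L^{2}$. By completeness of $L^{2}$ there is a random variable $Z$ with $Z_{n}\to Z$ in $L^{2}$; since $L^{2}$-convergence implies $L^{1}$-convergence, $\mathbb{E}(Z)=\lim_{n}\mathbb{E}(Z_{n})$, and the martingale property makes $\mathbb{E}(Z_{n})=\mathbb{E}(Z_{1})$ constant, so $\mathbb{E}(Z_{n})=\mathbb{E}(Z)$ for every $n$.

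The remaining, and harder, step is almost sure convergence. Here I would apply Doob's $L^{2}$ maximal inequality to the martingale $(Z_{n}-Z_{m})_{n\ge m}$ for fixed $m$, obtaining
\[
\mathbb{E}\Bigl(\sup_{n\ge m}(Z_{n}-Z_{m})^{2}\Bigr)\le 4\sup_{n\ge m}\mathbb{E}\bigl((Z_{n}-Z_{m})^{2}\bigr)=4\sum_{k>m}\mathbb{E}(D_{k}^{2}),
\]
and the right-hand side tends to $0$ as $m\to\infty$ because the series $\sum_{k}\mathbb{E}(D_{k}^{2})$ converges. In particular $\sup_{n\ge m}|Z_{n}-Z_{m}|\to 0$ in probability, and a routine subsequence argument (or Borel--Cantelli) shows that $(Z_{n})$ is almost surely Cauchy, hence converges almost surely; the almost sure limit must coincide with the $L^{2}$ limit $Z$. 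Alternatively, one can invoke Doob's upcrossing inequality directly to bound the expected number of upcrossings of any rational interval and deduce $\liminf_{n}Z_{n}=\limsup_{n}Z_{n}$ almost surely.

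The main obstacle is precisely this last step: the $L^{2}$-boundedness part is an elementary Hilbert-space computation, but passing from $L^{2}$ convergence to almost sure convergence genuinely requires a martingale maximal (or upcrossing) inequality, which is the substantive probabilistic input. Everything else — orthogonality of the increments, summability of $\mathbb{E}(D_{k}^{2})$, and constancy of $\mathbb{E}(Z_{n})$ — is bookkeeping.
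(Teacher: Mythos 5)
The paper does not prove this statement: it is quoted as a theorem from Feller's book and used as a black box, so there is no in-paper argument to compare against. Your proof is correct and is the standard proof of the $L^{2}$-bounded martingale convergence theorem: orthogonality of the increments $D_{k}=Z_{k}-Z_{k-1}$ gives the Pythagorean identity $\mathbb{E}(Z_{n}^{2})=\sum_{k\le n}\mathbb{E}(D_{k}^{2})$, whence $L^{2}$-boundedness forces summability of $\mathbb{E}(D_{k}^{2})$ and hence $L^{2}$ (so $L^{1}$) convergence and constancy of means; Doob's maximal inequality (or the upcrossing inequality) then upgrades this to almost-sure convergence. One small polish: since $\sup_{n,n'\ge m}\vert Z_{n}-Z_{n'}\vert$ is monotone nonincreasing in $m$, its convergence to $0$ in probability already implies convergence to $0$ almost surely, so the subsequence or Borel--Cantelli step can be dispensed with.
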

We showed that
\begin{equation*}
\displaystyle Z_{n}\stackrel{\mathcal{P}}{\longrightarrow} Z.
\end{equation*}
It is important to emphasise that the random variable $Z$ to which
we get convergence is not normally distributed.  
We saw that the total number of comparisons to sort an array of $n \geq 2$ keys, when the pivot is 
a uniform random variable on $\{1, 2, \ldots, n\}$ is equal to the number of comparisons to 
sort the subarray of $U_{n}-1$ keys below pivot plus the number of comparisons 
to sort the subarray of $n-U_{n}$ elements above pivot plus $n-1$ comparisons 
done to partition the array. Therefore,
\begin{equation*}
X_{n}=X_{U_{n}-1}+X^{*}_{n-U_{n}}+n-1,
\end{equation*}
where the random variables $X_{U_{n}-1}$ and $X^{*}_{n-U_{n}}$ are identically
distributed and independent conditional on $U_{n}$.

Consider the random variables  
\begin{equation*}
Y_{n}=\frac{X_{n}-\mathbb{E}(X_{n})}{n}.
\end{equation*}
The previous equation can be rewritten in the following form
\begin{eqnarray*}
Y_{n}=\frac{X_{U_{n}-1}+X^{*}_{n-U_{n}}+n-1-\mathbb{E}(X_{n})}{n}.
\end{eqnarray*}
By a simple manipulation, it follows that \cite{asym}, \cite{Rosler}
\begin{equation*}
Y_{n}= Y_{U_{n}-1}\cdot \frac{U_{n}-1}{n}+Y^{*}_{n-U_{n}}\cdot 
\frac{n-U_{n}}{n}+C_{n}(U_{n}),
\end{equation*}
where
\begin{equation*}
C_{n}(j)=\frac{n-1}{n}+\frac{1}{n}\left(\mathbb{E}(X_{j-1})+ 
\mathbb{E}(X^{*}_{n-j})-\mathbb{E}(X_{n})\right).
\end{equation*}

The random variable $U_{n}/n$ converges to a uniformly distributed variable 
$\Xi$ on $[0, 1]$. A Lemma follows
\begin{Lemma}
Let $U_{n}$ be a uniformly distributed random variable on $\{1, 2, \ldots , n\}$. Then 
$$\displaystyle\frac{U_{n}}{n}\stackrel{\mathcal{D}}{\longrightarrow} \Xi,$$ 
where $\Xi$ is uniformly distributed on $[0, 1]$.
\end{Lemma}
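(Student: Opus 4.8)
The plan is to verify the definition of convergence in distribution directly, by computing the cumulative distribution function of $U_{n}/n$ and checking that it converges pointwise to that of $\Xi$ at every continuity point. Write $F_{n}$ for the c.d.f. of $U_{n}/n$ and $F$ for the c.d.f. of $\Xi$, so that $F(x)=0$ for $x<0$, $F(x)=x$ for $0\le x\le 1$, and $F(x)=1$ for $x>1$. Note that $F$ is continuous on all of $\mathbf{R}$, so it suffices to show $F_{n}(x)\to F(x)$ for every real $x$.

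First I would dispose of the easy ranges. For $x<0$ we have $\mathbb{P}(U_{n}/n\le x)=0$ for all $n$, since $U_{n}$ takes only positive values; and for $x\ge 1$ we have $\mathbb{P}(U_{n}/n\le x)=1$ for all $n$, since $U_{n}\le n$ always. Hence $F_{n}(x)\to F(x)$ in both cases. The substance is the range $0\le x<1$. There, since $U_{n}$ is uniform on $\{1,2,\ldots,n\}$,
\begin{equation*}
F_{n}(x)=\mathbb{P}(U_{n}\le nx)=\frac{\#\{k\in\mathbf{N}: 1\le k\le n,\ k\le nx\}}{n}=\frac{\lfloor nx\rfloor}{n},
\end{equation*}
where the last equality uses $0\le nx<n$ so that the count is exactly $\lfloor nx\rfloor$ (interpreted as $0$ when $0\le nx<1$).

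Now I would invoke the elementary bound $nx-1<\lfloor nx\rfloor\le nx$, valid for all real $nx$, which gives
\begin{equation*}
x-\frac{1}{n}<\frac{\lfloor nx\rfloor}{n}\le x.
\end{equation*}
Letting $n\to\infty$ and applying the squeeze theorem yields $F_{n}(x)\to x=F(x)$ for each $x$ with $0\le x<1$. Combining the three ranges, $F_{n}(x)\to F(x)$ at every $x\in\mathbf{R}$, in particular at every continuity point of $F$, which is precisely the definition of $\frac{U_{n}}{n}\stackrel{\mathcal{D}}{\longrightarrow}\Xi$.

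There is no real obstacle here: the only point demanding a little care is the bookkeeping with the floor function at the endpoints of the interval $[0,1)$ (ensuring the count of admissible $k$ is $\lfloor nx\rfloor$ and not off by one), and the observation that because the limiting c.d.f.\ is continuous everywhere, pointwise convergence of the $F_{n}$ already gives weak convergence with nothing further to check.
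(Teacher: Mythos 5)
Your proof is correct, and it takes a genuinely different route from the paper. The paper argues via moment generating functions: it computes $M_{U_n/n}(s)=\frac{1}{n}\sum_{k=1}^{n}e^{sk/n}$, observes this is a Riemann sum for $\int_0^1 e^{sx}\,\mathrm{d}x=\frac{e^s-1}{s}=M_{\Xi}(s)$, and concludes by (implicitly) invoking a continuity theorem for MGFs. You instead verify the definition of convergence in distribution directly: you compute $F_n(x)=\lfloor nx\rfloor/n$ for $0\le x<1$, squeeze it to $x$, handle the trivial ranges, and note that since the limit c.d.f.\ is continuous everywhere, pointwise convergence of the $F_n$ is exactly what is required. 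Your approach buys elementarity and self-containedness -- there is no appeal to a convergence theorem for transforms, and no Riemann-sum-to-integral step that is only sketched; everything reduces to the floor-function inequality. The paper's approach is shorter to state if one already trusts the MGF machinery, but its central step (``we can replace it by its integral'') is informal compared to your explicit squeeze, and it also leaves the invocation of the MGF continuity theorem unstated. Both are valid; yours is the more rigorous of the two as written.
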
 
\begin{proof}
The moment generating function of $U_{n}$ is given by
\begin{eqnarray*}
M_{U_{n}}(s)=\sum_{k=1}^{n}\mathbb{P}(U_{n}=k)e^{sk}=\frac{1}{n} \cdot \sum_{k=1}^{n}e^{sk}
=\frac{1}{n}\cdot \frac{e^{s(n+1)}-e^{s}}{e^{s}-1}.
\end{eqnarray*}
For the random variable $U_{n}/n$, it is 
\begin{eqnarray*}
M_{U_{n}/n}(s)=M_{U_{n}}(s/n)=\frac{1}{n}\cdot\sum_{k=1}^{n}e^{sk/n}=
\frac{1}{n}\cdot \dfrac{e^{s(n+1)/n}-e^{s/n}}{e^{s/n}-1}.
\end{eqnarray*}

The random variable $\Xi$ has moment generating function
\begin{eqnarray*}
M_{\Xi}(s)=\int_{0}^{1}e^{st}\, \mathrm{d}t=\frac{e^{s}-1}{s}.
\end{eqnarray*}
Now the moment generating function of $U_{n}/n$ is an approximation to the average value of
$e^{sx}$ over the interval $[0,1]$ and so, as $n$ tends to infinity,
we can replace it by its integral
$$\int_{0}^{1}e^{sx}\, \mathrm{d}x=\left [\frac{e^{sx}}{s} \right]_{0}^{1}=\frac{e^{s}-1}{s}$$
and now all that is required has been proved. 
\end{proof}
For the function
\begin{eqnarray*} 
C_{n}(j)=\frac{n-1}{n}+\frac{1}{n}\cdot \left(\mathbb{E}(X_{j-1})+\mathbb{E}(X^{*}_{n-j})
-\mathbb{E}(X_{n})\right)
\end{eqnarray*}
using the previous Lemma and recalling that asymptotically the expected complexity 
of Quicksort is $2n\log_{e}(n)$ it follows that
\begin{align*}
\lim_{n \to \infty}C_{n}(n\cdot U_{n}/n)& =\lim_{n \to \infty}\left(\frac{n-1}{n}+
\frac{1}{n}\cdot \bigl(\mathbb{E}(X_{U_{n}-1})+\mathbb{E}(X^{*}_{n-U_{n}})- 
\mathbb{E}(X_{n}) \bigr )\right) \\                    
& =\lim_{n \to \infty}\Biggl(\frac{n-1}{n}+\frac{1}{n} \cdot 
\biggl(2\left(\frac{n \cdot U_{n}}{n}-1 \right)\log_{e}(U_{n}-1) \\
& \qquad{} + 2\left(n-\frac{n \cdot U_{n}}{n} \right)
\log_{e}(n-U_{n})-2n\log_{e}(n)\biggr) \Biggr) \\
& =1+2\xi\log_{e}\xi+2(1-\xi) \log_{e}(1-\xi)=C(\Xi),  \mbox{~ ~$\forall ~ \xi \in [0, 1].$~}
\end{align*}
Thus $C_{n}(n\cdot U_{n}/n)$ converges to $C(\Xi)$, (see as well in \cite{Rosler}).
Therefore, we obtain
\begin{eqnarray*}
\mathscr{L}(Y)= \mathscr{L}\bigl(Y \cdot \Xi+Y^{*} \cdot (1-\Xi)+C(\Xi) \bigr).
\end{eqnarray*}

But this does not work for the normal. Indeed, if $Y$ (and hence $Y^{*}$)
are normals with mean zero and variance $\sigma^{2}$, a necessary condition
for $Y\Xi+Y^{*}(1-\Xi)+C(\Xi)$ to have mean 0 (which would be needed for the equality
to hold) would be that $C(\Xi)=0$.
This equality happens with probability equal to $0$, as we can easily deduce.
Thus, the distribution which the sequence $Y_{n}$ converges is not Gaussian.

\subsection{Deviations of Quicksort}

It is desirable to derive bounds on the deviation of the random number of pairwise comparisons, 
needed to sort an array of $n$ distinct elements from its expected value, as $n$ gets 
arbitrarily large. We remind ourselves that the pivot is uniformly chosen at random. 

We shall derive bounds on the following probability
\begin{eqnarray*}
\mathbb{P}\left(\left \vert\frac{C_{n}}{\mathbb{E}(C_{n})}-1 \right \vert > \epsilon \right)
\end{eqnarray*}
for $\epsilon>0$, sufficiently small. From Chebyshev's inequality, we obtain a bound for the 
above probability. Chebyshev's inequality is as follows \cite{feller}:
\begin{theorem}[Chebyshev's inequality]
Let $X$ be a random variable, with $\mathbb{E}(X^{2}) < \infty$.
Then, for any real number $a>0$ 
\begin{eqnarray*}
\mathbb{P} \bigl(\vert X \vert \geq a \bigr) \leq \dfrac{\mathbb{E}(X^{2})}{a^{2}}.
\end{eqnarray*}
If $\mathbb{E}(X)=m$ and $\operatorname {Var}(X)=\sigma^{2}$, then
\begin{equation*}
\mathbb{P} \bigl(\vert X-m \vert \geq a \bigr) \leq \dfrac{\sigma^{2}}{a^{2}}.
\end{equation*}
\end{theorem}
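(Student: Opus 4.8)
The plan is to obtain Chebyshev's inequality as an instance of Markov's inequality applied to the non-negative random variable $X^{2}$. First I would fix $a>0$ and record the pointwise inequality $a^{2}\mathbf{1}_{\{\vert X\vert\geq a\}}\leq X^{2}$: on the event $\{\vert X\vert\geq a\}$ this holds because $X^{2}\geq a^{2}$ there, while off that event the left-hand side is simply $0\leq X^{2}$, so the inequality is valid everywhere. Taking expectations of both sides -- which is legitimate precisely because $\mathbb{E}(X^{2})<\infty$ -- and using linearity together with $\mathbb{E}(\mathbf{1}_{\{\vert X\vert\geq a\}})=\mathbb{P}(\vert X\vert\geq a)$ gives $a^{2}\,\mathbb{P}(\vert X\vert\geq a)\leq\mathbb{E}(X^{2})$; dividing through by $a^{2}>0$ yields the first bound.

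Next, for the second statement I would apply the first bound to the centred variable $X-m$ in place of $X$, keeping $a$ unchanged. By the definition of variance, $\mathbb{E}\bigl((X-m)^{2}\bigr)=\operatorname{Var}(X)=\sigma^{2}$, and the event $\{\vert X-m\vert\geq a\}$ is exactly the one that appears, so the conclusion $\mathbb{P}(\vert X-m\vert\geq a)\leq\sigma^{2}/a^{2}$ is immediate from the first part. One could equally first isolate Markov's inequality as a standalone step ($\mathbb{P}(Y\geq t)\leq\mathbb{E}(Y)/t$ for $Y\geq 0$, $t>0$) and then substitute $Y=X^{2}$, $t=a^{2}$; the two routes are the same argument organised differently.

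There is essentially no obstacle here; the only point deserving a sentence is that the hypothesis $\mathbb{E}(X^{2})<\infty$ is exactly what guarantees that all the expectations above are well defined and finite, so that the term-by-term manipulations and the passage from the pointwise inequality to its expectation are justified. In the discrete settings relevant to this thesis the sums in question are in any case finite or absolutely convergent, so this requirement is automatic.
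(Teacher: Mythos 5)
Your proof is correct and is the standard textbook argument: apply Markov's inequality (via the pointwise bound $a^{2}\mathbf{1}_{\{\vert X\vert\geq a\}}\leq X^{2}$) and then specialise to the centred variable $X-m$. The paper itself gives no proof of this theorem -- it is quoted as a known result with a citation to Feller -- so there is no in-paper argument to compare against; your proposal supplies precisely the routine derivation that the paper takes for granted.
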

The random variable $Y_{n}=\dfrac{C_{n}}{\mathbb{E}(C_{n})}$ 
has mean and variance,
\begin{align*}
\mathbb{E}(Y_{n})& = \mathbb{E} \left(\frac{C_{n}}{\mathbb{E}(C_{n})}\right)=
\frac{1}{\mathbb{E}(C_{n})}\cdot \mathbb{E}(C_{n})=1 \\
\\
\operatorname {Var}(Y_{n})& = \operatorname {Var} \left(\frac{C_{n}}{\mathbb{E}(C_{n})}\right)
=\frac{1}{\mathbb{E}^{2}(C_{n})}\cdot \operatorname {Var}(C_{n}) \\
& =\frac{-4(n+1)^{2}H^{(2)}_{n}-2(n+1)H_{n}+(7n+13)n}{\bigl(2(n+1)H_{n}-4n \bigr)^{2}}.
\end{align*}

For $\epsilon>0$, we have
\begin{eqnarray*}
\mathbb{P}(\vert Y_{n}-1\vert>\epsilon)<\frac{\operatorname {Var}(Y_{n})}{\epsilon^{2}}=
\frac{-4(n+1)^{2}H^{(2)}_{n}-2(n+1)H_{n}+(7n+13)n}{\epsilon^{2}\bigl (2(n+1)H_{n}-4n\bigr )^{2}}.
\end{eqnarray*}
It holds that,
\begin{align*}
\operatorname {Var}(C_{n})&= -4(n+1)^{2}H^{(2)}_{n}-2(n+1)H_{n}+(7n+13)n \\
& \qquad {} \leq 7n^{2}+13n \\
& \qquad {} \leq 20n^{2},
\end{align*}
using that the other terms are negative and $n\geq 1$. Further, using that
\begin{eqnarray*}
\lim_{n \to \infty}H^{(2)}_{n}=\sum_{n=1}^{\infty}\frac{1}{n^{2}}=\frac{\pi^{2}}{6}, 
\end{eqnarray*}
we get
\begin{align*}
\operatorname {Var}(C_{n})&=-4(n+1)^{2}H^{(2)}_{n}-2(n+1)H_{n}+(7n+13)n \\
& \quad {} \geq (7n+13)n-4(n+1)^{2}\pi^{2}/6-2(n+1)\bigl(\log_{e}(n)+\gamma +o(1)\bigr) \\
& \qquad{} = (7-2\pi^{2}/3)n^{2}\bigl(1+o(1) \bigr).
\end{align*}
From above inequalities, we deduce that $\operatorname {Var}(C_{n})=\Theta(n^{2})$. 
Thus,
\begin{eqnarray*}
\mathbb{P}(\vert Y_{n}-1 \vert > \epsilon)<\frac{\operatorname {Var}(Y_{n})}{\epsilon^{2}}
=\frac{\Theta(n^{2})}{\epsilon^{2}\bigl(2(n+1)H_{n}-4n \bigr)^{2}}
=O \left((\epsilon \log_{e}(n))^{-2} \right).
\end{eqnarray*}
R\"{o}sler \cite{Rosler} derived a sharper bound. He showed that this probability is 
$O(n^{-k})$, for fixed $k$.
McDiarmid and Hayward \cite{mc} have further sharpened the bound. 
Their Theorem is
\begin{theorem}[McDiarmid and Hayward {\cite{mc}}, McDiarmid {\cite{diarmid}}]\ \\
Let $\epsilon=\epsilon(n)$ satisfy $\dfrac{1}{\log_{e}(n)}< \epsilon \leq 1$. 
Then as $n \to \infty$,
\begin{eqnarray*}
\mathbb{P}(\vert Y_{n}-1 \vert > \epsilon)= n^{-2\epsilon \bigl (\log_{e}^{(2)}(n)-\log_{e}(1/ \epsilon)
+ O(\log_{e}^{(3)}(n) \bigr )},
\end{eqnarray*}
where $\log_{e}^{(k+1)}(n):=\log_{e}\bigl(\log_{e}^{(k)}(n)\bigr)$ and 
$\log_{e}^{(1)}n=\log_{e}n$.
\end{theorem}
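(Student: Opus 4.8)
A sharp two–sided large deviation estimate of this kind does not admit a short self–contained argument; the route below follows McDiarmid and Hayward. Write $Q_{n}=C_{n}-\mathbb{E}(C_{n})$, so that $\{\lvert Y_{n}-1\rvert>\epsilon\}$ is exactly $\{\lvert Q_{n}\rvert>\epsilon\,\mathbb{E}(C_{n})\}$; since $\mathbb{E}(C_{n})=2(n+1)H_{n}-4n\sim 2n\log_{e}(n)$, this asks for the probability that $Q_{n}$ deviates from $0$ by about $2\epsilon n\log_{e}(n)$. Treat the upper and lower tails separately; for the upper one the basic tool is the Chernoff bound $\mathbb{P}(Q_{n}>x)\le e^{-\lambda x}\phi_{n}(\lambda)$ with $\lambda>0$, where $\phi_{n}(\lambda):=\mathbb{E}(e^{\lambda Q_{n}})$. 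From the distributional recurrence $C_{n}=C_{U_{n}-1}+C^{\star}_{n-U_{n}}+n-1$, conditioning on the pivot rank $U_{n}=i$ and using the conditional independence of the two subarrays, one gets
\begin{equation*}
\phi_{n}(\lambda)=\frac{1}{n}\sum_{i=1}^{n}e^{\lambda\, t_{n}(i)}\,\phi_{i-1}(\lambda)\,\phi_{n-i}(\lambda),\qquad t_{n}(i):=(n-1)+\mathbb{E}(C_{i-1})+\mathbb{E}(C_{n-i})-\mathbb{E}(C_{n}),
\end{equation*}
with $\phi_{0}=\phi_{1}=1$. Feeding the closed form for $\mathbb{E}(C_{k})$ into $t_{n}(i)$ and writing $\xi=i/n$ gives $t_{n}(i)=n\bigl(1+2\xi\log_{e}\xi+2(1-\xi)\log_{e}(1-\xi)\bigr)+O(\log_{e}n)$; in particular $t_{n}(i)\le n$ for every $i$, $t_{n}(i)\ge(1-2\log_{e}2)n+O(\log_{e}n)$, and $t_{n}(i)$ is within $o(n)$ of $n$ only when $i$ lies within $O(n/\log_{e}n)$ of one of the ends.

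\textbf{Solving the recurrence to leading order.} Choose $\lambda=\lambda_{n}\asymp(\log_{e}^{(2)}(n))/n$, so that with $x\asymp\epsilon n\log_{e}(n)$ the term $\lambda x$ has the expected order $\epsilon\log_{e}(n)\log_{e}^{(2)}(n)$ of the exponent, and prove by induction on $n$ a bound $\phi_{n}(\lambda)\le\exp\bigl(\psi_{n}(\lambda)\bigr)$ with $\psi_{n}$ explicit and negligible against the main term. In the inductive step, the ``balanced'' pivots $i$, for which $e^{\lambda t_{n}(i)}$ stays bounded, are handled by the ordinary divide–and–conquer estimate, while the ``unbalanced'' pivots, where $e^{\lambda t_{n}(i)}$ can be as large as $e^{\lambda n}\asymp\log_{e}(n)$, contribute little because only $O(n/\log_{e}n)$ of them occur, each carrying weight $1/n$. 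Optimising $\lambda$ in $e^{-\lambda x}\phi_{n}(\lambda)$ then yields $\mathbb{P}(Y_{n}-1>\epsilon)\le n^{-2\epsilon(\log_{e}^{(2)}(n)-\log_{e}(1/\epsilon)+O(\log_{e}^{(3)}(n)))}$; the lower tail is treated symmetrically, using $\lambda<0$ and the lower bound on $t_{n}(i)$ in place of the upper one.

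\textbf{A matching lower bound.} To upgrade the two inequalities to the asymptotic equality stated, one exhibits explicit events of the right probability. For the upper tail, require that the first $\Theta\bigl(\epsilon\log_{e}(n)/\log_{e}^{(2)}(n)\bigr)$ pivots each fall among the extreme $\Theta(n/\log_{e}n)$ order statistics of the subarray they are chosen from: each such choice has probability $\Theta(1/\log_{e}n)$ and costs $\Theta(n)$ extra comparisons at its level, so the whole event has probability $n^{-2\epsilon(\log_{e}^{(2)}(n)+O(\log_{e}^{(3)}(n)))}$ and forces $Q_{n}\gtrsim 2\epsilon n\log_{e}(n)$; a parallel but more delicate construction, forcing near–perfectly balanced splits, handles the lower tail. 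Combining this with the upper estimate gives the displayed asymptotic.

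\textbf{Expected main obstacle.} The crux is the second step: controlling the recurrence for $\phi_{n}(\lambda)$ uniformly over the whole range $1/\log_{e}(n)<\epsilon\le 1$ while keeping every error inside the $O(\log_{e}^{(3)}(n))$ in the exponent. The point is that $Q_{n}$ is genuinely not sub–Gaussian — the weights $e^{\lambda t_{n}(i)}$ for near–extreme pivots are polynomially large in $\log_{e}(n)$ — so the inductive hypothesis must carry exactly the right functional form $\psi_{n}(\lambda)$: too crude a hypothesis loses the constant $2$ or the $\log_{e}(1/\epsilon)$ correction, and a sharper one is not reproduced by the recurrence. Matching the constant in the lower bound is the second delicate point, since one must argue that these ``obvious'' bad events are, up to the stated error, the dominant contribution to the tail.
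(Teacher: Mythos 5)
The paper gives no proof of this theorem: it is quoted from McDiarmid and Hayward \cite{mc} and McDiarmid \cite{diarmid}, so there is no in-paper argument to check your sketch against. The only related material is the subsequent Lemma 2.6.12, which bounds $\mathbb{P}(M^{n}_{k}\geq\alpha n)$; that is a level-by-level control on subarray sizes in the Quicksort tree, which is the kind of ingredient the original McDiarmid--Hayward argument is built from, and it is not the same object as the centred moment generating function recurrence you propose to analyse.

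As a blind reconstruction yours is an honest sketch but not a proof. You name the right objects --- the centred MGF $\phi_{n}(\lambda)$, its distributional recurrence, the toll asymptotics $t_{n}(i)\approx n\bigl(1+2\xi\log_{e}\xi+2(1-\xi)\log_{e}(1-\xi)\bigr)$, and the scaling $\lambda\asymp\log_{e}^{(2)}(n)/n$ --- and you correctly flag the two obstacles (non-sub-Gaussianity of $C_{n}$, and the need for a matching lower-bound construction). But the substance of the theorem is exactly the inductive claim $\phi_{n}(\lambda)\leq\exp\bigl(\psi_{n}(\lambda)\bigr)$ with $\psi_{n}$ of precisely the right functional form, and you state this as a goal rather than establish it; there is no candidate $\psi_{n}$, no verification that the recurrence closes under it, and no handling of the boundary between the ``balanced'' and ``unbalanced'' pivot ranges beyond an order-of-magnitude count. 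Likewise, the lower-bound event is described heuristically: you do not verify that it forces a deviation $\gtrsim 2\epsilon n\log_{e}(n)$ rather than merely making such a deviation plausible, nor that its probability is not swamped by the union of competing bad events. In short, the scaling checks out and the roadmap is reasonable, but none of the hard estimates is carried out --- and those estimates are the theorem.
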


In the recent paper of McDiarmid \cite{diarmid}, Theorem $2.6.11$ is revisited
using concentration arguments. By this result, it can be easily deduced, that Quicksort 
with good pivot choices performs well, with
negligible perturbations from its expected number of comparisons.
As we previously saw, Quicksort can be depicted as an ordered binary tree.
The root node or node $1$ corresponds 
to the input array of length $L_{1}=n$ to be sorted.

A pivot with rank $U_{n}=\{1, 2, \ldots, n\}$ is selected uniformly
at random and the initial array is 
divided into two subarrays, one with elements less than the pivot 
and a second, with elements greater than the pivot. Then, the 
node at the left corresponds to the subarray of keys that are
less than the pivot, with length 
$L_{2}=U_{n}-1$ and the node at the right corresponds to the 
subarray of keys that are greater than the pivot, with length $L_{3}=n-U_{n}$. 

Recursively, Quicksort 
runs on these two subarrays and split them in four subarrays, 
until we get trivial subarrays and the initial array is sorted. 
For $j=1,2 \ldots$ let $L_{j}$ be the length of the array to be sorted 
at $j$ node and $M^{n}_{k}$ be the maximum
cardinality of the $2^{k}$ subarrays, after $k$ recursions of Quicksort. 
It is \cite{mc},
\begin{equation*}
M^{n}_{k} = \mathrm {\max} \{L_{2^{k}+i}: i=0, 1,\ldots, 2^{k}-1 \}.
\end{equation*}

The following Lemma gives an upper
bound for the probability that the maximum length of $2^{k}$ subarrays
$M^{n}_{k}$, will exceed $\alpha$ times the initial set's length
$n$. We easily see that the upper bound is rather small quantity. 
Thus, we deduce that the length of the array is 
rapidly decreasing, as Quicksort runs. 
\begin{Lemma}[McDiarmid and Hayward \cite{mc}, McDiarmid \cite{diarmid}]\ \\
For any $0<\alpha<1$ and any integer $k \geq \log_{e}\left(\dfrac{1}{a} \right)$ it holds 
\begin{equation*}
\mathbb{P}\left(M^{n}_{k} \geq \alpha n \right)\leq \alpha \left(\frac{2e\cdot \log_{e}(1/\alpha)}{k} 
\right)^{k}.
\end{equation*}
\end{Lemma}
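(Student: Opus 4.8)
The plan is to use a union bound over the $2^{k}$ nodes at depth $k$, reducing everything to a tail bound for the size of a single such subarray, and then to show that this size is stochastically dominated by $n$ times a product of $k$ independent uniform$[0,1]$ random variables, to which an exponential (Chernoff) bound applies. For the first step, note that the nodes $2^{k},2^{k}+1,\ldots,2^{k+1}-1$ are exactly the depth-$k$ nodes and carry pairwise disjoint subarrays, so
\begin{equation*}
\mathbb{P}\bigl(M^{n}_{k}\geq \alpha n\bigr)\leq \sum_{i=0}^{2^{k}-1}\mathbb{P}\bigl(L_{2^{k}+i}\geq \alpha n\bigr);
\end{equation*}
it therefore suffices to prove $\mathbb{P}(L\geq \alpha n)\leq \alpha\bigl(e\log_{e}(1/\alpha)/k\bigr)^{k}$, where $L$ is the size of one fixed depth-$k$ subarray.

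Fix such a node and let $n=N_{0}\geq N_{1}\geq\cdots\geq N_{k}=L$ be the sizes along the path from the root to it. When a subarray has size $m$, the uniformly chosen pivot has rank $U$ uniform on $\{1,\ldots,m\}$, and its left and right children have sizes $U-1$ and $m-U$; hence, whichever way the path branches, $N_{j}$ is uniform on $\{0,1,\ldots,N_{j-1}-1\}$ conditionally on $N_{0},\ldots,N_{j-1}$, since the relative order inside every subarray stays uniformly random under Quicksort's splitting. Consequently, for every $x\in[0,1]$,
\begin{equation*}
\mathbb{P}\Bigl(\tfrac{N_{j}}{N_{j-1}}\geq x \,\Big|\, N_{0},\ldots,N_{j-1}\Bigr)\leq 1-x=\mathbb{P}(\xi\geq x),
\end{equation*}
with $\xi$ uniform on $[0,1]$. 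Conditioning on the ratios one at a time and using that $p\mapsto\mathbb{P}(p\,\xi\geq t)$ is non-decreasing, this conditional domination upgrades to domination of the product, so that $L/n=\prod_{j=1}^{k}(N_{j}/N_{j-1})$ is stochastically dominated by $\prod_{j=1}^{k}\xi_{j}$ with $\xi_{1},\ldots,\xi_{k}$ i.i.d.\ uniform on $[0,1]$; in particular $\mathbb{P}(L\geq\alpha n)\leq \mathbb{P}\bigl(\prod_{j=1}^{k}\xi_{j}\geq\alpha\bigr)$.

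Now set $E_{j}=-\log_{e}\xi_{j}$, which are i.i.d.\ exponential with mean $1$, and $a=\log_{e}(1/\alpha)$, so that $\prod_{j}\xi_{j}\geq\alpha$ if and only if $S_{k}:=\sum_{j=1}^{k}E_{j}\leq a$. For any $t>0$, Markov's inequality applied to $e^{-tS_{k}}$ gives $\mathbb{P}(S_{k}\leq a)\leq e^{ta}\bigl(\mathbb{E}(e^{-tE_{1}})\bigr)^{k}=e^{ta}(1+t)^{-k}$; choosing $t=k/a-1$, which is non-negative precisely because $k\geq\log_{e}(1/\alpha)=a$, yields $\mathbb{P}(S_{k}\leq a)\leq e^{k-a}(a/k)^{k}=\alpha\,(ea/k)^{k}$. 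Substituting into the union bound,
\begin{equation*}
\mathbb{P}\bigl(M^{n}_{k}\geq\alpha n\bigr)\leq 2^{k}\cdot\alpha\Bigl(\frac{e\log_{e}(1/\alpha)}{k}\Bigr)^{k}=\alpha\Bigl(\frac{2e\log_{e}(1/\alpha)}{k}\Bigr)^{k},
\end{equation*}
which is the claimed bound.

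The step requiring the most care is the second one: one must verify that, along every root-to-leaf path, the subarray size really does follow a uniform recursion that is independent both of the branching decisions and of the earlier sizes, and then justify promoting conditional stochastic domination of the one-step ratios to stochastic domination of their product. Degenerate sizes ($0$ or $1$, where the recursion halts) cause no difficulty, since the ratio inequality above holds with the convention $0/0=0$. The remainder is the routine Chernoff computation, in which the hypothesis $k\geq\log_{e}(1/\alpha)$ enters exactly as the requirement that the optimal exponential tilt $t=k/a-1$ be non-negative.
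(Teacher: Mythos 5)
Your proof is correct. The paper itself contains no proof of this Lemma; it is stated as a cited result from McDiarmid and Hayward \cite{mc} and McDiarmid \cite{diarmid}, so there is no in-paper argument to compare against. Your route --- a union bound over the $2^{k}$ depth-$k$ nodes, the observation that $N_{j}$ is uniform on $\{0,\ldots,N_{j-1}-1\}$ given $N_{j-1}$ and hence $N_{j}/N_{j-1}$ is conditionally dominated by a uniform$[0,1]$ variable, the inductive promotion of this one-step domination to domination of $L/n$ by $\prod_{j=1}^{k}\xi_{j}$, and finally the Chernoff bound $\mathbb{P}(S_{k}\leq a)\leq e^{ta}(1+t)^{-k}$ with the optimal tilt $t=k/a-1\geq 0$ --- is exactly the standard argument used in the cited sources, and all the bookkeeping checks out: $\lceil xm\rceil\geq xm$ gives the one-step domination (including the degenerate cases $m\in\{0,1\}$), the hypothesis $k\geq\log_{e}(1/\alpha)$ is used precisely where you say it is, and $2^{k}\cdot\alpha(ea/k)^{k}=\alpha(2ea/k)^{k}$ recovers the stated constant.
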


So far, the analysis explicitly assumed the presence of distinct numbers. 
However, in many sorting problems, one can come across with duplicates.
As we will see next, we have to consider the presence of equal numbers and how this 
affects the algorithm's performance.

\section{Quicksorting arrays with repeated elements}

In this section, we consider the
presence of equal keys. In some cases, there may be multiple
occurrences of some of the keys -- so that there is a multiset of
keys to be sorted. If a key
$a_{i}$ appears $s_{i}$ times, we call $s_{i}$ the multiplicity of
$a_{i}$.

Thus, we have a multiset 
of the array $\{s_{1}\cdot a_{1}, s_{2}\cdot a_{2}, \ldots, s_{n}\cdot a_{n} \}$, 
with $s_{1}+s_{2} + \ldots + s_{n} = N$ and $n$ 
is the number of distinct keys. Without loss of generality, we assume that we
have to sort a random permutation of the array $\{s_{1}\cdot 1, s_{2}\cdot 2, \ldots, s_{n}\cdot n\}$.
Obviously, when $s_{1} =s_{2}= \ldots = s_{n} =1$, then the array contains 
$n=N$ distinct keys. 

Consider an array consisting of keys with ``large'' multiplicities. 
The usual Quicksort algorithm is quite likely to perform badly, since keys equal
to the pivot are not being exchanged. 
Recall that Hoare's partitioning routine \cite{hoare} described
in the previous Chapter, utilises two pointers 
that scan for keys greater than and less
than the pivot, so at the end of partition, 
keys equal to pivot may be on either or both subarrays,
leading to unbalanced partitioning which we saw 
is usually undesirable.
The algorithm can be further
tweaked for efficient sorting of duplicates, 
using a ternary partition scheme \cite{ben}. 
Keys less than pivot are on left, keys equal to pivot on middle 
and on right, keys greater than the pivot. 

Hoare's partitioning scheme can be easily modified, in order 
for the pointers to stop on 
equal keys with the pivot. In other words, the lower pointer searches
for a key greater than or equal to the pivot and the 
upper one for a key smaller than or equal to the pivot.
Sedgewick \cite{se} considers a partitioning routine, where
only one of the pointers stops on keys either greater or smaller than
the pivot.

The recurrence for the expected number of key comparisons 
$\mathbb {E}\bigl(C(s_{1}, s_{2}, \ldots, s_{n})\bigr )$ is
\begin{eqnarray*}
\mathbb{E} \bigl (C(s_{1}, s_{2}, \ldots, s_{n}) \bigr) = N-1 + \dfrac{1}{N} 
\sum_{i=1}^{n}s_{i} \biggl ( \mathbb{E} \bigl (C(s_{1}, \ldots, s_{i-1}) \bigr ) +
\mathbb{E} \bigl (C(s_{i+1}, \ldots, s_{n})\bigr) \biggr ).
\end{eqnarray*}
The analysis has been done in \cite{se}. The solution of the recurrence
is 
\begin{equation*}
\mathbb {E}\bigl(C(s_{1}, s_{2}, \ldots, s_{n})\bigr )= 
2\left (1+\frac{1}{n} \right )NH_{n}-3N-n. \, \, \tag{2.10}
\end{equation*}

Note that when $n=N$, i.e. the keys to be sorted are distinct, Eq. (2.10)
yields the expected number of comparisons, 
when the array is partitioned using $n-1$ comparisons. As we saw in the previous 
Chapter, a different partitioning scheme proposed and analysed in \cite{knuth 3},
\cite{sedgewick}, utilises $N+1$ comparisons for partitioning an array of $N$ keys. In this
case, an upper bound for large $n$ to the expected number of 
comparisons is \cite{se}, 
\begin{equation*}
\mathbb {E}\bigl(C(s_{1}, s_{2}, \ldots, s_{n})\bigr )=
2N(H_{N}+1)-2 + O \left(\frac{N^{2}}{n} \right).
\end{equation*}

\chapter{Sorting by sampling}

The preceding analysis has shown that Quicksort is prone to poor performance,
when the chosen pivots happen to be (close to) the smallest or greatest keys
in the array to be sorted. The partitioning yields trivial subarrays,
leading to quadratic running time taken by the algorithm, and increasing the chances of
`crashing', i.e. an application of Quicksort may terminate 
unsuccessfully
through running out of memory. On the other hand, good choices of pivot yield
a much more efficient algorithm. The uniform model suggests that in fact
any key has equal probability to be selected as pivot. In this Chapter,
we discuss and analyse the general idea of how one can increase the
probability that the selected pivots will produce (reasonably) balanced
partitions, by trying to ensure that their ranks are `near' the middle of the array.

A naive idea towards this, would be finding the median of the keys and use
this as pivot. However, it is obvious that the finding of median imposes
additional costs to the algorithm. Therefore, despite the always good choices, 
this method might not be
better than choosing the pivot randomly. Instead of finding the median of the
array to be sorted, it might be more efficient to randomly pick a sample from the array, 
find its median and use this as pivot. In what it follows, we will
analyse this idea and its variants.

\section{Median of (2k+1) partitioning}

Singleton \cite{Singleton} suggested to randomly select three keys from the array to be
sorted and use their median as pivot, leading to a better estimate of the
median of the array and reducing further the chances of
worst case occurrence. 

This modification can be generalised as to choosing a 
sample of $(2k+1)$ keys at every recursive stage, computing their median and
using that median as pivot, partitioning $n>2k+1$ 
keys. Arrays that contain at 
most $(2k+1)$ keys are sorted by a simpler algorithm,
such as insertion sort. The cost of sorting these small arrays is linear with 
respect to $n$.

Letting $C_{n\{2k+1\}}$ denote the number of
comparisons required to sort $n$ keys when the pivot is the median of
a random sample of $(2k+1)$ elements, uniformly selected from the relevant array, 
the recurrence for the average
number of comparisons is given by (see \cite{knuth 3})
\begin{eqnarray*}
\mathbb{E} (C_{{n}\{2k+1\}})= n+1 + \dfrac{2}{\dbinom{n}{2k+1}}\sum_{j=1}^{n}
\dbinom{j-1}{k}\dbinom{n-j}{k}\mathbb{E}(C_{{j-1}\{2k+1\}}).
\end{eqnarray*}
Multiplying both sides by $\dbinom{n}{2k+1}$, 
\begin{eqnarray*}
\dbinom{n}{2k+1}\mathbb{E} (C_{{n}\{2k+1\}})= \dbinom{n}{2k+1}(n+1) + 2 \sum_{j=1}^{n}
\dbinom{j-1}{k}\dbinom{n-j}{k}\mathbb{E}(C_{{j-1}\{2k+1\}}).
\end{eqnarray*}
Multiplying by $z^{n}$ and summing over $n$, in order to obtain the generating function 
for the expected number of comparisons, 
$f(z)=\sum_{n=0}^{\infty}\mathbb{E}(C_{n\{2k+1\}})z^{n}$
\begin{align*}
& \sum_{n=0}^{\infty}\dbinom{n}{2k+1}\mathbb{E} (C_{{n}\{2k+1\}})z^{n} = \\
\quad {} & \sum_{n=0}^{\infty}\dbinom{n}{2k+1}(n+1)z^{n} 
+ 2\sum_{n=0}^{\infty}\sum_{j=1}^{n}\dbinom{j-1}{k}\dbinom{n-j}{k}\mathbb{E}(C_{{j-1}\{2k+1\}})z^{n}. \, \, \tag{3.1}
\end{align*}
It holds 
\begin{align*}
\sum_{n=0}^{\infty}\dbinom{n}{2k+1}\mathbb{E} (C_{{n}\{2k+1\}})z^{n} & = \dfrac{1}{(2k+1)!}
\sum_{n=0}^{\infty}n(n-1) \ldots (n-2k)\mathbb{E} (C_{{n}\{2k+1\}})z^{n} \\
& = \dfrac{z^{2k+1}f^{(2k+1)}(z)}{(2k+1)!},
\end{align*}
where $f^{(2k+1)}(z)$ denotes the $(2k+1)$-th order derivative of $f(z)$. For the 
first sum in the right-hand side of Eq. (3.1)
\begin{align*}
\sum_{n=0}^{\infty}\dbinom{n}{2k+1}(n+1)z^{n} & = \dfrac{z^{2k+1}}{(2k+1)!} 
\biggl (\sum_{n=0}^{\infty} z^{n+1} \biggr )^{(2k+2)} \\
& = \dfrac{z^{2k+1}}{(2k+1)!} \biggl (\dfrac{z}{1-z} \biggr )^{(2k+2)}.
\end{align*}
The $(2k+2)$-th order derivative of $\bigl (z/(1-z)\bigr )$ can be easily 
seen by induction that is equal to
\begin{equation*}
\dfrac{(2k+2)!}{(1-z)^{2k+3}}.
\end{equation*}

Expanding out the double sum of Eq. (3.1),
\begin{align*}
\sum_{n=0}^{\infty}\sum_{j=1}^{n}\dbinom{j-1}{k}\dbinom{n-j}{k}\mathbb{E}(C_{{j-1}\{2k+1\}})z^{n} = 
\binom{0}{k} \binom{0}{k}\mathbb{E}(C_{{0}\{2k+1\}})z  + \\
& \kern-27em \Biggl (\binom{0}{k} \binom{1}{k}\mathbb{E}(C_{{0}\{2k+1\}}) + \binom{1}{k} 
\dbinom{0}{k}\mathbb{E}(C_{{1}\{2k+1\}}) \Biggr )z^{2} + \\
& \kern-27em  \Biggl (\binom{0}{k} \binom{2}{k}\mathbb{E}(C_{{0}\{2k+1\}}) + \binom{1}{k} 
\dbinom{1}{k}\mathbb{E}(C_{{1}\{2k+1\}}) + \binom{2}{k} 
\dbinom{0}{k}\mathbb{E}(C_{{2}\{2k+1\}})\Biggr )z^{3} + \ldots \\
& \hspace{-26 em} = \Biggl (\binom{0}{k}\mathbb{E} (C_{{0}\{2k+1\}})+ \binom{1}{k}\mathbb{E} 
(C_{{1}\{2k+1\}}) z + \ldots \Biggr ) \Biggl (\binom{0}{k}z + \binom{1}{k}z^{2}+ \ldots \Biggr ) \\
& \hspace{-26 em} = \Biggl ( \sum_{n=0}^{\infty} \dbinom{n}{k}\mathbb{E} (C_{{n}\{2k+1\}})  z^{n}\Biggr) \Biggl 
(\sum_{n=0}^{\infty}\dbinom{n}{k} z^{n+1} \Biggr ) \\
&\hspace{-26 em} = \dfrac{z^{k}f^{(k)}(z)}{k!}\cdot  \dfrac{z^{k+1}}{k!} 
\biggl ( \sum_{n=0}^{\infty}z^{n} \biggr )^{(k)} \\
& \hspace{-26 em} = \dfrac{z^{2k+1}f^{(k)}(z)}{k!(1-z)^{k+1}}.
\end{align*}
The recurrence is transformed to the following differential equation
\begin{eqnarray*}
\dfrac{z^{2k+1}f^{(2k+1)}(z)}{(2k+1)!} = \dfrac{(2k+2)z^{2k+1}}{(1-z)^{2k+3}} 
+ \dfrac{2z^{2k+1}f^{(k)}(z)}{k!(1-z)^{k+1}}.
\end{eqnarray*}
Multiplying both sides by $\left (\dfrac{1-z}{z} \right )^{2k+1}$, 
\begin{eqnarray*}
\dfrac{(1-z)^{2k+1}f^{(2k+1)}(z)}{(2k+1)!} = \dfrac{2(k+1)}{(1-z)^{2}} + \dfrac{2(1-z)^{k}f^{(k)}(z)}{k!},
\end{eqnarray*}
which is a Cauchy--Euler differential equation. This type of differential
equations arises naturally to the analysis of
searching--sorting algorithms and urn models \cite{hc}. Substituting $x=1-z$, and
putting $h(x)=f(1-x)$, we get 
\begin{eqnarray*}
\dfrac{(-1)^{2k+1}x^{2k+1}h^{(2k+1)}(x)}{(2k+1)!} = \dfrac{2(k+1)}{x^{2}}
+ \dfrac{2(-1)^{k}x^{k}h^{(k)}(x)}{k!}.
\end{eqnarray*}

We use the differential operator $\Theta$ for the solution of the differential equation.
It is defined by
\begin{eqnarray*}
\Theta \bigl (h(x) \bigr):= xh^{\prime}(x)
\end{eqnarray*}
and by induction 
\begin{eqnarray*}
\dbinom{\Theta}{k}h(x) = \dfrac{x^{k}h^{(k)}(x)}{k!}.
\end{eqnarray*}
Applying the operator, our equation becomes
\begin{eqnarray*}
\mathcal{P}_{2k+1}(\Theta)h(x) = \dfrac{2(k+1)}{x^{2}},
\end{eqnarray*}
where the indicial polynomial is equal to
\begin{align*}
\mathcal{P}_{2k+1}(\Theta) & = (-1)^{2k+1} \dbinom{\Theta}{2k+1}
 -2(-1)^{k} \dbinom{\Theta}{k}.
\end{align*}

We proceed to identify the nature of the roots of the polynomial.
A Lemma follows:
\begin{Lemma}
The indicial polynomial $\mathcal{P}_{2k+1}(\Theta)$ has $2k+1$ simple roots,
with real parts greater than or equal to $-2$. The real roots are $0, 1, \ldots , k-1$;
$-2$; $3k+2$, if $k$ is odd and the $2 \left \lfloor \frac{k}{2} \right \rfloor$ 
complex roots $\rho_{1}, \ldots , 
\rho_{\left \lfloor \frac{k}{2} \right \rfloor }$ with their conjugates 
$\overline{\rho}_{1}, \ldots , \overline{\rho}_{\left \lfloor \frac{k}{2} \right \rfloor}$.
\end{Lemma}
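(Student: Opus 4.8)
The plan is to strip off the $k$ ``obvious'' integer roots by an exact factorisation and then pin down the remaining $k+1$ roots by elementary real analysis plus one modulus estimate in the complex plane. Starting from the identity $\binom{\Theta}{2k+1}=\binom{\Theta}{k}\cdot\tfrac{k!}{(2k+1)!}(\Theta-k)(\Theta-k-1)\cdots(\Theta-2k)$ and $(-1)^{2k+1}=-1$, one verifies directly that
\[
\mathcal{P}_{2k+1}(\Theta)=-\binom{\Theta}{k}\,Q(\Theta),
\qquad
Q(\Theta):=\frac{k!}{(2k+1)!}\prod_{i=k}^{2k}(\Theta-i)+2(-1)^{k},
\]
with $\deg Q=k+1$. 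The factor $\binom{\Theta}{k}$ contributes the $k$ simple, pairwise distinct real roots $0,1,\dots,k-1$, so it remains to analyse the $k+1$ roots of $Q$.

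First I would substitute $v=\Theta-k$, turning the equation $Q(\Theta)=0$ into $g(v)=c$, where $g(v):=v(v-1)\cdots(v-k)$ and $c:=2(-1)^{k+1}(2k+1)!/k!$. Two direct evaluations give $g(-(k+2))=(-1)^{k+1}\prod_{j=0}^{k}(k+2+j)=(-1)^{k+1}\tfrac{(2k+2)!}{(k+1)!}=c$ and $g(2k+2)=\tfrac{(2k+2)!}{(k+1)!}=2(2k+1)!/k!$, so $v=-(k+2)$ (that is, $\Theta=-2$) is always a root, whereas $v=2k+2$ (that is, $\Theta=3k+2$) is a root exactly when $k$ is odd. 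Then I would prove these are the only \emph{real} solutions of $g(v)=c$: on $(-\infty,0)$ the derivative $g'(v)=\sum_{m=0}^{k}\prod_{j\ne m}(v-j)$ has constant sign $(-1)^{k}$, so $g$ is strictly monotone there, with $g(0)=0$ and $g(v)\to(-1)^{k+1}\infty$ as $v\to-\infty$; since $\operatorname{sgn}c=(-1)^{k+1}$, $g$ hits $c$ exactly once on $(-\infty,0)$. On $(k,\infty)$, $g$ is positive and strictly increasing from $0$ to $\infty$, so it attains any positive value once. On $[0,k]$ one has $|g(v)|\le k^{k+1}<2(k+1)^{k+1}\le|c|$, so there is nothing there. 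Hence $Q$ has one real root when $k$ is even and two when $k$ is odd; as $Q$ has real coefficients and degree $k+1$, the remaining roots form $\lfloor k/2\rfloor$ conjugate pairs of genuinely non-real numbers, i.e.\ $2\lfloor k/2\rfloor$ complex roots, and the total count is $k+(k+1)=2k+1$.

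Simplicity comes almost for free: by Rolle's theorem the degree-$k$ polynomial $g'$ has exactly $k$ real zeros, one in each interval $(j,j+1)$ for $0\le j\le k-1$, so every critical point of $g$ lies in $(0,k)$, where $|g|<|c|$; therefore no solution of $g(v)=c$ is a zero of $g'$, so $Q'$ does not vanish at any root of $Q$ and every root of $\mathcal{P}_{2k+1}$ is simple. Moreover a root of $Q$ gives $\Theta=v+k$ with $v\notin[0,k]$, hence $\Theta\notin\{0,\dots,k-1\}$, so all $2k+1$ roots are distinct.

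Finally, for the half-plane claim, let $v$ be any root of $Q$ with $\operatorname{Re}(v)\le-(k+2)$. For each $j\in\{0,\dots,k\}$ then $|v-j|\ge|\operatorname{Re}(v)-j|\ge k+2+j$, so
\[
|c|=|g(v)|=\prod_{j=0}^{k}|v-j|\ \ge\ \prod_{j=0}^{k}(k+2+j)=\frac{(2k+2)!}{(k+1)!}=|c|;
\]
equality throughout forces $\operatorname{Im}(v)=0$ and $\operatorname{Re}(v)=-(k+2)$, i.e.\ $v=-(k+2)$. So every root apart from $\Theta=-2$ has real part strictly greater than $-2$, and together with the real roots $0,\dots,k-1$ (and $3k+2$ when $k$ is odd) this yields real part $\ge-2$ for all $2k+1$ roots. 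I expect the main obstacle to be the real-root count in the second paragraph: one must handle the parities of $k$ separately and check carefully that the monotonicity intervals, together with the crude bound $\max_{[0,k]}|g|<|c|$, really do rule out every other real solution; the algebraic identities and the final modulus estimate are then routine.
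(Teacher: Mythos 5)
Your proof is correct, and it delivers somewhat more than the paper's own argument does. The underlying factorisation is the same -- the paper implicitly strips off $\binom{\Theta}{k}$ to pass from its Eq.\ (3.2) to Eq.\ (3.3), and its modulus-telescoping contradiction for $\mathfrak{Re}(\alpha)<-2$ is exactly your final half-plane estimate after the substitution $v=\Theta-k$. Where the routes genuinely diverge is in the middle. The paper establishes simplicity by assuming a repeated root, deducing the logarithmic-derivative condition $\sum_{j=k}^{2k}(\alpha-j)^{-1}=0$, concluding $\alpha$ is real and lies in $(k,2k)$, and then contradicting Eq.\ (3.2) by a modulus comparison; you instead invoke Rolle to place every critical point of $g(v)=v(v-1)\cdots(v-k)$ inside $(0,k)$, show $|g|<|c|$ there, and conclude at once that no solution of $g(v)=c$ can be a critical point. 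Your route also furnishes the exact real-root count (via monotonicity of $g$ on $(-\infty,0)$ and on $(k,\infty)$, plus the range bound on $[0,k]$), which is what actually pins down that the number of non-real roots is $2\lfloor k/2\rfloor$; the paper's proof leaves the appearance of $3k+2$ for $k$ odd and the conjugate-pair count essentially asserted rather than derived. So your version is more self-contained on precisely the parts of the Lemma's statement that the paper glosses over.

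One small slip worth fixing: you write that a root $v$ of $Q$ satisfies $v\notin[0,k]$ and conclude ``hence $\Theta\notin\{0,\dots,k-1\}$,'' but $\Theta\in\{0,\dots,k-1\}$ corresponds to $v\in\{-k,\dots,-1\}$, not to $v\in[0,k]$, so the stated implication does not follow. The conclusion is still true and follows immediately from the monotonicity argument you already gave on $(-\infty,0)$: there $g(v)=c$ has the unique solution $v=-(k+2)$, which lies outside $\{-k,\dots,-1\}$, and the remaining roots of $Q$ are non-real, hence not integers. Rephrasing that one sentence closes the gap.
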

\begin{proof}
Let $\alpha=x + iy$ being a root of the polynomial. Then
\begin{align*}
\dbinom{\alpha}{2k+1} = -2(-1)^{k}\dbinom{\alpha}{k}. \, \, \tag{3.2}
\end{align*}
From Eq. (3.2), we deduce that $k$ real roots of the polynomial $\mathcal{P}_{2k+1}(\Theta)$ 
are $0, 1, \ldots, k-1$. 
For $\alpha\neq 0, 1, \ldots, k-1$, we have that
\begin{align*}
\dfrac{(\alpha-k)}{(k+1)}\cdot \dfrac{(\alpha-k-1)}{(k+2)}\cdot \ldots \cdot \dfrac{(\alpha-2k)}{(2k+1)} 
= -2(-1)^{k}. \, \, \tag{3.3}
\end{align*}

Suppose that $\mathfrak{Re}(\alpha) < -2.$ Then
\begin{align*}
\left \vert \frac{\alpha-k}{k+1} \right \vert & = \frac {\sqrt{(x-k)^{2}+y^{2}}}{k+1} 
>\frac{\sqrt{\bigl (-(k+2) \bigr )^{2}+y^{2}}}{k+1} \geq \dfrac{k+2}{k+1}. \\
\left \vert \frac{\alpha-k-1}{k+2} \right \vert & = \frac {\sqrt{(x-k-1)^{2}+y^{2}}}{k+2}
>\frac{\sqrt{\bigl (-(k+3) \bigr )^{2}+y^{2}}}{k+2} \geq \dfrac{k+3}{k+2}.  \\
& \setbox0\hbox{=}\mathrel{\makebox[\wd0]{\vdots}}  \\
\left \vert \frac{\alpha-2k}{2k+1} \right \vert & = \frac {\sqrt{(x-2k)^{2}+y^{2}}}{2k+1}
> \frac{\sqrt{\bigl (-2(k+1) \bigr )^{2}+y^{2}}}{2k+1} \geq \dfrac{2k+2}{2k+1},
\end{align*}
so the product of the moduli in Eq. (3.3) is greater than or equal to the 
telescoping product $(2k+2)/(k+1)=2$, arriving in contradiction. Thus,
for any root $\alpha$ of the polynomial, it is proved that
$\mathfrak{Re}(\alpha) > -2$. Moreover, this argument shows that $-2$ is the unique 
root with real part equal to $-2$.
To see all roots are simple, assume that there does exist a repeated root
$\alpha$. Differentiating the polynomial,
\begin{eqnarray*}
\mathcal{P}^{\prime}_{2k+1}(\Theta) =  -\dbinom{\Theta}{2k+1}\sum_{j=0}^{2k} 
\dfrac{1}{\Theta - j} -2(-1)^{k} \dbinom{\Theta}{k}\sum_{j=0}^{k-1}\dfrac{1}{\Theta-j}
\end{eqnarray*}
and 
\begin{align*}
\dbinom{\alpha}{2k+1}\sum_{j=0}^{2k} \dfrac{1}{\alpha - j} = -2(-1)^{k} 
\dbinom{\alpha}{k}\sum_{j=0}^{k-1}\dfrac{1}{\alpha-j}. \, \, \tag{3.4}
\end{align*}
Eq. (3.2) and (3.4) imply that for a repeated root must hold
\begin{eqnarray*}
\sum_{j=0}^{2k} \dfrac{1}{\alpha - j}=\sum_{j=0}^{k-1}\dfrac{1}{\alpha-j}
\end{eqnarray*}
or 
\begin{equation*}
\sum_{j=k}^{2k} \dfrac{1}{\alpha - j}=0. \, \, \tag{3.5}
\end{equation*}
Consequently, Eq. (3.5) implies that $\mathfrak{Im} (\alpha) = 0$ and 
$\alpha \in (k, k+1) \cup (k+1, k+2) \cup \ldots \cup (2k-1, 2k)$.
Considering Eq. (3.2), the left-hand side has modulus less than $1$, since
$k<\alpha < 2k+1$. However, the right-hand side of Eq. (3.2) has modulus
greater than $2$, leading to contradiction. 
\end{proof}

Since the roots are simple, we can factor our polynomial in the form
\begin{eqnarray*}
(\Theta - r_{1})(\Theta - r_{2}) \ldots (\Theta - r_{2k})(\Theta + 2)h(x) = \frac{2k+2}{x^{2}}.
\end{eqnarray*}
The solution to the differential equation $(\Theta - a) h(x) = x^{b}$ is \cite{sedgewick},
\begin{eqnarray*}
h(x) = 
\begin{cases}
\dfrac{x^{b}}{b-a} + cx^{a}, \mbox{~if~} a \neq b \\
\\
x^{b}\log_{e} (x) + cx^{a}, \mbox{~if~} a = b.
\end{cases}
\end{eqnarray*}
Therefore, applying $(2k+1)$ times the solution, we obtain
\begin{eqnarray*}
h(x) = \dfrac{2k+2}{(-2 - r_{1})(-2 - r_{2}) \ldots (-2 - r_{2k})} 
\dfrac{\log_{e}(x)}{x^{2}} + \sum_{j=1}^{2k+1}c_{j}x^{r_{j}},
\end{eqnarray*}
where $c_{j}$ are the constants of integration. Note that
\begin{eqnarray*}
\mathcal {P}_{2k+1}(\Theta) = (\Theta +2)\mathcal {S}_{2k}(\Theta),
\end{eqnarray*}
thus, the expression in the denominator $\mathcal {S}_{2k}(-2)$ is
\begin{align*}
\mathcal {S}_{2k}(-2) & = \mathcal{P}^{\prime}_{2k+1}(-2) \\
& = -\dbinom{-2}{2k+1}\sum_{j=0}^{2k} \dfrac{1}{-2 - j} -2(-1)^{k} 
\dbinom{-2}{k}\sum_{j=0}^{k-1}\dfrac{1}{-2-j} \\
& = -(2k+2)(H_{2k+2}-H_{k+1}).
\end{align*}
Reverting to the previous notation,
\begin{eqnarray*}
f(z) = -\dfrac{1}{H_{2k+2} - H_{k+1}} \dfrac{\log_{e}(1-z)}{(1-z)^{2}} + 
\sum_{j=1}^{2k+1}c_{j}(1-z)^{r_{j}}.
\end{eqnarray*}

Using the identity \cite{con},
\begin{eqnarray*}
\frac{1}{(1-z)^{m+1}}\log_{e}\biggl(\frac{1}{1-z}\biggr)=\sum_{n=0}^{\infty}
\Bigl(H_{n+m}-H_{m}\Bigr)\dbinom{n+m}{m}z^{n}
\end{eqnarray*}
and the binomial Theorem, the solution of the differential equation can be written 
in terms of series,
\begin{eqnarray*}
f(z) = \dfrac{1}{H_{2k+2} - H_{k+1}} \sum_{n=0}^{\infty}\bigl ( (n+1)H_{n} - n \bigr ) z^{n} 
+ \sum_{n=0}^{\infty} \sum_{j=1}^{2k+1}c_{j}(-1)^{n}\dbinom{r_{j}}{n}z^{n}.
\end{eqnarray*}

Extracting the coefficients, the expected number of comparisons of `median of $(2k+1)$' Quicksort is
\begin{align*}
\mathbb{E} (C_{{n}\{2k+1\}}) = \dfrac{1}{H_{2k+2} - H_{k+1}}\bigl ( (n+1)H_{n} - n \bigr ) 
 + \sum_{j=1}^{2k+1} (-1)^{n} \mathfrak{Re} \left ( c_{j}\dbinom{r_{j}}{n} \right).
\end{align*}
The real roots of the polynomial $0, 1, \ldots, (k-1)$ do not contribute to the expected number of 
comparisons, since $n > 2k+1$ and when $k$ is odd, the root $(3k+2)$ adds a negligible constant 
contribution. Further, note that the root $r_{2k+1}=-2$, contributes $c_{2k+1}(n+1)$, with
$c_{2k+1} \in \mathbf R$. Therefore
\begin{align*}
\mathbb{E} (C_{{n}\{2k+1\}})& = \dfrac{1}{H_{2k+2} - H_{k+1}} \bigl ( (n+1)H_{n} - n \bigr ) + 
c_{2k+1}(n+1) \\
& \qquad {} + 2\sum_{j=1}^{\left \lfloor \frac{k}{2} \right \rfloor}(-1)^{n} 
\mathfrak{Re} \left ( c_{j}\dbinom{\rho_{j}}{n} \right) + O(1).
\end{align*}

The asymptotics of the real parts in the last sum is given in \cite{Hen}. 
It is proved that
\begin{eqnarray*}
2(-1)^{n}\mathfrak{Re} \biggl (c_{j} \dbinom{\rho_{j}}{n} \biggr ) = O(n^{- ( \mathfrak{Re}(\rho_{j}) +1 )})
\end{eqnarray*}
and asymptotically the expected number of key comparisons is
\begin{align*}
\mathbb{E} (C_{{n}\{2k+1\}})& = \dfrac{1}{H_{2k+2} - H_{k+1}}n\log_{e}(n) + 
\biggl (c_{2k+1} + \dfrac{1}{H_{2k+2} - H_{k+1}} \bigl(\gamma-1 \bigr) \biggr )n \\
& \qquad {} + o(n),
\end{align*}
since all the other roots have real parts greater than $-2$. 
The solution contains the case of ordinary Quicksort, where the pivot is randomly selected 
and of `median of $3$' Quicksort. The coefficient of the leading term in the latter case is 
\begin{equation*}
\dfrac{1}{H_{4} - H_{2}} = \dfrac{12}{7}, 
\end{equation*}
thus the expected number of key comparisons is
\begin{eqnarray*}
\mathbb{E} (C_{{n}\{3\}}) = \dfrac{12}{7}(n+1)H_{n} + O(n).
\end{eqnarray*}

van Emden \cite{emd} obtained the asymptotic cost of the expected 
number of comparisons, using entropy arguments. 
The leading term is equal to 
\begin{equation*}
an\log_{2}(n),
\end{equation*}
where
\begin{equation*}
a=\dfrac{1}{\mathbb {E}(H)} = -\dfrac{1}{\displaystyle 2 \int_{0}^{1}xg(x)\log_{2}(x)\, \mathrm{d} x}.
\end{equation*}
Here $g(x)$ denotes the probability density of the median of a random sample of 
$(2k+1)$ elements and $\mathbb {E}(H)$ is the expected information yielded from a comparison. 
The nice and simple asymptotic form for the mean number of comparisons is
\begin{eqnarray*}
\mathbb{E}(C_{{n}\{2k+1\}} )\sim \dfrac{\log_{e}(2)}{H_{2k+2}-H_{k+1}}n \log_{2}(n).
\end{eqnarray*}
Note that this result, yields the expected number of comparisons of standard Quicksort, 
for $k=0$. In this case,
\begin{equation*}
a = 2\log_{e}2,
\end{equation*}
thus
\begin{equation*}
\mathbb{E}(C_{n} )\sim 1.386 n \log_{2}(n).
\end{equation*}

The notion of entropy is of great importance to the analysis 
presented in this thesis. 
Entropy is a measure of uncertainty regarding 
the events of a random variable.
In other words, a higher uncertainty about the outcome of a 
random variable pertains to increased entropy.
In the trivial case, where the probability of occurrence of 
an event is $1$, the entropy is equal to $0$,
as there is no uncertainty.
A formal definition follows \cite{sha}.
\begin{Definition}
The Shannon's entropy $H$ of a discrete random variable $X$ taking
the values $x_{1}, x_{2}, \ldots, x_{n}$ with probabilities 
$p(x_{1}), p(x_{2}), \ldots, p(x_{n})$ is defined by, 
\begin{eqnarray*}
H(X) = -\sum_{i=1}^{n}p(x_{i}) \log_{b}\bigl (p(x_{i})\bigr ).
\end{eqnarray*}
\end{Definition}

The base $b$ of the 
logarithm will be normally equal to two; in this case we measure bits of entropy. 
We should mention that the notation $H(X)$ does not merely denote a function of $X$; 
entropy is a function of the probability distribution.
Generally, in sorting algorithms that utilise comparisons for this task, 
entropy quantifies the amount of information gained from the sorting. 
Consider an unsorted array, with all the $n!$ permutations equally likely. 
A comparison gives
$1$ bit of information, thus at least $\log_{2}(n!) = \Omega \bigl (n\log_{2}(n) \bigr )$ 
comparisons
are needed for a complete sort -- see in \cite{bell}. 
This quantity is called information--theoretic lower bound.
The range of entropy is given in the following Lemma.
\begin{Lemma}
\begin{eqnarray*}
0 \leq H(X) \leq \log_{b}(n).
\end{eqnarray*}
\end{Lemma}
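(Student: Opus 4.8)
The plan is to establish the two inequalities $0 \le H(X)$ and $H(X) \le \log_b(n)$ separately, the former being essentially immediate and the latter requiring a standard convexity argument.

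For the lower bound, I would observe that each summand $-p(x_i)\log_b\bigl(p(x_i)\bigr)$ is nonnegative: since $0 \le p(x_i) \le 1$, we have $\log_b\bigl(p(x_i)\bigr) \le 0$, so $-p(x_i)\log_b\bigl(p(x_i)\bigr) \ge 0$ (with the convention that the term is $0$ when $p(x_i)=0$, justified by $\lim_{t \to 0^+} t\log_b(t)=0$). Summing nonnegative terms gives $H(X) \ge 0$, with equality precisely when some $p(x_i)=1$ and the rest vanish, matching the remark in the text that a certain outcome yields zero entropy.

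For the upper bound, the cleanest route is Jensen's inequality applied to the concave function $\log_b$. I would write
\begin{align*}
\log_b(n) - H(X) &= \sum_{i=1}^{n} p(x_i)\log_b(n) + \sum_{i=1}^{n} p(x_i)\log_b\bigl(p(x_i)\bigr) \\
&= \sum_{i=1}^{n} p(x_i)\log_b\bigl(n\, p(x_i)\bigr) = -\sum_{i=1}^{n} p(x_i)\log_b\biggl(\frac{1}{n\,p(x_i)}\biggr),
\end{align*}
using $\sum_i p(x_i)=1$, and then apply Jensen's inequality to the concave function $\log_b$ with weights $p(x_i)$ and arguments $1/\bigl(n\,p(x_i)\bigr)$, obtaining
\begin{equation*}
\sum_{i=1}^{n} p(x_i)\log_b\biggl(\frac{1}{n\,p(x_i)}\biggr) \le \log_b\biggl(\sum_{i=1}^{n} p(x_i)\cdot\frac{1}{n\,p(x_i)}\biggr) = \log_b\biggl(\frac{n}{n}\biggr) = 0,
\end{equation*}
so $\log_b(n) - H(X) \ge 0$. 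An alternative is to use the nonnegativity of the Kullback--Leibler divergence between the distribution $\bigl(p(x_i)\bigr)$ and the uniform distribution on $n$ points, but Jensen is self-contained given only the concavity of the logarithm.

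The only delicate point — and the main (mild) obstacle — is the treatment of summands where $p(x_i)=0$: one must fix the convention $0\log_b 0 = 0$ and exclude such indices before invoking Jensen's inequality (since the argument $1/(n\,p(x_i))$ is undefined there). I would dispose of this at the outset by restricting attention to the support $\{i : p(x_i) > 0\}$, noting that dropping zero-probability outcomes changes neither $H(X)$ nor the constraint $\sum p(x_i) = 1$, and that passing from the true number of values to the support size only decreases $\log_b(n)$, so the bound $H(X) \le \log_b(n)$ for the full index set follows a fortiori. Equality in the upper bound holds exactly when all surviving $p(x_i)$ are equal, i.e. the uniform distribution, consistent with the intuition that maximal uncertainty corresponds to the uniform case.
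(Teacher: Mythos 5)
Your proof is correct and uses essentially the same approach as the paper: nonnegativity of each summand for the lower bound, and Jensen's inequality applied to the concave logarithm for the upper bound. The only difference is cosmetic — the paper applies Jensen directly to $\sum_i p(x_i)\log_b(1/p(x_i))$ with arguments $1/p(x_i)$, while you apply it to $\log_b(n)-H(X)$ with arguments $1/(n\,p(x_i))$ — and you are somewhat more careful about the $p(x_i)=0$ convention, which the paper leaves implicit.
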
 
\begin{proof}
Since $\log_{b}p(x_{i}) \leq 0$, the left inequality follows immediately. 
Noting that 
the logarithm is concave function and applying Jensen's 
inequality \cite{jens}, which for
a random variable $X$ and a concave function $f$, states that
\begin{equation*}
f \bigl ( \mathbb{E}(X) \bigr ) \geq \mathbb{E} \bigl (f(X) \bigr),
\end{equation*}
we have
\begin{align*}
\sum_{i=1}^{n} p(x_{i}) \log_{b}\left ( \dfrac{1}{p(x_{i})} \right ) & \leq \log_{b} 
\left (\sum_{i=1}^{n} p(x_{1}) \dfrac{1}{p(x_{i})} \right ) \\
&\quad {} = \log_{b}(n). \qedhere 
\end{align*}
\end{proof}

The next four definitions can be found in \cite{cov}, 
\cite{eli} and \cite{sha}.
\begin{Definition}
The joint entropy $H(X, Y)$ of two discrete random variables $X$ and $Y$ is defined as
\begin{eqnarray*}
H(X, Y) = - \sum_{x \in X} \sum_{y \in Y}p(x, y) \log_{b}p(x, y),
\end{eqnarray*}
where $p(x, y)$ is the probability that $X$ takes the value 
$x$ and $Y$ the value $y$. 
\end{Definition}
\begin{Definition}
The conditional entropy $H(X \vert Y)$
of two discrete random variables $X$ and $Y$ is defined as
\begin{eqnarray*}
H(X \vert Y) = - \sum_{x \in X} \sum_{y \in Y} p(x, y) \log_{b} \dfrac{p(x, y)}{p(y)}.
\end{eqnarray*}
\end{Definition}
\begin{Definition}
The information content of a random variable $X$, 
with probability distribution $\mathbb{P}(X)$ is 
\begin{eqnarray*}
I(X) = -\log_{b}\mathbb{P}(X).
\end{eqnarray*}
\end{Definition}
From definition $3.1.6$, one can easily deduce, that
\begin{eqnarray*}
H(X) = \mathbb{E} \bigl (I(X) \bigr ).
\end{eqnarray*}
In other words, entropy is the expected value of the information. We proceed to 
the definition of mutual information.
\begin{Definition}
The mutual information of two discrete random variables $X$ and $Y$ is defined as:
\begin{eqnarray*}
I( X \wedge Y)= \sum_{x \in X}\sum_{y \in Y}p(x, y)\log_{b} \dfrac{p(x, y)}{p(x)p(y)} 
= H(X) - H(X \vert  Y)
\end{eqnarray*}
and quantifies the amount of information provided about $X$ by $Y$.
\end{Definition}
These definitions will be used in a later part of the thesis.

We have to point out that up until now, our analysis did not take into account the 
added overhead of finding the median at each stage. In the simple case of three 
elements, the overhead is not significant, but for larger samples
this might have adversary effects to the efficiency of Quicksort. 

For the selection of the median, we use  
Hoare's Find algorithm \cite{hoar} or Quickselect. This simple and intuitive algorithm 
searches for an element of a given rank $m$ in an array of $n$ keys. 
As in Quicksort, one partitions the array 
around a randomly chosen pivot, which at the end of the 
partition process is moved to 
its final position, $j$. If $m=j$, the pivot is the sought 
element and the search is completed. 
Otherwise, if $m < j$, Quickselect is recursively 
invoked to the left subarray of $j-1$ keys. 
Conversely, if $m > j$, 
we search in the right subarray for the element of rank $(m-j)$. 

Quickselect is ideal in situations where we want to identify order statistics, 
without the need to do a complete sort. The average number of 
comparisons $\mathbb{E}(C_{n; m})$ required for the
retrieval of the $m$-th order statistic in an array of $n$ keys, 
is given by \cite{knuth 3}
\begin{equation*}
\mathbb{E}(C_{n; m}) = 2 \bigl (n+3+(n+1)H_{n}-(m+2)H_{m}-(n-m+3)H_{n+1-m} \bigr ).
\end{equation*}
For the sample of $(2k+1)$ keys, the rank of the median is $k+1$. 
Therefore,
\begin{eqnarray*}
\mathbb{E}(C_{2k+1; k+1}) = 2 \bigl (2k+4+(2k+2)H_{2k+1}-2(k+3)H_{k+1} \bigr ).
\end{eqnarray*}
The cost for finding the median of a random sample of $3$ keys 
is $8/3$ comparisons at each stage. 
However, as the sample of keys increases, in order to obtain a 
more accurate estimate of the median, the added overhead imposes a bottleneck to 
the efficiency of the algorithm.

The computation of other measures of this variant, such as the expected number of 
exchanges and passes can be performed by solving analogous recurrences, 
as the one for the number of comparisons. 
The average number of passes is recursively given by
\begin{eqnarray*}
\mathbb{E}(P_{{n}\{2k+1\}})= 1 + \dfrac{2}{\dbinom{n}{2k+1}}\sum_{j=1}^{n}\dbinom{j-1}{k}
\dbinom{n-j}{k}\mathbb{E}(P_{{j-1}\{2k+1\}}),
\end{eqnarray*}
where the ``toll function'' is now one recursive call to the algorithm, 
after the chosen pivot is 
the median of $(2k+1)$ keys, which yields two subarrays. 
This recurrence can be turned 
to a differential equation with solution 
\begin{eqnarray*}
\mathbb{E}(P_{{n}\{2k+1\}}) = \dfrac{n+1}{2(H_{2k+2}-H_{k+1})} - 1,
\end{eqnarray*}
noting that for $k=0$, the average number of passes is $n$. In the scheme, 
where arrays containing $m$ or fewer keys are sorted 
by insertion sort, the average costs are reduced. 
We refer to \cite{Hen} where this variant is analysed.

\section{Remedian Quicksort}

In the previous section, we analysed the modification of Quicksort, where the
pivot is selected as the median of a sample of $(2k+1)$ elements. This variant offers
better protection against the occurrence of trivial partitions. 
However, there are some cases,
where the running time of this partitioning scheme can go quadratic. Consider
the application of `median of $3$' Quicksort in an array of $n$ 
numbers, where the keys at 
positions $1$, $n$ and $\left \lfloor \frac{n+1}{2} \right \rfloor$ 
are selected as elements of the sample.
In case that two keys of this sample happen to be the smallest 
(or greatest) elements of the array, 
the chosen pivot will be $2$ (or $n-1$), leading to trivial partitioning, 
making Quicksort everything else, except quick! 
In \cite{sedgewick}, a permutation of the array $\{1, \ldots, 15 \}$ is given, 
which leads Quicksort to worst-case performance and in \cite{erk}, an algorithm is 
presented which forms the worst-case permutation. 

In order to remedy this, a bigger sample of $(2k+1)^{\beta}$ 
keys is randomly selected, its remedian
is found and used as partitioning element of the array to be sorted.
The remedian of the sample is recursively defined to be the median of $(2k+1)$ remedians 
of $(2k+1)^{\beta - 1}$ 
elements, where the remedian of $(2k+1)$ elements is the 
median, and is shown to be a robust estimator
of the median \cite{Bas}, \cite{tuk}.

A particular case of the remedian Quicksort widely used in sorting applications
is Tukey's `ninther', 
where the selected pivot is the median of three medians of three samples, each
containing three elements \cite{hc}, \cite{tuk}. 
In practical implementations, this variant exhibits faster running time \cite{ben} 
with little added overhead. Specifically, the 
computation of the remedian of $9$ elements takes on average $4 \times \frac{8}{3}$ 
comparisons at each call -- four times more than finding 
the median of $3$ randomly chosen keys. 

Let $C_{{n}\{(2k+1)^{\beta}\}}$ denote the number of comparisons required for the 
complete sorting of an array of $n$ distinct keys, where the chosen pivot at each call 
is the remedian of a random sample of $(2k+1)^{\beta}$ elements. The recurrence 
relation is much more complicated than the previous ones and the probability 
$p^{\beta}_{j}$ that the remedian of $(2k+1)^{\beta}$ elements 
is the $(j+1)$-th element is
\begin{align*}
p^{\beta}_{j}=(2k+1)!\sum_{\alpha_{1} + \ldots + \alpha_{2k+1}=j} 
p( \alpha_{1}, \ldots, \alpha_{2k+1}) \frac{\binom{j}{\alpha_{1}, \ldots, \alpha_{2k+1}}
\binom{(2k+1)^{\beta}-j-1}{(2k+1)^{\beta-1}-1-\alpha_{1}, \ldots, (2k+1)^{\beta-1}-
\alpha_{2k+1}}}{\binom{(2k+1)^{\beta}}{(2k+1)^{\beta - 1}, \ldots, (2k+1)^{\beta -1}}},
\end{align*}
where
\begin{equation*}
\dbinom{j}{\alpha_{1}, \ldots, \alpha_{2k+1}} = \dfrac{j!}{\alpha_{1}! \ldots \alpha_{2k+1}!} 
\end{equation*}
is the multinomial coefficient and $p( \alpha_{1}, \ldots, \alpha_{2k+1})$ is defined by
\begin{eqnarray*}
p( \alpha_{1}, \ldots, \alpha_{2k+1}) = p^{(\beta -1)}_{\alpha_{1}}(p^{(\beta -1)}_{0}+ \ldots + 
p^{(\beta -1)}_{\alpha_{2}-1}) \cdot \ldots \cdot (p^{(\beta -1)}_{\alpha_{2k+1}}+ \ldots + 
p^{(\beta -1)}_{(2k+1)^{\beta}-1}),
\end{eqnarray*}
with $p^{(0)}=1$.
See as well \cite{hc}, where the `splitting' probabilities of $3^{d}$ remedian are presented. 

Bentley's and McIlroy's experiments on the `remedian of $3^{2}$' Quicksort \cite{ben} showed
that the average number of key comparisons is 
$1.094n\log_{2}(n) - 0.74n$, very close to the information--theoretic 
lower bound of 
$n\log_{2}(n)-1.44n$. This Quicksort utilises the `ninther' 
partitioning for large arrays, 
then the `median of $3$' is used and as the algorithm proceeds, 
the partitioning strategy 
changes to the standard uniform pivot selection. 
The paper written by Durand \cite{dur} confirmed 
these experimental results, where the average number of comparisons is being given by
\begin{eqnarray*}
\mathbb {E} (C_{{n}\{3^{2}\}}) = 1.5697n \log_{e}(n)-1.0363n+1.5697\log_{e}(n)-7.3484 
+ O \left (\dfrac{1}{n} \right ).
\end{eqnarray*}

From a theoretical point of view, this variant yields 
savings on the expected time needed for the sorting, 
with little additional cost of computing the remedian. 
However, the recurrences are quite involved, 
as the remedian has an inherent recursive definition. 
A different approach would be to randomly 
choose a larger sample and use its elements as pivots 
through complete sorting. An obvious 
advantage of this method, is that the cost of 
computing the median or remedian of a 
sample at each call of the algorithm is 
avoided and instead all the pivots for the 
subsequent calls belong in one sample.

\section{Samplesort}

Having examined the strategy of selecting the median of a sample as pivot 
and the more complicated `remedian' variant, 
we proceed to the analysis of Samplesort algorithm
invented by Frazer and McKellar \cite{Fra}.
Instead to randomly select a sample of keys at each stage, 
computing the median and using it 
as pivot, a larger sample of $2^{k}-1$ keys 
is selected and extracted out of the array. 
It is sorted and its keys are being used 
as partitioning elements for the sorting of the array.

First the median of the sample is used as pivot, 
then the lower quartile
to the lower subarray and the upper one to the subarray 
of the elements that are greater than the median. 
When the sample is exhausted, the resulting subarrays 
can be recursively sorted
by the same procedure or by standard Quicksort. 

For convenience, let $2t+1 = 2^{k}-1$ and 
let us denote the total number of comparisons of Samplesort applied to $n$ keys by 
$C^{\{2^{k}-1\}}_{n}$. 
Then, $C^{\{2^{k}-1\}}_{n}$ is equal to the number of comparisons 
to sort the sample of $2^{k}-1$
elements, plus
the number of comparisons to insert its elements to the 
remainder of the array, plus the number of
comparisons required to sort the resulting $2^{k}$ 
subarrays, using ordinary Quicksort \cite{Fra}.
For the sorting of the sample, we use Quicksort 
and the average number of comparisons is
\begin{equation*}
2(2t+2)H_{2t+1}-4(2t+1). \, \, \tag{3.6}
\end{equation*}

Assume that the sorted sample is $x_{1} < x_{2} < \ldots < x_{t+1} < \ldots < x_{2t+1}$. 
First, the median $x_{t+1}$ is inserted to its final position 
in the array of $n-2t$ keys, by pairwise
comparisons of the $n-2t-1$ keys to $x_{t+1}$. 
The cost of partitioning is
$n-2t-1$ comparisons. Then, the first quartile 
is inserted
to the subarray of the elements less than the median 
$x_{t+1}$ and the third quartile to the
subarray of the elements greater than $x_{t+1}$. 
The cost of partitioning these two subarrays
is $n-2t-1$ comparisons, since the sum of their 
lengths is $n-2t-1$ keys. The process is continued until
all the elements of the sample are used as 
pivots and this will take $2t+1$
partitioning stages. 
Thus, an approximation to the average number of comparisons for the 
insertion of the sample is \cite{Fra}, \cite{hosam},
\begin{equation*}
(n-2t-1)\log_{2} (2t+1). \, \, \tag{3.7}
\end{equation*}

After all elements have inserted, there are $2(t+1)$ subarrays to be 
sorted by Quicksort. The expected number of comparisons is \cite{Fra}
\begin{equation*}
2(n+1)(H_{n+1}-H_{2(t+1)})-2(n-2t-1). \, \, \tag{3.8}
\end{equation*}
Putting together Eq. (3.6), (3.7) and (3.8), we have that the expected number of 
comparisons of Samplesort is
\begin{align*}
& 2(n+1)(H_{n+1}-H_{2(t+1)})-2(n-2t-1) + (n-2t-1)\log_{2} (2t+1) \\ 
& \qquad {} + 2(2t+2)H_{2t+1}-4(2t+1). \, \, \tag{3.9}
\end{align*}
For large values of $n$, the expected number of 
comparisons is given by the following 
Corollary.
\begin{Corollary}[Frazer and McKellar \cite{Fra}]
The asymptotic expected number of comparisons taken by Samplesort for the sorting of an 
array of $n$ keys, using a randomly chosen sample of $l$ keys, is
\begin{eqnarray*}
\mathbb{E}(C^{\{l\}}_{n})=1.386n \log_{2}(n)-0.386\bigl(n-l \bigr)\log_{2}(l)-2n-0.846l.
\end{eqnarray*}
\end{Corollary}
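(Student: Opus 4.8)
The plan is to start from the closed form in Eq.~(3.9) and extract its leading asymptotics directly. Writing $l=2t+1$, so that $2t+2=l+1$ and $n-2t-1=n-l$, Eq.~(3.9) reads
\begin{equation*}
\mathbb{E}(C^{\{l\}}_{n})=2(n+1)(H_{n+1}-H_{l+1})-2(n-l)+(n-l)\log_{2}(l)+2(l+1)H_{l}-4l,
\end{equation*}
and the whole task is to substitute $H_{m}=\log_{e}(m)+\gamma+o(1)$ and keep track of which terms survive.

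First I would expand $2(n+1)H_{n+1}$: using $\log_{e}(n+1)=\log_{e}(n)+O(1/n)$ and $n+1=n+O(1)$ gives $2(n+1)H_{n+1}=2n\log_{e}(n)+2\gamma n+O(\log_{e} n)$, and the leading term converts as $2n\log_{e}(n)=2\log_{e}(2)\,n\log_{2}(n)\approx 1.386\,n\log_{2}(n)$. Next, with $H_{l+1}=\log_{e}(l)+\gamma+o(1)$, the term $-2(n+1)H_{l+1}$ contributes $-2n\log_{e}(l)-2\gamma n$ plus lower-order pieces; the two $\gamma n$ contributions then cancel, which is the one point that genuinely needs noticing. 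Likewise $2(l+1)H_{l}=2l\log_{e}(l)+2\gamma l+O(\log_{e} l)$.

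It then remains to collect terms by scale. The $\log_{2}(l)$-terms are the leading part of $-2(n+1)H_{l+1}$, the explicit $(n-l)\log_{2}(l)$, and the leading part of $2(l+1)H_{l}$; after converting $\log_{e}$ to $\log_{2}$ these combine to $\bigl(-2\log_{e}(2)+1\bigr)(n-l)\log_{2}(l)\approx -0.386\,(n-l)\log_{2}(l)$. The pure $n$-linear contribution comes only from $-2(n-l)$, giving $-2n$. The $l$-linear contribution is $+2l$ from $-2(n-l)$, $+2\gamma l$ from $2(l+1)H_{l}$, and $-4l$ from the last term, totalling $(2\gamma-2)l\approx -0.846\,l$. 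Everything discarded --- terms of order $\log_{e}(n)$, $\log_{e}(l)$, constants and $o(1)$ --- is of strictly lower order in the regime of interest ($n\to\infty$ with $l\le n$ and $l$ large), which produces the claimed expression.

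The only real difficulty is bookkeeping: separating the $n\log n$, $n\log l$, $n$, $l\log l$ and $l$ scales, being honest about the replacements $\log_{e}(n+1)\mapsto\log_{e}(n)$ and $n+1\mapsto n$, and in particular spotting the cancellation of the $\gamma n$ terms between $2(n+1)H_{n+1}$ and $-2(n+1)H_{l+1}$. One should also bear in mind that Eq.~(3.7), and hence Eq.~(3.9), is itself only an approximation to the sample-insertion cost, so ``asymptotic'' here is to be understood in that light.
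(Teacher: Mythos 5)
Your proposal is correct and takes essentially the route the paper intends: since the Corollary is stated without proof, the only content available is to rewrite Eq.~(3.9) with $l=2t+1$, substitute $H_{m}=\log_{e}(m)+\gamma+o(1)$, and collect terms by scale, which is exactly what you do. The arithmetic checks out — the coefficient $1-2\log_{e}(2)\approx-0.386$ on $(n-l)\log_{2}(l)$, the cancellation of the $\gamma n$ contributions inside $2(n+1)(H_{n+1}-H_{l+1})$, and the $l$-coefficient $2\gamma-2\approx-0.846$ all come out as claimed — and your closing caveat that Eq.~(3.7) is itself only an approximation is an appropriate honesty about what ``asymptotic'' means here.
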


It is worthwhile to note that the programming of 
Samplesort is simple and straightforward, 
thus making it a suitable candidate to sorting 
applications. It is proven in \cite{Fra}, 
that the procedure is asymptotically optimal, 
i.e. as $n \to \infty$, the expected 
number of comparisons approaches the information--theoretic bound. 
The process of randomly drawing the sample out of the array 
to be samplesorted should be carefully 
selected, to the effect that the elements of 
the sample will produce `balanced' 
partitions. The main feature of Samplesort 
is that partitioning preserves the order 
of the sample, thus its keys are exchanged with 
the ones that they have to, 
so as to the exhaustion of the sample, its 
elements to be spread far apart.

In the direction of choosing a more `centered' sample, one can proceed by 
randomly selecting three samples -- each containing three keys -- and 
computing their medians, at an extra cost of $8$ comparisons. For the sorting of
larger arrays, the
number of samples will obviously be greater. The medians will be used as 
elements of the sample and the remaining elements can
be randomly chosen from the array. It should be noted, 
that this is an initial idea, lacking the mathematical analysis.

Albacea \cite{al} derived a modification of Samplesort. 
This variant starts with $1$ key, 
that is used as pivot for the partition of $2$ keys, 
so to have a sorted array of $3$ keys. 
These keys are used as pivots for the partitioning 
of $4$ keys, so as to obtain a sorted 
array of $7$ keys and so on, until the whole array 
is sorted. It is proven \cite{al} that 
the estimated expected number of key comparisons is
\begin{equation*}
n \lceil \log_{2}(n+1) \rceil - 2^{\lceil \log_{2}(n+1) \rceil} - 
n + \lceil \log_{2}(n+1) \rceil +1.
\end{equation*}
The derivation of the expected number of comparisons remains an open problem.

\chapter{Sorting by multiple pivots}

In this Chapter, different partitioning routines are analysed. These 
schemes utilise many pivots for the partitioning of the 
array and naturally arise as a generalisation of the algorithm. 
The pivots are chosen uniformly at random and the array is partitioned into
more than two subarrays. There is an additional overhead of comparing 
the pivots before the partitioning, which adds very small 
contributions to the running time. The aim of this modification is twofold: 
first to study if the possibility of worst-case scenario of the algorithm 
can be reduced further and secondly, to provide a theoretical basis for the
analysis of the generalisation of the algorithm.

We show that the average case analyses of these variants can be 
fully described by a general recurrence model, which
is transformed to a differential equation, whose solution
provides the expected cost of these variants. Further, we demonstrate that the 
integration constants involved in the solution, can be efficiently 
computed using Vandermonde matrices.

\section{Quicksorting on two pivots}

Along the following lines, we present a variant of Quicksort, 
where $2$ pivots are used for the partitioning of the array.  
Let a random permutation of the keys $\{1, 2, \ldots, n\}$ to be sorted, 
with all the $n!$ permutations
equally likely and let their locations in the array be numbered from left to right
by $\{1, 2, \ldots, n\}$. The keys at locations $1$ and $n$
are chosen as pivots and since all the $n!$ permutations are equally likely
to be the input, then  
all the $\binom{n}{2}$ pairs are equiprobable to be selected as pivots.
At the beginning, the pivots 
are compared each other and are swapped, if they are not in order.
If elements $i < j$ are selected as pivots, the array
is partitioned into three subarrays: one with $(i-1)$ keys smaller than $i$,
a subarray of $(j-i-1)$ keys between two pivots and the part of $(n-j)$ elements
greater than $j$. 

The algorithm then is recursively applied to each of these subarrays. 
The number of comparisons during the first stage is
\begin{align*}
A_{n, 2} & = 1 + \bigl ( (i-1)+2(j-i-1)+2(n-j) \bigr ) \\
& = 2n-i-2,
\end{align*}
for $i=1, \ldots, n-1$, and $j= i+1, \ldots, n$. Note that in the specific partitioning
scheme, each element is compared once to $i$ and elements
greater than $i$ are compared to $j$ as well. 
The average number of comparisons for the partitioning of $n$ distinct keys is
\begin{align*}
\mathbb{E}(A_{n, 2})=\dfrac{1}{\dbinom{n}{2}} \sum_{i=1}^{n-1}\sum_{j=i+1}^{n} 
\biggl ( 2n - i - 2 \biggr ) 
& = \dfrac{2}{n(n-1)}\left(\dfrac{5}{6}n^3 - 2n^2+\dfrac{7}{6}n \right ) \\
& = \dfrac{5n-7}{3}.
\end{align*}

Letting $C_{n, 2}$ denote the number of comparisons of dual pivot Quicksort applied to an 
array of $n$ items, the recurrence for the expected number of comparisons is
\begin{align*}
\mathbb{E}(C_{n, 2}) & =  \dfrac{5n-7}{3} +\frac{2}{n(n-1)} \\
& \quad {}\times
\left ( \sum_{i=1}^{n-1}\sum_{j=i+1}^{n} \mathbb{E}(C_{i-1, 2})
+ \sum_{i=1}^{n-1}\sum_{j=i+1}^{n} \mathbb{E}(C_{j-i-1, 2})
+ \sum_{i=1}^{n-1}\sum_{j=i+1}^{n} \mathbb{E}(C_{n-j, 2}) \right ).
\end{align*}
Note that the three double sums above are equal. 
Therefore, the recurrence becomes
\begin{align*}
\mathbb{E}(C_{n, 2})& = \dfrac{5n-7}{3}+\frac{6}{n(n-1)}\sum_{i=1}^{n-1}(n-i) \mathbb{E}(C_{i-1, 2}). 
\end{align*}
Letting $a_{n} = \mathbb{E}(C_{n, 2})$, we have 
\begin{eqnarray*}
a_{n}=\dfrac{5n-7}{3}+\frac{6}{n(n-1)}\sum_{i=1}^{n-1}(n-i)a_{i-1}, \mbox{$~n \geq 2.~$}
\end{eqnarray*}
It holds that $a_{0} = a_{1} = 0$. 
Multiplying both sides by $\dbinom{n}{2}$, we obtain
\begin{align*}
\dbinom{n}{2} a_{n} & = \dbinom{n}{2} \left ( \frac{5n-7}{3} +
\frac{6}{n(n-1)}\sum_{i=1}^{n-1}(n-i)a_{i-1} \right ) \\
& = \frac{n(n-1)(5n-7)}{6} + 3\sum_{i=1}^{n-1}(n-i)a_{i-1}.
\end{align*}
This recurrence will be solved by the difference method. We have
\begin{align*}
\Delta F(n)& :=F(n+1)-F(n) \mbox{\quad and~for~higher~orders~} \\
\Delta^{k}F(n)& :=\Delta^{k-1}F(n+1)-\Delta^{k-1}F(n).
\end{align*}
Applying the difference operator 
\begin{align*}
& \Delta \dbinom{n}{2} a_{n} = \dbinom{n+1}{2}a_{n+1} - \dbinom{n}{2}a_{n} = 
\dfrac{5n^{2}-3n}{2} + 3\sum_{i=0}^{n-1}a_{i}  \\
& \Delta^{2} \dbinom{n}{2} a_{n} = \Delta \dbinom{n+1}{2}a_{n+1} - 
\Delta \dbinom{n}{2}a_{n} = 5n+1+3a_{n}.
\end{align*}
By definition, 
\begin{align*}
\Delta^{2} \dbinom{n}{2} a_{n} & = \Delta \dbinom{n+1}{2}a_{n+1}- \Delta \dbinom{n}{2}a_{n} \\ 
& = \dbinom{n+2}{2}a_{n+2}-2\dbinom{n+1}{2}a_{n+1} + \dbinom{n}{2}a_{n}
\end{align*}
and the recurrence becomes
\begin{align*}
&(n+1)(n+2)a_{n+2}-2n(n+1)a_{n+1}+n(n-1)a_{n}=2(5n+1+3a_{n}) \\
& \quad {}\Longrightarrow (n+1) \bigl ((n+2)a_{n+2}-(n-2)a_{n+1} \bigr )-
(n+2) \bigl ((n+1)a_{n+1}-(n-3)a_{n} \bigr ) \\
& \qquad {}= 2(5n+1).
\end{align*}

Dividing by $(n+1)(n+2)$, we obtain the telescoping recurrence
\begin{eqnarray*}
\frac{(n+2)a_{n+2}-(n-2)a_{n+1}}{n+2}=\frac{(n+1)a_{n+1}-(n-3)a_{n}}{n+1}+
\frac{2(5n+1)}{(n+1)(n+2)},
\end{eqnarray*}
which yields
\begin{eqnarray*}
\frac{(n+2)a_{n+2}-(n-2)a_{n+1}}{n+2}= 2\sum_{j=0}^{n}\frac{5j+1}{(j+1)(j+2)}=
\frac{18}{n+2}+10H_{n+1}-18.
\end{eqnarray*}
The recurrence is equivalent to
\begin{eqnarray*}
na_{n}-(n-4)a_{n-1}=18+10nH_{n-1}-18n.
\end{eqnarray*}
Multiplying by $\dfrac{(n-1)(n-2)(n-3)}{24}$,
this recurrence is transformed to a telescoping one \cite{sedgewick},
\begin{eqnarray*}
\dbinom{n}{4}a_{n}=\dbinom{n-1}{4}a_{n-1}+\frac{18(n-1)(n-2)(n-3)}{24}+
10 \dbinom{n}{4}H_{n-1}-18\dbinom{n}{4}.
\end{eqnarray*}
Unwinding, we have
\begin{align*}
\dbinom{n}{4}a_{n}& =18 \sum_{j=1}^{n}\frac{(j-1)(j-2)(j-3)}{24}+10 
\sum_{j=1}^{n}\dbinom{j}{4}H_{j-1} \\
& \quad{}-18\sum_{j=1}^{n}\dbinom{j}{4}. \, \, \tag{4.1}
\end{align*}

The second sum of Eq. (4.1) is  
\begin{align*}
\sum_{j=1}^{n} \dbinom{j}{4}H_{j-1} & = \sum_{j=1}^{n} \Biggl ( \dbinom{j}{4} \left 
( H_{j} - \dfrac{1}{j} \right ) \Biggr ) = \sum_{j=1}^{n} \dbinom{j}{4}H_{j} -
 \sum_{j=1}^{n} \dbinom{j}{4} \dfrac{1}{j} \\
& = \dbinom{n+1}{5} \biggl (H_{n+1} - \dfrac{1}{5} \biggr ) - 
\dfrac{1}{24} \sum_{j=1}^{n}(j-1)(j-2)(j-3) \\
& = \dbinom{n+1}{5} \biggl (H_{n+1} - \dfrac{1}{5} \biggr ) - 
\dbinom{n}{4} \dfrac{1}{4},
\end{align*}
thus
\begin{eqnarray*}
&& \dbinom{n}{4}a_{n} = \dfrac{9}{2} \dbinom{n}{4} +10 \Biggl ( \dbinom{n+1}{5} 
\left (H_{n+1} - \dfrac{1}{5} \right ) - \dfrac{1}{4} \dbinom{n}{4} \Biggr ) 
-18 \dbinom{n+1}{5}. \\
&& \Longrightarrow a_{n} = \dfrac{9}{2} + 10 \Biggl ( \dfrac{n+1}{5} \left 
(H_{n+1} - \dfrac{1}{5} \right ) - \dfrac{1}{4} \Biggr ) - \dfrac{18(n+1)}{5}.
\end{eqnarray*}
The expected number of comparisons, when two pivots are chosen is
\begin{align*}
a_{n} = 2(n+1)H_{n}- 4n.  
\end{align*}
This is exactly the same as the expected number of comparisons 
for ordinary Quicksort. 

Next, the expected number of key exchanges will be computed. The exchanges 
during partitioning are performed as follows. 
Set two pointers $l \leftarrow 2$, $u \leftarrow n-1$ 
and store temporarily the
pivots in another array of size two, so the cells at locations $1$ and $n$ are empty,
leaving two ``holes''. 

After the pivots are sorted by one comparison,
the key at position $2$ is compared to the first pivot (i.e. the smaller
of the two pivots); if it is less
than the pivot, it is put into the left hole, which now is moved one
position to the right and $l$ is increased by one. If it (i.e. the key at
position $2$) is greater than the first pivot, it is compared to the second
pivot (i.e. the greater of the two pivots). If it is less
than the second pivot, $l$ is increased by one, otherwise $l$ stops.

Now, the $u$ pointer starts its downward scan. If an examined key is
greater than both
pivots, it is put into the right hole, which is moved one position to the 
left and $u$ is decreased by one. If a key is less than the second pivot 
and greater than the first, then
$u$ is decreased by one. In case that a key is less than the first pivot, 
then $u$ stops its scan and the key that is greater to the second pivot,
where $l$ has stopped is put to the right hole, which is moved one position
to the left and the key where $u$ has stopped is put to the left hole, which
is moved one position to the right. Then, $l$ is increased by one,
$u$ is decreased by one and $l$ resumes its scan.

When pointers are crossed, the first pivot
is put to the left hole, the second pivot is put to the right hole and partition is
completed, since keys less than the first pivot are on its left, keys between two pivots
on the middle and keys greater than the second pivot are on its right subarray. 
Note that the auxiliary space required for the storing of pivots
is $O(1)$, since at the end of the partition routine, the pivots are moved back to the array
and two other new pivots can be stored, as the algorithm operates on a given
subarray.
We refer to \cite{sedgewick} for further details
of this scheme.

The average number of swaps during the first stage is
\begin{eqnarray*}
\dfrac{1}{\dbinom{n}{2}}\sum_{i=1}^{n-1}\sum_{j=i+1}^{n}(i-1)
= \dfrac{1}{\dbinom{n}{2}}\sum_{i=1}^{n-1}(n-i)(i-1)
= \dfrac{1}{\dbinom{n}{2}} \left ( \sum_{i=1}^{n-1}(n-i)i -
\sum_{i=1}^{n-1}(n-i) \right ),
\end{eqnarray*}
since $(i-1)$ keys are less than pivot $i$. Thus, the average contribution is 
\begin{eqnarray*}
\dfrac{2}{n(n-1)} \left ( \dfrac{n^{3}-n}{6} - \dfrac{n(n-1)}{2} \right ) = \dfrac{n-2}{3}.
\end{eqnarray*}
For the $(n-j)$ keys greater than $j$ the average value is the same, because the sums are equal. 
Adding the two final ``exchanges'' to get the pivots in place, 
the average number of exchanges during the partitioning routine is 
$\left (\dfrac{2(n+1)}{3} \right )$. Letting $S_{n, 2}$ denote the number of 
exchanges of dual pivot Quicksort, 
the recurrence for the mean number of exchanges in course of the algorithm is
\begin{align*}
\mathbb{E}(S_{n, 2})& =\dfrac{2(n+1)}{3}+ \frac{2}{n(n-1)} \\
&\quad {} \times \left ( \sum_{i=1}^{n-1}\sum_{j=i+1}^{n}\mathbb{E}(S_{i-1, 2})+
\sum_{i=1}^{n-1}\sum_{j=i+1}^{n} \mathbb{E}(S_{j-i-1, 2}) + \sum_{i=1}^{n-1}
\sum_{j=i+1}^{n} \mathbb{E}(S_{n-j, 2}) \right ) \\
& = \dfrac{2(n+1)}{3}+ \frac{6}{n(n-1)} \sum_{i=1}^{n-1}(n-i)
\mathbb{E}(S_{i-1, 2}). 
\end{align*}
Multiplying both sides by $\dbinom{n}{2}$,
\begin{equation*}
\dbinom{n}{2}\mathbb{E}(S_{n, 2}) = \dfrac{n(n-1)(n+1)}{3}+3\sum_{i=1}^{n-1}(n-i)
\mathbb{E}(S_{i-1, 2}). 
\end{equation*}

This recurrence is solved in \cite{sedgewick}: here we present a solution using
generating functions. Letting $b_{n} = \mathbb{E}(S_{n, 2})$ and 
$g(z)=\sum_{n=0}^{\infty}b_{n}z^{n}$ be the generating function of the average number of exchanges,
the recurrence is transformed to the following differential equation:
\begin{align*}
\dfrac{z^{2}}{2} \dfrac{\mathrm{d}^{2} g(z)}{\mathrm{d}z^{2}} 
& =\dfrac{z^{2}}{3} \dfrac{\mathrm{d}^{3}}{\mathrm{d}z^{3}} 
\left (\displaystyle \sum_{n=0}^{\infty}z^{n+1} \right ) +
3 \sum_{n=1}^{\infty}\sum_{i=1}^{n}(n-i)b_{i-1}z^{n}.
\end{align*}
The double sum is equal to
\begin{align*}
\sum_{n=1}^{\infty}\sum_{i=1}^{n}(n-i)b_{i-1}z^{n} & = b_{0}z^{2} + ( 2b_{0}+b_{1} )z^{3} + 
( 3b_{0} + 2b_{1} +b_{2} )z^{4} + \ldots \\
& = z^{2}(b_{0}+b_{1}z+b_{2}z^{2}+ \ldots ) + 2z^{3}(b_{0}+b_{1}z+b_{2}z^{2}+ \ldots ) + \ldots \\
& = (z^{2} + 2z^{3} + 3z^{4} + \ldots )g(z) \\
& = \biggl ( \sum_{n=0}^{\infty} nz^{n+1} \biggr )g(z)
\end{align*}
and our differential equation becomes
\begin{eqnarray*}
\dfrac{z^{2}}{2}\dfrac{\mathrm{d}^{2} g(z)}{\mathrm{d}z^{2}}=\dfrac{2z^{2}}{(1-z)^{4}} + 
3 g(z) \left ( \dfrac{z}{1-z} \right )^{2}.
\end{eqnarray*}

Changing variables $v=1-z$, we have $f^{(k)}(v)=(-1)^{k}g^{(k)}(1-v)$. Thus,
\begin{eqnarray*}
\dfrac{(1-v)^{2}}{2}\dfrac{\mathrm{d}^{2} f(v)}{\mathrm{d}v^{2}}=
\dfrac{2(1-v)^{2}}{v^{4}} + 3 f(v) \left ( \dfrac{1-v}{v} \right )^{2}.
\end{eqnarray*}
The differential equation can be simplified by multiplying both sides by 
$\left (\dfrac{v}{1-v}\right )^{2}$,
\begin{align*}
\dfrac{v^{2}}{2}\dfrac{\mathrm{d}^{2} f(v)}{\mathrm{d}v^{2}}=\dfrac{2}{v^{2}} + 3 f(v). \, \, \tag{4.2}
\end{align*}
An elementary approach to solving this differential equation, is to assume
that the solution is of the form $x^{m}$ \cite{boyce}. Substituting the ``trial solution''
to Eq. (4.2), the characteristic or indicial polynomial is
\begin{equation*}
\mathcal{P}_{2}(m)= m(m-1)-6,
\end{equation*}
with roots $m_{1}=3$ and $m_{2}=-2$. Thus, the solution to the corresponding homogeneous equation is 
$c_{1}v^{3}+c_{2}v^{-2}$, with $c_{1}, c_{2} \in \mathbf R$.
A particular solution of Eq. (4.2), which can be found e.g. using the method in
\cite{sabuw}, is
\begin{eqnarray*}
-\dfrac{4}{5}\log_{e}(v)v^{-2}.
\end{eqnarray*}
By the initial conditions $f(1) = -f^{\prime}(1) = 0$, the solution is
\begin{eqnarray*}
f(v)= \dfrac{4}{25}v^{3} - \dfrac{20 \log_{e}(v)+ 4}{25v^{2}}.
\end{eqnarray*}

In the next section, we will examine the generalised version of this differential equation.
Reverting to variable $z$ and discarding terms for $n \leq 3$,
we see that, expanding out the fraction term as a series, 
\begin{equation*}
g(z) = \sum_{n=0}^{\infty}\left (\dfrac{4}{5} \biggl ( (n+1)H_{n}-n \biggr ) 
- \dfrac{4}{25} (n+1) \right )z^{n}.
\end{equation*}
Finally, the mean number of swaps of dual pivot Quicksort is
\begin{eqnarray*}
b_{n, 2} = \dfrac{4}{5}(n+1)H_{n} - \dfrac{24n+4}{25},
\end{eqnarray*}
which is nearly $2.4$ times greater than the expected number of exchanges of standard Quicksort.

The recurrence for the number of partitioning stages $P_{n, 2}$ is much simpler;
\begin{eqnarray*}
P_{n, 2} = 1  + P_{i-1, 2} + P_{j-i-1, 2}+P_{n-j, 2}.
\end{eqnarray*}
By the same reasoning, as in the derivation of the 
expected number of exchanges, the solution is 
\begin{align*}
\mathbb{E}(P_{n, 2}) = \dfrac{2}{5}(n+1)- \dfrac{1}{2}.   
\end{align*} 

\subsection{The variance of the number of key comparisons}

It is desirable to compute the variance of the number of key 
comparisons of dual pivot Quicksort, as
this measure provides a grip of the deviation of the 
random number of comparisons from its expected value. 
By the recursive relation, we have
\begin{eqnarray*}
\mathbb{P}(C_{n, 2} = t) = \dfrac{1}{\dbinom{n}{2}}\sum_{i=1}^{n-1}\sum_{j=i+1}^{n}\mathbb{P}(A_{n, 2}+
C_{i-1, 2}+C_{j-i-1, 2}+C_{n-j, 2} = t),
\end{eqnarray*}
noting that the resulting subarrays are independently sorted, the above is
\begin{eqnarray*}
\dfrac{1}{\dbinom{n}{2}}\sum_{i=1}^{n-1}\sum_{j=i+1}^{n}\sum_{l, m}\biggl (\mathbb{P}(C_{i-1, 2} = l) 
\mathbb{P}(C_{j-i-1, 2} = m)\mathbb{P}(C_{n-j, 2} =  t-m-l-2n+i+2) \biggr).
\end{eqnarray*}
Letting $\displaystyle f_{n}(z) = \sum_{t = 0}^{\infty}\mathbb{P}(C_{n, 2} = t)z^{t}$ be the ordinary
probability generating function for the number of comparisons needed to sort $n$ keys, we obtain
\begin{align*}
f_{n}(z) = \dfrac{1}{\dbinom{n}{2}}\sum_{i=1}^{n-1}
\sum_{j=i+1}^{n}z^{2n-i-2}f_{i-1}(z)f_{j-i-1}(z)f_{n-j}(z). \, \, \tag{4.3}
\end{align*}

It holds that $f_{n}(1) = 1$ and $f'_{n}(1) = 2(n+1)H_{n} - 4n$. 
The second order derivative of Eq. (4.3) 
evaluated at $z=1$ is recursively given by
\begin{eqnarray*}
&& f''_{n}(1)~ = \dfrac{2}{n(n-1)} \biggl ( \sum_{i=1}^{n-1}\sum_{j=i+1}^{n}(2n-i-2)^{2}- 
\sum_{i=1}^{n-1}\sum_{j=i+1}^{n}(2n-i-2) \\
&& +~2\sum_{i=1}^{n-1}\sum_{j=i+1}^{n}(2n-i-2)\mathbb{E}(C_{i-1, 2}) + 
2\sum_{i=1}^{n-1}\sum_{j=i+1}^{n}(2n-i-2)\mathbb{E}(C_{j-i-1, 2}) \\
&& +~2\sum_{i=1}^{n-1}\sum_{j=i+1}^{n}(2n-i-2)\mathbb{E}(C_{n-j, 2})+
2\sum_{i=1}^{n-1}\sum_{j=i+1}^{n}\mathbb{E}(C_{i-1, 2})\mathbb{E}(C_{j-i-1, 2}) \\
&& + ~2\sum_{i=1}^{n-1}\sum_{j=i+1}^{n}\mathbb{E}(C_{i-1, 2})\mathbb{E}(C_{n-j, 2}) +
2\sum_{i=1}^{n-1}\sum_{j=i+1}^{n}\mathbb{E}(C_{j-i-1, 2})\mathbb{E}(C_{n-j, 2}) \\
&& +~\sum_{i=1}^{n-1}\sum_{j=i+1}^{n}f''_{i-1}(1) +  
\sum_{i=1}^{n-1}\sum_{j=i+1}^{n}
f''_{j-i-1}(1)+\sum_{i=1}^{n-1}\sum_{j=i+1}^{n}f''_{n-j}(1) \biggr ).
\end{eqnarray*}
The fourth and fifth sum turn out to be equal and by simple manipulation of
indices, the sums involving products of expected values are equal.
The double sum of the product of the mean number of comparisons can be
simplified as follows, using Corollary 2.3.5:
\begin{align*}
& \sum_{i=1}^{n-1}\sum_{j=i+1}^{n}\mathbb{E}(C_{i-1, 2})\mathbb{E}(C_{n-j, 2}) = \sum_{i=1}^{n-1}\left( 
\mathbb{E}(C_{i-1, 2}) \biggl (\sum_{j=0}^{n-i-1}\mathbb{E}(C_{j, 2}) \biggr) \right) \\
& = \sum_{i=1}^{n-1} \Biggl (\biggl( (2iH_{i-1}-4(i-1) \biggr ) 
\biggl (2 \dbinom{n-i+1}{2}H_{n-i}+ \dfrac{n-i-5(n-i)^{2}}{2} \biggr) \Biggr ).
\end{align*}
Further, using $\dbinom{n-i+1}{2}=\dbinom{n-i}{2}+(n-i)$, 
\begin{align*}
\sum_{i=1}^{n-1}i\dbinom{n-i+1}{2}H_{i-1}H_{n-i} &= \sum_{i=1}^{n-1}\biggl((i-1) + 
1 \biggr)\dbinom{n-i+1}{2}H_{i-1}H_{n-i} \\
&= \sum_{i=1}^{n-1}(i-1) \dbinom{n-i}{2}H_{i-1}H_{n-i} +\sum_{i=1}^{n-1}
\dbinom{n-i}{2}H_{i-1}H_{n-i} \\
+ & \sum_{i=1}^{n-1}(i-1)(n-i)H_{i-1}H_{n-i} + \sum_{i=1}^{n-1}(n-i)H_{i-1}H_{n-i}.
\end{align*}
The four sums can be evaluated using Corollary $3$ in \cite{spi}. 

After some computations in \textsc{Maple}, that can be found in Appendix A, 
the recurrence is 
\begin{align*}
f''_{n}(1)& = 2(n+1)(n+2)(H^{2}_{n} - H^{(2)}_{n}) - H_{n} \left (\dfrac{17}{3}n^{2} 
+ \dfrac{47}{3}n + 6 \right ) + \dfrac{209}{36}n^{2} \\ 
& \quad{} + \dfrac{731}{36}n + \dfrac{13}{6} 
+ \dfrac{6}{n(n-1)}\sum_{i=1}^{n-1}(n-i)f''_{i-1}(1).
\end{align*}
Subtracting $\dbinom{n}{2}f''_{n}(1)$ from $\dbinom{n+1}{2}f''_{n+1}(1)$, we have
\begin{align*}
\Delta \dbinom{n}{2}f''_{n}(1)& = 4n(n+1)(n+2)(H^{2}_{n} - H^{(2)}_{n}) - 
\dfrac{nH_{n}}{9}(84n^{2} + 198n + 42) \\
& \quad{} + 3 \displaystyle \sum_{i=1}^{n}f''_{i-1} 
+ \dfrac{n}{9}(79n^{2} + 231n + 14 ),
\end{align*}
using the identity \cite{sedgewick}
\begin{equation*}
H^{2}_{n+1} - H^{(2)}_{n+1} = H^{2}_{n} - H^{(2)}_{n} + \dfrac{2H_{n}}{n+1}.
\end{equation*}

Also, it holds that 
\begin{align*}
\Delta^{2} \dbinom{n}{2}f''_{n}(1) & = 12(n+1)(n+2)(H^{2}_{n} - H^{(2)}_{n}) -
H_{n}(20n^{2} + 32n - 12)  \\
&\qquad{} + 17n^{2} +37n + 3f''_{n}(1).
\end{align*}
The left-hand side of the previous equation is the same as
\begin{eqnarray*}
\dbinom{n+2}{2}f''_{n+2}(1)-2\dbinom{n+1}{2}f''_{n+1}(1) + \dbinom{n}{2}f''_{n}(1)
\end{eqnarray*}
and the recurrence becomes
\begin{align*}
& (n+1)(n+2)f''_{n+2}(1)-2n(n+1)f''_{n+1}(1)+n(n-1)f''_{n}(1) \\
& \quad{} = 2 \Biggl ( 12(n+1)(n+2)(H^{2}_{n} - H^{(2)}_{n}) - 
H_{n}(20n^{2} + 32n - 12) 
+ 17n^{2} +37n + 3f''_{n}(1) \Biggr).
\end{align*}
Dividing by $(n+1)(n+2)$, we obtain the telescoping recurrence
\begin{align*}
& \frac{(n+2)f''_{n+2}(1)-(n-2)f''_{n+1}(1)}{n+2} \\
& \quad {}= \frac{(n+1)f''_{n+1}(1)-(n-3)f''_{n}(1)}{n+1} \\
& \qquad {} + 2 \Biggl( 12(H^{2}_{n} - H^{(2)}_{n}) -
\dfrac{ H_{n}(20n^{2} + 32n - 12)}{(n+1)(n+2)} + 
\dfrac{17n^{2} +37n}{(n+1)(n+2)}  \Biggr ),
\end{align*}
with solution
\begin{align*}
(n+2)f''_{n+2}(1)-(n-2)f''_{n+1}(1) & = (24n^{2} +100n + 104)
(H^{2}_{n+1}-H^{(2)}_{n+1}) \\
& \quad {} -  H_{n+1}(88n^{2} + 292n + 224) + 122n^{2} 
+ 346n + 224, 
\end{align*}
which is equivalent to
\begin{align*}
nf''_{n}(1) - (n-4)f''_{n-1}(1) &= (24n^{2} + 4n)(H^{2}_{n-1}-H^{(2)}_{n-1}) \\
& \quad{} - H_{n-1}(88n^{2}- 60n - 8)+ 122n^{2} - 142n + 20.
\end{align*}

Again as before, multiplying both sides by $\dfrac{(n-1)(n-2)(n-3)}{24}$, 
the recurrence telescopes with solution 
\begin{eqnarray*}
f''_{n}(1) = 4(n+1)^{2}(H^{2}_{n+1} - H^{(2)}_{n+1}) - 
4 H_{n+1}(n+1)(4n+3) + 23n^{2} + 33n + 12.
\end{eqnarray*}

Using the well known fact that 
$$\operatorname {Var}(C_{n, 2}) = f''_{n}(1) + f'_{n}(1) - \bigl (f'_{n}(1) \bigr )^{2},$$  
the variance of the number of key comparisons of dual pivot Quicksort is
\begin{align*}
7n^{2}- 4(n+1)^{2} H^{(2)}_{n} - 2(n+1)H_{n} + 13n. \, \, \tag{4.4} 
\end{align*}
Note that the variance of dual pivot Quicksort is identical with the 
variance of ordinary Quicksort. In the 
next subsection, we provide the theoretical explanation of this fact.

\subsection{Distribution of the number of key comparisons}

Our results have shown that dual pivot Quicksort has the same expected number
of comparisons, and the same variance, as in the case of `one-pivot'
Quicksort. Thus, it is natural to ask if the two random variables
have the same distribution. We now show this, after an argument sketched by Prof.
Colin McDiarmid \cite{inperson}.

Suppose that an array of $n$ distinct keys $x_{1},x_{2}, \ldots, x_{n}$ 
is to be sorted by Quicksort and let, as usual,
$C_{n}$ be the random number of comparisons required for the sorting. 
Obviously, $C_{1}=0$, $C_{2}=1$ and for $n \geq 3$ we pick uniformly
at random an ordered pair of
distinct indices $(I, J)$ in $[n]=\{1, 2, \ldots, n\}$
and we use $x_{I}$ as the first pivot.
Given that $I=i$, the pivot $x_{i}$
partitions the array of $n$ keys to the subarray
of $(i-1)$ keys less than $x_{i}$ and to the subarray of $(n-i)$ 
keys greater than $x_{i}$ by $(n-1)$ comparisons. 

Given that $I=i$, if $x_{J}<x_{i}$, then $x_{J}$ is a uniformly at random chosen pivot
from the subarray of $(i-1)$ elements less than $x_{i}$.
In this case, for $I=i$ and $J=j$, the
subarray of $(i-1)$ keys is partitioned to the subarray of $(j-1)$ keys less
than $x_{j}$ and to the subarray of $(i-j-1)$ keys greater than $x_{j}$
by $(i-2)$ comparisons.
Note that $(i-2)$ keys are compared to both pivots in two partitioning
stages. Therefore, the following recurrence holds:
\begin{align*}
\mathbb{P}(C_{n}=t) & = \dfrac{1}{\dbinom{n}{2}} 
\sum_{j=1}^{n-1}\sum_{i=j+1}^{n} \mathbb{P}\bigl((n-1)+(i-2) 
+C^{(1)}_{j-1}+C^{(2)}_{i-j-1}
+C^{(3)}_{n-i}=t \bigr) \\
& = \dfrac{2}{n(n-1)} 
\sum_{j=1}^{n-1}\sum_{i=j+1}^{n} \mathbb{P}\bigl((n+i-3) 
+C^{(1)}_{j-1}+C^{(2)}_{i-j-1}
+C^{(3)}_{n-i}=t \bigr),
\end{align*}
where $C^{(1)}_{n}$, $C^{(2)}_{n}$ and $C^{(3)}_{n}$ are 
independent copies of $C_{n}$ -- that is, are random variables with the
same distribution as $C_{n}$, independent of it and each other.

If $x_{J}>x_{i}$, then $x_{J}$ is a uniformly at random selected pivot from
the subarray of $(n-i)$ keys greater than $x_{i}$.
Given that $I=i$ and $J=j$, the
subarray of $(n-i)$ keys is partitioned to the subarray of $(j-i-1)$ keys less
than $x_{j}$ and to the subarray of $(n-j)$ keys greater than $x_{j}$ by
$(n-i-1)$ comparisons.
The recurrence relation is
\begin{align*}
\mathbb{P}(C_{n}=t) & = \dfrac{1}{\dbinom{n}{2}} 
\sum_{i=1}^{n-1}\sum_{j=i+1}^{n} \mathbb{P}\bigl((n-1)+(n-i-1) 
+C^{(1)}_{i-1}+C^{(2)}_{j-i-1}
+C^{(3)}_{n-j}=t \bigr) \\
& = \dfrac{2}{n(n-1)} 
\sum_{i=1}^{n-1}\sum_{j=i+1}^{n} \mathbb{P}\bigl((2n-i-2) 
+C^{(1)}_{i-1}+C^{(2)}_{j-i-1}
+C^{(3)}_{n-j}=t \bigr),
\end{align*}
where as in the previous recurrence, $C^{(1)}_{n}$, $C^{(2)}_{n}$ 
and $C^{(3)}_{n}$ are independent copies of $C_{n}$.
Observe that 
\begin{eqnarray*}
\sum_{j=1}^{n-1}\sum_{i=j+1}^{n} (n+i-3)=\sum_{j=1}^{n-1}(n-j)(2n-j-2)
=\sum_{i=1}^{n-1}\sum_{j=i+1}^{n}(2n-i-2),
\end{eqnarray*}
thus for any two pivots selected uniformly at random, the recurrences
are the same.

Recall that for the random number of comparisons $C_{n, 2}$ of 
dual pivot Quicksort, it holds that $C_{1, 2}=0$, $C_{2, 2}=1$ 
and for $n \geq 3$ we choose uniformly at random an ordered
pair of distinct indices $(I, J)$ in $[n]=\{1, 2, \ldots, n \}$.
The pivots $x_{I}$ and $x_{J}$ are sorted by one comparison and
we assume that
its outcome is $x_{I}<x_{J}$. Given that 
$I=i$ and $J=j$, the array is partitioned
to the subarray of $(i-1)$ keys less than $x_{i}$, the subarray of
$(j-i-1)$ keys
between two pivots and the subarray of $(n-j)$ keys greater than $x_{j}$. Since
keys greater than $x_{i}$ are compared with the other
pivot as well, the recurrence for the random number of comparisons is
\begin{align*}
\mathbb{P}(C_{n, 2}=t) & = \dfrac{1}{\dbinom{n}{2}} 
\sum_{i=1}^{n-1}\sum_{j=i+1}^{n} \mathbb{P}\bigl(1+(i-1)+2(j-i-1)+2(n-j) \\
& \qquad{} +C^{(1)}_{i-1, 2}+C^{(2)}_{j-i-1, 2}
+C^{(3)}_{n-j, 2}=t \bigr) \\
& = \dfrac{2}{n(n-1)} 
\sum_{i=1}^{n-1}\sum_{j=i+1}^{n} \mathbb{P}\bigl((2n-i-2) 
+C^{(1)}_{i-1, 2}+C^{(2)}_{j-i-1, 2}
+C^{(3)}_{n-j, 2}=t \bigr),
\end{align*}
where $C^{(1)}_{n, 2}$, $C^{(2)}_{n, 2}$ and $C^{(3)}_{n, 2}$ are independent
copies of $C_{n, 2}$. Note that when $x_{I}>x_{J}$, the recurrence is the same. 
Thus, since dual pivot Quicksort
and ordinary Quicksort satisfy the same recurrence and have the same initial
conditions for $n=1, 2$, we deduce that the random variables $C_{n, 2}$ and 
$C_{n}$ are identically distributed.

\section{Multikey partitioning}

A natural extension of having two pivots would be to have some other number
$k$ of pivots. Here, we study the idea of randomly picking $k$ pivots $i_{1}, i_{2},
\ldots, i_{k}$ and partitioning the array simultaneously according to these.

Again, let a random permutation of the array $\{1, 2, \ldots , n \}$ be given
to be sorted using this variant, with all the $n!$ permutations
equally likely to be the input. The $k$ rightmost keys are chosen
as pivots, are compared to
each other and exchanged, if they are out of order.
The sorting of the
pivots can be efficiently implemented by insertion sort. 
Since all $n!$ permutations of the keys
are equally likely to be the input, this amounts to the fact 
that any $k$-subset of keys
has equal probability to be selected. 

The remaining
$(n-k)$ keys are compared to the pivots and the array is partitioned to
$(k+1)$ subarrays. The partitioning can be performed as follows.
We compare the leftmost key to a randomly chosen pivot; if it is smaller than
this pivot, it is compared with another smaller pivot (if one exists). 
Otherwise it is compared with a larger pivot (to the right) 
and after a series of
comparisons, is inserted to its place between any two pivots, or to 
the left of the smallest pivot or to the right of the greatest pivot. 
We continue in the same fashion, until all keys are examined. 

In \cite{Hennequin}, each
of the $(n-k)$ keys is compared to the pivots by binary search,
so a key is compared first to the median of the sorted array of the pivots. If
it is less, is compared with the first quartile, otherwise is compared with the third
quartile and after a series of comparisons is inserted
to its position. In worst case, it takes
$O\bigl(\log_{2}(k) \bigr)$ comparisons for the 
insertion of a key.
Then, multipivot Quicksort is recursively applied to each of the
resulting segments 
that contains at least $(k+1)$ keys and
arrays with less than $(k+1)$ keys are sorted by insertion sort
in $O(n)$ time. 

This is equivalent to
$(k+1)$-ary search trees, which is a generalisation of binary trees. Indeed,
if $n \geq k+1$, the $k$ pivots are stored in the root node of the tree in
increasing order and the remaining $(n-k)$ keys are placed in the resulting
$(k+1)$ subtrees of the root. In case that $n=0$, the tree is empty and if
$n \leq k$, the tree has a single node, which stores the keys in order.
Under the assumption of uniformity, this is a $(k+1)$-random tree of $n$
nodes -- see the article of Chern {\em et al.} \cite{hc} and Mahmoud's book
\cite{hosam} for the correspondence between trees and variants of Quicksort. 

Let $f(n, k)$ denote the expected cost of the
algorithm applied to an array of $n$ keys. We deliberately 
allow some flexibility in the form of cost; 
a typical example might be the
number of comparisons. 
The expected cost of this variant is recursively given by
\begin{align*}
f(n,k)  & = T(n, k) \\
& \quad{} + \dfrac{1}{\dbinom{n}{k}} \underbrace{\sum_{i'_{1}} \sum_{i'_{2}} 
\ldots \sum_{i'_{k}}}_{i'_{1} <i'_{2} < 
\ldots < i'_{k}} \Bigl ( f(i'_{1}-1, k) + f(i'_{2} - i'_{1} -1, k) 
+ \ldots + f(n - i'_{k}, k) \Bigr),
\end{align*}
where $i'_{1} < i'_{2} < \ldots < i'_{k}$ are the pivots in increasing order,
$T(n, k) = \overline{a}(k)n+\overline{b}(k)$ is the average value of a
``toll function'' $\tau(n, k)$ during the first recursive call and $f(i'_{1}-1, k)$
denotes the average cost for sorting the subarray of $(i'_{1}-1)$ elements
less than $i'_{1}$ by multipivot Quicksort on $k$ pivots.

Though this looks a complex $k$-index summation, the recursion can be
simplified, by noting that the pivots are randomly selected and the sums are
equal,
\begin{align*}
f(n,k) & = T(n, k) \\
& \quad{} + \dfrac{1}{\dbinom{n}{k}} \underbrace{\sum_{i'_{1}} \sum_{i'_{2}} \ldots 
\sum_{i'_{k}}}_{i'_{1} <i'_{2}< \ldots < i'_{k}} \Bigl ( f(i'_{1}-1, k) + 
f(i'_{2} - i'_{1} -1, k ) + \ldots + f(n - i'_{k}, k ) \Bigr) \\
& = T(n, k) +  \dfrac{1}{\dbinom{n}{k}} \sum_{i'_{1}=1}^{n - k+1} 
\sum_{i'_{2}= i'_{1}+1}^{n-k+2} \ldots \sum_{i'_{k} = i'_{k-1} + 1}^{n} 
\Bigl ( f(i'_{1}-1, k) + \ldots + f(n - i'_{k}, k ) \Bigr ) \\
& = T(n, k) + \dfrac{(k+1)!}{n(n-1) \ldots (n-k+1)}
\sum_{i'_{1} = 1}^{n-k+1} \dbinom{n - i_{1}}{k-1}f(i'_{1}-1, k).
\end{align*}
Multiplying both sides by $\dbinom{n}{k}$, the recurrence relation becomes
\begin{align*}
\dbinom{n}{k}f(n,  k) = \dbinom{n}{k}T(n, k) + (k+1)\sum_{i'_{1} = 1}^{n-k+1} 
\dbinom{n - i'_{1}}{k-1}f(i'_{1}-1, k).
\end{align*}
For notational convenience, let $f(n, k) = a_{n}$ and consider the generating function 
$h(x)= \displaystyle \sum_{n=0}^{\infty}a_{n}x^{n}$;
\begin{eqnarray*}
\sum_{n=0}^{\infty}\dbinom{n}{k}a_{n}x^{n} = \sum_{n=0}^{\infty} \dbinom{n}{k}T(n, k)x^{n} 
+ (k+1)\sum_{n=0}^{\infty} \left (\sum_{i'_{1} = 1}^{n} 
\dbinom{n - i'_{1}}{k-1}a_{i'_{1}-1} \right )x^{n}.
\end{eqnarray*}
The recurrence is transformed to a $k$-th order differential equation
\begin{align*}
\dfrac{h^{(k)}(x)x^{k}}{k!} & = \sum_{n=0}^{\infty} \dbinom{n}{k}T(n, k)x^{n} + 
h(x)(k+1)\sum_{n=0}^{\infty}\dbinom{n-1}{k-1}x^{n} \\
& = \sum_{n=0}^{\infty} \dbinom{n}{k}\bigl (\overline{a}(k)n + \overline{b}(k) 
\bigr )x^{n} + (k+1)h(x) \left (\dfrac{x}{1-x} \right)^{k} \\
& =  \dfrac{x^{k}\bigl (\overline{a}(k)(x+k)+ 
\overline{b}(k)(1-x) \bigr )}{(1-x)^{k+2}} + (k+1)h(x)\left (\dfrac{x}{1-x} \right)^{k},
\end{align*}
since it can be easily seen by induction that the $k$-th order derivative of 
\begin{equation*}
\sum_{n=0}^{\infty} \bigl (\overline{a}(k)n + \overline{b}(k) \bigr )x^{n} = 
\dfrac{\overline{a}(k)x + \overline{b}(k)(1-x)}{(1-x)^{2}}
\end{equation*}
is
\begin{equation*}
\dfrac{k!\bigl (\overline{a}(k)(x+k)+ \overline{b}(k)(1-x) \bigr )}{(1-x)^{k+2}}.
\end{equation*}

Multiplying by $\left (\dfrac{x}{1-x} \right)^{-k}$, the differential equation is simplified to
\begin{eqnarray*}
\dfrac{h^{(k)}(x)(1-x)^{k}}{k!} = \dfrac{\overline{a}(k)(x+k)+ \overline{b}(k)(1-x) }
{(1-x)^{2}} + (k+1)h(x).
\end{eqnarray*}
This differential equation is an equidimensional Cauchy--Euler equation, 
as the one encountered in the previous Chapter.
Changing variables $x=1-z$, it is $h(x) = g(1-x)$. Applying the differential
operator $\Theta$, where $\Theta g(z)=zg^{\prime}(z)$, the differential equation becomes 
\begin{eqnarray*}
\bigl ((-1)^{k}\Theta(\Theta -1) \ldots (\Theta -k + 1)-(k+1)! \bigr )g(z) = 
\dfrac{k!\bigl (\overline{a}(k)(1-z+k)+ \overline{b}(k)z \bigr )}{z^{2}}
\end{eqnarray*}
and the indicial polynomial $\mathcal{P}_{k}(\Theta)$ is equal to
\begin{align*}
\mathcal{P}_{k}(\Theta)=(-1)^{k}\Theta^{\underline{k}}-(k+1)!.
\end{align*}
Using the notation from \cite{con}, $\Theta^{\underline{k}}=\Theta(\Theta-1)\ldots(\Theta-k+1)$ 
with $k \geq 0$,
denotes the falling factorial. Again, we need a Lemma regarding the roots of the
indicial polynomial.
\begin{Lemma}
The indicial polynomial $\mathcal{P}_{k}(\Theta)$ has $k$ simple roots
with real parts in the interval $[-2, k+1]$. The real roots are $-2$; $(k+1)$, if 
$k$ is even and the $2 \left \lfloor \frac{k-1}{2} \right \rfloor$ complex roots 
$\alpha_{1}, \ldots, \alpha_{\left \lfloor \frac{k-1}{2} \right \rfloor}$ with their conjugates
$\overline{\alpha}_{1}, \ldots, \overline{\alpha}_{\left \lfloor \frac{k-1}{2} \right \rfloor}.$
\end{Lemma}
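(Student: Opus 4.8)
The plan is to recast the root equation in binomial form and then follow the three-part strategy already used for Lemma~3.1.1. Since $\Theta^{\underline{k}} = k!\binom{\Theta}{k}$ and $(k+1)! = k!\,(k+1)$, a complex number $\alpha$ is a root of $\mathcal{P}_{k}$ exactly when $\binom{\alpha}{k} = (-1)^{k}(k+1)$, i.e. $\prod_{j=0}^{k-1}(\alpha-j) = (-1)^{k}(k+1)!$. From this I would immediately read off two real roots: $\alpha=-2$ is always a root, because $\binom{-2}{k} = \frac{(-2)(-3)\cdots(-k-1)}{k!} = (-1)^{k}(k+1)$; and $\alpha=k+1$ is a root precisely when $k$ is even, since $\binom{k+1}{k} = k+1$, which equals $(-1)^{k}(k+1)$ iff $k$ is even.

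Next I would pin down the real parts. Write $\alpha=x+iy$, so that $|\alpha-j|\ge|x-j|$ with equality only when $y=0$. If $x<-2$, then for each $j\in\{0,\ldots,k-1\}$ one has $|\alpha-j|\ge|x-j|=j-x>j+2$, hence $\prod_{j=0}^{k-1}|\alpha-j|>\prod_{j=0}^{k-1}(j+2)=(k+1)!$, contradicting $\bigl|\prod_{j}(\alpha-j)\bigr|=(k+1)!$. Symmetrically, if $x>k+1$ then $|\alpha-j|\ge x-j>(k+1)-j$ for every $j\le k-1$, so $\prod_{j}|\alpha-j|>\prod_{j=0}^{k-1}(k+1-j)=(k+1)!$, again impossible. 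Thus every root has $\mathfrak{Re}(\alpha)\in[-2,k+1]$, and the strictness of these inequalities shows that $-2$ (resp. $k+1$, when $k$ is even) is the \emph{unique} root on the corresponding boundary line.

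For simplicity, suppose $\alpha$ were a repeated root. Since $\mathcal{P}_{k}(j)=-(k+1)!\ne 0$ for $j\in\{0,\ldots,k-1\}$, $\alpha$ is none of these integers, so dividing $\mathcal{P}_{k}'(\alpha)=0$ by $(-1)^{k}\prod_{j}(\alpha-j)\ne 0$ gives $\sum_{i=0}^{k-1}\frac{1}{\alpha-i}=0$; taking imaginary parts yields $-y\sum_{i}|\alpha-i|^{-2}=0$, so $y=0$ and $\alpha$ is real. On $\mathbf{R}$ the function $t\mapsto\sum_{i=0}^{k-1}(t-i)^{-1}$ is strictly decreasing from $+\infty$ to $-\infty$ on each interval $(i,i+1)$ and keeps a constant sign outside $[0,k-1]$, so its $k-1$ zeros lie one in each $(i,i+1)$, $i=0,\ldots,k-2$. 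For $\alpha$ in such an interval $(j,j+1)$ I would estimate $\bigl|\binom{\alpha}{k}\bigr|$ factor by factor: the indices $i\le j$ give $|\alpha-i|<j+1-i$, whose product is $<(j+1)!$, while the indices $i>j$ give $|\alpha-i|<i-j$, whose product is $<(k-1-j)!$; hence $\bigl|\binom{\alpha}{k}\bigr|<\frac{(j+1)!\,(k-1-j)!}{k!}=\binom{k}{j+1}^{-1}\le\frac1k<k+1$, contradicting $\bigl|\binom{\alpha}{k}\bigr|=k+1$. So all $k$ roots are simple. The same interval-by-interval telescoping estimates, applied in addition on $(-2,0)$ and on $(k-1,k+1)$ (splitting the latter at the integer $k$), show that no real root lies strictly between consecutive integers and no integer is a root, so the only real roots are $-2$ and, for $k$ even, $k+1$; the remaining $k-1$ (for $k$ odd) or $k-2$ (for $k$ even) roots are non-real and, $\mathcal{P}_{k}$ having real coefficients, split into $\lfloor\tfrac{k-1}{2}\rfloor$ conjugate pairs, which is the stated count.

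The main obstacle is that last estimate: the naive bound $|\alpha-i|<k-1$ on $(0,k-1)$ only gives $\bigl|\binom{\alpha}{k}\bigr|<(k-1)^{k}/k!$, which already exceeds $k+1$ once $k\ge 5$, so the argument genuinely requires the sharper per-interval telescoping bound $\bigl|\binom{\alpha}{k}\bigr|<\binom{k}{j+1}^{-1}$. Carrying out this bookkeeping carefully, and then reusing the same estimates to enumerate the real roots exactly so as to match the statement, is the delicate part; everything else is a direct transcription of the method of Lemma~3.1.1.
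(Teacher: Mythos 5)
Your proof is correct and follows the same three-part strategy as the paper's: rewriting the root equation as $\prod_{j=0}^{k-1}(\alpha-j)=(-1)^k(k+1)!$, using a factor-by-factor modulus estimate to bound the real parts, and ruling out real and repeated roots via the per-interval telescoping bound $\bigl|\binom{\alpha}{k}\bigr|<\binom{k}{j+1}^{-1}$ for $\alpha\in(j,j+1)$. Your version is marginally tidier in two respects — you verify $\mathfrak{Re}(\alpha)\leq k+1$ directly for \emph{all} roots (the paper only argues this for real ones), and you locate all zeros of $\mathcal{P}_k'$ inside the open intervals $(i,i+1)$ with $0\le i\le k-2$, which subsumes the paper's separate computation of $\mathcal{P}_k'(-2)$ and removes the need for the Rolle's-theorem step — but the key estimate and the structure of the argument are the same.
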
 
\begin{proof}
Let $\alpha = x+iy$ be a root of the polynomial. It holds
\begin{align*}
& \alpha(\alpha-1)\ldots (\alpha-k+1)=(-1)^{k}(k+1)! \\
& \Longrightarrow \alpha(\alpha-1)\ldots (\alpha-k+1)=(-2)(-3)\ldots \bigl (-(k+1) \bigr ) \\
& \Longrightarrow \frac{\alpha}{-2}\frac{\alpha-1}{-3}\ldots \frac{\alpha-k+1}{-(k+1)} = 1. \,\, \tag{4.5}
\end{align*}
Suppose that $\mathfrak{Re}(\alpha) < -2$. Then
\begin{align*}
\left \vert \frac{\alpha}{-2} \right \vert & = \frac 
{\sqrt{x^{2}+y^{2}}}{2}>\frac{\sqrt{(-2)^{2}+y^{2}}}{2} \geq 1 \\
\left \vert \frac{\alpha-1}{-3} \right \vert & = 
\frac {\sqrt{(x-1)^{2}+y^{2}}}{3}>\frac{\sqrt{(-3)^{2}+y^{2}}}{3} \geq 1 \\
& \setbox0\hbox{=}\mathrel{\makebox[\wd0]{\vdots}}  \\
\left \vert \frac{\alpha-(k-1)}{-(k+1)} \right \vert & = 
\frac {\sqrt{\bigl (x-(k-1) \bigr)^{2}+y^{2}}}{k+1} > 
\frac{\sqrt{\bigl (-(k+1) \bigr )^{2}+y^{2}}}{k+1} \geq 1.
\end{align*}
Considering the moduli in Eq. (4.5), we see that the left-hand side
is a product of numbers which are all greater than $1$, and
so the overall product is greater than $1$, but the right-hand side is equal
to $1$, leading to contradiction. Therefore every root has real part greater
than or equal to $-2$. Further, looking over the same argument, we see that
the only way we can have the real part being equal to $-2$ is if the imaginary
part is equal to zero. 

By the Fundamental Theorem of
Algebra a polynomial of degree $n$ has $n$ complex roots with 
multiplicities. Note that $-2$ is always a simple root since,
\begin{align*}
\mathcal{P}_{k}(-2) & = (-1)^{k}(-2)(-3) \ldots \bigl (-(k+1) \bigr ) - (k+1)! \\
& = (-1)^{2k}(k+1)! - (k+1)! = 0, \\ 
\\
\mathcal{P}^{\prime}_{k}(-2) & = (k+1)!\sum_{j=0}^{k-1} \dfrac{1}{-2-j} 
= -(k+1)!(H_{k+1}-1)  < 0.
\end{align*}
Suppose that $\alpha$ is a repeated root. Since $\mathfrak{Re}(\alpha) \geq -2$, 
we can write $\alpha=(x-2)+iy$ with $x \geq 0$. By the above comments we have that  
\begin{eqnarray*}
\sum_{j=0}^{k-1}\frac{1}{\alpha-j}=0 \\
\Longrightarrow \sum_{j=0}^{k-1}\frac{1}{(x-2-j)+iy}=0 \\
\Longrightarrow \sum_{j=0}^{k-1}\frac{(x-2-j)-iy}{(x-2-j)^{2}+y^{2}}=0
\end{eqnarray*}
In particular, $\mathfrak{Im}(\alpha) = 0$. But that
imaginary part is equal to $\sum_{j=0}^{k-1}\frac{y}{(x-2-j)^{2}+y^{2}}$ which
is clearly only equal to zero if $y=0$ i.e. the root is real. 
We will thus have obtained our contradiction if we can show that there are no
real roots other than $-2$ and (for $k$ even) $k+1$. 

To do this, suppose that we did have a real root $\alpha>-2$. We note first that $\alpha>0$: because
if not, then since $-2$ is also a root, there is a root of $\mathcal{P}^{\prime}$ between $-2$ and $0$
by Rolle's Theorem. But since $\mathcal{P}_{k}^{\prime}(\beta)=0$ implies that $\sum_{j=0}^{k-1}
\frac{1}{\beta-j}=0$ and when $\beta<0$ this number is clearly negative. 
Thus any real root $\alpha$ is positive. It is also $\leq k+1$ as if it were greater than 
$k+1$ we would have $\alpha(\alpha-1)\ldots (\alpha-k+1)>(k+1)!$. Further,
note that $k+1$ is a root if and only if $k$ is even and there cannot be
a root in $\alpha\in [k,k+1)$ as the product $\alpha(\alpha-1)\ldots (\alpha-k+1)$
would be $<(k+1)!$. 

Suppose then that $\alpha\in (j,j+1)$ for some $0\leq j\leq k-1$. Then the product of 
the non-negative numbers in the sequence $\alpha, \alpha-1,\ldots, \alpha-k+1$ is at most  
$\alpha(\alpha-1)\ldots (\alpha-j)\leq (j+1)\ldots 2\cdot 1=(j+1)!$. 
Thus, to get $\alpha$ being a root, we have to have that 
$$(-1)^{k}(\alpha-j-1)\ldots (\alpha-k+1)\geq \frac{(k+1)!}{(j+1)!}= (j+2)(j+3) \ldots (k+1).$$
However the largest in modulus of $\alpha-j-1,\ldots , \alpha-k+1$ is $\alpha-k+1>j+1-k$ and
so their product is less than $(k-1-j)!$. Consequently our inequalities together imply 
$$\frac{(k+1)!}{(j+1)!(k-1-j)!}<1\Longrightarrow (k+1) \dbinom{k}{j+1} <1$$
and this is a contradiction, completing the proof. 
\end{proof}

The differential equation can be written as
\begin{eqnarray*}
\mathcal{S}_{k-1}(\Theta)(\Theta + 2)g(z) = \dfrac{k!\bigl (\overline{a}(k)(1-z+k)+ 
\overline{b}(k)z \bigr )}{z^{2}}.
\end{eqnarray*}
Letting $r_{k}=-2$ and the remaining $(k-1)$ simple roots be
$r_{1}, r_{2}, \ldots, r_{k-1}$, we have
\begin{eqnarray*}
(\Theta - r_{1}) \ldots (\Theta - r_{k-1})(\Theta + 2)g(z) = \dfrac{k!\bigl 
(\overline{a}(k)(1-z+k)+\overline{b}(k)z \bigr )}{z^{2}}.
\end{eqnarray*}
For the solution of our differential equation, let two functions $g_{1}(z)+g_{2}(z)=g(z)$. Then
\begin{eqnarray*}
(\Theta - r_{1}) \ldots (\Theta - r_{k-1})(\Theta + 2) \bigl (g_{1}(z)+g_{2}(z) \bigr ) = 
\dfrac{\overline{a}(k)(k+1)!}{z^{2}} + \dfrac{ (\overline{b}(k)- \overline{a}(k))k!}{z}
\end{eqnarray*}
and by the property of linearity of differential operator
\begin{align*}
(\Theta - r_{1}) \ldots (\Theta - r_{k-1})(\Theta + 2)g_{1}(z)  & = 
\dfrac{\overline{a}(k)(k+1)!}{z^{2}} \\
(\Theta - r_{1}) \ldots (\Theta - r_{k-1})(\Theta + 2)g_{2}(z) & =  
\dfrac{\bigl (\overline{b}(k)-\overline{a}(k) \bigr )k!}{z}.
\end{align*}
In the same manner as in the analysis of `median of $(2k+1)$' 
Quicksort, applying $k$ times the solution, we obtain
\begin{align*}
g_{1}(z) & = \dfrac{\overline{a}(k)(k+1)!}{(-2-r_{1})(-2-r_{2}) 
\ldots (-2-r_{k-1})}\dfrac{\log_{e}(z)}{z^{2}}
+\sum_{i=1}^{k}c_{i}z^{r_{i}} \\
g_{2}(z) & = \dfrac{k!}{(-1-r_{1})(-1-r_{2}) \ldots 1}
\dfrac{(\overline{b}(k)- \overline{a}(k))}{z} 
+ \sum_{i=1}^{k}d_{i}z^{r_{i}},
\end{align*}
where $c_{i}$ and $d_{i}$ are constants of integration.
In order to evaluate $\mathcal{S}_{k-1}(-2)$, note that
\begin{equation*}
\mathcal{S}_{k-1}(-2) = \mathcal{P}_{k}^{\prime}(-2),
\end{equation*}
thus
\begin{equation*}
\mathcal{S}_{k-1}(-2) = -(k+1)!(H_{k+1} - 1).
\end{equation*}
Moreover,
\begin{equation*}
\mathcal{P}_{k}(-1)= -kk!.
\end{equation*}

Combining both solutions, 
\begin{align*}
g(z) = -\dfrac{\overline{a}(k)}{H_{k+1} - 1}\dfrac{\log_{e} (z)}{z^{2}} + \dfrac{1}{k}
\dfrac{(\overline{a}(k)-\overline{b}(k))}{z} + \sum_{i=1}^{k}s_{i}z^{r_{i}}, \, \, \tag{4.6}
\end{align*}
where $s_{i}= c_{i} + d_{i}$. The constants of integration can be found 
solving the following system of equations
\begin{eqnarray*}
g(1)=g^{\prime}(1)= \ldots = g^{(k-1)}(1) = 0.
\end{eqnarray*}
In terms of series;
\begin{align*}
h(x) & = \dfrac{\overline{a}(k)}{H_{k+1} - 1} \sum_{n=0}^{\infty} \bigl ( (n+1)H_{n}-n) \bigr ) x^{n} + 
\sum_{n=0}^{\infty}\sum_{i=1}^{k}s_{i}(-1)^{n}\dbinom{r_{i}}{n}x^{n} \\
& \qquad {}+  \dfrac{\overline{a}(k)-\overline{b}(k)}{k} \sum_{n=0}^{\infty}x^{n}. \, \, \tag{4.7}
\end{align*}
The third sum of Eq. (4.7) adds to the solution a constant negligible contribution. 
Also, the root $(k+1)$, when $k$ is even, contributes a constant and the root $r_{k} = -2$, 
adds $s_{k}(n+1)$, with $s_{k} \in \mathbf {R}.$ Extracting the coefficients,
the expected cost of multipivot Quicksort is
\begin{align*}
a_{n} & = \dfrac{\overline{a}(k)}{H_{k+1} - 1}\bigl ((n+1)H_{n} - n \bigr)+ s_{k}(n+1)   
+ 2\sum_{i=1}^{\lfloor \frac{k-1}{2} \rfloor }(-1)^{n}\mathfrak{Re} 
\left ( s_{i}\dbinom{\alpha_{i}}{n} \right )
 + O(1).
\end{align*}

The asymptotics of the last sum can be found by the well-known Stirling's 
formula, that states \cite{con}
\begin{equation*}
n!\sim \sqrt{2\pi n}\left(\frac{n}{e}\right)^{n}.
\end{equation*}
Expressing the binomial coefficient in terms of $\Gamma$ functions, we have
\begin{eqnarray*}
(-1)^{n}\dbinom{\alpha_{i}}{n} = \dbinom{-\alpha_{i} + n -1}{n} = 
\dfrac{\Gamma(n-\alpha_{i})}{n!\Gamma(-\alpha_{i})}.
\end{eqnarray*}
The relation 6.1.26 in \cite{abr} reads for $x, y \in \mathbf {R},$ 
\begin{equation*}
\vert \Gamma (x + iy) \vert \leq \vert \Gamma(x) \vert,
\end{equation*}
thus
\begin{equation*}
\vert \Gamma (n - \alpha_{i}) \vert \leq \vert \Gamma \bigl (n - \mathfrak{Re}(\alpha_{i}) \bigr ) \vert.
\end{equation*}
Using Stirling's formula,
\begin{align*}
\dfrac{ \Gamma(n - \mathfrak{Re}(\alpha_{i}))}{n!} & \sim \dfrac{\sqrt{ 2\pi \biggl (n - \bigl (\mathfrak{Re}
(\alpha_{i}) + 1 \bigr ) \biggr)}\left ( \dfrac{n - \bigl (\mathfrak{Re}(\alpha_{i}) + 1 \bigr )}{e} \right )^
{n-  \bigl (\mathfrak{Re}(\alpha_{i}) +1 \bigr )}}{\sqrt{ 2 \pi n} \biggl ( \dfrac{n}{e} \biggr )^{n}} \\
& \sim \cfrac{ \left ( \dfrac{n -\bigl ( \mathfrak{Re}(\alpha_{i})+1 \bigr )}{e} \right )^{n-
\bigl ( \mathfrak{Re}(\alpha_{i}) +1\bigr )}}{\biggl ( \dfrac{n}{e} \biggr )^{n}} \\
& \hspace{7 mm} = \left ( \dfrac{n-\bigl ( \mathfrak{Re}(\alpha_{i})+1 \bigr )}
{e}  \right )^{- \bigl ( \mathfrak{Re}(\alpha_{i}) +1\bigr )} 
\biggl (1 -\dfrac{\mathfrak{Re}(\alpha_{i}) + 1}{n} \biggr )^{n} \\
& \sim  Cn^{-  ( \mathfrak{Re}(\alpha_{i}) +1)},
\end{align*}
where $C=e^{-(\mathfrak{Re} (\alpha_{i})+1 )}$ is an unimportant constant.

Therefore the term is bounded by
\begin{eqnarray*}
2(-1)^{n}\mathfrak{Re} \biggl (s_{i} \dbinom{\alpha_{i}}{n} \biggr ) = O(n^{- 
( \mathfrak{Re}(\alpha_{i}) +1 )})
\end{eqnarray*}
and asymptotically, the expected cost is 
\begin{eqnarray*}
\dfrac{\overline{a}(k)}{H_{k+1}-1} n \log_{e} (n) + \left ( s_{k} + 
\dfrac{\overline{a}(k)}{H_{k+1}-1}(\gamma -1) \right )n + o(n),
\end{eqnarray*}
since all the other roots have real parts greater than $-2$.

Knowing the coefficients, any mean cost of the generalisation of 
the algorithm can be directly computed, 
using this solution, which assumes a simple form. These coefficients are 
related to the number of pivots used during the partitioning scheme. 
In \cite{Hennequin}, the average number of comparisons 
of the first stage is given by:
\begin{equation*}
\overline{a}(s)n + O(1),
\end{equation*}
where $s$ denotes the number of partitions, when $s-1$ pivots are used or 
equivalently the maximum number of descendants of a node of an $s$--ary tree. 
The coefficient $\overline{a}(s)n$ is equal to
\begin{eqnarray*}
\overline{a}(s) = \lceil \log_{2}(s) \rceil + \dfrac{s-2^{\lceil \log_{2}(s) \rceil }}{s}.
\end{eqnarray*}
Thus, the average number of comparisons of Quicksort on $k$ pivots is
\begin{eqnarray*}
\left (\dfrac{\lceil \log_{2}(k+1) \rceil + 1 - \frac{2^{\lceil \log_{2}(k+1) \rceil }}
{k+1}}{H_{k+1} - 1} \right )(n+1)H_{n} +  O(n).
\end{eqnarray*}

\subsection{Derivation of integration constants using Vandermonde matrices}

The constants of integration can be found using 
Vandermonde matrices. Differentiating
$m$ times Eq. (4.6),
\begin{align*}
g^{(m)}(z) & =\dfrac{\overline{a}(k)}{H_{k+1}-1}\dfrac{(-1)^{m+1}m!\bigl 
((m+1)\log_{e}(z)-((m+1)H_{m}-m) \bigr)}{z^{m+2}} \\
& \qquad{} +(-1)^{m}m!\dfrac{\bigl (\overline{a}(k)-\overline{b}(k) \bigr)}{kz^{m+1}}+
\sum_{i=1}^{k}s_{i}r_{i}^{\underline{m}}z^{r_{i}-m} \\
& = \dfrac{(-1)^{m}m!}{z^{m+1}} \biggl 
(-\dfrac{\overline{a}(k)\bigl ((m+1)\log_{e}(z)-((m+1)H_{m}-m)
 \bigr)}{z(H_{k+1}-1)} \\
& \qquad{} + \dfrac{\bigl (\overline{a}(k)-\overline{b}(k) \bigr)}
{k} \biggr )+\sum_{i=1}^{k}s_{i}r_{i}^{\underline{m}}z^{r_{i}-m}.
\end{align*}
The result can be easily proven by induction or by Leibniz's product rule. 
Using the initial conditions, namely that
$g$ and its first $(k-1)$ derivatives are $0$ when evaluated at $z=1$, 
we obtain
\begin{equation*}
\sum_{i=1}^{k}s_{i}r_{i}^{\underline{m}}=(-1)^{m+1}m! \biggl 
(\dfrac{\overline{a}(k)\bigl ((m+1)H_{m}-m)
\bigr)}{H_{k+1}-1} + \dfrac{\overline{a}(k)-\overline{b}(k)}
{k} \biggr ),
\end{equation*}
for $m=0, 1, \ldots, (k-1)$. In matrix form, the linear system is
\begin{center}
\begin{eqnarray*}
&& \begin{bmatrix}
1 & 1 & \ldots & 1 \\
r_{1} & r_{2} & \ldots & -2  \\
\vdots & \vdots & \ddots & \vdots  \\
r^{\underline{k-1}}_{1} & r^{\underline{k-1}}_{2} & \ldots & (-2)^{\underline{k-1}} \\
\end{bmatrix}
\begin{bmatrix}
s_{1} \\
s_{2} \\
\vdots \\
s_{k}
\end{bmatrix} = \\
&& \begin{bmatrix}
& -\dfrac{1}{k}\bigl (\overline{a}(k)-\overline{b}(k) \bigr ) \\
& \dfrac{\overline{a}(k)}{H_{k+1} - 1} + \dfrac{1}{k}\bigl (\overline{a}(k)-\overline{b}(k) \bigr ) \\
& \vdots \\
& (-1)^{k}(k-1)!\biggl 
(\dfrac{\overline{a}(k)\bigl (kH_{k-1}-(k-1)
\bigr)}{H_{k+1}-1} + \dfrac{\overline{a}(k)-\overline{b}(k)}
{k} \biggr )
\end{bmatrix}
\end{eqnarray*}
\end{center}

Here we use the well-known identity
$x^{n}=\sum_{k=0}^{n} \genfrac\{\}{0pt}{}{n}{k}x^{\underline{k}}$ \cite{abr}, where
$\genfrac\{\}{0pt}{}{n}{k}$ are the Stirling numbers of the second kind, to transform
the coefficient matrix into a Vandermonde matrix.  
The determinant of this Vandermonde matrix is equal to
\begin{equation*}
\prod_{1 \leq i < j \leq n} (r_{j}-r_{i}) \neq 0,
\end{equation*}
as the roots are all simple. Considering the expected number of passes
of multipivot Quicksort, it holds that $a(k) = 0$ 
and $b(k) = 1$, for $k=1, 2, \ldots$ ~. 
The system is,
\begin{eqnarray*}
&& \begin{bmatrix}
1 & 1 & \ldots & 1 \\
r_{1} & r_{2} & \ldots & -2  \\
\vdots & \vdots & \ddots & \vdots  \\
r^{\underline{k-1}}_{1} & r^{\underline{k-1}}_{2} & \ldots & (-2)^{\underline{k-1}} \\
\end{bmatrix}
\begin{bmatrix}
s_{1} \\
s_{2} \\
\vdots \\
s_{k}
\end{bmatrix} =
\begin{bmatrix}
& \dfrac{1}{k} \\
& -\dfrac{1}{k} \\
& \vdots \\
& (-1)^{k-1} \dfrac{(k-1)!}{k} 
\end{bmatrix}
\end{eqnarray*}
Turning the coefficient matrix into a Vandermonde 
one and using the identity $\sum_{j=0}^{n}(-1)^{j}j! \genfrac\{\}{0pt}{}{n}{j}=(-1)^{n}$, 
(see subsection 24.1.4 in \cite{abr}),
we obtain 
\begin{eqnarray*}
\centering
&& \begin{bmatrix}
1 & 1 & \ldots & 1 \\
r_{1} & r_{2} & \ldots & -2  \\
\vdots & \vdots & \ddots & \vdots  \\
r^{k-1}_{1} & r^{k-1}_{2} & \ldots & (-2)^{k-1} \\
\end{bmatrix}
\begin{bmatrix}
s_{1} \\
s_{2} \\
\vdots \\
s_{k}
\end{bmatrix} = 
\begin{bmatrix}
& \dfrac{1}{k} \\
& -\dfrac{1}{k} \\
& \vdots \\
& (-1)^{k-1} \dfrac{1}{k} 
\end{bmatrix}
\end{eqnarray*}

In \cite{Hou} and \cite{turn} the inverse of Vandermonde matrix is given, in terms 
of product of an upper and lower triangular matrices. Letting ${\bf A}^{-1}$ 
denote the inverse, it is equal to
\begin{eqnarray*}
{\bf A}^{-1} = 
\begin{bmatrix}
1 & \frac{1}{r_{1}-r_{2}} & \frac{1}{(r_{1}-r_{2})(r_{1}-r_{3})} & \ldots \\
0 &  \frac{1}{r_{2}-r_{1}} & \frac{1}{(r_{2}-r_{1})(r_{2}-r_{3})} & \ldots \\
0 & 0 & \frac{1}{(r_{3}-r_{1})(r_{3}-r_{2})} & \ldots \\
0 & 0 & 0 & \ldots \\
\vdots & \vdots & \vdots & \ldots  
\end{bmatrix}
\begin{bmatrix}
1 & 0 & 0 & \ldots  \\
-r_{1} & 1 & 0 & \ldots  \\
r_{1}r_{2} & -(r_{1}+r_{2}) & 1 & \ldots \\
-r_{1}r_{2}r_{3} & r_{1}r_{2}+r_{1}r_{3}+r_{2}r_{3} & -(r_{1}+r_{2}+r_{3}) & \ldots \\
\vdots & \vdots & \vdots & \ldots  
\end{bmatrix}
\end{eqnarray*}
It is clear that the lower triangular matrix, post--multiplied by
the vector \newline
$\bigl (1/k,-1/k,\ldots (-1)^{k-1}/k \bigr )^{\bf T}$,
will give us
$$\biggl(1/k,-(r_{1}+1)/k,(r_{1}+1)(r_{2}+1)/k, -\prod_{i=1}^{3}(r_{i}+1)/k,
\ldots , 
(-1)^{k-1}\prod_{i=1}^{k-1}(r_{i}+1)/k \biggr )^{\bf T}.$$
Thus, the solution is
\begin{eqnarray*}
s_{i} =(-1)^{k-1}\dfrac{\displaystyle \prod_{\substack{j \neq i \\ 1 \leq j \leq  k}}
(r_{j}+1)}
{k \displaystyle \prod_{\substack{j \neq i \\ 1 \leq j \leq  k} }(r_{i} - r_{j})}
\end{eqnarray*}
and the expected number of partitioning
stages of multipivot Quicksort on $k$ pivots is
\begin{eqnarray*}
(-1)^{k-1}\dfrac{\displaystyle \prod_{j=1}^{k-1}(r_{j}+1)}{k 
\displaystyle \prod_{j=1}^{k-1}(-2 - r_{j})}(n+1) 
+ o(n).
\end{eqnarray*}

Note that 
\begin{equation*}
\dfrac{\displaystyle \prod_{j=1}^{k-1}(r_{j}+1)}{
\displaystyle \prod_{j=1}^{k-1}(-2 - r_{j})} = (-1)^{k-1} \dfrac{kk!}{(k+1)!(H_{k+1}-1)},
\end{equation*}
therefore the mean number of partitioning stages is
\begin{equation*}
\dfrac{n+1}{(k+1)(H_{k+1}-1)} + o(n).
\end{equation*}
We remark that a generalised version of this result can be found in \cite{Hennequin}. 
In \cite{iliopo}, it was shown that the constants 
of integration can be computed for 
arbitrary values of the coefficients $a(k)$ 
and $b(k)$ by the same method, as in the
derivation of the integration constants 
in the simple case of $a(k)=0$ and $b(k)=1$.

At the end of this section, it should be noted that the worst-case probability is not 
eliminated, but is less likely to occur. In an unfortunate situation, where 
the $k$ smallest or greatest keys are selected as pivots, partitioning 
will yield trivial subarrays and one containing the remaining elements. 
If the chosen pivots happen to be close to the quantiles of the 
array, this yields an optimal partitioning of the array. 
In the next section, we examine ways of a more efficient selection of pivots.

\section{Multipivot--median partitioning}

The preceding analysis of multipivot Quicksort, where $k$ pivots 
are uniformly selected at random has showed that the worst-case 
scenario is less likely from the standard `one--pivot' model. 
Is any other way, where we can reduce further the probability of such scenario?
We have seen that choosing the median from a random sample of 
the array to be sorted, yields savings to the running time 
of the algorithm.
Since we have examined the analysis of multiple pivots, 
then we can select these pivots as the quantiles of a bigger random sample. 

Thus, one can randomly choose a larger sample of $k(t+1)-1$ keys, 
find the $(t+1)$-st, $2(t+1)$-th, \ldots, $(k-1)(t+1)$-th smallest keys 
and use these $(k-1)$ statistics as pivots. Note that for $k=2$, 
this variant contains the median of $2t+1$ Quicksort as a special case and 
for $t=0$, we have the multipivot algorithm, whose mathematical analysis 
was presented in the previous section. This `generalised Quicksort' 
was introduced by Hennequin \cite{Hennequin}. Let $T(n_{\{k, t\}})$ 
be the average of a ``toll function'' 
during the first pass and $f(n_{\{k, t\}})$ the total expected 
cost of this variant, when applied to an 
array of $n$ keys. The following recurrence (which is not presented so simply
in Hennequin) holds:
\begin{align*}
f(n_{\{k, t\}}) & = T(n_{\{k, t\}})+ \dfrac{1}{\dbinom{n}{k(t+1)-1}} \\
& \kern-3em {} \times \underbrace{\sum_{i_{1}} \sum_{i_{2}} \ldots \sum_{i_{k-1}}}_{i_{1} < i_{2} < \ldots < i_{k-1}} 
\Biggl ( \binom{i_{1}-1}{t} \binom{i_{2}-i_{1}-1}{t}\ldots \binom{i_{k-1}-i_{k-2}-1}{t}\binom{n-i_{k-1}}{t} \\
& \kern-2em {} \cdot \bigl (f((i_{1}-1)_{\{k, t\}}) + f((i_{2}-i_{1}-1)_{\{k, t\}}) + \ldots + 
f((i_{k-1}-i_{k-2}-1)_{\{k, t\}}) \\
& \kern-1em {}+ f((n-i_{k-1})_{\{k, t\}})\bigr ) \Biggr ),
\end{align*}
since the pivots $i_{1}, \ldots, i_{k-1}$ are selected to be the $(t+1)$-st, \ldots, 
$(k-1)(t+1)$-th smallest keys of the sample and
each of the $k$ resulting subarrays 
$(i_{1}-1), (i_{2}-i_{1}-1), \ldots, (n - i_{k-1})$ contain $t$ elements of the sample.

As before, the general recurrence of average cost is translated to a differential 
equation with indicial polynomial \cite{Hennequin},
\begin{eqnarray*}
\mathcal{P}_{k(t+1)-1}(\Theta) = (-1)^{k(t+1)-1} \dbinom{\Theta}{k(t+1)-1} 
- k(-1)^{t} \dbinom{\Theta}{t}.
\end{eqnarray*}
A Lemma follows concerning the whereabouts of the roots of this polynomial:
\begin{Lemma}
The indicial polynomial $\mathcal{P}_{k(t+1)-1}(\Theta)$ has $k(t+1)-1$ simple roots, 
with real parts greater than or equal to $-2$.
The real roots are the integers $0, 1, \ldots, (t-1)$, $-2$; $k(t+1)+t$, 
when $t$ is odd and
$k$ is even or when $t$ is even and $k$ is odd
and the $2 \left \lfloor \frac{(k-1)t+k-2}{2} \right \rfloor$ 
complex roots $\lambda_{1}, \ldots, 
\lambda_{\left \lfloor \frac{(k-1)t+k-2}{2} \right \rfloor}$
with their conjugates $\overline{\lambda}_{1}, \ldots, 
\overline{\lambda}_{\left \lfloor \frac{(k-1)t+k-2}{2} \right \rfloor}$.
\end{Lemma}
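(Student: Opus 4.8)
The plan is to mimic the structure of the two preceding root-location lemmas (Lemma 3.1.1 and Lemma 4.2.1), which handled the special cases $k=2$ and $t=0$ respectively, and to generalise their arguments to the polynomial
\[
\mathcal{P}_{k(t+1)-1}(\Theta) = (-1)^{k(t+1)-1} \dbinom{\Theta}{k(t+1)-1} - k(-1)^{t} \dbinom{\Theta}{t}.
\]
First I would observe that if $\alpha$ is a root then $\dbinom{\alpha}{k(t+1)-1} = -k(-1)^{t+k(t+1)}\dbinom{\alpha}{t}$; writing the binomial coefficients out, the factor $\alpha(\alpha-1)\cdots(\alpha-t+1)$ is common to both sides, so the integers $0,1,\ldots,t-1$ are immediately roots, and for the remaining roots one divides through to get a telescoping-type product
\[
\frac{(\alpha-t)}{(t+1)}\cdot\frac{(\alpha-t-1)}{(t+2)}\cdots\frac{(\alpha-k(t+1)+2)}{(k(t+1)-1)} = \pm k,
\]
where the number of factors on the left is $(k-1)(t+1) = (k-1)t+k-1$.

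Next I would establish the lower bound $\mathfrak{Re}(\alpha)\geq -2$ exactly as in the earlier lemmas: assuming $\mathfrak{Re}(\alpha)<-2$, each modulus $\bigl|\frac{\alpha-t-j}{t+1+j}\bigr|$ is bounded below by $\frac{t+2+j}{t+1+j}$ for $j=0,1,\ldots$, so the product of all $(k-1)(t+1)$ moduli telescopes to at least $\frac{k(t+1)-1+1}{t+1} = k$ — wait, I should be careful: the telescoping product of $\frac{t+2}{t+1}\cdot\frac{t+3}{t+2}\cdots\frac{k(t+1)}{k(t+1)-1}$ equals $\frac{k(t+1)}{t+1}=k$, which matches the right-hand side's modulus $k$ exactly, so strict inequality in at least one factor gives a contradiction. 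The same computation shows $-2$ is the only root with real part exactly $-2$ (equality forces every imaginary part contribution to vanish, i.e. $\mathfrak{Im}(\alpha)=0$, and then the telescoping is strict unless $\alpha=-2$ precisely). That $-2$ is genuinely a root, and that $k(t+1)+t$ is a root under the stated parity conditions, I would verify by direct substitution into the product identity, checking the sign $\pm k$ works out; the parity condition comes from tracking $(-1)^{t+k(t+1)}$.

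For simplicity of the roots I would differentiate: a repeated root $\alpha$ must satisfy both the root equation and $\mathcal{P}'_{k(t+1)-1}(\alpha)=0$, and combining them (using $\dfrac{d}{d\alpha}\log\dbinom{\alpha}{m} = \sum_{j=0}^{m-1}\frac{1}{\alpha-j}$) forces
\[
\sum_{j=t}^{k(t+1)-2}\frac{1}{\alpha-j}=0,
\]
which, being a sum of reciprocals, has only real roots, located in the open intervals between consecutive integers $t, t+1, \ldots, k(t+1)-2$. For such a real $\alpha$ lying strictly between two of those integers one then bounds the modulus of the left-hand product: when $t<\alpha<k(t+1)-1$ at least one factor has modulus less than $1$ while the telescoping otherwise forces the product to have modulus $k\geq 2$ (or one adapts the interval-by-interval counting argument used at the end of Lemma 4.2.1's proof, bounding the product of non-negative terms against $(j+1)!$ and the remaining terms against a factorial, to reach $\binom{\text{something}}{\cdot}<1$), a contradiction. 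The main obstacle I expect is bookkeeping: cleanly handling the two coexisting ``special'' real roots ($-2$ always, and the large root $k(t+1)+t$ only for the mixed-parity case) and making sure the count $2\lfloor\frac{(k-1)t+k-2}{2}\rfloor$ of complex conjugate pairs exactly accounts for the remaining roots after removing $0,1,\ldots,t-1$, $-2$, and possibly $k(t+1)+t$ — i.e. a parity/counting check that the total is $k(t+1)-1$ — rather than any single hard inequality.
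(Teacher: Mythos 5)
Your plan follows the paper's proof step for step: factor out the integer roots $0, 1, \ldots, t-1$; deduce $\mathfrak{Re}(\alpha) \geq -2$ by telescoping the moduli $\left|\frac{\alpha-t-j}{t+1+j}\right| > \frac{t+2+j}{t+1+j}$ over the $(k-1)(t+1)$ remaining factors down to $k(t+1)/(t+1)=k$; then differentiate to show any repeated root $\alpha$ satisfies $\sum_{j=t}^{k(t+1)-2}\frac{1}{\alpha-j} = 0$, which forces $\alpha$ real and strictly between consecutive integers in $\{t, \ldots, k(t+1)-2\}$. All of that is sound and is exactly what the paper does.

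The gap is in your final contradiction. You claim that for such an $\alpha$ ``at least one factor has modulus less than $1$ while the telescoping otherwise forces the product to have modulus $k\geq 2$,'' but the telescoping lower bound was derived under the hypothesis $\mathfrak{Re}(\alpha) < -2$ -- that is precisely what made every factor's modulus exceed $1$ so that the telescoped product exceeds $k$. When $\alpha$ sits in a middle interval $(m,m+1)$ some factors have modulus below $1$ and others above, the telescoping estimate is not available, and as written the sentence asserts two incompatible bounds without extracting a contradiction from either. Your parenthetical fallback (the Lemma 4.2.1 positive-versus-negative factorial bookkeeping) is the correct repair; the paper's version is slicker: for non-integer real $r\in(m,m+1)$ with $0\leq m\leq n-2$ and $n=k(t+1)-1$, bounding the $m+1$ positive numerator factors of $\binom{r}{n}$ by $(m+1)!$ and the moduli of the remaining $n-1-m$ factors by $(n-1-m)!$ gives $|\binom{r}{n}|<1/\binom{n}{m+1}<1$, while $r>t$ gives $\binom{r}{t}\geq 1$, so the right side of Eq.~(4.9) has modulus at least $k\geq 2$, an immediate contradiction. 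One further caution on the bookkeeping you flag as your expected obstacle: substituting $k(t+1)+t$ into Eq.~(4.9) and using $\binom{k(t+1)+t}{k(t+1)-1}=k\binom{k(t+1)+t}{t}$ reduces the root condition to $(-1)^{k(t+1)-1}=(-1)^{t}$, i.e.\ $(k-1)(t+1)$ even, i.e.\ $k$ odd \emph{or} $t$ odd; this is also what the complex-root count $2\lfloor\frac{(k-1)t+k-2}{2}\rfloor$ forces (check $k=3,\,t=1$: $\mathcal{P}_{5}(7)=-\binom{7}{5}+21=0$), so the ``opposite parity'' condition stated in the Lemma is not the one the argument actually produces.
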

\begin{proof}
It can be easily deduced that the integers $0, 1, \ldots, (t-1)$ and $-2$ are roots of
the polynomial. Now, when $t$ is odd and $k$ is even, then $k(t+1)-1$ is odd
and so a root $\alpha$ of the polynomial will satisfy 
\begin{align*}
\dbinom{\alpha}{k(t+1)-1} = k\dbinom{\alpha}{t} \, \, \tag{4.8}
\end{align*}
and by simple manipulations we can now verify that $k(t+1)+t$ is also a root.
Similarly, if $t$ is even and $k$ is odd, we have that $k(t+1)-1$ is even and
Eq. (4.8) is valid, again making $k(t+1)+t$ a root. 

It will be proved by contradiction that all roots have real parts greater than or equal to $-2$. 
Note that the argument is similar with the proofs of Lemmas 3.1.1 and 4.2.1.
For any root $r=x+iy$, with $x, y \in \mathbf R$ holds
\begin{align*}
(-1)^{k(t+1)-1}\dbinom{r}{k(t+1)-1}= k(-1)^{t}\dbinom{r}{t}. \, \, \tag{4.9}
\end{align*}
For $r \neq 0, 1, \ldots, (t-1)$, Eq. (4.9) can be written as
\begin{align*}
(-1)^{k(t+1)-1}\dfrac{(r-t)(r-t-1) \ldots \bigr (r -k(t+1)+2 \bigr)}
{(t+1) \ldots \bigl (k(t+1)-1\bigr)}=(-1)^{t}k. \, \, \tag{4.10}
\end{align*}
Assume that $\mathfrak{Re}(r)<-2$, then
\begin{align*}
\left \vert \frac{r-t}{t+1} \right \vert & = \frac {\sqrt{(x-t)^{2}+y^{2}}}{t+1}>
\frac{\sqrt{\bigl(-(t+2)\bigr)^{2}+y^{2}}}{t+1} \geq \dfrac{t+2}{t+1} \\
\left \vert \frac{r-t-1}{t+2} \right \vert & = \frac {\sqrt{(x-t-1)^{2}+y^{2}}}{t+2}
>\frac{\sqrt{\bigl(-(t+3)\bigr)^{2}+y^{2}}}{t+2} \geq 
\dfrac{t+3}{t+2} \\
& \setbox0\hbox{=}\mathrel{\makebox[\wd0]{\vdots}}  \\
\left \vert \frac{r-\bigl(k(t+1)-2\bigr)}{k(t+1)-1} \right \vert & = 
\frac {\sqrt{\bigl (x-k(t+1)+2\bigr)^{2}
+y^{2}}}{k(t+1)-1} > 
\frac{\sqrt{\bigl (-k(t+1)\bigr )^{2}+y^{2}}}{k(t+1)-1} \\
& \qquad {} \geq \frac{k(t+1)}{k(t+1)-1}.
\end{align*}
Considering the product of moduli, we see that the left-hand side of Eq.
(4.10) is greater than or equal to (in modulus) 
the telescoping product $k(t+1)/(t+1)=k$, which gives a contradiction. 
Further, this argument shows that
$-2$ is the unique root with the least real part.

We now show the roots are simple. Assuming, for a contradiction, that $r$ is
a repeated root, we get:
\begin{align*}
(-1)^{k(t+1)-1} \dbinom{r}{k(t+1)-1} \sum_{j=0}^{k(t+1)-2}\dfrac{1}{r-j}
=k(-1)^{t} \dbinom{r}{t}\sum_{j=0}^{t-1}\dfrac{1}{r-j}. \, \, \tag{4.11}
\end{align*}
Eq. (4.9) and (4.11) imply that
\begin{equation*}
\sum_{j=0}^{k(t+1)-2}\dfrac{1}{r-j}=\sum_{j=0}^{t-1}\dfrac{1}{r-j}
\end{equation*}
or 
\begin{equation*}
\sum_{j=t}^{k(t+1)-2}\dfrac{1}{r-j}=0. \, \, \tag{4.12}
\end{equation*}
From Eq. (4.12), we deduce that $\mathfrak{Im}(r)=0$ and $r \in (t, t+1)\cup (t+1, t+2)\cup 
\ldots \cup \bigl (k(t+1)-3, k(t+1)-2 \bigr )$. 
However, the modulus in the left-hand side of Eq. (4.9)
is smaller than $1$, while the right-hand side is greater 
than $k$, proving that all roots are simple. 
\end{proof}

By the Lemma, the polynomial can be written in terms of simple factors
and the differential equation can be solved using the same way, as 
in other variants of Quicksort, previously analysed.
The average cost of `generalised Quicksort' is
\begin{eqnarray*}
\dfrac{\overline{a}(k, t)}{H_{k(t+1)}-H_{t+1}} \bigl ((n+1)H_{n}-n \bigr ) + O(n),
\end{eqnarray*}
when the ``toll function'' is linear and its average is $\overline{a}(k, t)n + O(1)$.

Our analyses of the average cost of the algorithm and its variants has showed that 
any Quicksort needs on average $Cn \log_{e}(n)+ O(n)$ key comparisons for the complete 
sorting of a file consisting of $n$ distinct keys. The constant $C$ can be
made very close to the information--theoretic bound, as we saw. 
In many sorting applications
the `median of $3$' is being used, with savings on the average 
time and little overhead for the computation of median. For 
large arrays, one can use the `remedian of $3^{2}$' Quicksort.

\chapter{Partial order of keys}

\section{Introduction}

Here, we investigate the analysis of Quicksort under the
assumption of prior information of the order of keys. Specifically, we assume
that there is a partial order on the keys. The rough idea is to see how
much having partial information compatible with the true order allows
us to speed up the process of finding the true order.

Let us illustrate the idea first with a simple example. 
Suppose that there are $d$ levels with $k$ keys at each level, so that
$n=kd$. Anything in
a higher level is known to be above everything in a lower level.
Computing the ratio of the expected complexities, we have
\begin{eqnarray*}
\frac{\mathbb{E}(C_{n})}{\mathbb{E}(C^{*}_{n})}=
\frac{d\bigl(2(k+1)H_{k}-4k\bigr)}{2(n+1)H_{n}-4n},
\end{eqnarray*}
where $C^{*}_{n}$ and $C_{n}$ denote the number of comparisons of Quicksort
with uniform pivot selection and in case of partial order, respectively. 
We consider the following cases:
\begin{enumerate}
\item When d is fixed number, then as $n$ tends to infinity,
\begin{align*}
\lim_{n \to \infty}\frac{n}{d} = \infty.
\end{align*}
Thus,
\begin{align*}
\lim_{n \to \infty}\frac{\mathbb{E}(C_{n})}{\mathbb{E}(C^{*}_{n})}
\sim \lim_{n\to \infty}\frac{d\bigl(2k \log_{e}(k)\bigr)}{2n\log_{e}(n)}
=\lim_{n \to \infty}\frac{\log_{e}(k)}{\log_{e}(n)}=\lim_{n \to \infty} \frac{\log_{e}(n/d)}{\log_{e}(n)}=1.
\end{align*}
\item $d=k=\sqrt{n}$.
\begin{eqnarray*}
\lim_{n \to \infty}\frac{\mathbb{E}(C_{n})}{\mathbb{E}(C^{*}_{n})}=\lim_{n \to \infty}\frac{\sqrt{n}
\bigl(2(\sqrt{n}+1)H_{\sqrt{n}}-4\sqrt{n}\bigr)}{2(n+1)H_{n}-4n}
\sim \lim_{n \to \infty} \frac{2n\log_{e}(\sqrt{n})}{2n \log_{e}(n)}=\frac{1}{2}.
\end{eqnarray*}
We see that Quicksort is on average twice as fast, when sorting a partially
ordered array.
\item $k=\frac{1}{c}$ where $c$ is a constant. Then, 
\begin{align*}
\lim_{n \to \infty}\frac{\mathbb{E}(C_{n})}{\mathbb{E}(C^{*}_{n})}=
\frac{cn\bigl(2(\frac{1}{c}+1)H_{1/c}-\frac{4}{c} \bigr)}
{2(n+1)H_{n}-4n}=\frac{n\bigl((2+2c)H_{1/c}-4\bigr)}{2n \log_{e}(n)}
=\frac{c'}{\log_{e}(n)},
\end{align*}
\end{enumerate}
where $c'= \dfrac{(2+2c)H_{1/c}-4}{2}$.

We should think a little about variability too. It is unsurprising that
having the additional information about levels reduces variability of the
number of comparisons, let us get a preliminary result. If we have the
level structure, then $\operatorname {Var}(C_{n}^{*})$ is the sum  of the variances
of sorting each of the $d$ independent levels. Each of these variances,
since there are $k$ keys in each level, is just $\operatorname {Var}(C_{k})$.
For simplicity, we assume $k \rightarrow \infty$ as $n \rightarrow \infty$
and do asymptotics. We then have (for $m$ either $n$ or $k$)
$$\operatorname {Var}(C_{m})\sim \left(7-\frac{2\pi^{2}}{3}\right) \cdot m^{2}$$
and thus we get
\begin{eqnarray*}
\dfrac{\operatorname {Var}(C_{n})}{\operatorname {Var}(C_{n}^{*})}
\simeq \frac{(7-2\pi^{2}/3)n^{2}}{(7-2\pi^{2}/3)k^{2}d}=d
\end{eqnarray*}
so the variance of the version with the presorting is reduced by a
factor of about $d$. 
These suggest there is interest in studying this situation, we now do so in more detail.

\section{Partially ordered sets}

An approach of having additional information is the partial order of the keys. 
We shall employ this assumption along the following lines. 
First, we present a definition \cite{neg}.
\begin{Definition}
Let a finite set $P$ equipped with a binary relation `$\leq$' which
has the following properties. (Here, $x$, $y$ and $z$ are elements of $P$). \newline
(i) $x \leq x$, $\forall x \in P$. (That is, $\leq $ is reflexive)
\\
(ii) If $x\leq y$ and $y\leq x$, then $x=y$. ($\leq$ is antisymmetric)
\\
(iii) If $x \leq y$ and $y \leq z$, then $x \leq z$. ($\leq$ is transitive)
\\
Then the pair (P, $\leq$) is called Partially Ordered Set.
\end{Definition}
Henceforth, in this thesis we abbreviate `partially ordered set' to `poset'. We also present
two key definitions \cite{neg}.
\begin{Definition}
Let $(P,\leq)$ be a poset. We say that two elements $x$ and $y$ of this poset are
comparable if $x\leq y$ or $y\leq x$. Otherwise they are incomparable.
\end{Definition}
\begin{Definition}
Let $(P,\leq)$ be a poset. \newline
(i) A minimal element of $(P,\leq)$, is an element with the property 
that no other element is smaller than it.
A maximal element of $(P,\leq)$, is an element with the property 
that no other element is greater than it. 
\\
(ii) A chain in $P$ is a set $T$ of elements, where every pair of elements
of $T$ are comparable. The number of elements of $P$ in the longest chain in
$P$ is called the height of $P$ and denoted by $h(P)$. 
\\
(iii) An antichain in $P$ is a set $U$ of elements, no two of which are
comparable. The number of elements of $P$ in the order of the
largest antichain is called the width of $P$, and is denoted by $w(P)$.
\\
(iv) A total order in $P$ is a partial order where every pair of elements
are comparable. 
\end{Definition}
For example, the set of subsets of $X=\{1,2\}$ has an antichain of order
$2$, namely $\{1\}$ and $\{2\}$. A chain of length 3 in it, is $\emptyset
\leq \{1\}\leq \{1,2\}$. This partial order is not a total order
as $\{1\}$ and $\{2\}$ are not comparable. Usually, if we have a partial order on a set,
there will be several ways of extending it to a total order on that
set. Often, we will use $\prec$ rather than $<$ to denote the partial order. 
Further, we present the following definition, that we will come across later.
\begin{Definition}
Let a poset (P, $\leq$). Its comparability graph $G(P)$ is the graph with the poset's 
vertex set, such that the elements are adjacent if and only if they are comparable in
(P, $\leq$). Its incomparability graph $G(\tilde{P})$ is the graph, 
such that the elements are adjacent 
if and only if they are incomparable in (P, $\leq$).
\end{Definition}

Another example of a partial order which usually is not a total order is the
collection of subsets of a fixed set $X$, with the partial order $\leq$
being inclusion, normally denoted as $\subseteq$. It is easy to check that, 
for any $A\subseteq X$, we have
that $A\leq A$ since any set is a subset of itself: if $A\leq B\leq A$
then we indeed have that $A=B$, giving asymmetry: and finally, if
$A\subseteq B\subseteq C$ then of course $A\subseteq C$ and so $\leq$
will be transitive. This is not a total order if $X$ has order at least
2, as $\{x_{1}\}\subseteq X$ and $\{x_{2}\}$ are not comparable for
$x_{1}\neq x_{2}$ members of $X$. However, when we have a partial order on a set $P$ 
there will be at least one
total order on $P$ extending it, and in fact usually there will be several
such:
\begin{Definition}
A linear extension of a partial order $(P,\prec)$ is a total order $<$
on the set $P$ such that whenever $x\prec y$ in the partial order, then
we have $x<y$ in the total order too. The number of linear extensions of a poset $P$ 
is denoted by $e(P)$. 
\end{Definition}

In other words, a linear extension of a partial order is a total order
on the same set which is compatible with the partial order.
This is of course of great relevance to us, as the situation we are in
is that we are given partial information on the true order of the set
of elements and want to know how many more pairwise comparisons we have
to do to work out the true order on it: that is, we are trying to identify
which of the numerous linear extensions of the partial order is the true
order on it, with as few comparisons as possible. 

The number of linear extensions 
of a poset can vary substantially
according to the structure of the poset. For example, trivially, if the
partial order happens already to be a total order there is only 
one extension, namely itself. Equally trivially, if the partial order
contains no comparisons -- i.e. it provides no information whatsoever --
then all $n!$ possible orderings of the $n$ elements of $P$ are 
linear extensions.

Here is a generic lower bound on the number of pairwise comparisons we need
to make in order to find the true order of our data, given a partial ordering
$P$ of it.
\begin{theorem}
Given a partial order $(P,\prec)$ which is partial information about the true 
total order on the underlying set $P$, it takes at least
$ \bigl \lceil\log_{2}\bigl (e(P) \bigr) \bigr \rceil$ pairwise comparisons 
to find the total order.
\end{theorem}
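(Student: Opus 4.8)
The plan is to use a standard information-theoretic (decision-tree) argument. Any algorithm that determines the true total order by a sequence of pairwise comparisons can be modelled as a binary decision tree: each internal node asks ``is $x \prec y$?'' for some pair $x,y$ still incomparable in the information gathered so far, and the two children correspond to the two possible answers. A leaf of this tree corresponds to a state in which the algorithm has enough information to name the total order, so each leaf must be labelled by a single linear extension of $(P,\prec)$.

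The key step is the observation that the map from leaves to linear extensions is surjective: every linear extension $L$ of $(P,\prec)$ is a possible ``true order'' consistent with the given partial information, so when the input is $L$ the algorithm must follow some root-to-leaf path and terminate at a leaf labelled $L$. Hence the number of leaves is at least $e(P)$. A binary tree of height $h$ has at most $2^{h}$ leaves, so $2^{h} \ge e(P)$, giving $h \ge \log_{2}\bigl(e(P)\bigr)$, and since $h$ is an integer, $h \ge \bigl\lceil \log_{2}\bigl(e(P)\bigr) \bigr\rceil$. The height $h$ is precisely the worst-case number of comparisons, and this is what the theorem asserts. I would phrase this carefully as a worst-case bound (there exists an input forcing that many comparisons), which is the standard meaning here.

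The only real subtlety — the part worth spelling out rather than waving at — is why two distinct linear extensions cannot share a leaf. If leaf $\ell$ were reached both when the true order is $L_1$ and when it is $L_2$ with $L_1 \neq L_2$, then the comparisons made along the path to $\ell$ returned the same answers in both cases, yet there is some pair $\{x,y\}$ ordered one way in $L_1$ and the other way in $L_2$; the algorithm at $\ell$ has not queried $\{x,y\}$ (directly or by transitive closure of its answers), so it cannot have ``found the total order'' — it would output the same order in both scenarios, contradicting correctness on at least one of them. This forces the labelling to be well-defined and the leaf count to be at least $e(P)$.

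I expect no genuine obstacle beyond stating the decision-tree model cleanly; the proof is essentially the classical $\log_{2}(n!)$ sorting lower bound specialised from the antichain case ($e(P)=n!$) to a general poset. One could alternatively present it via an adversary argument — the adversary maintains the set of linear extensions still consistent with the answers given, always answers so as to retain at least half of them, and thus survives at least $\lceil \log_{2}(e(P)) \rceil$ rounds — but the decision-tree phrasing is shorter and matches the ``information-theoretic lower bound'' language already used earlier in the thesis.
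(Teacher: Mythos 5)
Your argument is correct, but it takes the decision-tree (leaf-counting) route, whereas the thesis's own proof is exactly the adversary halving argument that you mention at the end as an alternative. In the thesis, for a pair $x,y$ still incomparable the linear extensions split into $A$ with $x<y$ and $B$ with $y<x$; since $A+B=e(P)$ we have $\max(A,B)\geq e(P)/2$, so an adversary answering each query to keep the larger side forces the set of consistent linear extensions to shrink by at most a factor of two per comparison, and hence at least $\bigl\lceil\log_{2}\bigl(e(P)\bigr)\bigr\rceil$ comparisons are needed to isolate the unique true order. The two proofs buy the same bound by dual bookkeeping: you count leaves of the comparison tree and invoke that a binary tree of height $h$ has at most $2^{h}$ leaves, while the thesis tracks the ambient set of still-consistent extensions and bounds its decay rate. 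Your decision-tree phrasing has the modest advantage that the one step requiring care --- that two distinct linear extensions cannot land at the same leaf, since the algorithm would then output the same order for both and be wrong on one --- is stated and justified explicitly; the thesis's version leaves the corresponding point (that the adversary can indeed always retain the larger half and that the process must continue until one extension remains) more implicit. Both are standard, correct renderings of the information-theoretic lower bound, and nothing in your proposal is missing or wrong.
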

\begin{proof} 
Recall that linear extension or total order of a poset $(P,\prec)$ 
is a total order compatible 
with the partial one. All elements are comparable, forming a unique chain. 
Now let $x$, $y$ be incomparable, distinct members of poset. In $A$ 
total orders, we will have that
$x<y$ and in $B$ linear extensions, $x>y$. It holds that $A+B=e(P)$ so, 
$$\mathrm{\max} (A, B) \geq \dfrac{e(P)}{2}.$$ 
Thus, after making one comparison, there are at least $\frac{e(P)}{2}$ 
candidates. 
Similarly, of those number of linear extensions, choosing again two elements, 
we have to consider at least $\frac{e(P)}{4}$ total orders. 
Thus, after $r$ comparisons,
we will examine at least $\frac{e(P)}{2^{r}}$ linear extensions.
We want to identify the unique total order among the number $e(P)$ 
of all possible linear extensions. 
Therefore, this fraction becomes equal to unity 
when $r=\bigl \lceil\log_{2}\bigl(e(P)\bigr)\bigr\rceil$
comparisons. 
\end{proof}
\begin{remark}
The quantity $\log_{2}\bigl(e(P) \bigr)$ is the information--theoretic 
lower bound, that we first came across 
in section $3.1$.
\end{remark}

Kislitsyn \cite{kisl} and independently Fredman \cite{fred}, showed that often this 
lower bound is close to the truth.
\begin{theorem}[Fredman \cite{fred}, Kislitsyn \cite{kisl}]
Sorting an array of $n$ keys, which obeying a partial order $P$,
can be achieved in worst case by $\log_{2}\big(e(P)\big)+2n$ comparisons.
\end{theorem}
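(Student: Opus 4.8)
The plan is to prove the Fredman--Kislitsyn bound by an adaptive \emph{merge-insertion} style argument, choosing each comparison to split the remaining candidate set of linear extensions as evenly as possible, but controlling the cost of \emph{finding} such a balanced comparison. The key structural fact, due to Kislitsyn (and rediscovered by Linial, Kahn--Saks etc.), is that in any finite poset $P$ which is not a total order there exist two incomparable elements $x,y$ such that the number $e(P_{x<y})$ of linear extensions of $P$ in which $x<y$ lies strictly between $\tfrac13 e(P)$ and $\tfrac23 e(P)$; I would invoke this as a black box (it is in the cited literature of Kahn--Kim, Fredman, Kislitsyn). Comparing such an $x$ and $y$ and adding the resulting relation to $P$ therefore reduces $e(P)$ by a factor of at least $3/2$. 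If every comparison were guaranteed to be of this balanced type, we would finish in at most $\log_{3/2}\!\big(e(P)\big)$ comparisons, which is already $O\!\big(\log_2 e(P)\big)$ but with a worse constant than claimed; the sharper bound $\log_2 e(P)+2n$ needs a more careful accounting.

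The refinement I would carry out is the classical one: process the elements one at a time, inserting element $a_{i}$ into the already-sorted chain formed by $a_{1},\dots,a_{i-1}$ (restricted to those that are order-related to $a_i$, with the rest handled by the entropy bookkeeping), using binary insertion. Inserting into a sorted list of length $m$ costs $\lceil\log_2(m+1)\rceil$ comparisons, so over all $n$ insertions the total is
\begin{equation*}
\sum_{i=1}^{n}\big\lceil\log_2 i\big\rceil \le \sum_{i=1}^{n}\big(\log_2 i+1\big)=\log_2(n!)+n.
\end{equation*}
The point is that the partial order $P$ lets us skip many of these comparisons: each relation already present in $P$ is a comparison we do not have to perform, and more precisely the number of \emph{genuinely needed} comparisons when we binary-insert $a_i$ is $\lceil\log_2(c_i+1)\rceil$ where $c_i$ counts only the positions in the current chain that are consistent with $P$. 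A counting of $\prod_i (c_i+1)$ against $e(P)$ — exactly the telescoping product that expresses a linear extension as a sequence of insertion choices — gives $\prod_i(c_i+1)\ge e(P)$ is false in general, so instead one argues the other way: the binary-insertion strategy, when it makes a suboptimal (non-balanced) split, still never wastes more than a bounded amount, and summing the slack over the $n$ insertions contributes the additive $2n$. Concretely I would show the number of comparisons is at most $\sum_i \lceil \log_2(1+c_i)\rceil$ and then bound $\sum_i \log_2(1+c_i)\le \log_2 e(P)+n$ using concavity/the product bound, the extra $n$ coming from the ceilings, for a total of $\log_2 e(P)+2n$.

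The main obstacle is making the bridge between the ``balanced comparison exists'' statement and the additive rather than multiplicative error term rigorous: the $3/2$-balance lemma alone only yields $c\cdot\log_2 e(P)$. The honest route is to not use the balance lemma at all for the leading term, but to set up the insertion process so that the leading term is literally $\log_2$ of the product of the branching factors, which telescopes to something comparable to $e(P)$ up to the product-versus-sum discrepancy; I would need the inequality that if $a_i$ can go into $c_i+1$ distinct gaps consistent with $P$ then $\prod_i (c_i+1)\le e(P)\cdot 2^{\,?}$, and I expect the clean statement is $\prod_{i=1}^{n}(c_i+1)\le 2^{n} e(P)$, whose proof is a union bound / injection argument on linear extensions. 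Granting that, $\sum_i\log_2(c_i+1)\le n+\log_2 e(P)$, and adding the $\le n$ from rounding up each term gives the claimed $\log_2 e(P)+2n$. I would flag the injection/counting step as the part requiring genuine care, and cite Fredman \cite{fred} and Kislitsyn \cite{kisl} for the original execution of this accounting.
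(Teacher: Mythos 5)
The paper itself states this theorem without proof, attributing it to Fredman and Kislitsyn and using it as a black box, so there is no in-paper argument to match your outline against. Evaluating your sketch on its own terms, it has a genuine gap, precisely at the step you flag as ``requiring genuine care'': the counting inequality $\prod_{i=1}^{n}(c_i+1)\le 2^{n} e(P)$ is false in general. Take $P$ to be the disjoint union of two $(n/2)$-chains, so that $e(P)=\binom{n}{n/2}$, and insert the elements of the first chain (in chain order) before any of the second. Each of the first $n/2$ insertions has $c_i+1=1$; but when you come to insert $b_1\prec b_2\prec\cdots\prec b_{n/2}$ from the second chain, $b_1$ is incomparable to every element already placed and has $n/2+1$ admissible slots, and along the branch where $b_1$ lands at the bottom, $b_2$ again has $n/2+1$ admissible slots, and so on. Along that branch $\prod_i(c_i+1)=(n/2+1)^{n/2}$, and since $(n/2)\log_2(n/2)$ eventually exceeds $2n$, this beats $2^{n}\binom{n}{n/2}$ once $n$ is moderately large. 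The only inequality of this shape that comes for free is $\prod_i(c_i+1)\ge e(P)$ (each linear extension gives a distinct leaf of the insertion decision tree), which is exactly the wrong direction, and no bound of the form $\prod_i(c_i+1)\le C^{n}e(P)$ holds uniformly over branches and insertion orders.

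What actually makes the $\log_2 e(P)+2n$ bound work, and what your outline omits, is that the binary insertion must be \emph{weighted}. At the $i$-th insertion one assigns to each candidate gap $p$ the weight $w_p$ equal to the number of linear extensions of $P$ consistent with the outcomes so far and with placing $a_i$ in gap $p$, and one performs an alphabetic (order-preserving) weighted binary search on these gaps. A Gilbert--Moore/Kislitsyn-type lemma guarantees an alphabetic search tree in which gap $p^{*}$ is reached after at most $\log_2\bigl(\sum_p w_p / w_{p^{*}}\bigr)+2$ comparisons. Writing $N_{i-1}$ and $N_i$ for the number of linear extensions still consistent before and after the $i$-th insertion, this cost is $\le\log_2(N_{i-1}/N_i)+2$; the logarithmic terms telescope over $i=1,\dots,n$ to $\log_2(N_0/N_n)=\log_2 e(P)$, and the per-step $+2$ gives the additive $2n$. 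Uniform binary insertion, which splits by the number of gaps rather than by their total weight, does not satisfy this per-step bound, which is why the purely combinatorial product inequality you hoped for is unavailable; the weighting is the idea you would need to import to make the argument go through.
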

\begin{Corollary}
If $P$ is a poset for which $\log_{2}\bigl(e(P)\bigr)$ grows faster than $n$,
we have that the number of comparisons of finding the true total order,
in worst case, is
$\log_{2}\bigl(e(P)\bigr)\bigl(1+o(1)\bigr)$.
\end{Corollary}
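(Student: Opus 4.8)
The plan is to combine the Fredman–Kislitsyn upper bound of Theorem 5.2.8 with the information–theoretic lower bound of Theorem 5.2.5, and then show that the two bounds are asymptotically equal under the hypothesis. First I would record the two inequalities: by Theorem 5.2.5, any sorting procedure needs at least $\lceil \log_{2}(e(P)) \rceil \geq \log_{2}(e(P))$ comparisons in the worst case, while by Theorem 5.2.8 there is a procedure using at most $\log_{2}(e(P)) + 2n$ comparisons in the worst case. So, writing $W(P)$ for the worst-case number of comparisons needed to find the true total order,
\begin{equation*}
\log_{2}\bigl(e(P)\bigr) \leq W(P) \leq \log_{2}\bigl(e(P)\bigr) + 2n.
\end{equation*}

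Next I would invoke the hypothesis that $\log_{2}(e(P))$ grows faster than $n$, i.e. $n = o\bigl(\log_{2}(e(P))\bigr)$, equivalently $n / \log_{2}(e(P)) \to 0$. Dividing the chain of inequalities above by $\log_{2}(e(P))$ gives
\begin{equation*}
1 \leq \frac{W(P)}{\log_{2}\bigl(e(P)\bigr)} \leq 1 + \frac{2n}{\log_{2}\bigl(e(P)\bigr)} = 1 + o(1),
\end{equation*}
so that $W(P) = \log_{2}(e(P))\bigl(1 + o(1)\bigr)$, which is exactly the claimed statement. (If one prefers to keep the ceiling from Theorem 5.2.5, the same argument works since $\lceil \log_{2}(e(P)) \rceil = \log_{2}(e(P)) + O(1)$ and the $O(1)$ is absorbed into the $o(1)$ term, because $\log_{2}(e(P)) \to \infty$ under the hypothesis.)

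There is essentially no obstacle here — the Corollary is a direct squeezing of $W(P)$ between two bounds that have already been established, and the only thing that needs care is making explicit the meaning of "grows faster than $n$" as $n = o(\log_{2} e(P))$ and observing that the additive $2n$ term becomes negligible after normalising by $\log_{2}(e(P))$. The one conceptual point worth stating clearly is that this requires us to be working along a family of posets $P = P_{n}$ indexed by their size $n$, so that the asymptotic notation $o(1)$ is meaningful; the phrase "grows faster than $n$" implicitly presupposes exactly such a family. No new calculation or combinatorial input beyond Theorems 5.2.5 and 5.2.8 is needed.
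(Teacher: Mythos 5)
Your proof is correct and follows essentially the same route as the paper: bound the worst-case comparison count below by $\log_{2}\bigl(e(P)\bigr)$ via the information-theoretic lower bound and above by $\log_{2}\bigl(e(P)\bigr)+2n$ via the Fredman--Kislitsyn theorem, then observe that the additive $2n$ term is $o\bigl(\log_{2}(e(P))\bigr)$ under the stated hypothesis. Your remarks about the ceiling and the implicit indexed family of posets are a fair bit more explicit than the paper's one-line argument, but they add clarity rather than change the approach.
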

\begin{proof} 
We have that this quantity is bounded below by
$\log_{2}\bigl(e(P)\bigr)$ and above by $\log_{2}\bigl(e(P)\bigr)+2n$ which is
$\log_{2}\bigl(e(P)\bigr)\bigl(1+o(1)\bigr)$ by the assumption in the statement of the corollary.
\end{proof}
Kahn and Kim \cite{kk} gave an algorithm for actually doing the finding
of the true total order, which uses at most $54.45\log_{2}\bigl(e(P)\bigr)$ comparisons.
This has recently been reproved by Cardinal {\em et al.}
\cite{Cardinal} whose proof manages to avoid Kahn and Kim's use of the ellipsoid method, a technique 
which though it is in theory polynomial-time, is difficult to do in practice.
What all this makes clear is that, in considering how much information
we can deduce from a random partial order, we will need to know about
the logarithm of the number of linear extensions the partial order
typically has. 

Often a useful notion in studying posets is the theory of levels, \cite{bri}, \cite{kr}.
\begin{Definition}
If $(P,\leq)$ is a poset, we define
$$L_{1}=\{x\in P:\,\not\exists y\in P,\,y\leq x\,\wedge \,y\neq x\}$$
the set of minimal elements of our poset to be the first level of
our poset. The next level is the set
of minimal elements in $\displaystyle P \setminus L_{1}$: each of these will have (at least
one) element of $L_{1}$ below it. We then continue by induction, defining
$L_{i}$ to be the level of minimal elements of
$P \setminus \left (\displaystyle \cup_{j=1}^{i-1}L_{j} \right )$.
\end{Definition}
Note that every level of a poset is an antichain: for two minimal elements
in a poset cannot be comparable with each other. Moreover, every time you go up in a chain, 
you go up to a higher level.
Thus the height of the poset will be the number of levels. 
\begin{Definition}
The linear sum of two posets $(P_{1},\prec_{1})$ and $(P_{2},\prec_{2})$
is a poset with vertex set the disjoint union of $P_{1}$ and $P_{2}$
and with $x\prec y$ if and only if:
\\
(i) if $x$ and $y$ are in $P_{1}$ and $x\prec_{1} y$:
\\
(ii) if $x$ and $y$ are in $P_{2}$ and $x\prec_{2} y$:
\\
(iii) if $x \in P_{1}$ and $y \in P_{2}$, then automatically we have $x \prec y$.
\end{Definition}

A useful definition for us will be the following \cite{neg}.
\begin{Definition} {\ \\}
(i) Suppose $P$ is a partially ordered set, and $L$ is a particular total
order on the same set which agrees with $P$ on every pair of elements
which are comparable in $P$. (In
other words, $L$ is one of the linear extensions of our partial order).
Then a setup is a pair $x,y$ of elements which are incomparable in $P$ but
are consecutive in $L$.
\\
(ii) The number of setups $S(P,L)$ which must be made comparable to
obtain the linear extension $L$ is denoted $s(P,L)$.
\\
(iii) The setup number of $P$ is the minimum, over all the linear
extensions $L$ of $P$, of $s(P,L)$ and it is denoted $s(P)$. 
\end{Definition}
So the point is that, if information is given in the partial order $P$
and using pairwise comparisons to obtain the rest of the order,
$s(P)$ is a lower bound on the number of comparisons which have to be
done to find the true order.

The setup number is also known as the jump number. The following Lemma is used
for the derivation of a simple lower bound on the setup number of a poset.
\begin{Lemma}[Dilworth \cite{dilworth}]
The minimum number of chains into which a poset $P$ can be partitioned
is the width $w(P)$.
\end{Lemma}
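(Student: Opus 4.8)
The plan is to prove the two inequalities separately. For the easy direction I would first observe that if $A$ is any antichain and $P = C_1 \cup \cdots \cup C_m$ is any partition of $P$ into chains, then each $C_i$ meets $A$ in at most one point (two elements lying in a common chain are comparable, whereas two elements of an antichain are not), so $|A| \le m$; taking $A$ of maximum size gives $w(P) \le m$ for every chain partition, so $w(P)$ is a lower bound on the minimum number of chains.

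For the reverse inequality — that $P$ admits a partition into exactly $w := w(P)$ chains — I would argue by induction on $|P|$, the cases $|P| \le 1$ being trivial. Choose a \emph{maximal} chain $C$ in $P$, so that its bottom element is minimal in $P$ and its top element is maximal in $P$. If the width of $P \setminus C$ is at most $w-1$, the induction hypothesis covers $P\setminus C$ by $w-1$ chains and adjoining $C$ finishes the argument. Otherwise $P \setminus C$ contains an antichain $A = \{a_1, \dots, a_w\}$ of size $w$, which is then also a maximum antichain of $P$. Set $A^{-} = \{z \in P : z \le a_i \text{ for some } i\}$ and $A^{+} = \{z \in P : z \ge a_i \text{ for some } i\}$. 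The structural facts I would verify are: $A^{-} \cup A^{+} = P$ (an element comparable to no $a_i$ would extend $A$ to an antichain of size $w+1$); $A^{-} \cap A^{+} = A$ (if $a_j \le z \le a_i$ then $i=j$ and $z = a_i$ since $A$ is an antichain); and both $A^{-}$ and $A^{+}$ are proper subsets of $P$ (the maximal top element of $C$ lies outside $A^{-}$, and the minimal bottom element of $C$ lies outside $A^{+}$, because $C$ and $A$ are disjoint). Since $A$ is an antichain of size $w$ inside each of $A^{-}$ and $A^{+}$, both of these posets have width exactly $w$, so by induction each splits into $w$ chains. In the partition of $A^{-}$ each chain contains exactly one $a_i$, and that $a_i$ must be the top of its chain; symmetrically, in the partition of $A^{+}$ each chain contains exactly one $a_i$, as its bottom. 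Matching up the chains through the shared vertices $a_i$ and taking unions gives $w$ chains covering $A^{-}\cup A^{+} = P$, and a short disjointness check (any element in two of them would lie in $A^{-}\cap A^{+} = A$) shows they form a partition.

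The membership verifications and the disjointness bookkeeping are routine; the one genuine subtlety — and the step I expect to be the crux — is the dichotomy opening the inductive step: we must remove a chain $C$ so that either the width drops or else a maximum antichain $A$ avoiding $C$ splits $P$ cleanly into the \emph{proper} pieces $A^{-}$ and $A^{+}$. Ensuring $A^{-}$ and $A^{+}$ are proper is exactly what forces $C$ to be a maximal chain rather than merely one of maximum length, and this is where a careless argument would fail. (An alternative route, which I would keep in reserve, deduces the statement from König's theorem on maximum bipartite matchings applied to a bipartite graph built from two copies of $P$ joined across comparable pairs; but the inductive argument above is self-contained and matches the level of the surrounding material.)
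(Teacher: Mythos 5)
Your proposal gives a complete, correct, self-contained proof, whereas the paper's own ``proof'' of this Lemma is merely a citation to Dilworth's original article with no argument at all. So the comparison is not between two proofs but between your proof and an appeal to the literature.

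The argument you sketch is the classical inductive proof (in the style of Galvin's short 1994 proof in the \emph{American Mathematical Monthly}): the easy direction that $w(P)$ lower-bounds any chain partition; then induction on $|P|$, removing a \emph{maximal} chain $C$ and splitting on whether $w(P\setminus C)<w$ or $P\setminus C$ still contains a $w$-element antichain $A$. Your bookkeeping is sound. In the second case, the facts $A^-\cup A^+=P$, $A^-\cap A^+=A$, and properness of $A^-$ and $A^+$ are verified correctly; the last one is the genuinely delicate point, and you identify exactly why maximality (not merely maximum length) of $C$ is needed: the top of $C$ is maximal in $P$ and disjoint from $A$, hence is not $\le$ any $a_i$, so $A^-\ne P$, and symmetrically $A^+\ne P$. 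The pigeonhole argument placing one $a_i$ at the top of each chain of the $A^-$-partition (and at the bottom of each chain of the $A^+$-partition), followed by gluing along $A$, is exactly right, and your disjointness check via $A^-\cap A^+=A$ closes the loop. One remark for polish: when you assume $w(P\setminus C)\le w-1$ and adjoin $C$, you obtain a partition into \emph{at most} $w$ chains; combined with the lower bound this pins down the minimum as $w$, which is what the Lemma asserts, so no further padding is needed. Altogether this is a tighter contribution than what the thesis offers, since it makes the chapter self-contained rather than deferring to an external reference.
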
 
\begin{proof} 
This is standard and there are several proofs, we refer to
\cite{dilworth}.
\end{proof}

\begin{theorem}
For any poset $P$, $s(P)\geq w(P)-1$. 
\end{theorem}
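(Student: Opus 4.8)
The plan is to exploit the structure of a linear extension viewed as a sequence of consecutive pairs, and then invoke Dilworth's theorem (Lemma~5.2.13). Fix any linear extension $L$ of $P$, say $L$ orders the ground set as $x_{1}<x_{2}<\cdots<x_{n}$. Each of the $n-1$ consecutive pairs $(x_{i},x_{i+1})$ is of exactly one of two types: either $x_{i}$ and $x_{i+1}$ are comparable in $P$, in which case --- since $L$ extends $P$ and $x_{i}<x_{i+1}$ in $L$ --- we must have $x_{i}\prec x_{i+1}$ in $P$; or they are incomparable in $P$, which is precisely a setup. Let $s=s(P,L)$ be the number of setups.

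First I would observe that the setups split the sequence $x_{1},\ldots,x_{n}$ into $s+1$ maximal blocks of consecutive elements, each block being a run $x_{j}\prec x_{j+1}\prec\cdots\prec x_{j+m}$ in which every consecutive pair is comparable in $P$. By transitivity of $\prec$, the elements of each such block are pairwise comparable, so each block is a chain of $P$. Since the blocks are consecutive intervals of $L$ that cover every element exactly once, they form a partition of $P$ into exactly $s+1$ chains.

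Next I would apply Dilworth's theorem in the form already quoted (Lemma~5.2.13): the minimum number of chains into which $P$ can be partitioned equals the width $w(P)$. Our partition into $s+1$ chains therefore satisfies $s+1\ge w(P)$, that is, $s(P,L)\ge w(P)-1$. As this holds for \emph{every} linear extension $L$ of $P$, taking the minimum over all $L$ gives $s(P)=\min_{L}s(P,L)\ge w(P)-1$, which is the claim.

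There is no serious obstacle here; the only point requiring care is the bookkeeping in the second step --- verifying that the number of maximal comparable blocks is exactly one more than the number of setups, and that each block is genuinely a chain (using that $L$ is a linear extension, so comparability in $P$ of a consecutive pair forces the $P$-order to agree with the $L$-order, and then transitivity). Once that is nailed down, Dilworth's theorem closes the argument immediately.
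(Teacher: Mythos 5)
Your proof is correct and, I would say, \emph{more} rigorous than the one in the paper. The paper's proof is a somewhat informal ``best-case'' narrative: it asserts that the best one can hope for is a Dilworth decomposition into $w(P)$ chains that happen to stack neatly, in which case $w(P)-1$ comparisons suffice, and invites the reader to conclude that one cannot do better. That reads more like an intuition for why $s(P)$ can equal $w(P)-1$ than a demonstration that it cannot be smaller. Your argument turns this into a genuine lower bound: you start from an \emph{arbitrary} linear extension $L$, observe that its setups cut $L$ into $s(P,L)+1$ maximal blocks, note (using transitivity plus the fact that $L$ extends $P$) that each block is a chain of $P$, and so obtain a chain partition of size $s(P,L)+1$. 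Dilworth's theorem --- in fact only its easy direction, namely that every chain partition has at least $w(P)$ parts since a maximum antichain meets each chain at most once --- then gives $s(P,L)+1 \ge w(P)$ for every $L$, and minimising over $L$ yields the claim. This direction of the implication (linear extension $\Rightarrow$ chain partition) is exactly what the paper's sketch omits, and it is what makes the bound a theorem rather than a heuristic. One small remark: you cite Lemma~5.2.13, which is the full Dilworth theorem, but your argument only needs the trivial inequality ``number of chains in any partition $\ge$ width''; invoking the full theorem is harmless but heavier than necessary.
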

\begin{proof} 
The best case is to partition the poset into $w(P)$ chains 
$C_{1},C_{2}, \ldots, C_{w(P)}$ and hope that there is some ordering of
these chains (which without loss of generality is the order given)
such that every element in $C_{i}$ is less than the minimum element
of $C_{i+1}$ for each $1\leq i\leq w-1$. Because if this happens,
then the total order is just the direct sum of the $C_{i}$ and
we only have to do $w(P)-1$ comparisons of the maximum element of $C_{i}$
with the minimum element of $C_{i+1}$.
\end{proof}

What we will do for the next while is consider various ways in which
we could have a partial order given to us before we start using Quicksort
to determine the complete order. So we are imagining that a previous
researcher had carried out some of the comparisons and we want to know
how many more comparisons we have to carry out to determine the
total order. We will consider cases where the poset is randomly 
generated. 

\section{Uniform random partial orders}

\begin{Definition}
A uniform random partial order is a partial order selected uniformly
at random from all the partial orders on $S=\{1,2, \ldots, n\}$.
\end{Definition}
This means that all partial orders on $S$ are equally likely to be chosen. 
The basic structural result on such posets is the following, rather
surprising, one. 
\begin{theorem}
[Kleitman and Rothschild \cite{kr}. Alternative proof by Brightwell, Pr\"{o}mel
and Steger \cite{Bps}] Suppose that $\leq$ is a uniform random partial order on
$\{1,2, \ldots, n\}$. Then, {\bf whp.} (henceforth, `{\bf whp.}' stands for `with high probability',
which denotes the fact that as $n \to \infty$, the probability of an event approaches $1$) 
there are three levels: the bottom level has approximately $n/4$ elements in it, 
the middle layer approximately $n/2$
elements and the top layer about $n/4$ elements in it.
\end{theorem}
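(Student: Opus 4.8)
The plan is an asymptotic count of labelled partial orders, in the spirit of Kleitman and Rothschild. Write $P(n)$ for the number of partial orders on the ground set $\{1,\dots,n\}$, assume for convenience $4\mid n$, and let $N_3(n)$ be the number of those with exactly three levels $L_1,L_2,L_3$ of sizes $n/4,n/2,n/4$. Everything follows from two facts: (i) $P(n)=(1+o(1))N_3(n)$, so that almost every poset is balanced and three-levelled; and (ii) within posets of height at most three, the level profile is whp $(\tfrac n4+o(n),\tfrac n2+o(n),\tfrac n4+o(n))$, which also forces the height to be exactly three. I would prove (i) in two halves --- a lower bound together with the ``height $\le 3$'' part of the upper bound, which is routine, and then the genuinely hard ``height $\ge 4$ is rare'' part.

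For the lower bound I would show
$$P(n)\geq N_3(n)=(1-o(1))\binom{n}{n/4,\;n/2,\;n/4}\,2^{n^2/4}.$$
Fix an ordered partition into parts of sizes $n/4,n/2,n/4$; one may put \emph{arbitrary} bipartite comparabilities between $L_1$ and $L_2$ and between $L_2$ and $L_3$ (that is $2^{(n/4)(n/2)}\cdot2^{(n/2)(n/4)}=2^{n^2/4}$ choices), take the transitive closure of their union (which has no directed cycle, hence gives a genuine partial order of height $\le3$), and discard the $o(1)$ fraction of configurations in which some vertex of $L_2$ has no element of $L_1$ below it, or some vertex of $L_3$ has no element of $L_2$ below it --- each such bad event has probability $\le2^{-n/4}$, and there are at most $n$ of them. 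Since $H(\tfrac14,\tfrac12,\tfrac14)=\tfrac32$ this gives $\log_2P(n)\geq n^2/4+\tfrac32n-o(n)$. For the matching upper estimate over height-$\le3$ posets: a poset of height $h$ with level sizes $a_1,\dots,a_h$ is one of at most $\binom{n}{a_1,\dots,a_h}$ partitions times the admissible edge-sets, and for $h\le3$ one has $\sum_{i<h}a_ia_{i+1}=a_2(n-a_2)\le n^2/4$; comparing against $N_3(n)$ forces $|a_2-n/2|=o(n)$ and then $|a_1-n/4|,|a_3-n/4|=o(n)$, while height-two posets total only $(n+1)\binom{n}{n/2}2^{n^2/4}=2^{n^2/4+n+o(n)}$, a factor $2^{-n/2+o(n)}$ below $N_3(n)$. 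One technical point to dispatch here is that for a height-three poset the $L_1$--$L_3$ relations are \emph{not} entirely forced by transitivity (an element of $L_3$ may directly cover an element of $L_1$), so $2^{a_1a_2+a_2a_3}$ is not literally an upper bound; one repairs this by noting that for all but an exponentially small fraction of the pairs of bipartite graphs every $L_1$--$L_3$ pair is already forced --- the probability a fixed such pair is not forced is $(3/4)^{n/2}$ --- and that the exceptional sparse pairs, even carrying an extra factor of at most $2^{n^2/16}$, contribute only $2^{n^2/16+O(n\log n)}=o(2^{n^2/4})$.

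\textbf{The main obstacle} is showing that posets of height at least four are negligible, $P_{\ge4}(n)=o(N_3(n))$. If all four topmost levels have size $\geq\varepsilon n$ then $\sum_ia_ia_{i+1}\le(\tfrac14-\varepsilon^2)n^2$ and such posets are exponentially rare; the delicate case is a height-$\ge4$ poset with one or both extreme levels \emph{thin}, of size $o(n)$ (possibly $O(1)$). Here the crude bound --- fix a four-chain $x_1\prec x_2\prec x_3\prec x_4$ and count the at most fifteen positions each remaining vertex can take relative to it --- is far too weak, giving $2^{n^2/4+\Theta(n)}$ with the wrong linear constant. I would instead use the Kleitman--Rothschild double-counting device: delete the thin bottom level $B$, re-insert its vertices into the bulk to get a three-level poset, and show that this map has small fibres and controlled defect, so $P_{\ge4}(n)\le o(1)\cdot N_3(n)$; alternatively one can run the Brightwell--Pr\"omel--Steger argument, applying a graph-container / transference estimate to the comparability graph to show almost all its edges lie on a three-level skeleton. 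Once almost every poset has height exactly three with the balanced profile, the precise ``$n/4,n/2,n/4$'' concentration is just the estimates already assembled above. (For the downstream needs of this thesis only the statement is used, so in the exposition I would quote the theorem; the above is the route to a self-contained proof.)
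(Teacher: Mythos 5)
The paper itself does not prove this theorem: it is stated and attributed to Kleitman--Rothschild, with the alternative proof attributed to Brightwell, Pr\"{o}mel and Steger, and then used as a black box. So there is no ``paper's own proof'' to compare against, and a comparison can only be made with the cited works.

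Your sketch is sound in outline as far as it goes. The lower-bound construction (place arbitrary $L_{1}$--$L_{2}$ and $L_{2}$--$L_{3}$ bipartite relations, take the transitive closure, discard the $o(1)$ fraction where the intended level structure collapses) is correct, and the bookkeeping --- injectivity of the construction once one restricts to the good configurations, $H(\tfrac14,\tfrac12,\tfrac12,\tfrac14)$ miscounted, I mean $H(\tfrac14,\tfrac12,\tfrac14)=\tfrac32$, the $(3/4)^{n/2}$ probability that a fixed $L_{1}$--$L_{3}$ pair is not already forced by transitivity, the $2^{n^{2}/4+n+o(n)}$ bound on height-two posets --- all check out and reproduce the known $\log_{2}P(n)=n^{2}/4+\tfrac32 n+O(\log n)$ asymptotics from the right side. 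The technical caveat you raise about $L_{1}$--$L_{3}$ covers is real and your repair is the standard one.

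The genuine gap, which you yourself flag, is the statement that posets of height at least four are negligible. As written you do not give an argument for it: ``use the Kleitman--Rothschild double-counting device'' and ``run the Brightwell--Pr\"{o}mel--Steger argument via a graph-container / transference estimate'' are pointers to the literature, not proofs. This is not a minor omission --- it is precisely the content of the Kleitman--Rothschild theorem, and the hard case you identify (a height-$\ge 4$ poset whose extreme levels are thin, so that the quadratic entropy term does not by itself give an exponential penalty) is exactly where the crude position-counting bound fails and where the original papers spend almost all of their effort. So the proposal is a correct high-level plan and a correct reduction of the problem, but it is not a self-contained proof: the one step that is actually deferred is the one that cannot be recovered by the kind of routine estimates you supply for the rest. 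For the purposes of the thesis, citing the theorem, as the paper does, is the appropriate move; if you wanted a genuinely complete proof you would need to carry out the deletion-and-reinsertion double count with explicit control of fibre sizes, or reproduce the Brightwell--Pr\"{o}mel--Steger counting in full.
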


\begin{remark}
This feature of a uniform random partial order -- that it has height only
$3$ -- is surprising to most mathematicians when they hear it, and perhaps
suggests that ``in nature'' posets do not occur uniformly at random
-- some posets are favoured over others.
\end{remark}
Here is the key information on the number of linear extensions.
\begin{theorem}[Brightwell \cite{bri}]
Given any function $\omega(n)$ tending to infinity with $n$ (think of it
as doing so extremely slowly) the number of linear extensions of a random
partial order chosen uniformly at random is {\bf whp.}, between
$$\frac{(n/2)!\bigl((n/4)! \bigr)^{2}}{\omega(n)}\mbox{~and~}(n/2)!
\bigl((n/4)!\bigr)^{2}\omega(n).$$
\end{theorem}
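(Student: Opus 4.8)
The plan is to use the Kleitman--Rothschild structure theorem (Theorem 5.3.4) as the starting point: \textbf{whp.} the random partial order $P$ has exactly three levels $L_1, L_2, L_3$ with $|L_1| = n/4 + o(n)$, $|L_2| = n/2 + o(n)$, $|L_3| = n/4 + o(n)$, where every element of $L_1$ lies below every element of $L_3$, and the comparabilities between $L_1$ and $L_2$ (respectively $L_2$ and $L_3$) are governed by a ``random bipartite'' pattern. The key refinement I would invoke from the Kleitman--Rothschild analysis (or re-derive) is that \textbf{whp.} each element of $L_2$ is comparable to \emph{all but $O(\log n)$} elements of $L_1$ and to all but $O(\log n)$ of $L_3$; symmetrically each element of $L_1$ is below all but $O(\log n)$ elements of $L_2$, and dually for $L_3$. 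This ``almost complete'' bipartite structure is what makes the count of linear extensions essentially that of the three-level poset in which $L_1 < L_2 < L_3$ with \emph{all} cross-comparabilities present.

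The main computation is then the following. A linear extension of $P$ is obtained by interleaving the three antichains subject to the comparability constraints. For the idealised poset $L_1 < L_2 < L_3$ with all cross-pairs comparable, the elements within each $L_i$ are mutually incomparable, so the number of linear extensions is exactly $|L_1|!\,|L_2|!\,|L_3|! = (n/4)!\,(n/2)!\,(n/4)!$, which is the target quantity. For the actual $P$, the $O(\log n)$ missing comparabilities per vertex give extra freedom: an upper bound is obtained by noting that relaxing any comparability can only increase $e(P)$, and then bounding how much the $O(n\log n)$ total missing edges can inflate the count — the crude bound ``multiply by at most $n^{O(n\log n / \text{something})}$'' is far too weak, so instead I would argue more carefully that each element of $L_2$ can be placed in one of at most $n$ positions but is constrained to a window of size $|L_1| + |L_2| + O(\log n)$ around its ``natural'' block, yielding a multiplicative slack that is subexponential in $n$ relative to the main term — in particular absorbed by any $\omega(n)$. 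For the lower bound, $e(P) \geq$ (number of linear extensions respecting the strict three-block order $L_1$, then $L_2$, then $L_3$) $= |L_1|!\,|L_2|!\,|L_3|!$, since any total order of the form $(\text{perm of }L_1)(\text{perm of }L_2)(\text{perm of }L_3)$ is automatically compatible with every comparability of $P$ (all of which go up in level). This already gives $e(P) \geq (n/2)!((n/4)!)^2$ deterministically on the Kleitman--Rothschild event, with no $\omega(n)$ needed on that side — though the theorem as stated allows the slack, so I would keep it.

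The hard part will be the \emph{upper} bound: controlling the inflation of $e(P)$ caused by the small number of missing comparabilities, and showing it is subsumed by $\omega(n)$. The clean way is to show that \textbf{whp.} in \emph{every} linear extension of $P$, the relative order is ``almost blocked'': all of $L_1$ precedes all of $L_3$ (forced, since $L_1 < L_3$ completely), and each $x \in L_1$ precedes all but $O(\log n)$ elements of $L_2$ while each $z \in L_3$ follows all but $O(\log n)$ of $L_2$. Counting the number of ways to merge three sequences subject to these ``almost-total'' constraints between consecutive levels reduces to a product of the three factorials times a correction factor; bounding that correction factor by $2^{O(n \log\log n)}$ or similar, and invoking Stirling to see that $\log\bigl((n/2)!((n/4)!)^2\bigr) = \Theta(n\log n)$, shows the correction is $e^{o(n\log n)}$ and hence beaten by $\omega(n)^{1}$ is not quite enough — so I would in fact need the sharper statement that the correction is $\omega(n)^{O(1)}$, which follows from a more delicate edge-by-edge accounting using that the number of \emph{pairs} of levels involved is bounded and the per-vertex defect is $O(\log n)$. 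This last estimate — pinning the correction factor down to polylogarithmic-in-the-main-term size rather than merely subexponential — is where the real work lies, and it is exactly the content one would extract from Brightwell's original argument; I would cite \cite{bri} for this and present the structural reduction above as the conceptual skeleton.
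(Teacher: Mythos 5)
The paper does not actually prove this theorem; it simply cites Brightwell and stops, so you are attempting something the thesis does not carry out. Your conceptual skeleton --- invoke the Kleitman--Rothschild three-level structure, take $|L_1|!\,|L_2|!\,|L_3|!$ as the ``blocked'' lower bound, then argue the correction factor for the actual poset is small --- is a reasonable framework, but two of the steps as written are not correct, and you acknowledge the third as incomplete.

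First, the structural claim driving your upper bound is false. In a Kleitman--Rothschild poset, a vertex in the middle level is \emph{not} comparable to all but $O(\log n)$ of the bottom level. Conditional on the three-level structure, the bipartite comparability pattern between adjacent levels is roughly a uniform random bipartite graph with edge density near $1/2$, so a typical middle-level element is incomparable to a constant fraction (about half) of the elements one level below, giving $\Theta(n^2)$ incomparable cross-level pairs in total, not $O(n\log n)$. The reason the extension count is nonetheless tied to the factorial product is not ``almost-complete bipartite'' rigidity; it is the random bipartite order phenomenon treated separately in Section~5.4 (Theorem~5.4.2), where one sees that for constant $p$ the number of extensions of a random bipartite order is only a \emph{constant} multiple $\eta(p)^{-1}$ of $|L_1|!\,|L_2|!$, despite the $\Theta(n^2)$ incomparable pairs. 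An argument along your lines would have to import that bipartite estimate at each of the two interfaces rather than claim near-complete comparability.

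Second, the remark that the lower bound $e(P)\geq |L_1|!\,|L_2|!\,|L_3|!$ gives $(n/2)!\bigl((n/4)!\bigr)^{2}/\omega(n)$ ``deterministically, with no $\omega(n)$ needed'' overlooks that Kleitman--Rothschild only puts the level sizes \emph{near} $n/4,n/2,n/4$. If $|L_2|=n/2+\delta$ with $|L_1|+|L_3|=n/2-\delta$, then by Stirling the factorial product is about $2^{\delta}$ times $(n/2)!\bigl((n/4)!\bigr)^{2}$, which swamps any slowly-growing $\omega(n)$ unless one also controls $\delta$ to logarithmic scale; this concentration is a nontrivial part of the argument, not a freebie. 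Combined with your explicit concession that the upper bound ``is where the real work lies'' and would be cited rather than proved, the proposal is best read as a plausible outline built on one incorrect structural premise, with the two genuinely hard estimates (the bipartite correction factor and the level-size concentration) still owed.
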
 
\begin{proof} 
We refer to Brightwell \cite{bri}, page 66. 
\end{proof}
\begin{Corollary}
When $P$ is selected uniformly at random, we have {\bf whp.} 
\begin{align*}
\dfrac{1}{\log_{e}(2)}\big(n\log_{e}(n)-O(n)\big) 
& \leq \log_{2}\big(e(P)\big) \\
& \leq \frac{1}{\log_{e}(2)}\big(n\log_{e}(n)+O(n) \big).
\end{align*}
\end{Corollary}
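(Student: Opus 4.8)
The plan is to derive the Corollary directly from Theorem 5.3.3 (Brightwell's bounds on $e(P)$) by taking logarithms and applying Stirling's formula. First I would take base-$2$ logarithms of the two-sided bound in Theorem 5.3.3: since $\omega(n) = 2^{o(n)}$ — indeed $\log_2(\omega(n)) = o(\log_2(n!)) = o(n\log_e n)$ for any slowly growing $\omega$, and in fact we only need $\log_2 \omega(n) = O(n)$ after choosing $\omega$ appropriately, say $\omega(n) = \log_e(n)$ — the factor $\omega(n)$ contributes only an $O(n)$ (in fact $o(n)$) additive term after taking logs. So \textbf{whp.} we get
\begin{align*}
\log_2(e(P)) &= \log_2\!\left((n/2)!\right) + 2\log_2\!\left((n/4)!\right) + O(n).
\end{align*}

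Next I would estimate each factorial via Stirling's formula, $\log_e(m!) = m\log_e(m) - m + O(\log_e m)$, which is stated (in the form $m! \sim \sqrt{2\pi m}(m/e)^m$) earlier in the excerpt. Applying this with $m = n/2$ gives $\log_e((n/2)!) = (n/2)\log_e(n/2) - n/2 + O(\log_e n) = (n/2)\log_e(n) - (n/2)\log_e 2 - n/2 + O(\log_e n)$, and with $m = n/4$ gives $\log_e((n/4)!) = (n/4)\log_e(n) - (n/4)\log_e 4 - n/4 + O(\log_e n)$. Adding the first to twice the second, the leading terms combine as $(n/2)\log_e n + 2\cdot(n/4)\log_e n = n\log_e n$, and all the remaining contributions — the $-(n/2)\log_e 2$, the $-2(n/4)\log_e 4$, the linear $-n/2 - 2(n/4) = -n$, and the logarithmic error terms — are all $O(n)$. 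Hence $\log_e(e(P)) = n\log_e(n) + O(n)$ \textbf{whp.}, and dividing by $\log_e 2$ to convert to base $2$,
\begin{align*}
\log_2(e(P)) = \frac{1}{\log_e 2}\bigl(n\log_e(n) + O(n)\bigr) \quad \textbf{whp.},
\end{align*}
which, unpacking the $O(n)$ into the two-sided form with possibly different implied constants on each side, is exactly the claimed bound.

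There is no serious obstacle here; the only point requiring a little care is the handling of the $\omega(n)$ factor from Theorem 5.3.3. The statement there is "for any $\omega(n) \to \infty$", so one must pick a valid choice — any $\omega$ growing slowly enough that $\log_2 \omega(n) = O(n)$ works, and $\omega(n) = n$ or even $\omega(n) = e^n$ would already suffice since $\log_2 \omega(n)$ just needs to be absorbed into the $O(n)$ error; a cleaner choice is $\omega(n) = \log_e n$, giving $\log_2 \omega(n) = O(\log_e \log_e n) = o(n)$. One should also note that "\textbf{whp.}" is inherited directly: the event that $e(P)$ lies in Brightwell's interval has probability tending to $1$, and on that event the deterministic Stirling estimates give the stated inequalities, so the conclusion holds \textbf{whp.} as well. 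I would close by remarking (as the paper's Remark 5.2.2 and Corollary 5.2.4 invite) that this shows the information-theoretic lower bound $\log_2(e(P))$ is $\Theta(n\log_e n) = \omega(n)$ for the uniform random poset, so by Corollary 5.2.4 the true order can be found with $\log_2(e(P))(1+o(1))$ comparisons.
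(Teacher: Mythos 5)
Your proof is correct and follows essentially the same route as the paper: take logarithms of Brightwell's two-sided bound (Theorem 5.3.3), expand the factorials via Stirling, and absorb the $\omega(n)$ factor into the $O(n)$ error by choosing $\omega$ slowly growing. The only cosmetic difference is that you spell out the Stirling expansion and the handling of $\omega(n)$ a bit more explicitly than the paper does.
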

In other words, in this situation, the entropy lower bound on the
number of comparisons required is essentially the right answer.
\begin{proof} 
We have by the previous result, taking $\omega(n)$
to go to infinity very slowly, in particular more slowly than $\log_{e}(n)$, that 
\begin{align*}
& \log_{2}\left(\dfrac{(n/2)!\bigl((n/4)! \bigr)^{2}}{\omega(n)} \right)
\leq \log_{2} \bigl(e(P) \bigr) \leq \log_{2}\biggl((n/2)!\bigl((n/4)! \bigr)^{2}\omega(n) \biggr) \\
& \Longrightarrow \frac{1}{\log_{e}(2)}
\log_{e}\left(\frac{(n/2)! \bigl((n/4)! \bigr)^{2}}{\omega(n)}\right)
\leq \log_{2}\bigl(e(P) \bigr)\leq \frac{1}{\log_{e}(2)} \log_{e}\biggl((n/2)!\bigl((n/4)! 
\bigr)^{2}\omega(n) \biggr). \\
& \Longrightarrow \frac{1}{\log_{e}(2)}
\bigl((n/2)\log_{e}(n/2)+2(n/4)\log_{e}(n/4)-(n/2)-2(n/4)+O(\log_{e}(n) \bigr).\\
& \qquad{} \leq \log_{2}\bigl(e(P) \bigr) \\
& \qquad{} \leq \frac{1}{\log_{e}(2)}
\bigl((n/2)\log_{e}(n/2)+2(n/4)\log_{e}(n/4)-(n/2)-2(n/4)+O(\log_{e}(n) \bigr).\\
&\Longrightarrow \frac{1}{\log_{e}(2)}\bigl(n\log_{e}(n)-O(n) \bigr)
\leq \log_{2}\bigl(e(P) \bigr) \leq \frac{1}{\log_{e}(2)}\bigl(n\log_{e}(n)+O(n) \bigr),
\end{align*}
as required. 
\end{proof}

Using this Corollary, a key result follows regarding algorithm's time complexity:
\begin{Corollary}
Given a uniform partial order on a set of $n$ keys, the time taken
to sort them by pairwise comparisons is approximately
\begin{eqnarray*}
\frac{1}{2\log_{e}(2)}\approx 0.72135
\end{eqnarray*}
times the number of comparisons
required by Quicksort to sort them without the partial information.
\end{Corollary}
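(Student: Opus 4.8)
The plan is to compare two asymptotic estimates and take their ratio. On the one hand, the closed form $\mathbb{E}(C_{n})=2(n+1)H_{n}-4n$ together with $H_{n}=\log_{e}(n)+\gamma+o(1)$ gives that plain Quicksort (uniform pivot selection, no prior information) uses
\begin{equation*}
\mathbb{E}(C_{n})=2(n+1)H_{n}-4n=2n\log_{e}(n)\bigl(1+o(1)\bigr)
\end{equation*}
comparisons on average. On the other hand, I must estimate the number of pairwise comparisons needed to recover the true total order when a uniform random partial order $P$ on $\{1,2,\ldots,n\}$ is given in advance.

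For the latter I would first invoke the Corollary deduced above from Brightwell's theorem: for a uniform random $P$ we have, {\bf whp.},
\begin{equation*}
\log_{2}\bigl(e(P)\bigr)=\frac{n\log_{e}(n)}{\log_{e}(2)}\bigl(1+o(1)\bigr),
\end{equation*}
so in particular $\log_{2}(e(P))$ grows faster than $n$ {\bf whp.} Now I sandwich the cost of finishing the sort given $P$: the information--theoretic lower bound says at least $\bigl\lceil\log_{2}(e(P))\bigr\rceil$ comparisons are needed, while the Fredman--Kislitsyn bound says $\log_{2}(e(P))+2n$ comparisons suffice in the worst case. Since $2n=o\bigl(\log_{2}(e(P))\bigr)$ {\bf whp.}, both bounds equal $\log_{2}(e(P))\bigl(1+o(1)\bigr)$ — this is exactly the content of the Corollary stating that the number of comparisons is $\log_{2}(e(P))\bigl(1+o(1)\bigr)$ whenever $\log_{2}(e(P))$ grows faster than $n$. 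Hence, given the uniform partial order, the sort is completed in $\dfrac{n\log_{e}(n)}{\log_{e}(2)}\bigl(1+o(1)\bigr)$ comparisons {\bf whp.}

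Dividing the two estimates then gives
\begin{equation*}
\frac{\frac{n\log_{e}(n)}{\log_{e}(2)}\bigl(1+o(1)\bigr)}{2n\log_{e}(n)\bigl(1+o(1)\bigr)}=\frac{1}{2\log_{e}(2)}\bigl(1+o(1)\bigr)\approx 0.72135,
\end{equation*}
which is the claim. The routine part is the bookkeeping of the $o(1)$ and $O(n)$ error terms. The step deserving care — and the main obstacle — is that the two quantities being compared are of slightly different flavour: an average-case count for Quicksort versus a worst-case/near-optimal count for the partially ordered problem, with the latter holding only {\bf whp.} over the choice of $P$. This is harmless here precisely because the leading terms of the information--theoretic lower bound and of the Fredman--Kislitsyn upper bound coincide, so the cost of the partially ordered sort is pinned down to leading order regardless of which reading of ``time taken to sort'' one adopts; I would simply state explicitly that the conclusion is meant {\bf whp.}
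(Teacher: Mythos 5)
Your overall strategy — take the ratio of the asymptotic number of comparisons needed given the partial order to the asymptotic number of Quicksort comparisons — is the same as the paper's, and the arithmetic is correct. However, you identify the crux (the apples-to-oranges comparison between an expectation for Quicksort and a whp.\ count for the partially-ordered problem) but then wave it away rather than resolve it. Saying that you would ``simply state explicitly that the conclusion is meant whp.'' does not make the Quicksort count a whp.\ statement: knowing $\mathbb{E}(C_{n})=2n\log_{e}(n)\bigl(1+o(1)\bigr)$ does not, by itself, say that the random variable $C_{n}$ is near $2n\log_{e}(n)$ whp.\ — that requires a concentration argument.

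This is precisely the extra step the paper supplies. It recalls that $\operatorname{Var}(C_{n})=(7-2\pi^{2}/3)n^{2}\bigl(1+o(1)\bigr)$ and applies Chebyshev's inequality to show
\begin{equation*}
\mathbb{P}\Bigl(\bigl\vert C_{n}-\mathbb{E}(C_{n})\bigr\vert > n\log_{e}\bigl(\log_{e}(n)\bigr)\Bigr)\to 0,
\end{equation*}
so that whp.\ $C_{n}=2n\log_{e}(n)\bigl(1+o(1)\bigr)$. Only then is the ratio a genuine whp.\ comparison of two random quantities rather than a comparison of a mean with a whp.\ bound. Your treatment of the partial-order side (squeezing between $\log_{2}\bigl(e(P)\bigr)$ and $\log_{2}\bigl(e(P)\bigr)+2n$, and using Brightwell's estimate) matches the paper and is fine; to close the gap you need to add the Chebyshev concentration step for $C_{n}$.
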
 
\begin{proof} 
The expected number of comparisons required by Quicksort for the sorting of $n$ keys
is $2n\log_{e}(n)\bigl(1+o(1) \bigr)$, 
and since
the variance of this is asymptotically $(7-2\pi^{2}/3)n^{2}\bigl(1+o(1) \bigr)$,
we have by Chebyshev's inequality, letting $C_{n}$ be the number of comparisons
\begin{eqnarray*}
\mathbb{P}\bigl ( \big \vert (C_{n}-\mathbb{E}(C_{n})\bigr) \big \vert > n \log_{e}\bigl(\log_{e}(n)\bigr) \bigr)\leq
\frac{(7-2\pi^{2}/3)n^{2} \bigl(1+o(1) \bigr)}{n^{2}(\log_{e}(\log_{e}(n)))^{2}} \\
\Longrightarrow \mathbb{P}\bigl(\big \vert C_{n}-2n\log_{e}(n) \bigl(1+o(1) \bigr) \vert 
>n\log_{e}\bigl(\log_{e}(n) \bigr) \bigr) 
\to 0, \mbox{~as~} n \to \infty.
\end{eqnarray*}
Thus the probability of the complementary event, namely that
$C_{n}$ is within $n\log_{e}\log_{e}(n)$ of its mean will tend to $1$. 
Therefore, {\bf whp.} the number of comparisons is in
\begin{eqnarray*}
\Big(2n\log_{e}(n)\big(1+o(1)\big)-n\log_{e}\big(\log_{e}(n)\big),
2n\log_{e}(n)\big(1+o(1)\big)+n\log_{e}\big(\log_{e}(n)\big)\Big).
\end{eqnarray*}
Thus, {\bf whp.} it takes about
\begin{equation*}
2n\log_{e}(n)\bigl(1+o(1) \bigr)~~\mbox{comparisons}.
\end{equation*}
On the other hand, we have just seen that with a uniform partial order of keys,
it takes about 
\begin{eqnarray*}
\frac{1}{\log_{e}(2)}\cdot n\log_{e}(n) \mbox{~comparisons.}
\end{eqnarray*}
This number is indeed $1/ \bigl(2\log_{e}(2) \bigr)$ times the number Quicksort needs
and the numerical value of this fraction is as stated. 
\end{proof}

It is of interest to compare this with the naive lower bound on setup number,
which performs rather poorly here.  
\begin{Corollary}
The setup number of a uniform random poset is at least 
\begin{equation*}
\frac{n}{2}+O(1). 
\end{equation*}
\end{Corollary}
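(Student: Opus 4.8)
The plan is to read the bound off from the Kleitman--Rothschild structure of a uniform random poset together with the elementary inequality $s(P) \geq w(P)-1$ established above. First I would invoke the structure theorem of Kleitman and Rothschild \cite{kr}: with high probability a uniform random partial order on $\{1,2,\ldots,n\}$ has exactly three levels $L_1, L_2, L_3$, where the middle level $L_2$ has approximately $n/2$ elements, say $|L_2| = \tfrac{n}{2} + o(n)$.

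Next I would use the fact, noted just after the definition of levels, that every level of a poset is an antichain: no two minimal elements of a sub-poset can be comparable, so in particular $L_2$ is an antichain of size $\tfrac{n}{2}+o(n)$ whp. Since the width $w(P)$ is by definition the size of the largest antichain in $P$, this forces $w(P) \geq |L_2| = \tfrac{n}{2} + o(n)$ with high probability. (Dilworth's Lemma \cite{dilworth} is not needed for this direction; it enters only through the proof of the bound $s(P)\geq w(P)-1$.) Applying that bound, $s(P) \geq w(P)-1 \geq \tfrac{n}{2} + o(n)$ whp, which is the claim, the $-1$ being absorbed into the error term.

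The only delicate point — and in that sense the ``obstacle'' — is bookkeeping of error terms rather than any real difficulty: the quoted form of the Kleitman--Rothschild theorem pins $|L_2|$ down only to within $o(n)$, so what is honestly proved is $s(P)\geq \tfrac{n}{2}\bigl(1+o(1)\bigr)$ whp, and one should either state it in that form or appeal to the sharper concentration estimates in the Kleitman--Rothschild / Brightwell--Pr\"omel--Steger analysis to recover a genuine additive $O(1)$. It is worth emphasising, as the surrounding discussion does, that this lower bound is extremely weak here: by the earlier corollary the number of comparisons actually required to sort a uniform random poset is $\tfrac{1}{\log_e(2)} n \log_e(n)\bigl(1+o(1)\bigr)$, so the setup-number bound misses the truth by a logarithmic factor.
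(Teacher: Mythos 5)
Your argument is essentially the same as the paper's: both combine the Kleitman--Rothschild three-level structure, the fact that the middle level is an antichain of size about $n/2$ (hence $w(P)\geq n/2$), and the bound $s(P)\geq w(P)-1$. Your caveat about the error term is well taken -- the quoted form of Kleitman--Rothschild only gives $|L_2|=n/2+o(n)$, so the honest conclusion is $s(P)\geq\tfrac{n}{2}(1+o(1))$ whp rather than the literal $\tfrac{n}{2}+O(1)$, a looseness present in the paper's own proof as well.
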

\begin{proof} 
This is immediate from the lower bound $s(P)\geq w(P)-1$
and the fact that, by the Theorem of Kleitman and Rothschild \cite{kr}, we clearly
have that $w(P)$ is greater than $n/2+O(1)$. Thus in this case the
simplest setup number lower bound is not a very good one, 
as we have seen that the true answer is $O\bigl(n\log(n) \bigr)$ in this model.
\end{proof}

This in turn implies that if we were to use Quicksort, even in the optimal
cases, we would have to compare $n/2+O(1)$ pairs of keys. The expected
time to do this would be asymptotically $2(n/2)\log(n/2)\sim n\log(n)$.

\section{Random bipartite orders}

In this section, we inspect the number of linear extensions for bipartite orders. 
Let present a definition \cite{bri}.
\begin{Definition}
Let $X$ and $Y$ be two disjoint sets, each one having cardinality equal to $n$.
A random bipartite order $A_{p}(X, Y)$ is the poset $X\cup Y$ with both $X$
and $Y$ antichains, and for each pair $(x, y) \in X \times Y$ there is
a relation $x \prec y$ with probability $p$ and no relation (i.e. they
are incomparable) with probability $1-p$, independently of all other pairs.
\end{Definition}

We now think about the number of linear extensions. A linear extension of
such an order will have to, amongst other things, put the set $X$ in
order -- there are $n!$ ways to do this -- and there are similarly $n!$ ways
to order $Y$. However there will also be some choices to make elsewhere,
because while we have some relations $x<y$ for $(x,y) \in X \times Y$ in the
partial order, we will also have some incomparable pairs. 
More precisely, the probability that a total order
on $X$, a total order on $Y$ and a decision rule $\alpha$ 
for each pair $(x,y) \in X\times Y$
on whether $x\prec y$ or $y\prec x$, is a total order compatible with the
partial order -- i.e. a linear extension of the partial order -- is
$(1-p)^{\ell(\alpha)}$, where $\ell(\alpha)$ is the number of reversals
in $\alpha$, that is the number of pairs $(x,y) \in  X \times Y$ such that
$x>y$ in the total order. 

Thus the expected number of linear extensions is
$\displaystyle (n!)^{2}\sum_{\alpha}(1-p)^{\ell(\alpha)}$.
The function multiplying $(n!)^{2}$ here is discussed at length in
\cite{bri}: it is defined
$$\eta(p):=\prod_{i=1}^{\infty}\bigl(1-(1-p)^{i}\bigr).$$
Though some more work needs to be done to check this, it turns out that
{\bf whp.} the number of linear extensions is close to this mean value.
We quote the result from Brightwell \cite{bri}.
\begin{theorem}[Brightwell \cite{bri}]
The number of linear extensions $e(P)$ of a random bipartite partial order
$A_{p}(X,Y)$ with $\vert X \vert=\vert Y \vert=n$ and probability $p$ 
such that $\displaystyle \lim_{n \to \infty}\cfrac{p(n)
\cdot n^{1/7}}{\bigl(\log_{2}(n) \bigr)^{4/7}}=\infty$, satisfies {\bf whp.},
$$e(P)=(n!)^{2}\cdot \frac{1}{\eta(p)} \cdot \bigl(1+o(1) \bigr).$$
\end{theorem}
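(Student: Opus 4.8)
The plan is to reduce the result to a concentration statement about the random variable $\ell(\alpha_{\mathrm{opt}})$, where for each realisation of the bipartite order we choose the ``best'' way of linearly ordering $X$, $Y$ and resolving the incomparable pairs. First I would set up the exact identity: conditional on the random partial order $A_p(X,Y)$, the number of linear extensions is
\begin{equation*}
e(P)=(n!)^2\sum_{\alpha}(1-p)^{\ell_P(\alpha)},
\end{equation*}
where $\alpha$ ranges over decision rules assigning, for each incomparable pair $(x,y)$, an orientation, and $\ell_P(\alpha)$ counts the reversals relative to a fixed reference order; pairs that are already comparable in $P$ contribute a forced orientation. Taking expectations over the randomness of $P$ and using independence of the pairs, the expected value of the inner sum factorises and, after the standard rearrangement in Brightwell, collapses to $1/\eta(p)$ in the limit (the finite truncation differs from the infinite product by $o(1)$ under the stated growth condition on $p(n)$). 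So the mean is $(n!)^2\,\eta(p)^{-1}(1+o(1))$.

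**Key steps.** The second, and main, step is to show the random quantity is concentrated around its mean. I would write $e(P)=(n!)^2 W$ with $W=\sum_\alpha (1-p)^{\ell_P(\alpha)}$, compute $\mathbb{E}(W)$ and $\mathbb{E}(W^2)$, and bound the variance. For the second moment one considers two independent decision rules $\alpha,\beta$ and the quantity $\mathbb{E}[(1-p)^{\ell_P(\alpha)+\ell_P(\beta)}]$; because the partial order's pair-relations are independent, this expectation depends only on the \emph{joint} combinatorial type of $(\alpha,\beta)$, and the dominant contribution comes from $\alpha,\beta$ that agree on all but $o(n)$ pairs. Showing $\mathbb{E}(W^2)=(1+o(1))\mathbb{E}(W)^2$ then gives, by Chebyshev's inequality, $W=\mathbb{E}(W)(1+o(1))$ \textbf{whp.} Alternatively, and perhaps more robustly, one can expose the $n^2$ independent pair-indicators one at a time and apply a bounded-differences (Azuma--Hoeffding) argument to $\log e(P)$: changing one pair-relation multiplies $e(P)$ by a factor in $[(1-p),\,1/(1-p)]$ at worst, so $\log e(P)$ has bounded martingale differences summing to $O(pn^2)$ in square, which after dividing by $\log(n!)^2\sim 2n\log n$ yields relative concentration precisely in the regime where $p(n)n^{1/7}/(\log_2 n)^{4/7}\to\infty$ (this is exactly where the deviation term is $o$ of the main term).

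**The obstacle.** The hard part will be the variance/concentration estimate, specifically controlling the ``intermediate'' decision rules $\alpha$ that are neither close to the optimum nor negligible: one must argue that the number of $\alpha$ with $\ell_P(\alpha)$ in a given range, weighted by $(1-p)^{\ell_P(\alpha)}$, is sharply peaked, which requires understanding the typical structure of the incomparability graph of $A_p(X,Y)$ (degrees concentrate around $(1-p)n$, and the graph is ``quasirandom'' enough that the entropy of orientations behaves predictably). This is where the precise hypothesis on $p(n)$ enters — it guarantees the incomparability graph is dense enough that fluctuations in $\ell_P(\alpha_{\mathrm{opt}})$ are $o(n)$, hence $o(1)$ after taking logarithms relative to the leading $2n\log n$ term. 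Once concentration is in hand, combining it with the mean computation gives $e(P)=(n!)^2\eta(p)^{-1}(1+o(1))$ \textbf{whp.}, as claimed; I would simply cite Brightwell \cite{bri} for the detailed verification of the structural lemmas and present the argument above as the skeleton.
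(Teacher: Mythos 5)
The paper does not prove this theorem at all: it explicitly says ``We quote the result from Brightwell \cite{bri}'' and cites it. So what you have written is not an alternative to the paper's proof but a sketch of a proof from scratch, and unfortunately it has two genuine gaps.

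First, the identity at the start is wrong as stated. For a fixed realisation of the partial order $P$, the quantity $e(P)$ is a deterministic integer and cannot equal the expression $(n!)^2\sum_\alpha(1-p)^{\ell_P(\alpha)}$, which involves the model parameter $p$. What is true is that $\mathbb{E}\bigl(e(P)\bigr)=(n!)^2\sum_\alpha(1-p)^{\ell(\alpha)}$, where $\alpha$ runs over decision rules on \emph{all} pairs $(x,y)\in X\times Y$ (not just the pairs that happen to be incomparable in the realised $P$) and $\ell(\alpha)$ is the number of reversals; this follows by linearity of expectation, since a decision rule survives all the independent edge coin-flips with probability $(1-p)^{\ell(\alpha)}$. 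Your ``conditional'' version conflates the expectation computation with the pointwise count, and your ``forced orientation'' caveat cannot rescue it because the number of decision rules then depends on $P$ itself.

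Second, the concentration step does not deliver what the theorem asserts. The claim $e(P)=(n!)^2\eta(p)^{-1}(1+o(1))$ \textbf{whp.} requires $\log e(P)$ to be within $o(1)$ of its typical value, which is a much stronger statement than either of the routes you describe can give. Your bounded-differences argument exposes the $n^2$ pair-indicators one at a time; even if the per-pair Lipschitz constant for $\log e(P)$ were genuinely $\log\tfrac{1}{1-p}$ (it is not --- toggling one relation $x\prec y$ changes $e(P)$ by a ratio $e(P_{x<y})/e(P)$ that depends on the whole structure of $P$ and is not uniformly bounded by $1-p$), Azuma--Hoeffding would only control deviations of $\log e(P)$ of order roughly $n\cdot p^{1/2}$, i.e.\ $\Omega(n)$ in the regime of the theorem, which yields at best $e(P)=(n!)^2\eta(p)^{-1}\cdot e^{O(n)}$ --- a bound that is off by an exponential factor, not $(1+o(1))$. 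The second-moment route is closer in spirit to Brightwell's actual argument, but the hard part --- showing $\mathbb{E}(W^2)=(1+o(1))\,\mathbb{E}(W)^2$ by controlling the correlation between $(1-p)^{\ell_P(\alpha)}$ and $(1-p)^{\ell_P(\beta)}$ over pairs of decision rules --- is precisely where the peculiar exponents $n^{1/7}/(\log_2 n)^{4/7}$ in the hypothesis enter, and you have offered no argument for it. As written, the hypothesis on $p(n)$ is simply invoked at the end as though it automatically makes the fluctuations small, rather than being derived from a quantitative estimate; that is exactly the content that needs to be supplied.
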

In particular this applies when $0<p<1$ is a constant. More precisely,
we have that {\bf whp.}
$$(n!)^{2}\cdot \eta(p)^{-1} \cdot\left(1-\frac{c\log_{2}^{3}(n)}{n}\right)\leq e(P)
\leq (n!)^{2}\cdot \eta(p)^{-1}\cdot \left(1+\frac{c\log_{2}^{3}(n)}{n}\right).$$ 
Thus, for $p$ constant, letting $C=\log_{2}\bigl(\eta(p)\bigr)$ and noting
that both logarithms approach $1$ as $n \rightarrow \infty$, we have
that {\bf whp.}
\begin{eqnarray*}
2\log_{2}(n!)-C+o(1) \leq \log_{2}\bigl(e(P)\bigr) \leq 2\log_{2}(n!)- C+o(1).
\end{eqnarray*}
(Of course the two $o(1)$ terms are different). Thus the order of
magnitude of $\log_{2}\bigl(e(P)\bigr)$ is {\bf whp.}
\begin{eqnarray*}
2\log_{2}(n!)=\frac{2\cdot \log_{e}(n!)}{\log_{e}(2)}
=\frac{2n\log_{e}(n)(1+o(1))}{\log_{e}(2)}.
\end{eqnarray*}

We need to be careful about comparing this example with Quicksort: we 
must remember that the total number of keys being
sorted in this example is $n+n=2n$. Therefore, the expected time would 
be $4n\log_{e}(2n)\bigl(1+o(1)\bigr)=4n\log_{e}(n)\bigl(1+o(1)\bigr)$. 
Thus the factor
by which we are quicker here is again $2\log_{e}(2)$. 
In other words, we get the same speed-up as for 
the uniform and bipartite cases. We now move 
forward to the analysis of random $k$-dimensional orders in Quicksort.

\section{Random k--dimensional orders}

Here, the application of Quicksort in a random k--dimensional order 
is considered. For this purpose,
a definition follows:
\begin{Definition}
A random $k$-dimensional partial order on the set 
$P=\{1,2, \ldots, n\}$ 
is defined as
follows. We select $k$ total orders on $\{1,2, \ldots, n\}$ uniformly at random
from all $n!$ total orders on that set, say we chose
$\leq_{1},\leq_{2},\ldots, \leq_{k}$. We then define the partial order
$\prec$ by
$$x\preceq y\Leftrightarrow x\leq_{i} y\mbox{~for~all~}1\leq i\leq k.$$
\end{Definition}
Of course it is highly likely that some of the total orders will be
inconsistent with each other, and so we will only get a partial order.
We now have to change perspective: we assume that the partial order
which results from these $k$ total orders is given to us as partial
information about the order on the set $P$, and that we have to use
pairwise comparisons to find the true total order on $P$. Therefore, the
total order we are looking for, and the $k$ total orders we used to
define the partial order, may have little to do with each other.

We aim to estimate time complexity of finding the true order when the
partial information given is a random $k$-dimensional order. 
Again, we use the Theorem $5.2.8$
coupled with the information lower bound.
Thus we need to know about the number of linear extensions of a random
$k$-dimensional order. An important result follows,
\begin{theorem}[Brightwell \cite{bright}]
The number $e(P)$ of linear extensions of a random $k$-dimensional 
partial order $P$ 
with $\vert P\vert=n$ satisfies {\bf whp.}
$$\left (e^{-2}n^{1-1/k} \right)^{n}\leq e(P)\leq \left(2kn^{1-1/k} \right)^{n}.$$
\end{theorem}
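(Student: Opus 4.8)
The plan is to estimate $e(P)$ from both sides using the fact that a random $k$-dimensional order $P$ is built from $k$ independently chosen uniform total orders $\le_1,\dots,\le_k$ on $[n]$. The key observation is a counting bijection: a linear extension $L$ of $P$ is precisely a total order on $[n]$ that is consistent with $P$, and since $x \prec y$ in $P$ iff $x \le_i y$ for all $i$, every one of the $k$ orders $\le_i$ is itself a linear extension of $P$. More usefully, for the lower bound I would exhibit many linear extensions directly, and for the upper bound I would bound the number of total orders compatible with just the intersection structure of the $\le_i$.

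For the \textbf{upper bound} $e(P) \le (2kn^{1-1/k})^n$, the idea is that a linear extension $L$ is determined by inserting the elements one at a time in $L$-order, and at each step the number of available ``slots'' is controlled by the poset structure. The cleanest route is: fix the total order $L$; count the linear extensions by the standard entropy/product bound $e(P) \le \prod_{x} (\text{down-set or incomparability data})$, or more concretely use that $e(P)$ is at most the number of ways to linearly order each of the width-$w(P)$ chains times a chain-merging factor. The crucial input is a \textbf{whp.\ bound on the width} $w(P)$ of a random $k$-dimensional order, namely $w(P) = O(n^{1-1/k})$ whp.; this is the combinatorial heart of the upper bound and follows from the fact that an antichain in $P$ is a set of points pairwise incomparable in at least one coordinate, and a Dilworth/Erd\H{o}s--Szekeres-type argument in $k$ dimensions forces antichains to have size $O(n^{1-1/k})$. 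Given this, partitioning $P$ into $w(P)$ chains (Lemma~5.2.20, Dilworth) and bounding the number of linear extensions by the number of interleavings of $w$ chains of total size $n$, which is at most $w^n$, yields $e(P) \le w(P)^n \le (2kn^{1-1/k})^n$ after absorbing constants.

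For the \textbf{lower bound} $e(P) \ge (e^{-2} n^{1-1/k})^n$, I would construct linear extensions explicitly. Partition $[n]$ whp.\ into roughly $n^{1/k}$ ``blocks'' each of size about $n^{1-1/k}$ that form an antichain-like structure — concretely, one can use the first coordinate $\le_1$ to cut $[n]$ into consecutive intervals of length $\lceil n^{1-1/k}\rceil$ and argue that whp.\ each such interval is an antichain in $P$ (because two elements close in $\le_1$ are whp.\ ordered oppositely in some other coordinate). Any total order that orders each block arbitrarily internally and respects the block order is then a linear extension, giving at least $(\lfloor n^{1-1/k}\rfloor !)^{n^{1/k}}$ of them; applying Stirling, $(m!)^{n/m} \ge (m/e)^n$ with $m \approx n^{1-1/k}$ gives $e(P) \ge (n^{1-1/k}/e)^n \ge (e^{-2}n^{1-1/k})^n$ with room to spare for the error terms.

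The \textbf{main obstacle} is establishing the two whp.\ structural facts cleanly: that the width is $O(n^{1-1/k})$ (upper bound side) and that one can carve out $\sim n^{1/k}$ disjoint antichains each of size $\sim n^{1-1/k}$ whp.\ (lower bound side). Both rest on the same probabilistic phenomenon — elements that are ``close together'' in one coordinate are very likely to be separated (incomparably) by another coordinate — but making the ``whp.'' quantitative requires a union bound over the relevant antichain candidates, and one must check the failure probability is $o(1)$; this is where most of the real work lies, and it is presumably carried out in Brightwell's paper \cite{bright}, whose argument I would follow.
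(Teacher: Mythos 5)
The paper does not prove this theorem -- it is simply quoted from Brightwell \cite{bright} -- so there is no in-paper argument to compare against; I can only assess your sketch on its own merits, and the lower-bound construction has a genuine flaw. Cutting $[n]$ into consecutive $\le_{1}$-intervals of length about $n^{1-1/k}$ does \emph{not} yield antichains: for two elements $x <_{1} y$ in the same interval, they are comparable in $P$ precisely when $x <_{i} y$ for every $i \ge 2$, which happens with probability $2^{-(k-1)}$, a constant independent of $n$. An interval of size $m \approx n^{1-1/k}$ therefore contains about $\binom{m}{2}2^{-(k-1)} \gg 1$ comparable pairs in expectation, so whp.\ \emph{none} of your blocks is an antichain. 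The repair is to replace the $\le_{1}$-intervals by the Mirsky level decomposition $A_{1},\ldots,A_{h}$ with $h=h(P)$, whose levels are antichains by construction; ordering $A_{1}$ arbitrarily, then $A_{2}$, and so on, gives $e(P) \ge \prod_{i}\vert A_{i}\vert!$, and convexity of $m\mapsto \log(m!)$ together with $m! \ge (m/e)^{m}$ yields $\prod_{i}\vert A_{i}\vert! \ge \bigl(n/(eh)\bigr)^{n}$. The probabilistic input then becomes $h(P)\le e\,n^{1/k}$ whp., which has a one-line first-moment proof: the expected number of $\ell$-chains is at most $\binom{n}{\ell}(\ell!)^{-(k-1)} \le \bigl(n(e/\ell)^{k}\bigr)^{\ell}$, which tends to $0$ once $\ell > e\,n^{1/k}$. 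Substituting $h\le e\,n^{1/k}$ recovers precisely the constant $e^{-2}$ in the statement.

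Your upper-bound route (Dilworth chain cover, $e(P)\le w(P)^{n}$, then plug in the width bound) is sound in outline, but the claimed source of the width bound is off. An Erd\H{o}s--Szekeres/Dilworth counting argument gives $h(P)\cdot w(P)\ge n$, which is a \emph{lower} bound on $w(P)$ once the height is controlled; it does not give the needed \emph{upper} bound $w(P)=O(kn^{1-1/k})$. Establishing that upper bound whp.\ is the actual technical content of \cite{bright}: the events ``pair $(x,y)$ is incomparable'' are correlated across pairs, so the naive union bound over $m$-subsets does not close in one line. Your final paragraph does acknowledge that the whp.\ structural facts are ``where most of the real work lies'' and defers to Brightwell, which is fair, but the body of the upper-bound paragraph reads as if the width bound were an easy deterministic lemma, and that is not the case.
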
 
Consequently we have that $\log_{2}(e(P))$ is bounded below by
$n\cdot (1-1/k)\cdot \log_{2}(n)(1+o(1))$ and similarly is bounded above by
$n\cdot\big(\log_{2}(2k)+(1-1/k)\cdot\log_{2}(n)\big)(1+o(1))$. This is of
course larger in order of magnitude (for  fixed $k$, say) than $2n$ so
the information lower bound is tight. Now, we deduce that
\begin{Corollary}
The time complexity of finding the true total order
on a set given a random $k$-dimensional partial order on it, where
$k$ is a constant, is {\bf whp.} asymptotically equivalent to
\begin{eqnarray*}
\frac{n\log_{e}(n)}{\log_{e}(2)}\cdot \left(1-\frac{1}{k} \right).
\end{eqnarray*}
\end{Corollary}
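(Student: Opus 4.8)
The plan is to read off the asymptotics of $\log_{2}\bigl(e(P)\bigr)$ from Brightwell's estimate for random $k$-dimensional orders (Theorem 5.5.2) and then squeeze the number of pairwise comparisons between the generic information-theoretic lower bound (Theorem 5.2.6) and the Fredman--Kislitsyn upper bound (Theorem 5.2.8, packaged as Corollary 5.2.9). Throughout we take $k\ge 2$ fixed; the case $k=1$ is trivial, since then $P$ is already a total order and no comparisons are needed, consistent with the stated formula giving $0$.

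First I would take base-$2$ logarithms in the \textbf{whp.} bounds $\left(e^{-2}n^{1-1/k}\right)^{n}\le e(P)\le\left(2kn^{1-1/k}\right)^{n}$ of Theorem 5.5.2, obtaining that \textbf{whp.}
\begin{equation*}
n\Bigl(1-\tfrac{1}{k}\Bigr)\log_{2}(n)-2n\log_{2}(e)\;\le\;\log_{2}\bigl(e(P)\bigr)\;\le\;n\log_{2}(2k)+n\Bigl(1-\tfrac{1}{k}\Bigr)\log_{2}(n).
\end{equation*}
Since $k$ is a fixed constant with $1-1/k>0$, the additive terms $2n\log_{2}(e)$ and $n\log_{2}(2k)$ are both $O(n)$, hence $o\bigl(n\log_{2}(n)\bigr)$. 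Therefore \textbf{whp.}
\begin{equation*}
\log_{2}\bigl(e(P)\bigr)=n\Bigl(1-\tfrac{1}{k}\Bigr)\log_{2}(n)\bigl(1+o(1)\bigr)=\frac{n\log_{e}(n)}{\log_{e}(2)}\Bigl(1-\tfrac{1}{k}\Bigr)\bigl(1+o(1)\bigr),
\end{equation*}
using $\log_{2}(n)=\log_{e}(n)/\log_{e}(2)$.

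Next I would note that this quantity grows faster than $n$ (again because $1-1/k>0$ for fixed $k\ge 2$), so Corollary 5.2.9 applies to $P$: in the worst case the true total order can be recovered from the given partial order using $\log_{2}\bigl(e(P)\bigr)\bigl(1+o(1)\bigr)$ pairwise comparisons, while Theorem 5.2.6 shows at least $\bigl\lceil\log_{2}\bigl(e(P)\bigr)\bigr\rceil$ comparisons are always necessary. Combining these with the previous display yields that \textbf{whp.} the number of comparisons needed is
\begin{equation*}
\log_{2}\bigl(e(P)\bigr)\bigl(1+o(1)\bigr)=\frac{n\log_{e}(n)}{\log_{e}(2)}\Bigl(1-\tfrac{1}{k}\Bigr)\bigl(1+o(1)\bigr),
\end{equation*}
which is the claim.

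The only real care required is the bookkeeping of the \textbf{whp.} statements: the two-sided estimate on $e(P)$ is the event that holds \textbf{whp.}, and on that event the deterministic bounds of Theorems 5.2.6 and 5.2.8 take over, so no further probabilistic argument is needed. One should also double-check that the $O(n)$ slack in Brightwell's bounds (the $-2n\log_{2}(e)$ and $+n\log_{2}(2k)$ discrepancies) and the additive $2n$ in Fredman--Kislitsyn are all genuinely swamped by the leading term $n\bigl(1-1/k\bigr)\log_{2}(n)$ --- which holds precisely because $k$ is a constant bounded away from $1$. There is no substantive obstacle here; all the difficulty is concentrated in Theorem 5.5.2, which we are entitled to assume.
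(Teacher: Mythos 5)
Your proposal is correct and takes essentially the same route as the paper: apply Brightwell's bounds from Theorem 5.5.2, take base-$2$ logarithms, observe that the $O(n)$ slack (from both Brightwell's constants and the Fredman--Kislitsyn additive $2n$) is dominated by the $n(1-1/k)\log_{2}(n)$ leading term for fixed $k\geq 2$, and then squeeze the number of comparisons between the information-theoretic lower bound of Theorem 5.2.6 and the upper bound packaged in Corollary 5.2.9. Your explicit remark about the trivial case $k=1$ and your slightly more careful bookkeeping of the \textbf{whp.}\ event are small presentational improvements over the paper's version, but the substance is identical.
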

\begin{proof}
For the logarithm of the number of the linear 
extensions as $n$ tends to infinity holds,
\begin{align*}
& n\cdot \left(1-\frac{1}{k} \right)\cdot \log_{2}(n) \bigl(1+o(1) \bigr) \leq \log_{2}\bigl(e(P) \bigr) \\
& \quad{} \leq n\cdot\big(\log_{2}(2k)+\left(1-\frac{1}{k} \right) \cdot \log_{2}(n)\big) \\
& \Longrightarrow \frac{n}{\log_{e}(2)}\cdot\left(1-\frac{1}{k} \right)\cdot \log_{e}(n) \\
& \quad{} \leq \log_{2}\big(e(P)\big) \leq \frac{n}{\log_{e}(2)} \cdot\big(\log_{2}(2k)
+\left(1-\frac{1}{k} \right) \cdot\log_{e}(n)\big). 
\end{align*}
Then, we obtain
\begin{eqnarray*}
\log_{2}\big(e(P)\big)=\frac{n\cdot \log_{e}(n)}{\log_{e}(2)}\cdot 
\left(1-\frac{1}{k}\right)\cdot \big(1+o(1)\big),
\end{eqnarray*}
which completes the proof.
\end{proof}
Therefore, the speed up relating to Quicksort with no prior information is
on average
\begin{eqnarray*}
\frac{n\cdot \log_{e}(n)}{2n\cdot \log_{e}(n)\log_{e}(2)}
\cdot \left(1-\frac{1}{k}\right) \\
=\frac{1}{2 \log_{e}(2)} \cdot \left(1-\frac{1}{k}\right) \\
\approx 0.72135 
\cdot \left(1-\frac{1}{k} \right).
\end{eqnarray*} 
\begin{remark}
Note that the factor 
\begin{equation*}
\frac{1}{2\log_{e}(2)} \approx 0.72135.
\end{equation*}
was previously encountered in uniform and bipartite random orders.
It is worth to point out that for the multiplier 
\begin{equation*}
1-\frac{1}{k},
\end{equation*}
when $k$ is arbitrarily large, the speed up is as of the case of uniform random
orders. Whereas, having few $k$-dimensional orders, Quicksort runs much
faster.  
\end{remark}

\section{Random interval orders} 

In this section, we examine the case where poset forms an interval order. 
A definition follows \cite{fishburn}, \cite{tro}.
\begin{Definition}
A poset $(P,\leq)$ is called an interval order if there exists a 
function $I$ such that each element $x \in P$
is mapped to a closed interval $I(x)=[a_{x}, b_{x}] \subseteq \mathbf R$. 
Then $\forall x, y \in P$, 
it holds that $x \prec y$ if and only if $b_{x} \leq a_{y}$.
\end{Definition}

In other words, there exists a mapping $x\mapsto I(x):=[a_{x}, b_{x}]$ for 
every element of $(P,\leq)$, 
having the property that any two elements of the poset are comparable if and only if 
their corresponding intervals do not intersect. Otherwise, they are incomparable.
Hence, the size of the largest chain of the poset is the maximum number of 
pairwise non-intersecting intervals.
Conversely, the size of the largest
antichain is the maximum number of intersecting intervals. We present the 
definition of random interval order.
\begin{Definition}
A random interval order is one where we generate $2n$ independent numbers
$X_{1}, \ldots, X_{n}, Y_{1}, \ldots, Y_{n}$ from the uniform distribution on 
$[0,1]$ and form $n$ closed 
intervals $I_{j}$, for $1\leq j\leq n$, where $I_{j}=[X_{j},Y_{j}]$ if
$X_{j}<Y_{j}$ and $[Y_{j},X_{j}]$ otherwise. (The event that $X_{j}=Y_{j}$
has probability zero so can be ignored). Then we define a partial order by
saying that $I_{i}\prec I_{j}$ if and only if the maximum element of
$I_{i}$ is less than the minimum element of $I_{j}$. 
\end{Definition}
\begin{remark}
In fact any continuous probability distribution can be chosen 
to the analysis of random interval orders.
\end{remark}

Again, we want to estimate how many linear extensions there are of these.
This time, there does not appear to be an immediate bound for the number
of linear extensions in the literature. However one can obtain the
relevant bound showing that $\log_{2}\bigl(e(P)\bigr)$ is {\bf whp.}
at least $cn\log_{e}(n)$ for some $c>0$, which will of course be enough
to show that the $2n$ term in the Fredman \cite{fred} -- Kislitsyn \cite{kisl} bound
$\log_{2}\bigl(e(P) \bigr)+2n$
is small compared with the term $\log_{2}\bigl(e(P)\bigr)$. The main results
that we need in this direction are the following two Theorems, regarding the size of the largest
antichain and chain of a random interval order respectively. 
For their proofs, we refer to \cite{jsw}.
\begin{theorem}[Justicz {\em et al.} \cite{jsw}]
Let $A_{n}$ denote the size of a largest set of pairwise intersecting
intervals in a family of $n$ random intervals. Then there exists a function
$f(n)=o(n)$, such that {\bf whp.} we have  
$$\frac{n}{2}-f(n)\leq A_{n}\leq \frac{n}{2}+f(n).$$
\end{theorem}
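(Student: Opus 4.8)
The plan is to reduce the statement to a concentration estimate for a binomial-type count and then handle the maximisation over a continuum. First I would invoke the classical one-dimensional case of Helly's theorem: a finite family of intervals of $\mathbf R$ is pairwise intersecting if and only if it has a common point (if $a=\max_j a_j$ and $b=\min_j b_j$ over the intervals $[a_j,b_j]$, then pairwise intersection forces $a\le b$ and every point of $[a,b]$ lies in all the intervals). Hence $A_n=\max_{t\in[0,1]}N(t)$, where $N(t)$ is the number of the random intervals $I_j$ containing $t$. Since $I_j=[\,X_j\wedge Y_j,\;X_j\vee Y_j\,]$, the event $t\in I_j$ is exactly the event that precisely one of $X_j,Y_j$ is $\le t$, which has probability $2t(1-t)$, independently over $j$. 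Thus $N(t)\sim\mathrm{Bin}\bigl(n,2t(1-t)\bigr)$ with mean $2nt(1-t)\le n/2$, attained at $t=\tfrac12$.

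For the lower bound I would simply use $t=\tfrac12$: here $N(\tfrac12)\sim\mathrm{Bin}(n,\tfrac12)$, so a standard Hoeffding/Chernoff bound gives $\mathbb{P}\bigl(N(\tfrac12)\le n/2-n^{2/3}\bigr)\le e^{-2n^{1/3}}\to0$, and since $A_n\ge N(\tfrac12)$ we obtain $A_n\ge n/2-n^{2/3}$ \textbf{whp}.

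The upper bound is where the work lies, since $N(\cdot)$ is maximised over uncountably many points. The key observation is that $N$ is a step function whose jumps occur only at the $2n$ endpoints, and that conditionally on the order statistics $u_1<\dots<u_{2n}$ of the $2n$ i.i.d.\ uniforms $X_1,\dots,Y_n$, the induced pairing into intervals is a uniformly random perfect matching of $\{u_1,\dots,u_{2n}\}$ (a straightforward exchangeability computation). For $t$ in the slab $(u_k,u_{k+1})$, $N(t)$ equals the number $W_k$ of matching edges crossing the cut $\bigl(\{u_1,\dots,u_k\},\{u_{k+1},\dots,u_{2n}\}\bigr)$, so $A_n=\max_{1\le k\le 2n-1}W_k$. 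Each $W_k$ has mean $k(2n-k)/(2n-1)\le n/2$ and, being a bounded-differences function of a uniform random matching, concentrates: $\mathbb{P}\bigl(W_k\ge n/2+\lambda\bigr)\le e^{-c\lambda^2/n}$ for a universal $c>0$ (by a McDiarmid-type argument, or by comparison with a hypergeometric variable). A union bound over the $2n-1$ cuts with $\lambda=n^{2/3}$ yields $\mathbb{P}(A_n\ge n/2+n^{2/3})\le 2n\,e^{-cn^{1/3}}\to0$. As a more elementary alternative one can skip the matching description and union-bound $N$ directly over the deterministic grid $\{i/n^{2}\}_{0\le i\le n^{2}}$, bounding the oscillation of $N$ between consecutive grid points by the number of interval endpoints in each length-$n^{-2}$ block, which is $O(\log n)$ \textbf{whp}.

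Combining the two estimates, the theorem holds with $f(n)=2n^{2/3}$ — indeed with any $f$ satisfying $f(n)/\sqrt{n\log n}\to\infty$ and $f(n)=o(n)$, the first condition being forced by the union bound over $\Theta(n)$ events. The only non-routine ingredient is the concentration of the crossing count $W_k$ for a uniform random matching (or, in the grid variant, the negligibility of the endpoint-oscillation term); the Helly reduction and the binomial tail bounds are elementary.
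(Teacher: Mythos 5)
The thesis does not actually prove this theorem; it states it and refers to Justicz, Scheinerman and Winkler \cite{jsw} for the proof, so there is no in-paper argument to compare against. Evaluated on its own merits, your proposal is correct. The Helly reduction to $A_n=\max_{t\in[0,1]}N(t)$ with $N(t)\sim\mathrm{Bin}\bigl(n,2t(1-t)\bigr)$ is exactly the right first step, and the lower bound at $t=\tfrac12$ by Chernoff is routine. For the upper bound, both of your variants work, but the grid version is the cleaner one: Hoeffding at each of $O(n^{2})$ grid points, a union bound, and then oscillation control via the number of endpoints per grid slab, which a further union bound makes $O(1)$ whp (even better than the $O(\log n)$ you allow yourself). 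The matching variant is also correct, but note that a uniform random matching is not a product measure, so the concentration of the crossing count $W_{k}$ is not a plain bounded-differences estimate; you need Azuma/McDiarmid on the symmetric group (bounded differences under transpositions, giving a factor of $2$ per swap since a transposition touches at most two edges) or a hypergeometric comparison, either of which works but deserves a sentence. One minor technicality: for $t$ equal to an endpoint $u_{k}$ one has $N(u_{k})\le\max(W_{k-1},W_{k})+1$, so strictly $A_{n}\le\max_{k}W_{k}+1$ rather than equality, but this $O(1)$ slack is absorbed into $f(n)$. Your $f(n)=\Theta(n^{2/3})$ comfortably satisfies $f(n)=o(n)$, and your argument in fact shows the true fluctuation scale is $\sqrt{n}$ up to logarithmic factors.
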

\begin{theorem}[Justicz {\em et al.} \cite{jsw}]
Let $Y_{n}$ denote the maximum number of pairwise disjoint intervals 
in a family of $n$ random intervals. Then 
\begin{equation*}
\lim_{n \to \infty} \dfrac{Y_{n}}{\sqrt{n}}=\dfrac{2}{\sqrt{\pi}}
\end{equation*}
in probability. 
\end{theorem}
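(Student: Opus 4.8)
The plan is to reduce the quantity $Y_{n}$ to the output of a greedy scheduling algorithm and then to analyse that algorithm as a (nearly homogeneous) renewal process whose steps have size of order $n^{-1/2}$.

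First I would recall the classical fact from interval scheduling that $Y_{n}$, the largest number of pairwise disjoint intervals, equals the number of intervals selected by the \emph{earliest--finishing--time} greedy rule: list the intervals $[a_{i},b_{i}]$ in increasing order of their right endpoints $b_{i}$ and scan through them, selecting an interval exactly when its left endpoint exceeds the right endpoint of the last interval already selected. A standard exchange argument shows this selection is of maximum size, so it suffices to study it. Write $0=T_{0}\le T_{1}\le T_{2}\le\cdots$ for the successive right endpoints of the selected intervals (the ``frontier'' positions), so that $T_{k+1}=\min\{b_{i}:a_{i}>T_{k}\}$, with $T_{k+1}=+\infty$ if no such $i$ exists, and $Y_{n}=\max\{k:T_{k}<\infty\}$; when the process halts the frontier $T_{Y_{n}}$ is within $O_{\mathbb P}(n^{-1/2})$ of $1$, since the number of intervals with $a_{i}>t$ is of order $n(1-t)^{2}$.

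Since the density of a single pair $(a_{i},b_{i})$ on $\{0<a<b<1\}$ is $2$, one has $\mathbb{P}(a_{i}>t,\ b_{i}\le s)=(s-t)^{2}$ for $t\le s\le 1$, and hence, conditionally on $T_{k}=t$ with $t$ bounded away from $1$,
\[
\mathbb{P}\!\left(T_{k+1}-t>\tfrac{u}{\sqrt n}\right)=\Bigl(1-\tfrac{u^{2}}{n}\Bigr)^{n}\longrightarrow e^{-u^{2}}
\]
for every fixed $u>0$. Thus the rescaled increments $\sqrt n\,(T_{k+1}-T_{k})$ converge to i.i.d.\ copies of a variable $W$ with $\mathbb{P}(W>u)=e^{-u^{2}}$, whose mean is $\mathbb{E}W=\int_{0}^{\infty}e^{-u^{2}}\,\mathrm{d}u=\tfrac{\sqrt\pi}{2}$. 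Heuristically $Y_{n}$ is then the number of renewal steps of mean size $\tfrac{\sqrt\pi}{2}\,n^{-1/2}$ needed to travel a unit distance, namely $Y_{n}\approx\tfrac{2}{\sqrt\pi}\sqrt n$, which is the asserted limit.

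To make this rigorous I would first produce a genuine Markov chain by revealing the intervals in increasing order of right endpoint: conditionally on the right endpoints (the order statistics of the density $2y$ on $[0,1]$), the left endpoints are independent, each uniform on $[0,b_{i}]$, so the pair (current frontier, last revealed right endpoint) evolves as a Markov chain with explicit transitions, and the increments of the selection count are conditionally independent with the laws computed above. A fluid--limit / strong law of large numbers for this renewal--type chain then gives concentration of $Y_{n}/\sqrt n$ at $2/\sqrt\pi$; for the two bounds one can sandwich the chain between honest i.i.d.\ renewal processes (one stochastically dominating, one dominated), each having renewal constant $\tfrac{2}{\sqrt\pi}$ by the renewal theorem. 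The boundary layer in which $1-T_{k}=O(n^{-1/2})$, where the increment law is no longer close to $e^{-u^{2}}$, costs only $O_{\mathbb P}(1)$ further selections and is negligible against the $\Theta(\sqrt n)$ total. The main obstacle is exactly this last step -- controlling the dependence between successive greedy moves and proving the concentration -- rather than the constant, which is the one-line computation $\mathbb{E}W=\sqrt\pi/2$. It is worth noting that crude first-- and second--moment arguments (lower bound: seek an interval contained in each of $m$ equal subintervals of $[0,1]$; upper bound: a chain of length $k$ forces $k-1$ intervals of total length at most $1$) only yield $Y_{n}=\Theta(\sqrt n)$ with the wrong constants $1$ and $2$, so the renewal analysis is genuinely needed to pin down $2/\sqrt\pi$.
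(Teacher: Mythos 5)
The thesis does not prove this statement; it is quoted from Justicz, Scheinerman and Winkler \cite{jsw}, so there is no proof in the paper to compare against. Your outline is nevertheless the natural one and is sound: identifying $Y_{n}$ with the output of the earliest-finishing-time greedy scan is the classical interval-scheduling exchange argument, and the renewal analysis of that scan with increments of scale $n^{-1/2}$ is the right framework. The limiting increment law is Rayleigh, $\mathbb{P}(W>u)=e^{-u^{2}}$, with mean $\int_{0}^{\infty}e^{-u^{2}}\,\mathrm{d}u=\sqrt{\pi}/2$, giving $Y_{n}/\sqrt{n}\to 2/\sqrt{\pi}$.

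One small caution: the displayed equality $\mathbb{P}\bigl(T_{k+1}-t>u/\sqrt{n}\bigr)=(1-u^{2}/n)^{n}$ is not exactly the conditional probability given $T_{k}=t$. Conditioning on the greedy history up to the $k$-th selection, the event $\{T_{k+1}>t+u/\sqrt{n}\}$ is decided only by the roughly $n(1-t^{2})$ intervals not yet scanned (those with $b_{i}>t$), and each of these lands in the forbidden corner $\{a>t,\ b\le t+u/\sqrt{n}\}$ with \emph{conditional} probability $u^{2}/\bigl(n(1-t^{2})\bigr)$; the two factors of $1-t^{2}$ cancel, so the limit $e^{-u^{2}}$ is correct and $t$-independent, but the formula $(1-u^{2}/n)^{n}$ silently treats all $n$ intervals as unconstrained. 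This is precisely the dependence your Markov-chain step is meant to control, and, as you note yourself, proving the concentration by sandwiching the inhomogeneous chain between honest i.i.d.\ renewal processes with constants $\tfrac{2}{\sqrt\pi}(1\pm\epsilon)$ uniformly on $\{t\le 1-\delta\}$, and showing the boundary layer costs only $O_{\mathbb{P}}(1)$ extra selections, is the genuine remaining work. The heuristic, the constant, and the identification of the obstacle are all correct.
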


The following Corollary
gives a lower bound for the number of comparisons 
required to sort a random interval order:
\begin{Corollary}
The number of comparisons for sorting $n$ keys, given a random interval 
order is {\bf whp.} at least
$cn\log_{e}(n)\bigl(1+o(1) \bigr)$, where one can take 
$c=1/2\log_{e}(2)\approx 0.72135$.
\end{Corollary}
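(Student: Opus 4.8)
The plan is to bound the number of comparisons below by the information--theoretic lower bound (Theorem 5.2.6), which guarantees that at least $\bigl\lceil\log_{2}\bigl(e(P)\bigr)\bigr\rceil$ pairwise comparisons are required, and then to bound $\log_{2}\bigl(e(P)\bigr)$ below using the large antichain of a random interval order supplied by Theorem 5.6.4. So the whole task reduces to showing that {\bf whp.}
\begin{equation*}
\log_{2}\bigl(e(P)\bigr)\geq \frac{1}{2\log_{e}(2)}\,n\log_{e}(n)\bigl(1+o(1)\bigr).
\end{equation*}

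First I would establish the purely combinatorial fact that if a finite poset $P$ has an antichain $A$ with $|A|=m$, then $e(P)\geq m!$. To see this, fix an arbitrary linear order $\sigma$ on $A$ and form the relation $P\cup\sigma$ obtained by additionally declaring $a\prec a'$ whenever $a<_{\sigma}a'$. This relation is acyclic: a directed cycle using at least one $\sigma$--edge would, between two consecutive $\sigma$--edges, traverse a directed $P$--path from one element of $A$ to another; if that path is non-trivial it forces those two $A$--elements to be comparable in $P$, contradicting the choice of $A$, so the path must be trivial, which makes the whole cycle consist only of $\sigma$--edges — impossible since $\sigma$ is a linear order (and a cycle using no $\sigma$--edge would be a cycle in the poset $P$, also impossible). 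Hence $P\cup\sigma$ extends to a linear extension $L_{\sigma}$ of $P$ whose restriction to $A$ is exactly $\sigma$, and the $m!$ choices of $\sigma$ yield $m!$ pairwise distinct linear extensions of $P$.

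Next I would apply this with $A$ a largest family of pairwise intersecting intervals, whose size $A_{n}$ satisfies $A_{n}\geq n/2-f(n)$ {\bf whp.} for some $f(n)=o(n)$, by Theorem 5.6.4. On that event $e(P)\geq (A_{n})!\geq \bigl(\lfloor n/2-f(n)\rfloor\bigr)!$, so taking base--$2$ logarithms and invoking Stirling's formula,
\begin{equation*}
\log_{2}\bigl(e(P)\bigr)\geq \log_{2}\!\bigl(\lfloor n/2-f(n)\rfloor!\bigr)=\frac{n}{2}\log_{2}\!\left(\frac{n}{2}\right)\bigl(1+o(1)\bigr)=\frac{n\log_{e}(n)}{2\log_{e}(2)}\bigl(1+o(1)\bigr),
\end{equation*}
using $\log_{2}(n/2)=\log_{2}(n)\bigl(1+o(1)\bigr)$ and $\log_{2}(n)=\log_{e}(n)/\log_{e}(2)$. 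Feeding this into the information--theoretic lower bound gives the claim with $c=1/\bigl(2\log_{e}(2)\bigr)\approx 0.72135$. The companion estimate on the longest chain, Theorem 5.6.5 (i.e. $Y_{n}=\Theta(\sqrt{n})$), is not needed for this lower bound, but it confirms that the additive $2n$ term in the Fredman--Kislitsyn estimate $\log_{2}\bigl(e(P)\bigr)+2n$ is of smaller order, so the bound here is of essentially the right magnitude.

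The main obstacle is the combinatorial lemma $e(P)\geq m!$ — precisely, making the acyclicity argument clean (why adjoining a linear order on an antichain to a poset cannot create a directed cycle). Once that is in place, the rest is a routine application of Stirling's formula, careful bookkeeping of the $o(1)$ terms, and the quoted probabilistic input of Justicz {\em et al.} on the size of the largest antichain of a random interval order.
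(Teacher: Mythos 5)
Your proposal is correct and follows essentially the same route as the paper: extract the large antichain of pairwise intersecting intervals from Theorem 5.6.4, bound $e(P)$ below by the factorial of the antichain's size, apply Stirling's formula, and invoke the information--theoretic lower bound of Theorem 5.2.6. The one point you treat more carefully than the paper is the justification that a poset containing an antichain of size $m$ has at least $m!$ linear extensions --- the paper simply asserts ``there are at least $r!$ ways of doing this'' --- and your acyclicity argument for $P\cup\sigma$ is a clean and correct way to establish it.
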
 
\begin{proof} 
Theorem 5.6.4 shows that the largest antichain of
the random interval order is {\bf whp.} at least
$r=\lceil (1-\epsilon)n/2 \rceil$ for any $\epsilon>0$. This is because
a family of intersecting intervals forms an antichain. Thus we need to sort
all these $r$ incomparable elements of
the partial order in a total order extending it, and there are at least
$r!$ ways of doing this. Using Stirling's formula, we obtain 
\begin{eqnarray*}
r! \geq \left (\dfrac{(1-\epsilon)n}{2} \right)! \sim \sqrt{(1-\epsilon)n \pi}
\left (\frac{(1-\epsilon)n}{2e}\right )^{(1-\epsilon)n/2} \\
\Longrightarrow \log_{2}\bigl (e(P) \bigr) \geq \log_{2}(r!) \geq \frac{(1-\epsilon)n}{2}
\log_{2}\left (\dfrac{(1-\epsilon)n}{2e} \right) + \frac{1}{2} \log_{2}\bigl((1-\epsilon)n \pi \bigr).
\end{eqnarray*}
Then we have that $\log_{2}\bigl (e(P)\bigr)$ has
order of magnitude $n\log_{2}(n)$: in particular $2n=o\bigl(\log_{2}(e(P)\bigr)$
and so the complexity is $\log_{2}\bigl(e(P)\bigr)\bigl(1+o(1)\bigr)$. 
Further, we have that $\log_{2}\bigl(e(P)\bigr)$ is at least, by the above,
$n(1-\epsilon)\log_{2}(n)\bigl (1+o(1)\bigr)/2$ which is equal to
$n\log_{e}(n)\dfrac{1+o(1)}{2\log_{e}(2)}$
and so we can take the constant $c$ to be at least $1/2\log_{e}(2)$.
\end{proof}

We now present a much stronger result that gives sharp bounds 
on the number of linear extensions
of a random interval order, following the insightful 
suggestions of Prof. Colin McDiarmid \cite{inperson}.
\begin{theorem}
The number of comparisons for sorting $n$ keys, given a random interval 
order is {\bf whp.} for $0<\epsilon<1$, between:
\begin{equation*}
\bigl(1-\epsilon+o(1) \bigr)n\log_{2}(n) \leq \log_{2}\bigl(e(P)\bigr) 
\leq \bigl(1+\epsilon+o(1) \bigr)n\log_{2}(n).
\end{equation*}
\end{theorem}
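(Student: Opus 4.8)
The upper bound needs no randomness: a linear extension of $P$ is a permutation of the $n$ intervals, so $e(P)\le n!$, and therefore
\[
\log_2\bigl(e(P)\bigr)\le\log_2(n!)=n\log_2(n)-n\log_2(e)+O(\log_2 n)=\bigl(1+o(1)\bigr)n\log_2(n)\le\bigl(1+\epsilon+o(1)\bigr)n\log_2(n).
\]
So all the work is in the lower bound, and there the plan is to argue by recursively bisecting the value range $[0,1]$. The bound from a single large antichain, as used in Corollary 5.6.6, loses a factor of $2$ here: the largest antichain has about $n/2$ intervals and contributes only $\log_2\bigl((n/2)!\bigr)\sim\tfrac12 n\log_2 n$, so something finer is required.

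The core of the plan is a deterministic halving lemma. Suppose $Q$ is an interval order all of whose intervals lie in a window $[\alpha,\beta]$, let $\mu=(\alpha+\beta)/2$, and split the intervals of $Q$ into $A=\{I:\mu\in I\}$, $Q_L=\{I:I\subseteq[\alpha,\mu]\}$ and $Q_R=\{I:I\subseteq[\mu,\beta]\}$ (in our model no endpoint equals $\mu$ almost surely, so this is a partition). Then $A$ is an antichain, and I claim that for any valid linear extension $\sigma_L$ of $Q_L$, any valid linear extension $\sigma_R$ of $Q_R$, and any ordering of $A$, listing first $Q_L$ in the order $\sigma_L$, then all of $A$, then $Q_R$ in the order $\sigma_R$ gives a valid linear extension of $Q$. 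This is a short case check using that elements of $Q_L$ end below $\mu$, elements of $A$ straddle $\mu$, and elements of $Q_R$ start above $\mu$, so that no relation $u\prec v$ of $Q$ is violated. Hence
\[
e(Q)\ \ge\ |A|!\cdot e(Q_L)\cdot e(Q_R).
\]

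The plan is then to iterate this. Fix the target $\epsilon$ and choose a constant $K=K(\epsilon)$ with $2^{-K}<\epsilon$. I apply the halving lemma to $P$, then to each of $P_L,P_R$, and so on through $K$ levels; after $k$ steps there are $2^k$ sub-interval-orders, one on each dyadic sub-window of length $2^{-k}$, and conditionally on the (multinomial) counts of intervals in these windows each sub-order is again a random interval order on its own intervals (the endpoints, conditioned to lie in a window, are independent and uniform there). From each sub-order I keep the factor $(\text{number of its intervals through the window's midpoint})!$. A Chernoff bound together with a union bound over the $O(2^K)=O(1)$ pieces gives that {\bf whp.}\ every level-$k$ sub-order has $\bigl(1+o(1)\bigr)n/4^{k}$ intervals, of which $\bigl(1+o(1)\bigr)n/(2\cdot 4^{k})$ pass through its midpoint. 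Multiplying the retained factors across all $K$ levels (the chosen antichains at different pieces are pairwise disjoint subsets of the $n$ intervals) and summing the resulting geometric-type series,
\[
\log_2\bigl(e(P)\bigr)\ \ge\ \sum_{k=0}^{K-1}2^{k}\log_2\!\Bigl(\bigl(\tfrac{n}{2\cdot 4^{k}}\bigr)!\Bigr)\bigl(1+o(1)\bigr)\ =\ \bigl(1-2^{-K}\bigr)n\log_2(n)-O(n)\ \ge\ \bigl(1-\epsilon+o(1)\bigr)n\log_2(n),
\]
since level $k$ contributes $\tfrac{n}{2^{k+1}}\log_2 n$ to leading order.

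The hard part is not any new idea but the bookkeeping: I must confirm that restricting a random interval order to a fixed sub-window again yields a random interval order there, control the $O(1)$ binomial concentration events simultaneously, and dispose of the (probability-zero) boundary cases and of pieces that might be atypically small — which is exactly why the recursion is stopped after $K=O(1)$ levels rather than $\Theta(\log n)$. Carrying the same recursion down to $\Theta(\log n)$ levels would in fact give the stronger statement $\log_2\bigl(e(P)\bigr)=\bigl(1+o(1)\bigr)n\log_2(n)$, at the price of a more delicate union bound over polynomially many pieces.
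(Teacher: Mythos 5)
Your proposal is correct and takes essentially the same dyadic route as the paper: the recursive midpoint-halving you describe produces exactly the paper's partition of the intervals into antichains $\mathcal{A}(i,j)$ (intervals whose shallowest dyadic midpoint is at a given $(i,j)$), and the Chernoff control of atypically short intervals, the truncation at a fixed depth, and the Stirling computation all match the paper's corresponding steps. Your one genuine addition is the explicit halving lemma $e(Q)\ge |A|!\,e(Q_L)\,e(Q_R)$, which, iterated, supplies a concrete proof of the product bound $e(P)\ge\prod n(i,j)!$ that the paper only asserts in a terse parenthetical; the nested-window structure is exactly what guarantees that every blockwise concatenation of a linear extension of $Q_L$, an arbitrary ordering of $A$, and a linear extension of $Q_R$ is a valid linear extension of $Q$, which is the point the paper glosses over.
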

\begin{proof}
Let $0 < a < b < 1$ and  consider the interval $(a, b)$. Let
$I(i, j)$ be the interval $(\frac{i-1}{2^{j}}, \frac{i+1}{2^{j}})$, 
where $i$ and $j$ are positive integers with $i$ odd and $i < 2^{j}$. 
Further, let $j(a, b)$ be the least $j$ 
such that $\frac{i}{2^{j}} \in (a, b)$
for some positive integer $i$. There is a unique such $i$ since if there were
at least two odd $i$, then there is at least one even $k$ between them
and considering $k/2^{j}$, we can replace it by $(k/2)/2^{j-1}$ 
giving a smaller value of $j$ and contradicting the definition of $j$. 
Since $i$ is unique, we may call it $i(a,b)$. 

We denote the interval $I\bigl(i(a, b), j(a, b)\bigr)$ by $J(a, b)$.
Recall that $i$ is odd with $i<2^{j}$ and observe that if $b-a > 2^{-j}$,
then $j(a, b) \leq j$. For a given such $i$ and $j$,
let $\mathcal{A}(i, j)$
be the set of all intervals $(a, b)$, such that $J(a, b)=I(i, j)$.
The sets $\mathcal{A}(i, j)$ are antichains: for if we had
two intervals $(a_{1}, b_{1})$ and $(a_{2},  b_{2})$ in $\mathcal{A}(i, j)$
with $(a_{1}, b_{1})$ being less than $(a_{2}, b_{2})$ (of course this is
equivalent to $b_{1}<a_{2}$) then saying that
$$J(a_{1},b_{1})=J(a_{2},b_{2})=\left(\frac{i-1}{2^{j}}, \frac{i+1}{2^{j}}\right)$$
would imply that $j(a_{1}, b_{1})=j(a_{2}, b_{2})$ and 
$i(a_{1}, b_{1}) =i(a_{2}, b_{2})$. But given that $j(a_{1}, b_{1})=j(a_{2}, b_{2})$,
there is clearly some $i/2^{j}$ with $i$ odd in $(a_{1}, b_{1})$ which is
less than any such in $(a_{2}, b_{2})$ and the result follows. 
Indeed, if the midpoint of $I(i, j)$ (i.e. $i/2^{j})$ is
less than the midpoint of $I(i^{\prime}, j^{\prime})$ (i.e. $i^{\prime}/2^{j^{\prime}}$)
then no interval in $\mathcal{A}(i^{\prime}, j^{\prime})$ can precede
any interval in $\mathcal{A}(i, j)$ in the interval order.

Let $X$ and $Y$ be independent and uniformly distributed
random variables which denote the endpoints of a random interval in
$(0, 1)$. We have that 
\begin{equation*}
\mathbb{P}(\vert X - Y \vert \leq \epsilon/4)< \epsilon/2,
\end{equation*}
since
\begin{align*}
\mathbb{P}\bigl(\vert X-Y \vert \leq \frac{\epsilon}{4}\bigr) & =
\mathbb{P}\bigl(Y-\frac{\epsilon}{4} \leq X \leq Y + \frac{\epsilon}{4}\bigr) \\
& =\int_{0}^{1}\mathbb{P}\bigl(Y-\frac{\epsilon}{4} \leq X \leq Y 
+ \frac{\epsilon}{4} \big\vert Y=y \bigr)f_{Y}(y)\, \mathrm{d}y \\
& =\int_{0}^{1}\mathbb{P}\bigl(Y-\frac{\epsilon}{4}\leq X \leq Y+
\frac{\epsilon}{4} \big\vert Y=y \bigr)\, \mathrm{d}y.
\end{align*}
The last equation follows because the probability density of $Y$, $f_{Y}(y)$ 
is $1$ on $[0,1]$ and $0$ elsewhere.
Given that $Y=y$, the probability that $X$ is in the interval $(y -\epsilon/4, y+\epsilon/4)$ is 
(as it is uniformly distributed) at most the length of the 
interval $(y+\epsilon/4)-(y-\epsilon/4)=\epsilon/2$. 

Thus the random number $\mathcal{N}$ of intervals with length at most $\epsilon/4$ is
stochastically dominated by a binomial random variable $\mathcal{B}(n, \epsilon/2)$ with
$n$ independent trials and success probability $\epsilon/2$. Therefore,
\begin{equation*}
\mathbb{P}\bigl(\mathcal{N} \geq \epsilon n) \leq \mathbb{P}
( \mathcal{B}(n, \epsilon/2) \geq \epsilon n \bigr)
\end{equation*}
and this probability tends to $0$, as $n \to \infty$.
This is a consequence of
Chernoff's inequality, in the following form: if $X$ is a
binomially distributed variable with $n$ independent trials 
and success probability $p$, then for $\delta>0$ 
$$\mathbb{P}\bigl(X\geq n(p+\delta)\bigr)\leq
\left(\left(\frac{p}{p+\delta}\right)^{p+\delta}
\left(\frac{1-p}{1-p-\delta}\right)^{1-p-\delta}\right)^{n}.$$
Chernoff's inequality appears in various places: we refer to \cite{billingsley} and to \cite{prob}.
Since the number being raised to the power $n$ is $<1$, this will indeed
tend to $0$ (in fact will do so rapidly). The result in our case follows plugging in
$p=\epsilon/2$ and $\delta=\epsilon/2$ and shows that the number of intervals
with length at most $\epsilon/4$ is {\bf whp.} less than $\epsilon n$.

Let $j_{0}$ be a positive integer and $m$ be the number of intervals
$I(i,j)$ with $j\leq j_{0}$. Let $\mathcal{I}$ be a set of intervals
and $\mathcal{I}^{\prime}$ be the set of intervals in $\mathcal{I}$ with length $>2^{-j_{0}}$.
Let $n(i, j)$ be the number of intervals of $\mathcal{I}^{\prime}$ in $\mathcal{A}(i, j)$. 
Then, the number of linear extensions $e(P)$ satisfies, since we have to put
each of the antichains in order and there are $r!$ ways to order an
antichain of $r$ elements, 
\begin{equation*}
e(P) \geq \prod n(i, j)! \geq \prod \biggl(\dfrac{n(i, j)}{e} \biggr)^{n(i, j)}
\end{equation*}
and by convexity
$$\log_{2}\bigl(e(P)\bigr)\geq \sum n(i, j)\log_{2}\frac{n(i, j)}{e}
\geq \vert \mathcal{I}^{\prime}\vert \log_{2}\frac{\vert \mathcal{I}^{\prime} \vert}
{em}.$$
Choosing $j_{0}$ sufficiently large that  $2^{-j_{0}}<\epsilon/4$ ,
the number of intervals of length at most $2^{-j_{0}}$ is less than the number of
intervals of length $<\epsilon/4$ which by Chernoff's inequality 
is {\bf whp.} $< \epsilon n$: thus almost all intervals have length at least $2^{-j_{0}}$
and so $\mathcal{I}^{\prime}$ has order at least $(1-\epsilon)n$. This
will give us that {\bf whp.}
$$\log_{2}\bigl(e(P)\bigr)\geq \bigl(1-\epsilon+o(1)\bigr)n\log_{2}(n).$$
This completes the proof, as this gives the lower bound and the upper bound
is just a consequence of the fact that there are at most $n!$ linear
extensions of a partially ordered set with $n$ keys, and then we
use Stirling's formula again. 
\end{proof}

In the next Chapter
we proceed to the analysis of partial orders where the information--theory
lower bound is not $\omega(n)$. Central to the subsequent analysis
are random graphs. 

\chapter{Linear extensions of random graph orders}

Recall that in the previous Chapter, we examined the number of linear 
extensions of various partial orders,
where the information--theoretic lower bound dominated the linear term.
In this Chapter, we examine the case where
both terms are asymptotically equivalent. We obtain bounds
of the expected height of a random graph and we derive a new bound on
the number of linear extensions of a random graph order.

\section{Random graph orders}

In this section, we consider random graphs. A definition follows \cite{sva}:
\begin{Definition}
The Erd\H{o}s--R\'{e}nyi random graph $G(n,p)$ has labelled vertex set
$\{1, 2, \ldots, n \}$
and for each pair of vertices, the probability of an edge arising between
them is $p$, independently of all the other pairs of vertices. 
\end{Definition}
By the definition, here and throughout this thesis, a random graph $G(n,p)$ 
denotes a simple graph,
without loops or multiple edges. 
An independent set of a graph
is a subset of the vertex set, such that there is no edge connecting any two vertices. 
On the other hand, a clique of a random graph
$G(n, p)$ is a subset of its vertex set, with the property that an edge is arising between 
every two vertices.
Note that $p$ may very well depend on $n$. We will usually be interested
in the behaviour as $n \to \infty$. 
\begin{Definition}
The random graph order $P(n, p)$, with partial order relation $\prec$,
is a partially ordered set with underlying set the vertices of $G(n, p)$ and
we initially say that $i \prec j$ if and only if $i<j$ and the edge $i \sim j$ is present
in the random graph $G(n,p)$. We then take the transitive closure of this
relation to get a partial order.

In the same manner, we write $P(\mathbf {Z},p)$ for the infinite partially ordered
set obtained by taking vertex set $\mathbf {Z}$, the set of integer numbers, 
saying initially $i\prec j$
if and only if $i<j$ in the usual total order on $\mathbf Z$ and the
edge $i \sim j$ is present (which it is with probability $p$ independent of
all other edges), and then taking the transitive closure.  
\end{Definition}

The point is that for a partial order we of course require transitivity
by the axioms for a partially ordered set. But of course this is not
guaranteed in a random graph. We could, for example, have the edges 
$1 \sim 2$ and $2 \sim 3$ in the random graph, but no edge between $1$ and $3$. Then
of course we would have put in the edge $1 \sim 3$ as since $1 \prec 2$ and
$2 \prec 3$ we must have $1 \prec 3$ by transitivity.
Note that there are efficient algorithms for finding the transitive
closure of a relation. What we aim to do next, following the analysis in the last section,
is to consider how many comparisons will be needed to finalise the order of a set 
of keys when a random graph
partial order on the set is given already. For our needs, we present the 
definition of graph entropy, introduced by K\"{o}rner \cite{ja}. 
This definition is from Simonyi's survey on graph entropy \cite{sim}.
\begin{Definition}
Let $G=G(n, p)$ be a random graph. Let $X$ be a random variable taking
its values on the vertices of $G$ and $Y$ taking its values on the stable
(independent) sets of $G$. Suppose further that their joint distribution
is such that $X \in Y$ with probability $1$. Also, the marginal distribution
of $X$ on $V(G)$ is identical to the given distribution $P$. Then, the graph
entropy $H(G, P)$ of the random graph $G$ is 
\begin{align*}
H(G,P)=\min I(X\wedge Y),
\end{align*}
where $I(X\wedge Y)$ is as in Definition 3.1.7.
\end{Definition}

As in the last Chapter, we need to know about the number of linear extensions.
The following Theorem from \cite{abbj} will give us what we need.
\begin{theorem}[Alon {\em et al.} \cite{abbj}]
Let $0<p<1$ be fixed and consider $e(P)$, where the partial order is
from $P(n,p)$. Then we have that there are $\mu(p)>0$ and
$\sigma^{2}(p)>0$ such that
$$\frac{\log_{e} \bigl(e(P)\bigr)- \bigl (\mu(p) \cdot n \bigr) }{\sigma(p) \cdot \sqrt{n}}\stackrel
{\mathcal{D}}{\longrightarrow} \mathbb {N}(0,1).$$
\end{theorem}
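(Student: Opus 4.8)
The plan is to exploit the regeneration structure of a random graph order supplied by its \emph{posts}. Call a vertex $i$ a post of $P(\mathbf{Z},p)$ if it is comparable with every other vertex, that is, $j\prec i$ for all $j<i$ and $i\prec k$ for all $k>i$. If $i$ is a post, then by transitivity every vertex below $i$ lies below every vertex above $i$, so the order splits as a linear sum across $i$; iterating over all posts decomposes $P(\mathbf{Z},p)$, and likewise the induced order on $\{1,\dots,n\}$, into a linear sum of the singleton posts together with the induced sub-orders (``blocks'') on the sets of vertices strictly between consecutive posts. Because the number of linear extensions of a linear sum is the product of those of the summands, this gives $\log_{e}\bigl(e(P)\bigr)=\sum_{B}\log_{e}\bigl(e(B)\bigr)$, the sum running over the blocks of $\{1,\dots,n\}$, with the two extreme blocks (before the first post and after the last post) kept aside as boundary terms.

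First I would show that this is, up to the boundary terms, a renewal--reward sum. The key estimate is that posts occur with positive asymptotic density and that the gap between consecutive posts has an exponentially decaying tail, say $\mathbb{P}(\text{no post in an interval of length }m)\le Ce^{-cm}$ with $c,C>0$ depending only on $p$; this is obtained by a direct probabilistic argument showing that, scanning left to right, each ``attempt'' at forming a post succeeds with probability bounded below, so that inter-post distances are stochastically dominated by a geometric-type variable. Using that a post screens the order on its right from the order on its left, one then sets up the regeneration lemma: the successive pairs $(L_m,R_m)$, where $L_m$ is the length of the $m$-th block together with its delimiting post and $R_m=\log_{e}\bigl(e(B_m)\bigr)$ is its contribution, form an i.i.d.\ sequence, with the number $N(n)$ of complete blocks inside $\{1,\dots,n\}$ the associated renewal counting process.

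Granting this, the theorem follows from a central limit theorem for randomly indexed sums --- Anscombe's theorem together with Wald's identities, or equivalently the renewal--reward CLT --- applied to $\sum_{m=1}^{N(n)}R_m$. The moment hypotheses are easy: $\mathbb{E}[L_1^2]<\infty$ is immediate from the exponential tail, and $\mathbb{E}[R_1^2]<\infty$ because $e(B)\le |B|!$ forces $R_1\le L_1\log_{e}L_1$, so all finite moments of $L_1$ suffice; also $\mathbb{E}[L_1]\in(0,\infty)$. One reads off $\mu(p)=\mathbb{E}[R_1]/\mathbb{E}[L_1]$, which is positive since blocks of size at least $2$ occur with positive probability and contribute a positive amount, and $\sigma^{2}(p)=\mathbb{E}\bigl[(R_1-\mu(p)L_1)^{2}\bigr]/\mathbb{E}[L_1]$, which is strictly positive because $R_1$ is not almost surely a fixed multiple of $L_1$ (compare the blocks realising $e(B)=1$, $e(B)=2$ and $e(B)=3$). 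Finally I would check that the two boundary blocks contribute only $o_{\mathbb{P}}(\sqrt{n})$, again from the exponential tail on block sizes, so they do not perturb the limit, yielding $\bigl(\log_{e}(e(P))-\mu(p)n\bigr)/\bigl(\sigma(p)\sqrt{n}\bigr)\stackrel{\mathcal{D}}{\longrightarrow}\mathbb{N}(0,1)$.

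The main obstacle is establishing the regeneration lemma and the post tail bounds rigorously: the event ``$i$ is a post'' involves, through the transitive closure, edges among the vertices to the right of $i$ as well as edges out of $i$, so the decoupling of the two sides is not completely transparent and must be argued with care (as is done by Alon, Bollob\'as, Brightwell and Janson); the probabilistic estimate that posts are frequent with light-tailed gaps is likewise the technical heart. If one prefers to sidestep exact regeneration, an alternative is to prove that the block sequence is stationary with exponentially decaying dependence and invoke a central limit theorem for mixing sequences, at the price of a more delicate variance computation. Everything downstream of these points is a routine renewal--reward argument.
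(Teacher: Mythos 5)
The thesis cites this theorem from Alon, Bollob\'{a}s, Brightwell and Janson \cite{abbj} without giving a proof, so there is no internal argument to compare against. Your sketch is, however, the standard proof of the cited result, and it agrees in every essential with the general additive--parameter CLT (Theorem 6.2.5, due to Brightwell) that the thesis quotes shortly afterwards: $\log_{e}\bigl(e(\cdot)\bigr)$ is an additive parameter in the sense of Definition 6.2.2 because $e(\cdot)$ is multiplicative over linear sums, and once one has the exponential tail on inter-post gaps together with the trivial bound $\log_{e}e(B)\le |B|\log_{e}|B|$, the moment hypotheses of Theorem 6.2.5 are met and the claimed normal limit drops out. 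The regeneration setup, the renewal--reward/Anscombe CLT, the moment bounds, the positivity checks on $\mu(p)$ and $\sigma^{2}(p)$, and the disposal of the boundary blocks as $o_{\mathbb{P}}(\sqrt{n})$ are all sound at the level of detail you give. The one heuristic I would flag as genuinely incomplete is the assertion that inter-post gaps are exponentially light because ``each attempt succeeds with probability bounded below'': the chance that a newly added vertex $i$ dominates everything to its left equals $p^{d}$ where $d$ is the current number of maximal elements of the order on $\{1,\dots,i-1\}$, and $d$ is not a priori bounded, so the tail bound requires the positive recurrence of the maximal-element count --- precisely the Markov chain of Albert and Frieze \cite{albfr} exploited in section 6.3 --- rather than a one-step geometric domination. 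You correctly identify this as the technical heart; everything downstream of it is, as you say, routine.
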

\begin{Corollary}
There is a constant $c(p)$ such that {\bf whp.} we have
$$\log_{2}\bigl(e(P)\bigr)=c(p)n+O\bigl(n^{1/2}\omega(n)\bigr),$$
where $\omega(n)$ is any function tending to infinity with $n$ (we usually
think of it as doing so very slowly). 
\end{Corollary}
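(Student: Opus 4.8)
The plan is to derive Corollary 6.1.5 directly from the Theorem of Alon {\em et al.}\ (Theorem 6.1.4), which supplies the only non-trivial input; everything else is a standard ``convergence in distribution implies whp.\ bound'' manipulation together with a change of logarithm base. Write $Z_n = \dfrac{\log_e\bigl(e(P)\bigr) - \mu(p)n}{\sigma(p)\sqrt{n}}$, so that Theorem 6.1.4 says precisely $Z_n \stackrel{\mathcal{D}}{\longrightarrow} \mathbb{N}(0,1)$. The first step is to promote this into the statement that $|Z_n|$ is \textbf{whp.}\ at most any prescribed slowly-growing $\omega(n)$.

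To do this I would fix an arbitrary $\varepsilon > 0$ and choose a threshold $t = t(\varepsilon)$ with $\mathbb{P}(|\mathbb{N}(0,1)| > t) < \varepsilon/2$; this is possible because the standard normal has exponentially decaying tails. Since $\pm t$ are continuity points of the $\mathbb{N}(0,1)$ distribution function, convergence in distribution gives $\mathbb{P}(|Z_n| > t) \to \mathbb{P}(|\mathbb{N}(0,1)| > t) < \varepsilon/2$, so $\mathbb{P}(|Z_n| > t) < \varepsilon$ for all $n$ large. Because $\omega(n) \to \infty$, for all $n$ large we also have $\omega(n) > t$, whence $\mathbb{P}(|Z_n| > \omega(n)) \le \mathbb{P}(|Z_n| > t) < \varepsilon$. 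As $\varepsilon$ was arbitrary, $\mathbb{P}(|Z_n| > \omega(n)) \to 0$; that is, \textbf{whp.}\ $|Z_n| \le \omega(n)$.

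Unwinding the definition of $Z_n$, the event $|Z_n| \le \omega(n)$ is exactly $\bigl|\log_e(e(P)) - \mu(p)n\bigr| \le \sigma(p)\sqrt{n}\,\omega(n)$, so \textbf{whp.}\ $\log_e(e(P)) = \mu(p)n + O\bigl(n^{1/2}\omega(n)\bigr)$ with the implied constant $\sigma(p)$. Dividing through by $\log_e 2$ and setting $c(p) = \mu(p)/\log_e(2)$ converts this to $\log_2(e(P)) = c(p)n + O\bigl(n^{1/2}\omega(n)\bigr)$, which is the assertion; positivity of $c(p)$ is inherited from $\mu(p) > 0$ in Theorem 6.1.4.

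The only genuine subtlety — and it is a familiar one — is the first step: convergence in distribution by itself only controls $\mathbb{P}(|Z_n| > t)$ for \emph{fixed} $t$, so one cannot simply substitute $t = \omega(n)$. The two-parameter argument above (first pick $\varepsilon$, and with it a fixed $t$ making the Gaussian tail small, then let $n$ grow) is what bridges this gap, and crucially no quantitative rate of convergence (Berry--Esseen or otherwise) is needed precisely because $\omega(n)$ is permitted to tend to infinity arbitrarily slowly.
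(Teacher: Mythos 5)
Your proof is correct and follows essentially the paper's own route: apply Theorem 6.1.4, conclude that \textbf{whp.}\ $\bigl|Z_{n}\bigr| \le \omega(n)$ where $Z_{n} = \bigl(\log_{e}(e(P)) - \mu(p)n\bigr)\big/\bigl(\sigma(p)\sqrt{n}\bigr)$, then convert the base of the logarithm to obtain $c(p) = \mu(p)\log_{2}(e)$. The only point of difference is that you spell out the passage from convergence in distribution to a \textbf{whp.}\ bound against the \emph{moving} threshold $\omega(n)$ -- the fix-$\varepsilon$, choose a fixed Gaussian-tail cutoff $t$, let $n \to \infty$ argument -- which the paper asserts directly without comment; your extra care closes a small gap in the paper's exposition rather than taking a different route.
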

\begin{proof} 
For random variable $\mathbb{N}(0,1)$, the probability that it is
between $-\omega(n)$ and $\omega(n)$ is $1-o(1)$ as $n \to \infty$.
Thus for large enough $n$
\begin{eqnarray*}
\frac{\log_{e} \bigl(e(P) \bigr)-\mu (p)\cdot n}{\sigma (p)\cdot \sqrt{n}}
\in \bigl(-\omega(n), \omega(n)\bigr).
\end{eqnarray*}
Changing the base of the logarithm, we get:
\begin{align*}
&\frac{\dfrac{\log_{2}(e(P))}{\log_{2}(e)}-\mu(p)\cdot n}{\sigma(p)\cdot \sqrt{n}}
\in \bigl(-\omega(n),\omega(n)\bigr) \\
&\Longrightarrow \log_{2}\bigl(e(P)\bigr) \in
\biggl(\log_{2}(e) \cdot \left (\mu(p) n-\omega(n)\sigma(p)\sqrt{n} \right ), \log_{2}(e) \cdot \left ( 
\mu(p) n+ \omega(n)\sigma(p)\sqrt{n} \right )\biggr).
\end{align*}
and this gives the claim, with $c(p)=\mu(p)\cdot \log_{2}(e)>0$. 
\end{proof}
\begin{Corollary}
For a random partial order $P(n,p)$ with $0<p<1$ constant, 
\begin{equation*}
\log_{2}\bigl(e(P)\bigr)=\log_{2}(e)\mu(p) n \bigl(1+o(1) \bigr).
\end{equation*}
\end{Corollary}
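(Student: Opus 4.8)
The plan is to derive Corollary 6.1.6 as an immediate consequence of Corollary 6.1.5, which itself follows from the Alon--Bollob\'as--Brightwell--Janson type central limit theorem (Theorem 6.1.4) quoted above. Since everything hard has already been packaged into those statements, this final step is purely a matter of comparing the linear term against the error term.

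First I would recall that Corollary 6.1.5 gives, \textbf{whp.},
\begin{equation*}
\log_{2}\bigl(e(P)\bigr)=c(p)n+O\bigl(n^{1/2}\omega(n)\bigr),
\end{equation*}
where $c(p)=\log_{2}(e)\mu(p)>0$ and $\omega(n)\to\infty$ may be taken to grow arbitrarily slowly --- in particular slowly enough that $n^{1/2}\omega(n)=o(n)$. Then I would simply divide through by the leading term: since $c(p)>0$ is a positive constant independent of $n$, the error term $O\bigl(n^{1/2}\omega(n)\bigr)$ is $o(n)=o\bigl(c(p)n\bigr)$, so
\begin{equation*}
\log_{2}\bigl(e(P)\bigr)=c(p)n\left(1+\frac{O\bigl(n^{1/2}\omega(n)\bigr)}{c(p)n}\right)=c(p)n\bigl(1+o(1)\bigr)=\log_{2}(e)\mu(p)n\bigl(1+o(1)\bigr),
\end{equation*}
which is exactly the assertion of the Corollary.

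The only point deserving a moment's care --- and the closest thing to an ``obstacle'' here, though it is quite mild --- is the logical quantifier structure: Corollary 6.1.5 says that for \emph{any} $\omega(n)\to\infty$ the stated bound holds \textbf{whp.}, so I should fix one such $\omega$, for instance $\omega(n)=\log_{e}(n)$ (or anything $o(n^{1/2})$ would do equally well), apply 6.1.5 with that choice, and observe that the resulting event has probability $1-o(1)$; on that event the displayed asymptotic equality holds, which is precisely what ``\textbf{whp.}'' in the statement of Corollary 6.1.6 requires. I would also remark in passing that $\mu(p)>0$ is part of the hypothesis of Theorem 6.1.4, so the $(1+o(1))$ form is legitimate (we are not dividing by something that could vanish). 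No further computation is needed.
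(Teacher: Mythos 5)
Your proposal is correct and takes exactly the same route as the paper, which simply states that the result ``follows directly from the previous Corollary''; you have merely spelled out the details (choosing a slowly growing $\omega(n)$, noting the error term is $o(n)$, and dividing through by the positive constant $c(p)$) that the paper leaves implicit.
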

\begin{proof} 
The proof follows directly from the previous Corollary. 
\end{proof}

In other words, the logarithm of the number of linear extensions is
linear in $n$. Thus when we use Theorem 5.2.8, we see that both terms in
it $\log_{2}\bigl(e(P) \bigr)$ and $2n$ are linear. So we do not get the
exact asymptotics as in the previous Chapter. Recall that in that 
Chapter, we always had $\log_{2} \bigl(e(P) \bigr)=\omega(n)$
so outweighed the linear term. (In many cases,
$\log_{2}(e(P))$ was of order of magnitude $n\log_{e}(n)$ so won comfortably).
However here it is not clear what multiple of
$n$ will be the time complexity of finding the total order by pairwise
comparisons. Note that it will be at most some constant
multiple of $n$, so this will be quicker than just using about
$2n\log_{e}(n)$ comparisons in Quicksort -- in other words, the partial
order here does substantially speed up the process of finding the true order.

To deal with this question in more detail, we need to know about the structure
of a random graph order. We concentrate to begin with on the case where $p$ is a constant.
In this case, we shall see that basically the partial order consists of
a linear sum of smaller partial orders.  

\section{Additive parameters and decomposing posets}

\begin{Definition}
A vertex $v$ in any partial order is said to be a post if and only if every
other vertex of the partial order is comparable with it.
\end{Definition}
If $v$ is a post in a partial order $(P,\prec)$, then we can write the partial order
as a linear sum of two subposets $(P_{1},\prec)$ and $(P_{2},\prec)$, namely
$$P_{1}=\{x\in P: x\preceq v\}\,P_{2}=\{x\in P: x\succ v\}.$$
Clearly two elements in the same $P_{i}$ which were comparable before
still are, and any element $x$ in $P_{1}$ is smaller than any element $y$
in $P_{2}$ since $x\prec v\prec y$ by assumption. 
\begin{Definition}
A parameter $f$ of partial orders is said to be additive if and only
if, whenever $P$ is the linear sum of two subposets $P_{1}$ and $P_{2}$
we have $f(P)=f(P_{1})+f(P_{2})$.
\end{Definition}

An example of an additive parameter is the height, as if we have a longest
chain in $P_{1}$ and a longest chain in $P_{2}$, we can concatenate them
to form a chain in the linear sum, so $f(P)\geq f(P_{1})+f(P_{2})$:
and in the other direction, given a longest chain in $P$, we restrict to
the subsets and get chains in $P_{1}$ and $P_{2}$, so $f(P)\leq f(P_{1})
+f(P_{2})$, and so they are equal. Another example is the number of further comparisons 
needed to sort a partially ordered set which is a linear sum of two subposets. 
An important Lemma follows	
\begin{Lemma}
The number of further comparisons $c(P)$ of Quicksort which need to be
applied to a poset $P$ to obtain the true ordering, 
with knowledge of where the posts are and where elements are relative to
the posts, is an additive parameter.
\end{Lemma}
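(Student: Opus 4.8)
The plan is to show additivity directly from the definitions, exploiting the fact that when a poset $P$ decomposes as a linear sum $P = P_1 \oplus P_2$ (with the decomposition witnessed by a post, whose location we are told), the sorting problem on $P$ splits cleanly into two independent subproblems. First I would recall that, by Definition of linear sum and the remark following the definition of a post, every element of $P_1$ is already known to be below every element of $P_2$; hence no comparison between an element of $P_1$ and an element of $P_2$ can ever yield new information, and any sensible sorting procedure will never make such a comparison. Thus the set of comparisons needed to complete the order on $P$ is the disjoint union of the comparisons needed to complete the order on $P_1$ and those needed on $P_2$. This immediately gives $c(P) = c(P_1) + c(P_2)$, which is exactly the additivity condition from the Definition of an additive parameter.

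More carefully, I would argue the two inequalities separately, mirroring the height argument given just before the statement. For the upper bound $c(P) \le c(P_1) + c(P_2)$: take optimal comparison strategies for $P_1$ and for $P_2$; running them independently (in any interleaving) determines the total order on each $P_i$, and since $P_1 \oplus P_2$ is then fully determined, this is a valid strategy for $P$ using $c(P_1) + c(P_2)$ comparisons. For the lower bound $c(P) \ge c(P_1) + c(P_2)$: given any strategy for $P$, restrict attention to the comparisons it makes within $P_1$; these must suffice to determine the order on $P_1$ (otherwise two linear extensions of $P$ differing only inside $P_1$ would be indistinguishable), so at least $c(P_1)$ such comparisons occur, and similarly at least $c(P_2)$ comparisons occur within $P_2$; as these two families are disjoint we get at least $c(P_1) + c(P_2)$ comparisons in total. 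Here the knowledge of where the post is (and hence of the decomposition) is what licenses us to speak of "comparisons within $P_i$" and to ignore cross comparisons.

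The main obstacle, such as it is, is being precise about the model: one must pin down that a "comparison" is a query of the relative order of two specified elements, that the information state is exactly a partial order refining the current one, and that cross comparisons between $P_1$ and $P_2$ are genuinely uninformative because the answer is forced by transitivity through the post. Once that is granted, the argument is essentially the same bookkeeping as for the height parameter, and the identity $c(P) = c(P_1) + c(P_2)$ follows. I would also note in passing that the statement is phrased for Quicksort specifically, but the decomposition argument does not depend on the particular algorithm — it is a statement about the intrinsic comparison cost — so the cleanest write-up treats $c(P)$ as the minimum number of further comparisons over all strategies and then remarks that Quicksort respects the decomposition.
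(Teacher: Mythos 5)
Your proof takes essentially the same route as the paper's: both establish the identity $c(P)=c(P_1)+c(P_2)$ by proving the two inequalities separately, exactly mirroring the height argument preceding the Lemma. Your lower-bound argument is in fact more careful than the paper's terse version; as written, the paper's "taking $k=c(P)$" step literally only yields $c(P)\geq\max\bigl(c(P_1),c(P_2)\bigr)$, and the missing ingredient is precisely what you supply, namely that the comparisons within $P_1$ and within $P_2$ form disjoint families and that cross-comparisons across the post are forced by transitivity and hence uninformative. Your closing remark that the argument is algorithm-independent (treating $c(P)$ as a minimum over strategies) is also a sensible clarification that the paper leaves implicit.
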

\begin{proof} 
Write $P=P_{1}\oplus P_{2}$. 
If we sort the whole linear sum of $P_{1}$ and $P_{2}$ with
$k$ comparisons, we have sorted $P_{1}$ and $P_{2}$ as well. Thus, taking
$k=c(P)$, we see that $c(P)\geq c(P_{1})+c(P_{2})$. (We have sorted
$P_{1}$ and $P_{2}$ using $k$ comparisons: we might have done it with fewer).
Conversely, if we have sorted $P_{1}$ and $P_{2}$ with a total of $k$ comparisons,
then we have sorted the whole of $P$ because, by definition, everything
in $P_{1}$ is above everything in $P_{2}$, thus $c(P) \leq c(P_{1})+c(P_{2})$.
The result follows. 
\end{proof}

Now here is a key result from Brightwell \cite{bri}.
\begin{theorem}[Brightwell \cite{bri}]
With probability $1$, the set of posts in $P(\mathbf {Z},p)$ for
$0<p<1$ constant, is infinite.
\end{theorem}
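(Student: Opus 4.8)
The plan is to show that a post appears at some integer $k$ with positive probability in a way that can be repeated infinitely often along $\mathbf{Z}$. First I would define, for an integer $k$, the event that $k$ is a post of $P(\mathbf{Z},p)$: this requires every $i<k$ to be below $k$ and every $j>k$ to be above $k$ in the transitive closure. A clean sufficient condition is that there is a ``barrier'' near $k$: for instance, if the edge $i\sim k$ is present for every $i$ in a block $\{k-m,\dots,k-1\}$ \emph{and} $k\sim j$ is present for every $j$ in $\{k+1,\dots,k+m\}$, then every vertex far to the left of the block is comparable to $k$ provided it is comparable to one of $k-m,\dots,k-1$, and symmetrically on the right. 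The cleanest route, though, is to use a genuinely self-contained local event whose occurrence forces $k$ to be a post regardless of the rest of the graph; I would look for a finite pattern of present/absent edges in a bounded window around $k$ that already guarantees comparability of $k$ with all vertices on each side by transitivity through the window.

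The key difficulty is exactly that comparability with \emph{all} of $\mathbf{Z}$ is not a local event, so I cannot simply say ``$k$ is a post with probability bounded below.'' To get around this I would instead argue as follows. Partition $\mathbf{Z}$ into consecutive blocks $B_\ell$ of length $L$ (for a large constant $L$ chosen depending on $p$). Say a block $B_\ell$ is \emph{good} if within $B_\ell$ there is a vertex $v_\ell$ such that (i) $v_\ell$ is above every earlier vertex of $B_\ell$ and below every later vertex of $B_\ell$, and (ii) the first vertex of $B_\ell$ is above every vertex of $B_{\ell-1}$ \emph{or} we strengthen the definition so that a good block is ``completely ordered relative to its neighbours.'' A simpler and more robust notion: call $B_\ell$ good if \emph{every} pair of vertices, one in $B_{\ell-1}\cup B_\ell$ and one in $B_\ell\cup B_{\ell+1}$ with the left one earlier, is comparable, \emph{and} $B_\ell$ contains a post of the restricted order on $B_{\ell-1}\cup B_\ell\cup B_{\ell+1}$. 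Each block being good is an event depending only on edges among a bounded set of vertices, so goodness of $B_\ell$ depends only on the finitely many edges inside $B_{\ell-1}\cup B_\ell\cup B_{\ell+1}$; hence the events $\{B_\ell \text{ good}\}$ for $\ell$ in any fixed residue class mod $3$ are independent, and each has the same probability $q=q(L,p)$.

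The core lemma is then that $q>0$ for suitable $L$: this is a finite computation. For a window of $3L$ vertices, with probability at least $p^{\binom{3L}{2}}$ all edges are present, in which case the order on the window is the full chain $k<k+1<\dots$, so the middle block trivially contains a post of the restricted order and all the required cross-block comparabilities hold. Hence $q\geq p^{\binom{3L}{2}}>0$; any fixed $L$ (say $L=1$) works. Now by the second Borel--Cantelli lemma applied to the independent family $\{B_{3m}\text{ good}\}_{m\in\mathbf{Z}}$ (or, on the positive side of $\mathbf{Z}$, to $\{B_{3m}\text{ good}\}_{m\geq 0}$, and separately on the negative side), almost surely infinitely many blocks with $\ell\equiv 0 \pmod 3$ are good, both as $\ell\to\infty$ and as $\ell\to-\infty$.

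The final step is to promote ``infinitely many good blocks'' to ``infinitely many posts of $P(\mathbf{Z},p)$.'' Here I would use the additive/post structure already set up in the excerpt: if every block to the right of $B_\ell$ and every block to the left is sandwiched appropriately, a vertex that is a post of the restricted order on a neighbourhood of $B_\ell$, \emph{sitting between two good blocks}, is in fact a post of the whole of $P(\mathbf{Z},p)$ --- because comparability propagates through the good blocks on either side by transitivity, so such a vertex is comparable to every vertex of $\mathbf{Z}$. To make this airtight I would take two good blocks $B_{\ell_1}$ and $B_{\ell_2}$ with $\ell_1<\ell_2$, each in our independent family, and use their internal posts together with the cross-block comparabilities guaranteed by goodness to chain any vertex left of $B_{\ell_1}$ up to the internal post of an intermediate good block and any vertex right of $B_{\ell_2}$ down to it; a short induction on the number of blocks crossed does this. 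Since almost surely there are infinitely many good blocks in both directions, there are almost surely infinitely many such posts, which is the assertion. The main obstacle, and the only place real care is needed, is precisely this last propagation argument --- pinning down the right finite ``goodness'' event so that (a) it is genuinely local hence gives an independent family for Borel--Cantelli, and (b) it still forces a true post of the bi-infinite order; everything else (positivity of $q$, Borel--Cantelli, the elementary transitivity chaining) is routine.
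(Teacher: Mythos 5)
There is a genuine gap at the final step, and in fact the programme of finding a \emph{local} sufficient condition for a vertex to be a post cannot succeed. The difficulty you flag --- that ``$k$ is a post'' is not a local event --- is not merely a technical nuisance to be handled by clever chaining: no pattern of present and absent edges confined to a finite window around $k$ can force $k$ to be a post, because with positive probability some vertex $j$ far outside the window has every incident edge absent, and such a $j$ is isolated even after taking the transitive closure, hence incomparable to $k$. Thus any event verifiable from a bounded window leaves a positive conditional probability of failure, and ``good block implies post'' is false no matter how goodness is defined locally. The concrete symptom appears exactly where you sense trouble: the good blocks supplied by the second Borel--Cantelli lemma on the residue class $\ell\equiv 0\pmod 3$ have pairwise disjoint windows, so nothing forces a vertex lying in a bad block strictly between two good blocks to be comparable to your candidate post, and the proposed induction on blocks crossed has no step to carry out.

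The correct replacement is a conditional computation, not a local event (this is essentially Brightwell's argument in \cite{bri}; the thesis itself only cites the result and does not reprove it). Given the events $\{0\prec i\}$ for all $1\leq i<j$, every index in $\{0,1,\ldots,j-1\}$ is already $\succeq 0$, and $0\not\prec j$ happens precisely when all $j$ fresh edges from $\{0,\ldots,j-1\}$ to $j$ are absent, which occurs with conditional probability $(1-p)^{j}$. Hence
\[
\mathbb{P}(\forall j>0:\ 0\prec j)=\prod_{j=1}^{\infty}\bigl(1-(1-p)^{j}\bigr)>0,
\]
positivity holding because $\sum_{j}(1-p)^{j}<\infty$. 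The edges within $\{0,1,2,\ldots\}$ and the edges within $\{\ldots,-1,0\}$ form disjoint sets, so by independence and symmetry $\mathbb{P}(0\text{ is a post})=\prod_{j\geq 1}\bigl(1-(1-p)^{j}\bigr)^{2}>0$. Finally, the indicator of ``$k$ is a post'' is the shift by $k$ of the indicator at $0$, and the shift on the i.i.d.\ edge field is ergodic, so the pointwise ergodic theorem gives that the density of posts in $\{1,\ldots,n\}$ converges a.s.\ to the positive constant $\mathbb{P}(0\text{ is a post})$; hence there are a.s.\ infinitely many posts, and by reflection the same holds on the negative side. Your instinct to use independence together with a Borel--Cantelli-flavoured argument is sound, but the independence must be exploited in the cascade of conditional probabilities at a single vertex (yielding the positive infinite product), not across disjoint blocks, and the passage from positive probability to a.s.\ infinitude needs stationarity and ergodicity rather than transitivity chaining through sparse good blocks.
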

Basically, each bit between posts will be small. Indeed Brightwell's survey \cite{bri}
also shows that for sufficiently large $k$, given our $p$ (which remember
is constant) there is a constant dependent of $p$, $c(p)>1$ such that the probability that none of
$\{2k,4k, \ldots, 2k^{2}\}$ are posts is less than or equal to $c^{-k}$. We can now start
showing how to use this idea to break down various invariants of $P(n,p)$
into small units. We need some notation about posts. Let their positions be 
$$\ldots U_{-1},U_{0},U_{1}\ldots$$
where $U_{0}$ is the first post at or to the right of $0$. Then we say that
$P_{j}$ is the poset induced on the interval $(U_{j}, U_{j+1}]$. These posets 
are called the factors of the 
partial order. The next Theorem from \cite{bri} presents an 
important result, regarding convergence in distribution
of additive parameters.

\begin{theorem}[Brightwell \cite{bri}]
Let $p$ be a constant with $0<p<1$. Let $f$ be an additive parameter of
partial orders which is not proportional to $\vert P\vert$. Let
$Y$ and $Z$ be the random variables $f(P_{0})$ and $f(P_{-1})$ respectively.
Further, suppose that the moments $\mathbb{E}(Y^{r})$ and $\mathbb{E}(Z^{r})$ are finite for 
all $r \in \mathbf {N}$. Then
there exist constants $\mu=\mu(p) > 0$ and $\sigma=\sigma(p)>0$,
such that
$\mathbb{E} \bigl(f (P(n, p)) \bigr)/n \to \mu$ and 
$\operatorname{Var} \bigl(f (P(n, p)) \bigr)/n \to \sigma^{2}$. Furthermore
\begin{eqnarray*}
\frac{f \bigl(P(n, p)\bigr)-\mu(p) \cdot n}{\sigma(p)
\cdot \sqrt{n}}\stackrel{\mathcal{D}}{\longrightarrow} \mathbb{N}(0, 1),
\end{eqnarray*}
with convergence of all moments.
\end{theorem}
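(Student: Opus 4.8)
The plan is to derive this statement as an application of a renewal-type central limit theorem, using the post structure established in the previous Theorem of Brightwell as the backbone. The key observation is that if $f$ is additive and the posts of $P(\mathbf{Z},p)$ occur at positions $\ldots U_{-1},U_{0},U_{1},\ldots$, then $f\bigl(P(n,p)\bigr)$ is almost a sum of independent contributions $f(P_j)$ over the factors $P_j$ lying between consecutive posts in $[1,n]$. First I would make precise the decomposition: letting $N(n)$ be the number of posts of $P(\mathbf{Z},p)$ in $[1,n]$ (equivalently the number of complete factors), one writes
\begin{equation*}
f\bigl(P(n,p)\bigr)=\sum_{j} f(P_j) + (\text{boundary terms}),
\end{equation*}
where the sum runs over the factors contained in $[1,n]$ and the boundary terms account for the (at most two) incomplete factors at the ends. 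Because the gaps $U_{j+1}-U_j$ between posts have exponentially decaying tails — this is exactly the estimate from Brightwell's survey quoted in the excerpt, that the probability none of $\{2k,4k,\ldots,2k^2\}$ is a post is at most $c^{-k}$ — the number of factors $N(n)$ is of order $n$ and concentrated, and the boundary terms are negligible at the scale $\sqrt{n}$.

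Next I would set up the probabilistic structure cleanly. By stationarity of $P(\mathbf{Z},p)$ under integer shifts, the gap lengths $L_j:=U_{j+1}-U_j$ and the factor values $V_j:=f(P_j)$ form a stationary sequence; moreover, conditionally on the post positions, distinct factors are independent, since the relations inside $(U_j,U_{j+1}]$ depend only on edges with both endpoints in that interval (posts "cut" the order into a linear sum, as established in Definition 6.2.1 and the surrounding discussion). So $(L_j,V_j)$ is in fact an i.i.d. sequence (up to the indexing convention distinguishing $P_0$ and $P_{-1}$), with $\mathbb{E}(L_j)<\infty$ and, by hypothesis, all moments $\mathbb{E}(V_j^r)<\infty$. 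Then $f\bigl(P(n,p)\bigr)$ is a renewal-reward sum: partial sums of $V_j$ stopped at the renewal time determined by $\sum L_j \approx n$. I would invoke the renewal central limit theorem (equivalently, combine the CLT for $\sum_{j=1}^{m}V_j$, the strong law giving $N(n)/n\to 1/\mathbb{E}(L_1)$, and Anscombe's theorem, or cite Brightwell \cite{bri} directly as the proof is in that reference) to conclude
\begin{equation*}
\frac{f\bigl(P(n,p)\bigr)-\mu(p)n}{\sigma(p)\sqrt{n}}\stackrel{\mathcal{D}}{\longrightarrow}\mathbb{N}(0,1),
\end{equation*}
with $\mu(p)=\mathbb{E}(V_1)/\mathbb{E}(L_1)$ and $\sigma^2(p)$ the asymptotic variance produced by the renewal CLT; positivity of $\sigma(p)$ comes precisely from the hypothesis that $f$ is not proportional to $|P|$, so that $V_j$ is not a deterministic multiple of $L_j$ and genuine fluctuation survives. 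Convergence of all moments follows from uniform integrability of the powers, which in turn follows from the exponential gap tails together with the finiteness of $\mathbb{E}(V_j^r)$ for every $r$.

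The main obstacle, and the part deserving the most care, is controlling the two incomplete end-factors and the dependence they introduce — the interval $[1,n]$ does not start or end at a post, so the leftmost and rightmost pieces are governed by the conditional laws $f(P_0)$ and $f(P_{-1})$ rather than the generic factor law, which is exactly why the statement singles those two random variables out. I would handle this by showing that the contribution of these boundary pieces is $O_p(1)$ (their moments are finite by assumption and they do not grow with $n$), hence $o(\sqrt{n})$ in probability and therefore invisible in the limit; and that replacing $N(n)$ by its deterministic equivalent $n/\mathbb{E}(L_1)$ incurs only a fluctuation already absorbed into $\sigma(p)$. Since all of this machinery — the exponential tail bound on post gaps, the i.i.d. factor decomposition, and the renewal CLT — is carried out in Brightwell's survey \cite{bri}, the cleanest route for the thesis is to state the reduction to a renewal-reward sum explicitly and then refer to \cite{bri} for the technical verification.
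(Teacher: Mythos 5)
First, a point of orientation: the paper does not prove this theorem at all --- it is stated verbatim with a citation to Brightwell \cite{bri} and no supporting argument (the next piece of text is the one-line derivation of Corollary $6.2.6$, which simply invokes it). So there is no in-paper proof to compare against; your sketch is a reconstruction of the argument in the reference. As a reconstruction it is faithful to the architecture: the posts cut $P(\mathbf{Z},p)$ into a regenerative sequence of factors whose lengths have exponential tails, an additive parameter accumulates as a renewal-reward sum, the random-index (Anscombe) CLT delivers the Gaussian limit at scale $\sqrt{n}$, and the non-proportionality hypothesis guarantees $\sigma>0$ by forbidding the factor reward from being a deterministic linear function of the factor length, which is what would collapse the variance. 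You have also correctly read off why $P_0$ and $P_{-1}$ appear explicitly in the hypotheses: they are the forward and backward factors relative to a tagged origin and govern the contributions at the ends of $[1,n]$.

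Where you move too fast is exactly there, in the treatment of those end contributions. You dismiss them as $O_p(1)$ ``since their moments are finite by assumption,'' but the assumption bounds moments of $f(P_0)$ and $f(P_{-1})$, which are whole factors of the two-sided poset, whereas the leftmost piece of $P(n,p)$ is the restriction of the factor containing $1$ to $\{1,\ldots,U\}$, with $U$ the first $\mathbf{Z}$-post $\geq 1$ --- a truncation, not a factor. Additivity gives $f(P_1\oplus P_2)=f(P_1)+f(P_2)$; it gives you no control of $f$ on an arbitrary sub-poset of a factor, so the moment hypothesis does not bound the end-piece contribution mechanically. Closing this gap needs either the exponential gap-tail estimates (the end pieces are short with high probability) combined with some quantitative estimate for $f$ on short intervals, or a stationarity argument identifying the end piece with a genuine factor of $P(n,p)$ whose law you can tie back to $P_0$ and $P_{-1}$. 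You are right to delegate the technicalities to \cite{bri}, but this is the one link in your chain that does not follow formally from the ingredients you have laid out, and the write-up should acknowledge it rather than present it as immediate.
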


The following Corollary is a consequence of Theorem $6.2.5$.
\begin{Corollary}
Given $0<p<1$ constant, the height of $P(n,p)$, 
which is an additive parameter is {\bf whp.} 
equal to $\mu(p)\cdot n\cdot \bigl(1+o(1) \bigr )$. 
\end{Corollary}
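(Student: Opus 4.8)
The plan is to apply Theorem $6.2.5$ directly, taking the additive parameter $f$ to be the height function $h$. Three hypotheses must be verified. First, $h$ is additive: this was already observed above, since a longest chain of $P_1 \oplus P_2$ restricts to chains of $P_1$ and $P_2$, giving $h(P_1\oplus P_2) \le h(P_1) + h(P_2)$, while concatenating longest chains of $P_1$ and $P_2$ (legitimate because everything in $P_1$ lies below everything in $P_2$) gives the reverse inequality. Second, $h$ is not proportional to $\vert P\vert$: an antichain on $n$ vertices has height $1$ while a chain on $n$ vertices has height $n$, so no constant $c$ can satisfy $h(P) = c\vert P\vert$ for all $P$. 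Third, writing $Y = h(P_0)$ and $Z = h(P_{-1})$ for the heights of the boundary factors, we must check $\mathbb{E}(Y^r) < \infty$ and $\mathbb{E}(Z^r) < \infty$ for every $r \in \mathbf{N}$.

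The substantive point is this last moment condition, and it is where I expect the main work to lie. Since $h(P_0) \le \vert P_0\vert = U_1 - U_0$, it suffices to show that the number of vertices in a factor has all moments finite. This follows from the post-spacing estimate quoted from Brightwell's survey: for a constant $c = c(p) > 1$, the probability that none of $\{2k, 4k, \ldots, 2k^2\}$ is a post is at most $c^{-k}$. Unpacking this yields a tail bound of stretched-exponential type, roughly $\mathbb{P}(\vert P_0\vert \ge m) \le C_1 \exp\bigl(-C_2\sqrt{m}\bigr)$ for constants $C_1, C_2 > 0$ depending on $p$, which is far more than enough to give $\mathbb{E}(\vert P_0\vert^{r}) = \sum_{m\ge 1} r\, m^{r-1}\,\mathbb{P}(\vert P_0\vert \ge m) < \infty$ for each fixed $r$; the translation invariance of $P(\mathbf{Z}, p)$ handles $P_{-1}$ the same way. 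The delicate bit is simply converting the stated post-free-interval estimate into a clean tail bound on $\vert P_0\vert$ (and checking the interaction with the precise definitions of $P_0$ and $P_{-1}$ used in Theorem $6.2.5$); everything else is bookkeeping.

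With the hypotheses in place, Theorem $6.2.5$ supplies constants $\mu(p) > 0$ and $\sigma(p) > 0$ with $\mathbb{E}\bigl(h(P(n,p))\bigr)/n \to \mu(p)$, $\operatorname{Var}\bigl(h(P(n,p))\bigr)/n \to \sigma^2(p)$, and
\begin{equation*}
\frac{h(P(n,p)) - \mu(p)\, n}{\sigma(p)\sqrt{n}} \stackrel{\mathcal{D}}{\longrightarrow} \mathbb{N}(0,1).
\end{equation*}
To deduce the \textbf{whp.} statement, fix any $\omega(n) \to \infty$ growing slowly enough that $\omega(n) = o(\sqrt{n})$; a standard normal variable lies in $(-\omega(n), \omega(n))$ with probability $1 - o(1)$, so, exactly as in the earlier corollaries derived from a CLT of this kind, \textbf{whp.}
\begin{equation*}
h(P(n,p)) \in \bigl(\mu(p)\, n - \omega(n)\sigma(p)\sqrt{n},\ \mu(p)\, n + \omega(n)\sigma(p)\sqrt{n}\bigr),
\end{equation*}
that is $h(P(n,p)) = \mu(p)\, n + O\bigl(\omega(n)\sqrt{n}\bigr) = \mu(p)\, n\,(1 + o(1))$ \textbf{whp.}, which is the claim.
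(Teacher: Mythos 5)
Your proof is correct and follows the same route as the paper, which simply says the Corollary ``follows easily from the previous Theorem'' (Theorem 6.2.5). You have supplied the details the paper leaves to the reader: verifying additivity, non-proportionality to $\vert P\vert$, and the finiteness of all moments of $h(P_0)$ and $h(P_{-1})$ via the post-spacing estimate, then translating the CLT into a \textbf{whp.} statement exactly as in Corollary 6.1.5. In particular, your observation that $\mathbb{P}(\vert P_0\vert \ge 2k^2) \le c^{-k}$ yields a stretched-exponential tail $\mathbb{P}(\vert P_0\vert \ge m) \le c^{-\sqrt{m/2}}$ and hence all moments is the genuinely substantive step that the paper glosses over, and it is correctly handled.
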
 
\begin{proof} 
This follows easily from the previous Theorem. 
\end{proof}

\section{Average height of random graph orders}

What we really need to do now is to obtain bounds for the average number of linear extensions.
Albert and Frieze \cite{albfr}, derived estimates for the average height of
a random graph order, which is an additive parameter as we previously saw.
The idea is the following. A random graph is sequentially constructed; at
each step a new vertex $j$ is added and the probability of an edge from it
to any previously existing vertex $i$ with $i \leq j$ is $1/2$. They
consider both an underestimate of the height and an overestimate. 
We present a generalisation of this construction
with a constant probability $p \in (0, 1)$ of an edge arising. 
\begin{theorem}
The underestimate $f(p)$ and overestimate $h(p)$ increments of 
the average height are given by:
\begin{align*}
& f(p) = 1 - \dfrac{ \displaystyle \sum_{j=1}^{\infty}\Biggl ( \biggl ( 
\displaystyle \prod_{i=1}^{j-1}\dfrac {p(1-p)^{i}}{ 1-(1-p)^{i+2}} \biggr )
\biggl ((1-p)^{j} \biggr ) \Biggr )}
{\displaystyle \sum_{j=1}^{\infty}\Biggl 
(\displaystyle \prod_{i=1}^{j-1}\dfrac {p(1-p)^{i}}{ 1-(1-p)^{i+2}  } \Biggr ) } \\
& h(p) = \dfrac{1}{\displaystyle \sum_{j=1}^{\infty}(1-p)^{\frac{j(j-1)}{2}}}.
\end{align*}
\end{theorem}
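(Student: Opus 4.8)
The plan is to follow the Markov-chain construction of Albert and Frieze, generalising the probability of an edge from $1/2$ to a constant $p\in(0,1)$. First I would set up the sequential construction: vertices $1,2,\ldots$ are added one at a time, and when vertex $j$ is added, an edge to each existing vertex $i<j$ is present independently with probability $p$. Following the transitive closure, the height of the poset on $\{1,\ldots,n\}$ increases by $1$ precisely when the newly added vertex lies strictly above every vertex in some longest chain of the current poset. Since the exact height is hard to track directly, I would introduce, exactly as Albert and Frieze do, an \emph{underestimate} process and an \emph{overestimate} process, each of which is a Markov chain on a suitable state (for the overestimate, the state is essentially ``how many of the most recent vertices form a down-set with no edge to the current top of the chain''; for the underestimate, one tracks a more refined statistic recording the gap structure near the top of the current longest chain). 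In each case the height increment per new vertex converges, by the renewal/ergodic theorem for the associated Markov chain, to the reciprocal of the expected return time to the ``height-increasing'' state, and the stationary distribution of that chain gives the displayed series.

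Next I would carry out the two computations separately. For the overestimate $h(p)$: the relevant chain regenerates whenever a new vertex is joined to the current top element; starting from a regeneration, after $j$ further vertices with no such joining edge the probability that none of the $\binom{j}{2}$-type configuration has forced a new comparison is $(1-p)^{j(j-1)/2}$, so the expected number of vertices between successive height increments is $\sum_{j=1}^{\infty}(1-p)^{j(j-1)/2}$, and the increment is its reciprocal. This is the cleaner of the two. For the underestimate $f(p)$: here one runs the chain that only counts a height increase when a strictly stronger local condition holds; the state is the length $j$ of the current ``pending block'', the transition probability of extending the block from $i$ to $i+1$ is $p(1-p)^{i}/\bigl(1-(1-p)^{i+2}\bigr)$ (the numerator being the probability the new vertex attaches in the right place, the denominator a conditioning factor ruling out the configurations already accounted for), and the probability of terminating the block at length $j$ carries the extra factor $(1-p)^{j}$. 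The stationary-measure computation then yields the quotient of the two series displayed for $f(p)$: the numerator is $\sum_j (\text{stationary weight of state }j)\cdot(1-p)^j$ and the denominator is the total stationary mass $\sum_j(\text{stationary weight of state }j)$, with the stationary weight of state $j$ being the product $\prod_{i=1}^{j-1} p(1-p)^i/(1-(1-p)^{i+2})$, and $f(p)=1$ minus that quotient.

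Finally I would check convergence of all the series (each is dominated by a geometric or a $(1-p)^{\Theta(j^2)}$ tail, so they converge for every $p\in(0,1)$, and the denominators never vanish since each factor lies strictly between $0$ and $1$), and invoke Theorem 6.2.5 together with Corollary 6.2.6 to conclude that these limiting increments are exactly the almost-sure asymptotic slopes: the true average height of $P(n,p)$ lies between $f(p)\,n\bigl(1+o(1)\bigr)$ and $h(p)\,n\bigl(1+o(1)\bigr)$. The main obstacle I anticipate is the underestimate side: getting the transition probabilities of the pending-block Markov chain exactly right — in particular justifying the conditioning factor $1/\bigl(1-(1-p)^{i+2}\bigr)$ and verifying that the event it counts really is a lower bound for a genuine height increment requires a careful combinatorial description of which new vertices are guaranteed to extend the longest chain after transitive closure. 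The overestimate series, by contrast, should fall out quickly once the regeneration structure is identified.
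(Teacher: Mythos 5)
Your overall plan is the right one and matches the paper: generalise the Albert--Frieze sequential construction from edge probability $1/2$ to $p$, track the number of top endpoints of the current longest chain(s) as a Markov chain, solve for the stationary distribution, and read off the limiting height increment, then invoke Theorem 6.2.5/Corollary 6.2.6 to convert this to the asymptotic slope. Your treatment of the overestimate $h(p)$ is essentially correct: the chain $\phi_k$ goes from state $j$ to state $1$ with probability $1-(1-p)^j$ (a height increment) and to state $j+1$ with probability $(1-p)^j$, giving $p_{j+1}=(1-p)^{j(j+1)/2}p_1$, normalisation $p_1\sum_{j\geq 1}(1-p)^{j(j-1)/2}=1$, and increment rate $p_1$ after the telescoping sum $\sum_j p_j\bigl(1-(1-p)^j\bigr)=p_1$; interpreting $\sum_j(1-p)^{j(j-1)/2}$ as the expected regeneration time is the same fact in renewal language.

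The underestimate description, however, contains a genuine misreading that would derail the computation if you followed it literally. You describe $p(1-p)^i/\bigl(1-(1-p)^{i+2}\bigr)$ as the transition probability of advancing from state $i$ to $i+1$, with the denominator a ``conditioning factor''. That is not what happens. The underestimate chain $\theta_k$ has a \emph{three-way} transition from state $j$: to state $1$ with probability $1-(1-p)^j$ (height goes up by $1$), to state $j+1$ with probability $p(1-p)^j$ (this probability $p$ is exactly the pessimistic assumption -- the paper takes the smallest plausible chance of gaining a new endpoint), and a self-loop at state $j$ with probability $(1-p)^{j+1}$. The factor $1-(1-p)^{j+2}$ arises only when you solve the balance equation at state $j+1$: since state $j+1$ has a self-loop of mass $(1-p)^{j+2}$, balance gives $p_{j+1}\bigl(1-(1-p)^{j+2}\bigr)=p(1-p)^{j}\,p_{j}$. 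If instead you worked with your two-state transition rule and computed the height increment rate as $\sum_j p_j\bigl(1-q_j\bigr)$ with $q_j=p(1-p)^j/\bigl(1-(1-p)^{j+2}\bigr)$, you would get $p_1$, which is not $f(p)$: the correct rate is $\sum_j p_j\bigl(1-(1-p)^j\bigr)=1-\sum_j p_j(1-p)^j$, and the weight $(1-p)^j$ in the numerator comes from the paper's genuine increment probability $1-(1-p)^j$, not from $1-q_j$. So the formula in your plan is copied from the statement rather than derived from your stated transitions. To fix this, replace the two-state ``pending-block'' picture with the three-way chain above, derive the stationary ratio from the balance equation, and only then compute $f(p)=\sum_j p_j\bigl(1-(1-p)^j\bigr)$.
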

\begin{proof}
Let $l_{k}$ be
the length of the longest chain in a random graph order of size 
$k \in \mathbf N$ and $d_{k}$ be the number of top endpoints of longest
chains. Consider the addition of vertex $(k+1)$.
The event that the new vertex $(k+1)$ is the new, unique, endpoint
of a longest chain is the event that one or more of the edges from $(k+1)$ to
the $d_{k}$ endpoints actually arises. In this event, what will
happen is that the length of the longest chain will increase by $1$ and the number
of endpoints of longest chains will drop to $1$, with probability $1-(1-p)^{d_{k}}$. 
The complementary event is that the number of endpoints will be increased by one.

It is at the next step that we make a pessimistic assumption. 
The pessimistic assumption is that, in these cases where $(k+1)$ does
not become the unique endpoint of a longest chain, the number of endpoints
increases by $1$ with probability only $p$. In fact, though there will
certainly be at least one vertex one level below all the $d_{k}$ upper
endpoints, in most cases there will be more than that -- say $r$ of them --
so if $(k+1)$ is joined to any of them the number of longest chains will
increase, and the probability of this happening will be $1-(1-p)^{r}$. In
this event, the number of endpoints will increase by at least $1$, but
the length of the longest chain will remain unchanged. 

Let the random variables $\eta$ and $\theta$ denote the underestimates 
of the length of
the longest chain and the number of endpoints of the longest chain(s)
respectively. These variables obey the following recurrence: 
\begin{eqnarray*}
(\eta_{k+1}, \theta_{k+1})=
\begin{cases} 
(\eta_{k}+1, 1) & \mbox{ ~with~probability~$1-(1-p)^{\theta_{k}}$} \\
(\eta_{k}, \theta_{k}+1) & \mbox{ ~with~probability~$p(1-p)^{\theta_{k}}$} \\
(\eta_{k}, \theta_{k}) & \mbox{ ~with~ probability~$(1-p)(1-p)^{\theta_{k}}$}
\end{cases}
\end{eqnarray*}
Also, let $\mu$ and $\phi$ be the overestimates 
of the length of
the longest chain and the number of endpoints of the longest chain(s)
respectively. In this case, the recurrence relation is:  
\begin{eqnarray*}
(\mu_{k+1}, \phi_{k+1} )=
\begin{cases}
(\mu_{k}+1, 1) & \mbox{~with~probability~$1- (1-p)^{\phi_{k}}$} \\
(\mu_{k}, \phi_{k}+1) & \mbox{~with~probability~$(1-p)^{\phi_{k}}$} 
\end{cases}
\end{eqnarray*}
We consider only the second component. This is clearly a positive recurrent,
irreducible, aperiodic, Markov process with state space the positive integers.
Thus a stationary distribution does exist with limiting probabilities 
\begin{eqnarray*}
p_{j} = \lim_{n \to \infty} \mathbb{P}(\theta_{n}=j)
\end{eqnarray*}
and when $\theta_{k+1}=j+1$, then this could
come about from $\theta_{k}$ being $j$ (with probability $p(1-p)^{j}$) or
from $\theta_{k}$ being $j+1$ (with probability $(1-p)^{(j+1)+1}=(1-p)^{j+2})$. 
The solution giving the stationary distribution
$p_{j}=\displaystyle \lim_{n \to \infty} \mathbb{P}(\theta_{n}=j)$ for the first case is
\begin{align*}
p_{j+1} & =p(1-p)^{j}p_{j}+(1-p)(1-p)^{j+1}p_{j+1}  \\
\Longrightarrow p_{j+1} & =\dfrac {p(1-p)^{j}}{ 1-(1-p)^{j+2}  }\cdot p_{j} \\
\Longrightarrow p_{j+1}& =\prod_{i=1}^{j}\dfrac {p(1-p)^{i}}{ 1-(1-p)^{i+2}  }\cdot p_{1}. 
\end{align*}
The value of $p_{1}$ can be found by the following equation
\begin{eqnarray*}
p_{1}\sum_{j=1}^{\infty}\Biggl ( \prod_{i=1}^{j-1}\dfrac
 {p(1-p)^{i}}{ 1-(1-p)^{i+2}  } \Biggr ) = 1.
\end{eqnarray*}
The expected height increment generally for $p \in (0, 1)$ is
\begin{align*}
p_{1}\sum_{j=1}^{\infty}\Biggl ( \biggl ( \prod_{i=1}^{j-1}\dfrac 
{p(1-p)^{i}}{ 1-(1-p)^{i+2} } \biggr )
\biggl (1-(1-p)^{j} \biggr ) \Biggr ).
\end{align*}
Substituting $p_{1}$, which is a function of $p$, the average height increment is
\begin{eqnarray*}
f(p)=\dfrac{1}{\displaystyle \sum_{j=1}^{\infty}\Biggl (\displaystyle \prod_{i=1}^{j-1}
\dfrac {p(1-p)^{i}}{ 1-(1-p)^{i+2}  } \Biggr )}\sum_{j=1}^{\infty}\Biggl 
( \biggl ( \prod_{i=1}^{j-1}\dfrac {p(1-p)^{i}}{ 1-(1-p)^{i+2}} 
\biggr )\biggl (1-(1-p)^{j} \biggr ) \Biggr ),
\end{eqnarray*}
which is equal to 
\begin{eqnarray*}
1 - \dfrac{ \displaystyle \sum_{j=1}^{\infty}\Biggl ( \biggl ( \displaystyle \prod_{i=1}^{j-1}
\dfrac {p(1-p)^{i}}{ 1-(1-p)^{i+2}  } \biggr )\biggl 
((1-p)^{j} \biggr ) \Biggr )}{\displaystyle \sum_{j=1}^{\infty}
\Biggl (\displaystyle \prod_{i=1}^{j-1}
\dfrac {p(1-p)^{i}}{ 1-(1-p)^{i+2}  } \Biggr )}.
\end{eqnarray*}

For the second (overestimate) case, the stationary distribution 
obeys the following recursive relation
\begin{align*}
p_{j+1} & =(1-p)^{j}p_{j}  \\
\Longrightarrow p_{j+1} & =(1-p)^{j} \cdot (1-p)^{j-1} \cdot \ldots \cdot (1-p)p_{1} 
=(1-p)^{\frac{j(j+1)}{2}}p_{1}.
\end{align*}
Thus, the value of $p_{1}$ can be retrieved by the following equation
\begin{eqnarray*}
\sum_{j=1}^{\infty}p_{j}=p_{1}\sum_{j=1}^{\infty}(1-p)^{\frac{j(j-1)}{2}}=1
\end{eqnarray*}
and the average height increment is
\begin{eqnarray*}
p_{1}\sum_{j=1}^{\infty}\biggl ( (1-p)^{\frac{j(j-1)}{2}} \bigr (1-(1-p)^{j} \bigr ) \biggr ),
\end{eqnarray*}
which further simplified yields the simple expression 
\begin{eqnarray*}
p_{1}\sum_{j=1}^{\infty}\left ((1-p)^{\frac{j(j-1)}{2}} -(1-p)^{\frac{j(j+1)}{2}} \right) = p_{1}.
\end{eqnarray*}
Therefore, the average height increment in the overestimate case is 
\begin{eqnarray*}
h(p) = \dfrac{1}{\displaystyle \sum_{j=1}^{\infty}(1-p)^{\frac{j(j-1)}{2}}}.
\end{eqnarray*}
The argument is complete. 
\end{proof}

We can rewrite the inverse of the overestimate as
\begin{eqnarray*}
\sum_{j=1}^{\infty}(1-p)^{\frac{j(j-1)}{2}} &=&
(1-p)^{0} + (1-p)^{1}+(1-p)^{1+2} + (1-p)^{1+2+3} + \ldots \\
&=& 1 +(1-p) +(1-p)(1-p)^{2}+(1-p)(1-p)^{2}(1-p)^{3}+ \ldots \\
&=& \sum_{j=1}^{\infty} \left (\prod_{i=1}^{j-1}(1-p)^{i} \right).
\end{eqnarray*}
The average height increment
can be computed in terms of $\theta$ elliptic functions. This class
of functions arises in many different areas of Mathematics and a
comprehensive account of their analyses can be found at \cite{abr}, \cite{whi}.
The overestimate height increment assumes the simple form of 
$$\dfrac{2\sqrt [8] {1-p}}{\theta_{2}(0, \sqrt{1-p})},$$ where the
$\theta_{2}$ function is defined by, (see section 16.27 in Abramowitz and Stegun \cite{abr}),
\begin{eqnarray*}
\theta_{2}(z, q) = 2q^{1/4}\sum_{n=0}^{\infty}q^{n(n+1)}\cos \bigl ((2n+1)z \bigr),
\end{eqnarray*}
with $q=1-p$. 

The previous equation can be expressed as a product, instead of summation. 
It holds \cite{whi},
\begin{eqnarray*}
\theta_{2}(z, q) = 2q^{1/4}G \cos (z) \prod_{n=1}^{\infty} \left (1 + 2q^{2n} \cos(2z) + q^{4n} \right ),
\end{eqnarray*}
where $G = \displaystyle \prod_{n=1}^{\infty} ( 1 - q^{2n} )$. Therefore, 
the expected overestimate height increment becomes
\begin{eqnarray*}
\dfrac{1}{\displaystyle \prod_{n=1}^{\infty} \bigl ( 1 - (1-p)^{n} \bigr 
)\displaystyle \prod_{n=1}^{\infty} \bigl ( 1 + 2(1-p)^{n} +(1-p)^{2n} \bigr )} \\
= \dfrac{1}{\displaystyle \prod_{n=1}^{\infty} \biggl 
( \bigl (1 -(1-p)^{n} \bigr ) \bigl ( 1 + (1 -p)^{n} \bigr )^{2} \biggr)}.
\end{eqnarray*}

The following \textsc{Maple} graph plots the underestimate and 
the overestimate functions. Note that 
for $p \geq 0.7$, both $f(p)$ and the overestimate $h(p)$ 
are in fact very close to $p$.
\vspace{.5cm}
\begin{figure}[H]
\centering
\includegraphics[width=9cm, height=9cm, keepaspectratio=true]{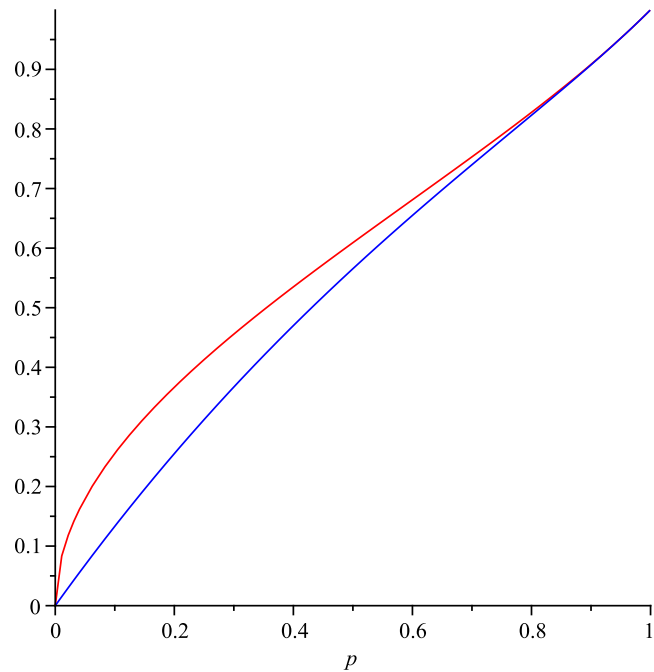}
\caption{Plot of underestimate (blue) and 
overestimate (red) functions of the expected height increment of a random graph order.}
\end{figure}
In the next section, 
we give sharp bounds for the average height.

\section{Bounds on the height of random graph orders}

We have managed to obtain functions $f(p)$, 
which is the underestimate increment of the expected height and
$h(p)$, which is the overestimate of the expected height of a random
graph partial order. We first
give a lower bound on the underestimate function, which is easier
to work with (more tractable). 
\begin{Lemma}
For all $p \in (0,1)$,
$f(p) \geq p$.
\end{Lemma}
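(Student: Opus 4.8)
The plan is to rewrite everything in terms of $q := 1-p \in (0,1)$ and abbreviate the partial products in the statement by $P_j := \prod_{i=1}^{j-1}\frac{p q^{i}}{1-q^{i+2}}$ for $j\ge 1$, with $P_1 = 1$ (empty product). First I would record the elementary facts needed to make sense of $f(p)$: each $P_j$ is strictly positive (since $p,q>0$ and $1-q^{i+2}>0$), the ratio $P_{j+1}/P_j = p q^{j}/(1-q^{j+2}) \to 0$ as $j\to\infty$, so by the ratio test both series $\sum_{j\ge 1} P_j$ and $\sum_{j\ge 1} P_j q^{j}$ converge, and $\sum_{j\ge 1}P_j \ge P_1 = 1 > 0$. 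With this notation $f(p) = 1 - \dfrac{\sum_{j\ge 1} P_j q^{j}}{\sum_{j\ge 1} P_j}$.

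Next I would reduce the claim to a trivial inequality. Since $\sum_{j\ge 1}P_j$ is positive and finite, $f(p)\ge p = 1-q$ is equivalent to $\dfrac{\sum_{j\ge 1}P_j q^{j}}{\sum_{j\ge 1}P_j}\le q$, i.e. to $\sum_{j\ge 1}P_j q^{j} \le q\sum_{j\ge 1}P_j$, i.e. to $\sum_{j\ge 1}P_j\bigl(q^{j}-q\bigr)\le 0$. Then I would observe that $q^{j}-q = q\bigl(q^{j-1}-1\bigr)\le 0$ for every $j\ge 1$, because $0<q<1$ forces $q^{j-1}\le 1$ (with equality exactly at $j=1$). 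Hence every term $P_j(q^{j}-q)$ is nonpositive, so the sum is nonpositive, and the inequality $f(p)\ge p$ follows.

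Honestly there is no real obstacle here: the argument is a one-line rearrangement followed by the observation $q^{j-1}\le 1$. The only point that deserves a sentence of care is the justification of convergence of the two series (so that the termwise comparison and the division by $\sum_j P_j$ are legitimate), and that is dispatched by the ratio test as above. I would also note in passing that the inequality is strict for $p\in(0,1)$, since the $j=2$ term $P_2(q^2-q)$ is strictly negative, in case a strict version is wanted later.
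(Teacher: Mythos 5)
Your proof is correct, but it takes a genuinely different route from the paper. The paper argues probabilistically: it goes back to the Markov-chain construction underlying $f(p)$, observes that the number of top endpoints $\theta_k$ satisfies $\theta_k \geq 1$ at every step, so the height increases at each step with probability $1-(1-p)^{\theta_k}\geq 1-(1-p)=p$, and concludes the stationary increment is at least $p$; in the paper's phrasing, a single greedy chain already grows at rate $p$, and the process defining $f(p)$ can only do better. You instead work entirely from the closed-form series expression for $f(p)$, rewrite $f(p)\ge 1-q$ as $\sum_{j\ge 1}P_j(q^{j}-q)\le 0$, and win because $q^{j}\le q$ for every $j\ge 1$ when $0<q<1$. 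Your route is self-contained and purely algebraic — it needs only the stated formula plus a ratio-test check of convergence, not the probabilistic derivation — and it hands you strictness for free from the $j=2$ term; the paper's route is shorter on the page and explains conceptually \emph{why} the bound holds (the process dominates the one-endpoint greedy chain), but it implicitly relies on the reader accepting that stochastic domination of the step-by-step increments carries over to the stationary expectations. Both are valid; yours is the safer argument to present in isolation.
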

\begin{proof} 
The probability of an edge arising between a newly added vertex $(k+1)$ and any of the 
$k$ existing vertices in a random graph is $p$, independently of the other edges.
Thus, with probability $p$ the length of a greedy chain will increase by one and
with probability $(1-p)$, the length will remain unchanged.
Thus, the expected increase in the height of a chain is $p$ and the claim follows immediately.
\end{proof}
Here is an upper bound on the overestimate function.
\begin{Lemma} 
It holds that
\begin{eqnarray*}
h(p)\leq p+ \dfrac{(p-1)^{2}}{2-p}. 
\end{eqnarray*}
\end{Lemma}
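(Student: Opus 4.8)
The plan is to start from the closed form for the overestimate increment obtained in Theorem 6.3.1, namely
\begin{equation*}
h(p) = \frac{1}{\displaystyle\sum_{j=1}^{\infty}(1-p)^{j(j-1)/2}},
\end{equation*}
and then simply throw away all but the first two terms of the series in the denominator. Writing $q = 1-p$, so that $q \in (0,1)$, and noting that the exponents $j(j-1)/2$ are the triangular numbers $0, 1, 3, 6, 10, \ldots$, the denominator is
\begin{equation*}
S(q) = \sum_{j=1}^{\infty} q^{j(j-1)/2} = 1 + q + q^{3} + q^{6} + q^{10} + \cdots .
\end{equation*}
Since $q \in (0,1)$ every term is positive, so $S(q) \geq 1 + q$, and therefore $h(p) = 1/S(q) \leq 1/(1+q) = 1/(2-p)$.

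It then remains only to identify $1/(2-p)$ with the expression in the statement. First I would record the elementary identity
\begin{equation*}
p + \frac{(p-1)^{2}}{2-p} = \frac{p(2-p) + (p-1)^{2}}{2-p} = \frac{2p - p^{2} + p^{2} - 2p + 1}{2-p} = \frac{1}{2-p},
\end{equation*}
which is valid for $p \in (0,1)$ since $2-p \neq 0$ there. Combining this with the bound of the previous paragraph gives $h(p) \leq p + (p-1)^{2}/(2-p)$, as required. As an alternative that avoids quoting the closed form, one can argue directly from the overestimate Markov chain for the number of endpoints $\phi$: its stationary distribution satisfies $p_{j} = (1-p)^{j(j-1)/2} p_{1}$ and $h(p) = p_{1}$, so $1 = \sum_{j\geq 1} p_{j} \geq p_{1} + p_{2} = p_{1}(1 + (1-p))$, which yields $p_{1} \leq 1/(2-p)$ immediately.

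There is no genuine obstacle here; the argument is a one-line truncation of a manifestly positive series. The only points worth flagging in the write-up are that the bound is deliberately crude (keeping the tail $q^{3} + q^{6} + \cdots$ would sharpen it, which is consistent with the numerical closeness of $h(p)$ to $p$ observed for $p \geq 0.7$ in the figure), and that the stated bound is non-vacuous since $(p-1)^{2}/(2-p) > 0$ on $(0,1)$, so it does not merely reduce to the trivial estimate $h(p) \leq 1$.
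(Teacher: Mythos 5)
Your proof is correct and matches the paper's argument: both truncate the series $1 + (1-p) + (1-p)^3 + \cdots$ to its first two terms to get $h(p) \leq 1/(2-p)$. The only cosmetic difference is that you verify $1/(2-p) = p + (p-1)^2/(2-p)$ by direct algebra, whereas the paper reaches the same identity by expanding $1/(2-p)$ as a geometric series in $(p-1)$.
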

\begin{proof} 
To obtain an upper bound on
$$h(p)=\frac{1}{1+(1-p)+(1-p)^{3}+\ldots }$$
it is enough to bound below the denominator $1+(1-p)+(1-p)^{3}+\ldots$ ~. \newline
A crude lower bound
is $2-p$ (as all other terms are positive). Thus 
\begin{eqnarray*}
h(p)\leq \dfrac{1}{2-p}.
\end{eqnarray*} 
This in turn is
\begin{align*}
\frac{1}{2-p} & =\frac{1}{1-(p-1)}=1+(p-1)+(p-1)^{2} + \ldots \\
& =p+(p-1)^{2}+(p-1)^{3} + \ldots \\
& =p+\dfrac{(p-1)^{2}}{1-(p-1)} \\
& =p+\dfrac{(p-1)^{2}}{2-p}. \qedhere
\end{align*}
\end{proof}

Note that as $p$ tends from the left to $1$ and this fact explains the convergence 
of the underestimate and overestimate, seen previously in Figure $6.1$.
Collecting more terms, we will be able to obtain a sharper bound in the following manner.
It holds that 
\begin{eqnarray*}
1+(1-p)+(1-p)^{3}+ \ldots > 2- p + (1 - p)^{3} + \ldots + ( 1 - p)^{K},
\end{eqnarray*}
for a finite number $K = \dfrac{m(m-1)}{2}$, where $m\in\mathbf N$, 
so the bound has the form
\begin{eqnarray*}
\dfrac{1}{2-p + \displaystyle \sum_{j=3}^{m} \left ( (1-p)^{\frac{j(j-1)}{2}} \right)}.
\end{eqnarray*}
Its sharper than the previous one but again infinitely many terms are discarded. In order 
to obtain a bound using all terms, note that $j(j-1)/2 \leq j^{2}$ and 
\begin{eqnarray*}
1+(1-p)+(1-p)^{3}+ \ldots \geq (1-p)+(1-p)^{4}+(1-p)^{9} + \ldots~.
\end{eqnarray*}
The overestimate can be bounded above by another theta function. 
Therefore, a sharper bound is 
\begin{eqnarray*}
h(p) \leq \dfrac{2}{\theta_{3}(0, 1-p)+1}.
\end{eqnarray*}
The $\theta_{3}$ function is defined \cite{abr},
\begin{eqnarray*}
\theta_{3}(z, q) = 1 + 2\sum_{n=1}^{\infty}q^{n^{2}}\cos(2nz).
\end{eqnarray*}
Thus, we obtained suitable bounds, such that
\begin{eqnarray*}
p \leq f(p) < h(p) \leq \dfrac{2}{\theta_{3}(0, 1-p)+1}.
\end{eqnarray*}

In the following section, we present bounds on the number of linear extensions. 

\section{Expected number of linear extensions}

Here we obtain bounds on the average number of linear extensions of a random partial order. 
We start by quoting a useful Theorem from \cite{abbj},
\begin{theorem}[Alon {\em et al.} \cite{abbj}]
Let a random variable $Y$ be geometrically distributed, i.e. 
$\mathbb{P}(Y=k) = pq^{k-1}$, for $k=1, 2, \ldots$~. Then,
\begin{align*}
\mathbb{E}\bigl(\log_{e}(Y) \bigr) & = \sum_{k=1}^{\infty}\log_{e}(k)pq^{k-1} < \mu(p) \\
& \qquad{} \leq \bigl(1-k(p) \bigr)\log_{e}\dfrac{1/p - k(p)}{1-k(p)} \\
&  \qquad{} < \log_{e}\dfrac{1}{p} = \log_{e}\bigl (\mathbb{E}(Y) \bigr),
\end{align*}
\end{theorem}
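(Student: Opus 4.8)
The plan is to separate the elementary parts from the substantive one. The two displayed equalities are immediate: $\mathbb{E}\bigl(\log_{e}(Y)\bigr)=\sum_{k=1}^{\infty}\log_{e}(k)\,\mathbb{P}(Y=k)=\sum_{k=1}^{\infty}\log_{e}(k)\,pq^{k-1}$ by the definition of expectation (the $k=1$ term vanishing), and $\mathbb{E}(Y)=\sum_{k=1}^{\infty}k\,pq^{k-1}=1/p$ is the standard mean of a geometric variable, so $\log_{e}\bigl(\mathbb{E}(Y)\bigr)=\log_{e}(1/p)$. Hence what must be proved is the three--term chain $\mathbb{E}\bigl(\log_{e}(Y)\bigr)<\mu(p)\leq\bigl(1-k(p)\bigr)\log_{e}\frac{1/p-k(p)}{1-k(p)}<\log_{e}(1/p)$.

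The right--hand inequality is pure analysis. Put $a=1/p\ge 1$ and $t=k(p)\in[0,1)$, and note the identity $a=t\cdot 1+(1-t)\cdot\frac{a-t}{1-t}$, which exhibits $a$ as a convex combination, with weights $t$ and $1-t$, of the numbers $1$ and $\frac{a-t}{1-t}$; these are distinct whenever $p<1$, since $\frac{a-t}{1-t}>1\iff a>1$. Strict concavity of $\log_{e}$ (Jensen's inequality, exactly as applied in Lemma $3.1.3$) then gives $t\log_{e}(1)+(1-t)\log_{e}\frac{a-t}{1-t}<\log_{e}(a)$, i.e. $\bigl(1-k(p)\bigr)\log_{e}\frac{1/p-k(p)}{1-k(p)}<\log_{e}(1/p)$; when $p=1$ all four terms of the chain equal $0$.

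For the lower bound $\mu(p)>\mathbb{E}\bigl(\log_{e}(Y)\bigr)$ I would combine Theorem $6.2.5$ --- which, because $\log_{e}e(\cdot)$ is an additive parameter, a linear sum satisfying $e(P_{1}\oplus P_{2})=e(P_{1})e(P_{2})$, gives $\tfrac{1}{n}\,\mathbb{E}\bigl(\log_{e} e(P(n,p))\bigr)\to\mu(p)$ --- with an explicit construction of linear extensions. Build a linear extension of $P(n,p)$ by inserting the vertices in decreasing order of label: when vertex $i$ is inserted into the linear order of $\{i+1,\dots,n\}$ already produced, it may be placed in every gap preceding the leftmost placed vertex that lies above $i$, and one checks that, conditionally on everything done so far, the number of admissible gaps is a $\mathrm{Geom}(p)$ variable truncated at $n-i+1$ --- the point being that $i$ is forced to precede the $t$-th vertex of the current list only through a \emph{direct} edge of $G(n,p)$ out of $i$, an event of fresh probability $p$. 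This insertion procedure is a bijection onto the set of linear extensions, so counting its leaves as the expectation of the product of branching factors under a uniform random descent and applying concavity of $\log_{e}$ (Jensen again) yields $\tfrac{1}{n}\,\mathbb{E}\bigl(\log_{e} e(P(n,p))\bigr)\ge\tfrac{1}{n}\sum_{i=1}^{n}\mathbb{E}\bigl(\log_{e}\min(Y,n-i+1)\bigr)$, whose limit is $\mathbb{E}\bigl(\log_{e}(Y)\bigr)$ by a Ces\`aro argument; strictness follows because the product of branching factors is not almost surely constant.

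The remaining inequality $\mu(p)\leq\bigl(1-k(p)\bigr)\log_{e}\frac{1/p-k(p)}{1-k(p)}$ is the main obstacle, and here I would lean on the renewal/post structure of $P(\mathbf{Z},p)$ from Brightwell (Theorems $6.2.4$ and $6.2.5$) together with the concentration statement of Theorem $6.1.4$. The idea is to bound $\log_{e}e(P(n,p))$ above along a greedy extraction of minimal elements: a post contributes branching factor $1$, so the product effectively runs over only the $\approx(1-k(p))n$ non--post steps, the arithmetic--geometric mean inequality converts this product into a power of the average branching factor, which the factor decomposition pins down in terms of $p$ and the constant $k(p)$ (the density of ``new minimum'' events), and a final use of concavity of $\log_{e}$ collapses it to the stated bound. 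Making this rigorous --- in particular controlling the sum of the minimal--element counts and identifying the constant $k(p)$ --- is precisely the technical core of \cite{abbj}, which in a write--up I would quote rather than reprove; the only fully self--contained steps are the two endpoint identities and the Jensen inequality of the second paragraph.
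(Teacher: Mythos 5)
The thesis does \emph{not} prove Theorem 6.5.1 --- it is quoted directly from Alon, Bollob\'as, Brightwell and Janson \cite{abbj} (``We start by quoting a useful Theorem from \cite{abbj}'') and treated as a black box in deriving Theorem 6.5.2 --- so there is no in-paper proof to compare against, and your proposal is a reconstruction that goes beyond what the thesis does.

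Of the parts you do prove: the two endpoint identities and the rightmost inequality are correct and self-contained, and the convex-combination $1/p = k(p)\cdot 1 + \bigl(1-k(p)\bigr)\cdot\tfrac{1/p - k(p)}{1-k(p)}$ together with strict concavity of $\log_e$ is a clean way to obtain it. The lower-bound sketch has the right structural insight --- the earliest vertex above $i$ in a linear order of $\{i+1,\dots,n\}$ must be a direct $G$-neighbour of $i$, since any indirect witness would be preceded by one closer to $i$ --- and this does yield a truncated $\mathrm{Geom}(p)$ branching factor. But as written the estimate chain gives at best $\tfrac{1}{n}\mathbb{E}\bigl(\log_e e(P(n,p))\bigr)\ge\tfrac{1}{n}\sum_{i}\mathbb{E}\bigl(\log_e\min(Y,n-i+1)\bigr)$, whose limit is $\mathbb{E}(\log_e Y)$; that only delivers $\mu(p)\ge\mathbb{E}(\log_e Y)$, not the strict inequality, and your remark that ``the product of branching factors is not almost surely constant'' establishes a Jensen gap at each finite $n$ but does not by itself show the gap survives the $n\to\infty$ limit. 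Finally, you correctly identify the middle inequality $\mu(p)\le\bigl(1-k(p)\bigr)\log_e\tfrac{1/p-k(p)}{1-k(p)}$ as the technical core of \cite{abbj} and would cite rather than reprove it --- which is, in effect, exactly what the thesis itself does for the entire theorem.
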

where $k(p)$ is defined as,
\begin{equation*}
k(p) = \prod_{k=1}^{\infty}(1 - q^{k}),
\end{equation*}
with $q=1-p$.

The following Theorem regarding the expected number of linear extensions
is the main contribution in this Chapter.
\begin{theorem}
Let $p \in (0, 1)$. The average increment of the natural logarithm of the 
number of linear extensions $\mu(p)= \dfrac{\mathbb{E} \bigl(\log_{e}\bigl(e(P)\bigr) \bigr)}{n}$ 
is {\bf whp.} bounded below by:
$$\mu(p) \geq -\dfrac{\log_{2}h(p)\log_{e}(2)}{2}.$$
\end{theorem}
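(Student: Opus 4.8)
The plan is to bound $e(P)$ from below via the level (Mirsky) decomposition of $P(n,p)$ and then control the number of levels using the height overestimate $h(p)$ built in Theorem $6.3.1$. First record the trivial identity $\log_{2}h(p)\,\log_{e}(2)=\log_{e}h(p)$, so the assertion reads $\mu(p)\ge-\tfrac12\log_{e}h(p)$, a strictly positive quantity since $0<h(p)<1$. Here $\mu(p)=\lim_{n\to\infty}\mathbb{E}\bigl(\log_{e}e(P)\bigr)/n$, which exists because $\log_{e}e(P)$ is an additive parameter satisfying the hypotheses of Theorem $6.2.5$.

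Fix a realisation of $P=P(n,p)$ and write it as the disjoint union of its levels $P=L_{1}\cup L_{2}\cup\cdots\cup L_{\mathcal{H}}$, where $\mathcal{H}$ is the height of $P$; set $a_{i}=|L_{i}|$, so $\sum_{i=1}^{\mathcal{H}}a_{i}=n$. Each $L_{i}$ is an antichain, and every cover relation of $P$ goes strictly upward between consecutive levels, so any choice of total order inside each $L_{i}$, concatenated in the order $L_{1}<L_{2}<\cdots<L_{\mathcal{H}}$, is a linear extension of $P$; hence
\begin{equation*}
e(P)\ \ge\ \prod_{i=1}^{\mathcal{H}}a_{i}!.
\end{equation*}
Next I would use the elementary inequality $k!\ge k^{k/2}$ for every integer $k\ge 1$ — pair the factor $i$ with $k+1-i$ and note $i(k+1-i)\ge k$, whence $(k!)^{2}\ge k^{k}$ — and this is precisely the source of the factor $\tfrac12$. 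Taking logarithms gives $\log_{e}e(P)\ge\tfrac12\sum_{i=1}^{\mathcal{H}}a_{i}\log_{e}a_{i}$, and applying Jensen's inequality to the convex function $t\mapsto t\log_{e}t$ with the uniform weights $1/\mathcal{H}$ yields $\sum_{i}a_{i}\log_{e}a_{i}\ge n\log_{e}(n/\mathcal{H})$, so that
\begin{equation*}
\log_{e}e(P)\ \ge\ \tfrac12\,n\log_{e}\!\bigl(n/\mathcal{H}\bigr).
\end{equation*}

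It remains to insert the height bound and pass to the expectation. By the overestimate construction of Theorem $6.3.1$ the limiting mean height increment $\mu_{\mathcal{H}}(p)$ satisfies $\mu_{\mathcal{H}}(p)\le h(p)$, and since the height is an additive parameter, Corollary $6.2.6$ (equivalently the concentration in Theorem $6.2.5$) gives {\bf whp.} $\mathcal{H}\le h(p)\,n\,(1+o(1))$; hence $\log_{e}(n/\mathcal{H})\ge-\log_{e}h(p)-o(1)$ and therefore {\bf whp.} $\log_{e}e(P)/n\ge-\tfrac12\log_{e}h(p)\,(1+o(1))$. Finally, because $\log_{e}e(P)\ge 0$ always and the event just described has probability tending to $1$, restricting the expectation to that event gives $\mu(p)\ge-\tfrac12\log_{e}h(p)=-\tfrac12\log_{2}h(p)\log_{e}(2)$, as required. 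The step I expect to require the most care is the first half of this last paragraph: one must justify that the \emph{actual} height of $P(n,p)$, not merely the constructed dominating process, is {\bf whp.} at most $h(p)n(1+o(1))$ — i.e. genuinely combine the stochastic domination by the overestimate with the additive-parameter concentration — and then handle the low-probability complement when moving from the {\bf whp.} inequality to the statement about $\mu(p)$; the combinatorial inequalities $e(P)\ge\prod a_{i}!$, $k!\ge k^{k/2}$ and the Jensen step are routine by comparison. (One could sharpen the constant by replacing $k!\ge k^{k/2}$ with a Stirling estimate and optimising over the level-size profile, but the clean factor $\tfrac12$ is exactly what the elementary bound delivers.)
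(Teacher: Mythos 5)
Your proof is correct, and it takes a genuinely different route from the paper's. The paper derives the key intermediate inequality $e(P)\geq (n/\mathcal{H})^{n/2}$ (where $\mathcal{H}=|C|$ is the height) by combining two results of Cardinal {\em et al.}: Lemma~5, which says the longest chain satisfies $|C|\geq 2^{-H(\tilde{P})}n$ in terms of the entropy of the incomparability graph, and Lemma~4, which says $nH(\tilde{P})\leq 2\log_{2}\bigl(e(P)\bigr)$; chaining these gives $\log_{2}e(P)\geq \tfrac{n}{2}H(\tilde{P})\geq\tfrac{n}{2}\log_{2}(n/\mathcal{H})$. You reach exactly the same inequality without any graph-entropy input at all: the Mirsky decomposition gives $e(P)\geq\prod a_{i}!$, the elementary estimate $k!\geq k^{k/2}$ contributes the factor $\tfrac12$, and Jensen for the convex map $t\mapsto t\log_{e}t$ gives $\sum a_{i}\log_{e}a_{i}\geq n\log_{e}(n/\mathcal{H})$. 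From that point both arguments insert the overestimate $h(p)$ for the asymptotic height and use Jensen for $\log$ (the paper) or non-negativity of $\log e(P)$ restricted to a whp.\ event (you) to convert a pointwise lower bound into the bound on $\mu(p)$. Your version is more self-contained — it avoids citing the Cardinal {\em et al.}\ lemmas entirely — and it makes the source of the constant $\tfrac12$ transparent (it is exactly the slack in $k!\geq k^{k/2}$, which you correctly note could be tightened via Stirling). The paper's route, by contrast, sits naturally within the graph-entropy framework used elsewhere in the chapter and explains the constant as the $2$ in Cardinal {\em et al.}'s Lemma~4. One small point on your final paragraph: the paper finesses the whp.-to-expectation issue by working with $\mathbb{E}(|C|)$ and $\mathbb{E}(2^{-H(\tilde P)})$ directly (then using Jensen for $\log$), whereas you invoke the whp.\ concentration of the height from Theorem~6.2.5/Corollary~6.2.6 and then pass to $\mu(p)$ via $\log_{e}e(P)\geq 0$; both handlings are valid, and yours has the virtue of making the limiting step explicit rather than implicit.
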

\begin{proof}
We consider the longest chain $C$. We know that {\bf whp.} 
$\mathbb{E}(\vert C \vert) \leq h(p)n$ by Theorem 6.3.1. Now, by Lemma $5$ of
Cardinal {\em et al.} \cite{Cardinal}, which states that
$\vert C\vert \geq 2^{-H(\tilde{P})}n$, we deduce that 
\begin{eqnarray*}
h(p) \geq \mathbb {E} \bigl(2^{-H(\tilde{P})}\bigr),
\end{eqnarray*}
where $H(\tilde{P})$ denotes the entropy of the incomparability graph. By
Lemma $4$ of that paper, we also have that 
\begin{eqnarray*}
nH(\tilde {P}) \leq 2\log_{2} \left (e(P) \right).
\end{eqnarray*}
Therefore
\begin{eqnarray*}
h(p) \geq \mathbb {E} \bigl(e(P)^{-2/n}\bigr).
\end{eqnarray*}
Taking on both sides logarithms and applying Jensen's inequality \cite{jens}, we get
that {\bf whp.}
\begin{align*}
\log_{2} h(p) & \geq \log_{2} \mathbb {E} \left (e(P)^{-2/n} \right ) \\
& \geq \mathbb {E} \left (\log_{2}(e(P)^{-2/n}) \right ) \\
& \quad {}= -\frac{2}{n} \mathbb {E} \bigl (\log_{2}(e(P) \bigr) \\
& \quad{} = -\frac{2}{\log_{e}(2)} \mu(p).
\end{align*}
Thus, $\mu(p)$ is {\bf whp.} bounded below by
\begin{eqnarray*}
\mu(p) \geq -\dfrac{\log_{2}h(p)\log_{e}(2)}{2}
\end{eqnarray*}
and the proof of the Theorem is complete.
\end{proof}

As we can see from the following graph, the bound isn't tight compared 
with the one of Alon {\em et al.} \cite{abbj}. However, its derivation provides
an insight into the relation of the height of a random graph order
with the number of linear extensions.
\vspace{.5cm}
\begin{figure}[H]
\centering
\includegraphics[width=9cm, height=9cm, keepaspectratio=true]{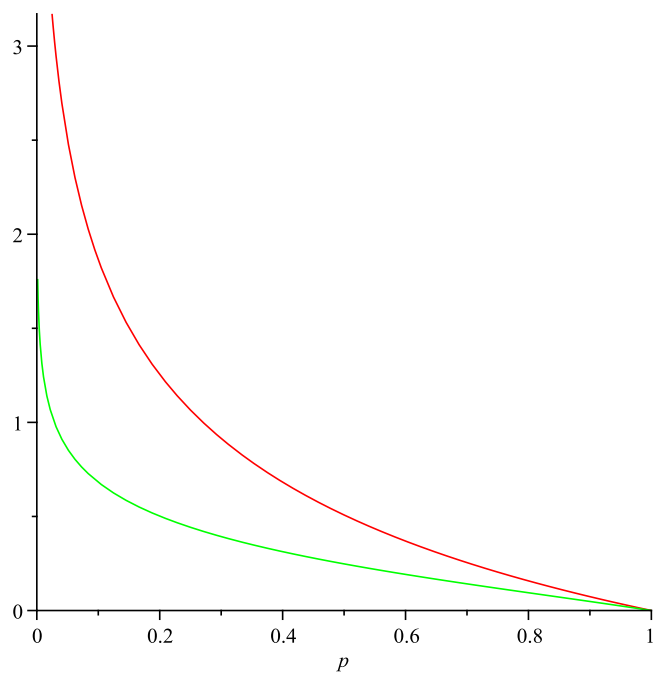}
\caption {Plot of Alon {\em et al.} \cite{abbj} bound (red) and 
of bound of Theorem 6.5.2 (green) on $\mu(p)$.
}
\end{figure}

\chapter{Conclusions and future research directions}

At the last Chapter of this thesis, the conclusions of the research and 
possible future directions are discussed. In the first
section, we consider the sorting of partially ordered sets and in the second
section, the fast merging of chains.

\section{Sorting partially ordered arrays}

Central to the analysis of the time complexity of
sorting partially ordered sets, was the number of linear extensions, as a 
measure of the `presortedness' of the array. Recall
that the quantity $\log_{2}(n!)$ is the lower bound 
of comparisons needed to sort an array of $n$ keys, with no prior information.
In all cases of partially ordered sets considered in Chapter $5$, the constant $1/2\log_{2}(e)$
appeared to the asymptotic number of comparisons. A future direction to research
might be the sharpening of these results. For example, one might ask, 
what is the average number of key exchanges or the computation of 
exact expected costs. 

Generalising Albert--Frieze argument \cite{albfr} and using entropy arguments, a new 
result was the lower bound on the number of linear extensions of a random graph order.
However, we have seen that it does not directly compete the bounds of Alon 
{\em et al.} \cite{abbj}, 
thus there is space for further improvement of this bound or to derivation
of new sharper ones and this might be a suitable topic for further research.

A different bound on the number of linear extensions of a random interval
order can be derived as follows. By Theorem 5.6.5, one can deduce 
that {\bf whp.}, the size of the 
longest chain C, is
\begin{equation*}
\vert C \vert = \dfrac{2\sqrt{n}}{\sqrt{\pi}}.
\end{equation*}
Using the result $\vert C\vert \geq 2^{-H(\tilde{P})}n$ from Cardinal {\em et al.},
as in Theorem 6.5.2, we have
\begin{eqnarray*}
\dfrac{2}{\sqrt{n \pi}} \geq 2^{-H(\tilde{P})}.
\end{eqnarray*}
Taking logarithms, a lower bound for the entropy of the incomparability
random interval graph is,
\begin{equation*}
H(\tilde{P}) \geq \dfrac{1}{2} \log_{2}(n) + O(1).
\end{equation*}
By the following inequality \cite{Cardinal}
\begin{equation*}
nH(\tilde {P}) \leq 2\log_{2} \bigl (e(P) \bigr),
\end{equation*}
$\log_{2} \bigl(e(P) \bigr)$ is {\bf whp.} bounded below by
\begin{equation*}
\log_{2}\bigl(e(P) \bigr) \geq \dfrac{n}{4}\log_{2}(n) + O(n).
\end{equation*}
Thus, {\bf whp.}
\begin{align*}
\log_{2} \bigl (e(P) \bigr) & \geq \dfrac{n}{4}\log_{2}(n)\bigl(1+o(1) \bigr) \\
& \quad {} = \dfrac{n}{4\log_{e}(2)}\log_{e}(n)\bigl(1+o(1) \bigr) \\
& \qquad {} \approx 0.36 n \log_{e}(n)\bigl(1+o(1) \bigr).
\end{align*}
\begin{remark}
The speed up factor of the derived bound of the number of linear extensions of a random interval order is
$1/4\log_{e}(2) \approx 0.36$ -- exactly half of the constant stated in Corollary 5.6.6.
This fact shows that the bound in this Chapter performs rather poorly, comparing
with the results in section 5.6. Note that,
with further information about the entropy, improvement might be possible.
\end{remark}

\section{Merging chains using Shellsort}

In the last section, we consider the merging of chains. 
This problem is analysed in the paper of
Hwang and Lin \cite{hwang}, where an efficient algorithm is presented.

Here, we discuss some preliminary ideas, that might be worthwhile for further study.
Specifically, we propose the application of Shellsort for the merging of linearly
ordered sets.
Shellsort was invented by Donald Shell \cite{Shell} in 1959 and is
based on insertion sort. The algorithm runs from left to right,
by comparing elements at a given gap or increment $d \in \mathbf N$ and exchanging them, if they
are in reverse order, so in the array $\{a_{1}, a_{2}, \ldots , a_{n} \}$
the $d$ subarrays $\{a_{j}, a_{j+d}, a_{j+2d}, \ldots \}$, for $j=1, 2, \ldots, d$ are separately
sorted. At the second pass, Shellsort runs on 
smaller increment, until after a number of passes, the increment becomes $d=1$. 
This final insertion sort completes the sorting of the array.

The sequence of the increments is crucial for the running time of the 
algorithm, as the pivot selection is important to Quicksort. Shell \cite{Shell}
proposed the sequence $\left \lfloor\frac{n}{2} \right \rfloor, 
\left \lfloor\frac{n}{4}\right \rfloor, \ldots, 1$,
which leads to quadratic time. Pratt \cite{Pratt}
suggested a sequence of the form
$2^{a}3^{b} < n$, where $a, b \in \mathbf N$, 
which yields $\Theta \bigl(n\log_{2}^{2}(n)\bigr)$ time.
Incerpi and Sedgewick \cite{inc} have shown that there do
exist $\log_{a}(n)$ increments, for which the
running time of the algorithm is $O(n^{1+\frac{\epsilon}{\sqrt{\log_{2}(n)}}})$, 
with $a=2^{\epsilon^{2}/8}$ and $\epsilon > 0$.
Despite the extensive analysis of Shellsort, there are 
many open problems, as whether the algorithm can achieve
on the average $O \bigl(n \log_{2}(n) \bigr)$ run-time.

In our problem, Shellsort can be fruitfully applied to merging chains, 
which can be done quite fast, 
using the knowledge of the partial order. Consider two chains 
$C_{1}$ and $C_{2}$,
\begin{eqnarray*}
&& C_{1} = \{a_{1} < a_{2} < \ldots < a_{n} \} \\
&& C_{2} = \{b_{1} < b_{2} < \ldots < b_{m} \}.
\end{eqnarray*}
Starting from $a_{1}$, with initial increment 
$d=\mathrm {\max} \{n, m \}$, the 
algorithm separately sorts $d$ subarrays. 
At the second pass, the algorithm iterates
from $a_{2}$
with increment $d-1$. The final comparison of the element
occupying the location $d$ with its adjacent key in the position $d+1$, 
terminates the merging process.

We illustrate this informal idea,
with a simple example. Suppose that we want to merge the chains
$C_{1}=\{5, 7, 9, 11, 12 \}$ and $C_{2}= \{4, 6, 10 \}$. We start
with the array $\{5, 7, 9, 11, 12, 4, 6, 10 \}$
and initial increment $\mathrm {\max} \{5, 3 \}=5$. 
Then, the following subarrays are 
independently sorted: $\{5, 4 \}$, $\{7, 6 \}$, $\{9, 10 \}$,
so at the end of the first iteration, the array becomes
$\{4, 6, 9, 11, 12, 5, 7, 10 \}$. Starting from $6$, with increment 
equal to $4$, the algorithm proceeds to the subarrays
$\{6, 5 \}$, $\{9, 7 \}$, $\{11, 10 \}$, so we obtain
$\{4, 5, 7, 10, 12, 6, 9, 11 \}$. In the same manner, starting
from $7$, with gap equal to $3$, the subarrays $\{7, 6 \}$, 
$\{10, 9 \}$, $\{12, 11 \}$ are sorted, giving $\{4, 5, 6, 9, 11, 7, 10, 12 \}$. 
Then, the subarrays $\{9, 7, 12 \}$ and $\{11, 10 \}$ are sorted,
yielding $\{4, 5, 6, 7, 10, 9, 11, 12\}$. The final comparison 
to $\{10, 9 \}$ returns the merged chain. This algorithm took $13$ 
comparisons for the merging of $8$ keys. A different
increment sequence might speed up the process,
noting that there are some redundant comparisons, e.g. the
comparison of $\{9, 7 \}$ in the second pass. Its appealing feature
is that it merges `in-place', without the need of auxiliary memory.

Central to the argument is the number of inversions of 
a permutation of $n$ distinct keys $\{a_{1}, a_{2}, \ldots, a_{n}\}$.
An inversion 
is a pair of elements, such that for $i < j$, $a_{i} > a_{j}$.
Obviously, an upper bound for the number of inversions is 
$1+2+ \ldots + (n-1)=\dbinom{n}{2}$. On 
the other hand, a sorted array has $0$ inversions. 
In other words, the number of inversions determine
the `amount' of work needed for the complete sorting.
Generally, when one has to merge $k$ chains, 
with cardinalities $m_{1}, m_{2}, \ldots, m_{k}$ 
and $\sum_{j=1}^{k}m_{j}=n$, an 
upper bound to the number of inversions is
\begin{eqnarray*}
\dbinom{n}{2} - \Biggl ( \dbinom{m_{1}}{2}+ 
\dbinom{m_{2}}{2} + \ldots + \dbinom{m_{k}}{2} \Biggr ).
\end{eqnarray*}
Note that this bound corresponds to the case, where
the chains are presented in completely wrong order, 
e.g. the elements of a chain $C_{j}$ are greater from the elements of chains, 
which lie to its right. In practice, this case occurs rarely, 
so the bound can be greatly improved. 

The application of Mergesort, as proposed by Cardinal 
{\em et al.} \cite{Cardinal} for the merging of chains completes the sorting in
$(1+\epsilon)\log_{2} \bigl (e(P) \bigr ) + O(n)$ time -- see their Theorem $3$.
Cardinal {\em et al.} remark in their paper that their Mergesort algorithm is better than
an earlier one of Kahn and Kim \cite{kk}, provided $\log_{2}\bigl(e(P)\bigr)$ is super--linear.
As we have seen in this Chapter, $\log_{2} \bigl (e(P)\bigr )$ is linear,
thus the application of Shellsort might constitute an alternative choice
for the merging of chains.

\begin{appendices}

\chapter{\textsc{Maple} Computations} 

Here is the \textsc{Maple} worksheet for the computation of the 
variance of the number of key comparisons of
dual pivot Quicksort.
\begin{verbatim}
> restart;
> f_{n}(z):= 1/binomial(n, 2)sum(sum(z^2n-i-2f_{i-1}(z)f_{j-i-1}(z)
  f_{n-j}(z), j=i+1..n), i=1..n-1);
> diff(f_{n}(z), z$2);
> subs(z = 1, diff(f_{n}(z), z$2));
\end{verbatim}
\begin{align*}
f''_{n}(1) & = \dfrac{2}{n(n-1)} \biggl ( \sum_{i=1}^{n-1}\sum_{j=i+1}^{n}(2n-i-2)^{2}- 
\sum_{i=1}^{n-1}\sum_{j=i+1}^{n}(2n-i-2) \\
~ & +2\sum_{i=1}^{n-1}\sum_{j=i+1}^{n}(2n-i-2)\mathbb{E}(C_{i-1, 2}) + 
2\sum_{i=1}^{n-1}\sum_{j=i+1}^{n}(2n-i-2)\mathbb{E}(C_{j-i-1, 2}) \\
~ & +2\sum_{i=1}^{n-1}\sum_{j=i+1}^{n}(2n-i-2)\mathbb{E}(C_{n-j, 2})+
2\sum_{i=1}^{n-1}\sum_{j=i+1}^{n}\mathbb{E}(C_{i-1, 2})\mathbb{E}(C_{j-i-1, 2}) \\
~ & + 2\sum_{i=1}^{n-1}\sum_{j=i+1}^{n}\mathbb{E}(C_{i-1, 2})\mathbb{E}(C_{n-j, 2}) +
2\sum_{i=1}^{n-1}\sum_{j=i+1}^{n}\mathbb{E}(C_{j-i-1, 2})\mathbb{E}(C_{n-j, 2}) \\
~ & + \sum_{i=1}^{n-1}\sum_{j=i+1}^{n}f''_{i-1}(1) +  
\sum_{i=1}^{n-1}\sum_{j=i+1}^{n}
f''_{j-i-1}(1)+\sum_{i=1}^{n-1}\sum_{j=i+1}^{n}f''_{n-j}(1) \biggr ). \, \, \tag{A.1} 
\end{align*}
\newpage
The following \textsc{Maple} commands compute the sums of Eq. (A.1) 
\begin{verbatim}
> sum(sum((2n-i-2)^2, j=i+1..n), i=1..n-1);
> sum(sum((2n-i-2), j=i+1..n), i=1..n-1);
> 2(sum(sum((2n-i-2)(2iharmonic(i-1)-4(i-1)), j = i+1..n), i = 1..n-1));
> simplify(%);
> 2(sum(sum((2n-i-2)((2(j-i))harmonic(j-i-1)-4(j-i-1)),
  j=i+1..n), i=1..n-1));
> sum(4 n harmonic(k)(k+i+1)-4nharmonic(k)i-2iharmonic(k)(k+i+1)
  +2harmonic(k) i^2-4harmonic(k)(k+i+1)+4iharmonic(k), k=0..n-i-1);
> 2(sum(3n-3i+2nharmonic(n-i)(n-i)^2 -iharmonic(n-i)(n-i)
  +2nharmonic(n-i)(n-i)-(3(n-i))n+(3/2(n-i))i-iharmonic(n-i)(n-i)^2
  -2harmonic(n-i) (n-i)^2
  +(1/2)i(n-i)^2-2harmonic(n-i)(n-i)-n (n-i)^2+(n-i)^2, i=1..n-1));
> 2(sum(4nharmonic(j)(n-j)^2-5n^2harmonic(j)(n-j)-2nharmonic(j)
  +(n-j)harmonic(j)n+(2(n-j))harmonic(j)-(n-j)^2 
  harmonic(j)+2n^3 harmonic(j)-(n-j)^3 harmonic(j), j=1..n-1));
> sum((2(k+1))harmonic(k)-4k, k = 0..n-i-1);
> 2(sum((2iharmonic(i-1)-4(i-1))(2binomial(n-i+1, 2)
  harmonic(n-i)+(n-i-5(n-i)^2)(1/2)), i=1..n-1));
> sum((8(n-k))binomial(k+1, 2)harmonic(k), k=1..n-1);
> sum(8binomial(k+1, 2)harmonic(k), k=1..n-1);
\end{verbatim}
Using these results, 
the second-order derivative evaluated at $1$ is given recursively by:
\begin{align*}
f''_{n}(1)& = 2(n+1)(n+2)(H^{2}_{n} - H^{(2)}_{n}) - H_{n} \left (\dfrac{17}{3}n^{2} 
+ \dfrac{47}{3}n + 6 \right ) + \dfrac{209}{36}n^{2} \\ 
& \quad{} + \dfrac{731}{36}n + \dfrac{13}{6} 
+ \dfrac{6}{n(n-1)}\sum_{i=1}^{n-1}(n-i)f''_{i-1}(1). \, \tag{A.2}
\end{align*}
The \textsc{Maple} input for the solution of the recurrence is as follows
\begin{verbatim}
> f''_n(1)=2 (harmonic(n))^2 n^2-17/3 harmonic(n) n^2+209/36 n^2
  -2 harmonic(n,2) n^2+731/36 n-6 harmonic(n,2) n-47/3 harmonic(n) n
  +6 (harmonic(n))^2 n+4 (harmonic(n))^2-4 harmonic(n,2)-6 harmonic(n)
  +13/6+6/(n(n-1))sum((n-i)f''_i-1(1), i=1..n-1);
> binomial(n, 2)f''_{n}(1);
> binomial(n+1, 2)f''_{n+1}(1)- binomial(n, 2)f''_{n}(1);
> binomial(n+2, 2)f''_{n+2}(1)-2binomial(n+1, 2)f''_{n+1}(1)
  + binomial(n, 2)f''_{n}(1);
\end{verbatim}
The output of these commands gives the second--order difference
\begin{align*}
\Delta^{2} \dbinom{n}{2}f''_{n}(1) & = 12(n+1)(n+2)(H^{2}_{n} - H^{(2)}_{n}) -
H_{n}(20n^{2} + 32n - 12)  \\
& \quad{} + 17n^{2} +37n + 3f''_{n}(1) \, \, \tag{A.3}
\end{align*}
and after standard manipulations of Eq. (A.3), (recall subsection 4.1.1), 
the solution of the recurrence is 
\begin{equation*}
f''_{n}(1) = 4(n+1)^{2}(H^{2}_{n+1} - H^{(2)}_{n+1}) - 
4 H_{n+1}(n+1)(4n+3) + 23n^{2} + 33n + 12.
\end{equation*}

\end{appendices}

\newpage

\end{document}